\newcommand{\A}{\ensuremath{\overline{1}}}
\newcommand{\B}{\ensuremath{\overline{2}}}
\newcommand{\VPs}{\ensuremath{\textup{\textsf{VP}}_{\textsf{s}}}}
\newcommand{\VNP}{\textup{\textsf{VNP}}}
\newcommand{\NP}{\textup{\textsf{NP}}}
\renewcommand{\P}{\textup{\textsf{P}}}
\newcommand{\dc}{\textup{\textsf{dc}}}
\newcommand{\dcbar}{\underline{\textup{\textsf{dc}}}}
\newcommand{\pc}{\textup{\textsf{pc}}}
\newcommand{\sh}{\textup{\textsf{sh}}}
\newcommand{\size}{\textup{\textsf{size}}}
\newcommand{\col}{\textup{\textsf{col}}}
\newcommand{\leftpart}{\textup{\textsf{leftpart}}}
\newcommand{\rightpart}{\textup{\textsf{rightpart}}}
\newcommand{\barred}{\textup{\textsf{barred}}}
\newcommand{\kbar}{\textup{\textsf{k-bar}}}
\newcommand{\la}{\lambda}
\newcommand{\HWV}{\textup{HWV}}
\newcommand{\sV}{\mathscr{V}}
\newcommand{\sW}{\mathscr{W}}
\newcommand{\Sym}{\textup{Poly}}
\newcommand{\End}{\textup{End}}
\newcommand{\sgn}{\textup{sgn}}
\newcommand{\per}{\textup{per}}
\renewcommand{\det}{\textup{det}}
\newcommand{\mult}{\textup{mult}}
\newcommand{\stab}{\text{stab\,}}
\newcommand{\aS}{\mathfrak{S}}
\newcommand{\GL}{\textup{GL}}
\newcommand{\SL}{\textup{SL}}
\newcommand{\IC}{\mathbb{C}}
\newcommand{\IN}{\mathbb{N}}
\newcommand{\IZ}{\mathbb{Z}}
\newcommand{\IS}{\mathbb{S}}
\newcommand{\tensor}{\textstyle\bigotimes}
\numberwithin{equation}{section}
\newtheorem{proposition}[equation]{Proposition}
\newtheorem{claim}[equation]{Claim}
\newtheorem{lemma}[equation]{Lemma}
\newtheorem{corollary}[equation]{Corollary}
\newtheorem{theorem}[equation]{Theorem}
\theoremstyle{definition}
\newtheorem{definition}[equation]{Definition}
\title{Implementing geometric complexity theory: On the separation of orbit closures via symmetries}
\author{Christian Ikenmeyer\thanks{University of Liverpool, christian.ikenmeyer$@$liverpool.ac.uk} \ and Umangathan Kandasamy\thanks{Universit\"at des Saarlandes, umangathan.kandasamy@outlook.de}}
\date{November 10, 2019}
\begin{document}
\raggedbottom
\maketitle
\thispagestyle{empty}

\begin{abstract}
Understanding the difference between group orbits and their closures is a key difficulty in geometric complexity theory (GCT):
While the GCT program is set up to separate certain orbit closures,
many beautiful mathematical properties are only known for the group orbits, in particular close relations with symmetry groups and invariant spaces, while the orbit closures seem much more difficult to understand.
However, in order to prove lower bounds in algebraic complexity theory, considering group orbits is not enough.

In this paper we tighten the relationship between the orbit of the power sum polynomial and its closure,
so that we can separate this orbit closure from the orbit closure of the product of variables
by just considering the symmetry groups of both polynomials and their representation theoretic decomposition coefficients.
In a natural way our construction yields a multiplicity obstruction that is neither an occurrence obstruction, nor a so-called vanishing ideal occurrence obstruction. All multiplicity obstructions so far have been of one of these two types.

Our paper is the first implementation of the ambitious approach that was originally suggested in the first papers on geometric complexity theory by Mulmuley and Sohoni (SIAM J Comput 2001, 2008):
Before our paper, all existence proofs of obstructions only took into account the symmetry group of one of the two polynomials (or tensors) that were to be separated.
In our paper the multiplicity obstruction is obtained by comparing the representation theoretic decomposition coefficients of both symmetry groups.

Our proof uses a semi-explicit description of the coordinate ring of the orbit closure of the power sum polynomial in terms of Young tableaux, which enables its comparison to the coordinate ring of the orbit.
\end{abstract}

\bigskip

{
\footnotesize
\begin{spacing}{0.9}
\noindent\textbf{Acknowledgements: }
This paper was written partially when CI was at the Max Planck Institute for Informatics, at the Max Planck Institute for Software Systems, and at the University of Liverpool.
This paper contains results that are present in UK's master's thesis at the Universit\"at des Saarlandes.
\end{spacing}
}

\medskip

{
\footnotesize
\begin{spacing}{0.9}
\noindent\textbf{2012 ACM CCS:} Theory of computation $\rightarrow$ Algebraic complexity theory
\end{spacing}
}

\medskip

{
\footnotesize
\begin{spacing}{0.9}
\noindent\textbf{Keywords: }
Geometric complexity theory, group orbit, orbit closure, multiplicity obstruction
\end{spacing}
}

% \clearpage
% \setcounter{page}{1}

%\tableofcontents

\section{Motivation: Geometric complexity theory}
\paragraph{Symmetries}
The idea of using the symmetries of the determinant $\det_n := \sum_{\pi \in \aS_n}\sgn(\pi)\prod_{i=1}^n x_{i,\pi(i)}$ and the permanent
$\per_m := \sum_{\pi \in \aS_m} \prod_{i=1}^m x_{i,\pi(i)}$
to separate algebraic complexity classes was pioneered by Mulmuley and Sohoni in 2001 \cite{MS:01}.
This approach is based on the observation that $\det_n$ and $\per_m$ are both characterized by their respective symmetry groups.
For example, consider homogeneous degree $n$ polynomials in $n^2$ variables $x_{1,1},\ldots,x_{n,n}$.
Let $X$ denote the $n \times n$ matrix whose entry in row $i$ and column $j$ is $x_{i,j}$. Then $\det(X)=\det_n$.
Now, for matrices $A,B \in \SL_n(\IC)$ the entries of the matrix $AXB$ are homogeneous linear polynomials in the $n^2$ variables.
The crucial fact is that every homogeneous degree $n$ polynomial $q$ in $n^2$ variables that satisfies $q(AXB)=q(X)$ equals $\alpha \cdot \det_n$ for some scalar $\alpha \in \IC$.
This means that $\det_n$ is \emph{characterized by its symmetries}. For the permanent polynomial, an analogous statement holds, and also for many other structurally simpler polynomials, for example for the power sum polynomial $x_1^D+\cdots+x_m^D$ and for the product of variables $x_1 x_2 \cdots x_D$, see \cite{Ike:19}.

\paragraph{Algebraic complexity theory}
An \emph{affine projection} of a polynomial is its evaluation at a point whose coordinates are given by affine linear polynomials, e.g., $(x_1+x_2+1)^2 = x_1^2 + 2 x_1 x_2 + x_2^2 + 2 x_1 + 2 x_2 + 1$ is an affine projection of $x_1^2$.
Kayal proved that it is $\NP$-hard to decide whether a polynomial is an affine projection of another polynomial \cite{Kay:12}.
Valiant proved \cite{Val:79b} that every polynomial $p$ is an affine projection of some $\det_n$ for $n$ large enough. The smallest $n$ for which this is possible is called the \emph{determinantal complexity} $\dc(p)$.
The class of sequences of polynomials $(p_m)$ whose sequence of natural numbers $\dc(p_m)$ is polynomially bounded is called $\VPs$.
For the permanent we can define the \emph{permanental complexity} $\pc(p)$ in a completely analogous manner: $\pc(p)$ is the smallest $n$ such that $p$ is an affine projection of $\per_n$.
The class of sequences of polynomials $(p_m)$ whose $\pc(p_m)$ is polynomially bounded is called $\VNP$.
Since $\pc(\det_n)$ is polynomially bounded, $\VPs \subseteq \VNP$.
Valiant's flagship conjecture in algebraic complexity theory, which is also known as the \emph{determinant versus permament} conjecture can be succinctly phrased as $\VPs \neq \VNP$.
This is equivalent to conjecturing that $\dc(\per_m)$ grows superpolynomially fast.

\paragraph{Homogeneous projections and endomorphism orbits}
It will be beneficial to phrase Valiant's conjecture in a homogeneous setting: A \emph{homogeneous projection} of a homogeneous polynomial (i.e., all monomials have the same total degree) is its evaluation at a point whose coordinates are given by homogeneous linear polynomials. The set of all homogeneous projections of $\det_n$ to polynomials in the variables $x_1,\ldots,x_N$ can then be written as $\{\det_n(\ell_1,\ldots,\ell_{n^2}) \mid \text{$\ell_i$ is a homogeneous linear polynomial in $x_1,\ldots,x_N$}\}$. Note that we put the $n \times n = n^2$ inputs of the determinant in a linear order.
The polynomial function $(x_{1,1},\ldots,x_{n,n}) \mapsto \det_n(\ell_1,\ldots,\ell_{n^2})$ equals the composition $\det_n \circ A$, where $A$ is the linear map $(x_{1,1},\ldots,x_{n,n}) \mapsto (\ell_1,\ldots,\ell_{n^2})$.
As it is common in representation theory, we write $A \cdot \det_n$ or just $A \det_n$ for $\det_n \circ A$.
The \emph{endomorphism orbit} $\End_{n^2}\det_n$ is defined as $\{A \det_n \mid A \in \IC^{n^2 \times n^2}\}$,
which is the set of all homogeneous projections of $\det_n$ to polynomials in at most $n^2$ variables.
Since all polynomials in $A \det_n$ are homogeneous of degree $n$, we have $\per_m \notin \End_{n^2} \det_n$ for any $m \neq n$.
This slight technicality is treated by a procedure called \emph{padding}:
For fixed $m$, $n$ with $m < n$, define the \emph{padded permanent} $\per_{m,n} := (x_{n,n})^{n-m}\cdot \per_m$.
Let $\dc'(\per_{m,n})$ denote the smallest $n$ such that $\per_{m,n} \in \End_{n^2} \det_n$.
A short calculation shows that Valiant's conjecture is equivalent to the conjecture that $\dc'(\per_m)$ grows superpolynomially.

\paragraph{Group orbits}
It turns out that if we restrict $\End_{n^2} \det_n$ to only the points $A\det_n$ for which $A$ is invertible, we get the much simpler \emph{group orbit} $\GL_{n^2}\det_n := \{g \det_n \mid g \in \GL_{n^2}\} \subseteq \End_{n^2} \det_n$.
The group orbit of the determinant consists of ``determinants in disguise'', i.e., determinants after a base change.
The question whether a polynomial $p$ lies in $\GL_{n^2}\det_n$ can be answered in randomized polynomial time \cite{Kay:12}.
Finding $g \in \GL_{n^2}$ such that $p = g \det_n$ is called the \emph{reconstruction} problem, which is the focus of several recent papers, where the determinant is replaced with other algebraic computational models, see e.g.\
\cite{GKQ:14}, \cite{KS:19}, \cite{KNS:19}.

\paragraph{Representation theory}
Let $H_{\det_n}\subseteq \GL_{n^2}$ denote the symmetry group of $\det_n$.
From the viewpoint of algebraic geometry, the set $\GL_{n^2}\det_n$ is an affine variety (\cite[Sec 4.2]{BLMW:11}, \cite[Cor., p.~206]{Mat:60}) and a homogeneous space that is isomorphic to the quotient $\GL_{n^2}/H_{\det_n}$.
It is crucial to note here that the group $H_{\det_n}$ does not carry any information about the fact that we study a space of polynomials!
In this way, we study the orbit $\GL_{n^2}\det_n$ independently of its embedding into the space of polynomials.
This gives a particularly beautiful description of its coordinate ring via invariant theory:
\[
\IC[\GL_{n^2}\det_n] = \IC[\GL_{n^2}]^{H_{\det_n}},
\]
where $\IC[\GL_{n^2}]^{H_{det_n}}$ is the set of regular $H_{\det_n}$-invariant functions on the variety $\GL_{n^2}$ \cite[(5.2.6)]{BLMW:11}.
The coordinate ring of $\GL_{n^2}$ has classically been studied: It is a $\GL_{n^2} \times \GL_{n^2}$-representation whose representation theoretic decomposition is multiplicity free:
\begin{equation}\label{eq:algpeterweyl}
\IC[\GL_{n^2}] = \bigoplus_{\la}\{\la\} \otimes \{\la^*\},
\end{equation}
where $\la$ runs over all nondecreasing lists of $n^2$ integers and $\{\la\}$ denotes the irreducible rational representation of $\GL_{n^2}$ corresponding to $\la$ (see e.g.\ Section~\ref{sec:prelrepr} or the textbooks \cite{FH:91, MaD:95, fult:97} for more information).
Eq.~\eqref{eq:algpeterweyl} is known as the algebraic Peter-Weyl theorem.
It implies that the multiplicity of $\la^*$ in $\IC[\GL_{n^2}\det_n]$ equals the dimension of the $H_{\det_n}$-invariant space $\{\la\}^{H_{\det_n}}$.
This coefficient $\dim \{\la\}^{H_{\det_n}}$ is known as the symmetric rectangular Kronecker coefficient and has been the object of several papers \cite{BOR:09, Man:11, IP:17}.
Even though Kronecker coefficients are mysterious, many properties are known (see e.g.~\cite{CDW:12, Bla:12, SS:16, IMW:17, Liu:17, BCMW:17} for some recent advances), and character theory is available to compute their values in many cases, see e.g.\ \cite[Appendix]{ike:12b}.
A better understanding of Kronecker coefficients could lead to a better understanding of $\GL_{n^2}\det_n$, which could eventually help to separate points from $\End_{n^2}\det_n$.

If we replace $\det_n$ by other polynomials, we get an analogous theory that is often equally beautiful. The power sum polynomial and the product $x_1 \cdots x_D$ will be of particular interest in this paper. The corresponding coefficients are not Kronecker coefficients, but plethysm coefficients and related coefficients that appear in algebraic combinatorics, see Proposition~\ref{pro:gctandsymmetries} for the power sum and \eqref{eq:reverseplethysm} for the product of variables.

\paragraph{Closures}
As we just have seen, group orbits have several desirable properties and can be understood directly via symmetry groups and algebraic combinatorics.
But endomorphism orbits do not behave that nicely.
In general, $\End_{n^2}\det_n$ is not a variety.
In order to enable the study of $\End_{n^2}\det_n$ with methods from algebraic geometry, we go to the closure (Euclidean closure and Zariski closure coincide here by general principles, see \cite[\S2.C]{Mum:95}):
$\overline{\End_{n^2}\det_n}$, which coincides with the group orbit closure
$\overline{\GL_{n^2}\det_n}$, see e.g.~\cite[Sec.~3.5]{BI:18}.
Hence we have a chain of inclusions $\GL_{n^2}\det_n \subseteq \End_{n^2}\det_n \subseteq \overline{\GL_{n^2}\det_n}$.
The border determinantal complexity $\dcbar(\per_m)$ is defined as the smallest $n$ such that $\per_{m,n} \in \overline{\GL_{n^2}\det_n}$.
Mulmuley and Sohoni's conjecture (closely related to B\"urgisser's conjecture \cite[hypothesis~(7)]{Bue:01})
is that $\dcbar(\per_m)$ grows superpolynomially.
Since $\dcbar(\per_m) \leq \dc'(\per_m)$, this would imply Valiant's conjecture.
Mulmuley and Sohoni's conjecture can be attacked by representation theoretic multiplicities (see Section~\ref{sec:preliminaries}) as follows.
If we assume for the sake of contradiction that $\overline{\GL_{n^2}\per_{m,n}} \subseteq \overline{\GL_{n^2}\det_n}$,
then by Schur's lemma (see \cite[Lemma 1.7]{FH:91}) the multiplicities satisfy $\mult_\la\IC[\overline{\GL_{n^2}\per_{m,n}}] \leq \mult_\la\IC[\overline{\GL_{n^2}\det_n}]$.
Thus, if there exists $(\la,d)$ that satisfies
\begin{equation}\label{eq:ineq}
\mult_\la\IC[\overline{\GL_{n^2}\per_{m,n}}]_d > \mult_\la\IC[\overline{\GL_{n^2}\det_n}]_d,
\end{equation}
then $\dcbar(m)>n$.
Such a pair $(\la,d)$ is called a \emph{multiplicity obstruction}.

\paragraph{Orbits vs orbit closures}
The algebraic geometry of $\overline{\GL_{n^2}\det_n}$ and the representation theory of its coordinate ring are rather difficult to understand, see e.g.~\cite{Kum:15, BHI:17}.
But the close relationship between orbit and orbit closure gives hope that results can be transferred from the orbit to the closure.
Indeed, $\IC[\overline{\GL_{n^2}\det_n}] \subseteq \IC[\GL_{n^2}\det_n]$ is a subalgebra, and hence we have $\mult_\la\IC[\GL_{n^2} \det_n] \geq \mult_\la\IC[\overline{\GL_{n^2} \det_n}]$.
Getting lower bounds on multiplicities in $\IC[\overline{\GL_{n^2} \det_n}]$ seems much harder. For example, the result in \cite{Kum:15} holds for those $n$ for which the Alon-Tarsi property holds, in particular if $n$ is an odd prime number $\pm 1$, see Section~\ref{sec:notoccobsorvanidoccobs}.
The occurrence results in \cite{BIP:19} use explicit constructions using Young symmetrizers, which is a laborious process.
But as a first step towards lower bounds on $\mult_\la\IC[\overline{\GL_{n^2} \det_n}]$,
\cite{BI:17} proved that $\GL_{n^2} \det_n$ is open in its closure and that the ring $\IC[\GL_{n^2} \det_n]$ is a localization of $\IC[\overline{\GL_{n^2} \det_n}]$.

\section{Our contribution}
In this paper we tighten the results from \cite{BI:17} in the case of the power sum polynomial.
For $m \geq D$ let $p := x_1^D + x_2^D + \cdots + x_m^D$ and let $q := x_1 x_2 \cdots x_D$. Let $G := \GL_{m}$.
For $m=D$ we separate the two families of orbit closures $\overline {Gp} \not\subseteq \overline {Gq}$ of polynomials $p$ and $q$ using multiplicity obstructions~$\la$, i.e., $\mult_\la\IC[\overline{Gp}]>\mult_\la\IC[\overline{Gq}]$.
Our key contribution is a proof method that for the first time implements closely the strategy in \cite{MS:01,MS:08}:
Both the lower bound on $\mult_\la\IC[\overline{Gp}]$ and the upper bound on $\mult_\la\IC[\overline{Gq}]$ are obtained directly from the symmetry groups of $p$ and~$q$ and the dimension of the spaces of $H_p$- and $H_q$-invariants in irreducible $\GL_m$-representations.
This is the result of our tightening of the relationship between 
$\mult_\la\IC[Gp]$ and $\mult_\la\IC[\overline{Gp}]$, see Theorem~\ref{thm:proofofconcept}.

Before our paper, all existence proofs of multiplicity obstructions $\overline{Gp} \not\subseteq \overline{Gq}$ for any $p$ and $q$ required to explicitly construct (with multilinear algebra) copies of irreducible representations in $\mult_\la\IC[\overline{Gp}]$. These papers only took into account the symmetry group of $q$ instead of both symmetry groups, see \cite{BI:11, BI:13, GIP:17, DIP:19}.

In particular, we prove $\overline{Gp}\not\subseteq\overline{Gq}$ by explicitly constructing a multiplicity obstruction $\la$ in Theorem~\ref{thm:proofofconcept} using the symmetry groups of $p$ and $q$ and their representation theoretic decomposition coefficients,
but we do \emph{not} construct an explicit function that separates the two orbit closures! Since our obstruction is neither an occurrence obstruction, nor a vanishing ideal occurrence obstruction (see next paragraph), the separating function is quite nontrivial to recover.
This is a step in the right direction,
since the explicit construction of separating functions for Valiant's conjecture could turn out to be problematic because of the algebraic natural proofs barrier \cite{FSV:17,GKSS:17,BIJL:18}.

\subsection*{Occurrence obstructions and vanishing ideal occurrence obstructions}
The classical approach of Mulmuley and Sohoni \cite{MS:01, MS:08} conjectures the existence of so-called \emph{occurrence obstructions},
which are types $\la$ for which the stronger property $\mult_\la\IC[\overline{G\per_{m,n}}] > 0 = \mult_\la\IC[\overline{G\det_n}]$ holds.
These obstructions are not enough to prove strong complexity lower bounds (at least not in the classical setting of $\det_n$ vs $\per_{m,n}$), see \cite{IP:17, BIP:19}.
Recently it was shown that there are settings in which multiplicity obstructions are provably stronger than just occurrence obstructions \cite{DIP:19}.
The types $\la$ that are used in \cite{DIP:19} occur in the vanishing ideal of one orbit closure, but not in the vanishing ideal of the other,
hence we call them \emph{vanishing ideal occurrence obstructions}.
How useful vanishing ideal occurrence obstructions are for separating orbit closures is an open question, but it seems unlikely that strong complexity lower bounds can be proved by using only vanishing ideal occurrence obstructions.

The techniques that we develop in this paper study multiplicities and go beyond just occurrence obstructions and vanishing ideal occurrence obstructions.
Theorem~\ref{thm:proofofconcept} gives the first family of multiplicity obstructions that are neither occurrence obstructions nor vanishing ideal occurrence obstructions, see Proposition~\ref{pro:alontarsi}.
To prove this fact, we make use of Drisko's and Glynn's progress on the Alon-Tarsi conjecture about the difference between the number of even and odd Latin squares of a given size.

\subsection*{The toy setting as a starting point}
Our separation $\overline{Gp}\not\subseteq\overline{Gq}$
with $p=x_1^D+\cdots+x_m^D$ and $q=x_1\cdots x_D$
is clearly a toy problem,
but even though its complexity theoretic relevance is quite limited (multivariate factorization of a power sum) it shares all (as far as we know) crucial geometric and representation theoretic features with the determinant versus permanent problem:
Both problems are problems about orbit closures of polynomials, and the group action is the same canonical action. The only difference between the two setups are the specific polynomials $p$ and $q$.
Even though $p$ and $q$ do not share the complexity theoretic properties of the determinant and the permanent, $p$ and $q$ are characterized by their symmetry groups and are stable points (see \cite{BI:17}).
Therefore this setup can be seen a starting point from which $p$ and $q$ could now be gradually adjusted until some orbit closure separations can be obtained
that give lower bounds in algebraic complexity theory.

\subsection*{Structure of the paper}
In Section~\ref{sec:preliminaries} we start with preliminaries that are necessary to state our results precisely in Section~\ref{sec:resultdetails}.
The main connection between representation theory and tableau combinatorics is discussed in Sections~\ref{sec:prelrepr}--\ref{sec:hworbitclosure}.
Section~\ref{sec:threetableaux} proves a technical result about plethysm coefficients that was postponed from Section~\ref{sec:resultdetails}.
Section~\ref{sec:proofmain} proves the main technical theorem under the assumption that the so-called \emph{Tableau Lifting Theorem} \ref{thm:prolongation} is true. The rest of the paper (Sections~\ref{EVENsec:hypergraphsevenD} to~\ref{sec:psipropertiesDodd}) is then used to prove the Tableau Lifting Theorem using elementary but subtle Young tableau combinatorics.
Even and odd degrees $D$ are treated mostly independently, where the odd degree case is much more involved.

\section{Preliminaries}\label{sec:preliminaries}
A \emph{partition} $\la$ is a nonincreasing finite sequence of natural numbers $(\la_1,\la_2,\ldots)$.
We identify a partition with its \emph{Young diagram}, which is a top-left aligned array of boxes with $\la_i$ boxes in row~$i$.
For example, the Young diagram for $\la=(4,3,1)$ is $\tiny\yng(4,3,1)$.
The \emph{length} $\ell(\la)$ is the number of rows in the Young diagram of $\la$, formally $\ell(\la)=\max\{i \mid \la_i>0\}$.
The \emph{number of boxes} of $\la$ is defined as $|\la|:=\sum_{i\in\IN} \la_i$. We also define $|\varrho|:=\sum_{i\in\IN} \varrho_i$ in the case where $\varrho$ is not a partition.
The \emph{transpose} $\la^t$ of a partition $\la$ is the partition corresponding to the Young diagram that is the reflection of $\la$ at the main diagonal,
e.g., $(4,3,1)^t = (3,2,2,1)$. The entries of $\la^t$ are the column lengths of $\la$.
For a partition of length $\leq m$ and $\delta$ many boxes, we write $\la \vdash_m \delta$.
For two partitions $\la$ and $\mu$ we define their sum $\la+\mu$ in a row-wise fashion: $(\la+\mu)_i:=\la_i+\mu_i$.
For natural numbers $a,b$ let $a \times b$ denote the partition that corresponds to the rectangular Young diagram with $a$ rows and $b$ columns, i.e., $a \times b := (b,b,\ldots,b)$.

Fix $m \in \IN$.
For a partition $\la \vdash_m \delta$ we write $\{\la\}$ to denote the irreducible $\GL_m$-representation of type $\la$.
These $\{\la\}$ form a pairwise non-isomorphic list of irreducible polynomial $\GL_m$-representations, see \cite{fult:97}.
The dual ($=$contragredient) representation of $\{\la\}$ is denoted by $\{\la^*\}$.
Since $G$ is a reductive group,
every finite dimensional $G$-representation $\sV$ can be decomposed into a direct sum of irreducible $G$-representations, and we write
$\mult_\la \sV$ for the multiplicity of $\{\la\}$ in such a decomposition (although the decomposition might not be unique, the multiplicity is the same in any decomposition).

A \textit{tableau of shape} $\lambda$ over some finite \textit{alphabet} $\mathcal{A}$ is a mapping $\lambda \rightarrow \mathcal{A}$ of boxes to elements in $\mathcal{A}$. For example a tableau of shape $(4,3,1)$ over the alphabet $\mathbb{N}$ is given by $\Yvcentermath1\tiny\young(1322,232,1)$. For a tableau $T$ let $\sh(T):= \lambda$ denote its \textit{shape}, i.e., its vector of row lengths. For a tableau $T$ of shape $\lambda \vdash_m \delta$ over the alphabet $\{1,\dots,m\}$ we define its \textit{content} to be the vector $\varrho \in (\mathbb{N}_{\geq 0})^m$ where $\varrho_i$ counts the number of occurrences of $i$ in~$T$. For example, the tableau $\Yvcentermath1\tiny\young(1322,232,1)$ has content $(2,4,2)$.
The \emph{sorted content} of a tableau is the partition obtained by sorting the entries in the content in a decreasing order, e.g.~$(4,2,2)$ in the preceding example.
For two tableaux $T$ and $S$ we define their \textit{concatenation} $T+S$ by concatenating rows. The resulting tableau $T+S$ has shape $\sh(T)+\sh(S)$. For example, $\Yvcentermath1\tiny\young(1111,2222,3333) + \Yvcentermath1\tiny\young(444,55,6) = \Yvcentermath1\tiny\young(1111444,222255,33336)$. A tableau with entries from $\mathbb{N}$ is called \textit{semistandard} if the entries of each row are nondecreasing from left to right and the entries of each column are strictly increasing from top to bottom.
A semistandard tableau of shape $\la$ is called \emph{standard} if every number $1,\ldots,|\la|$ appears exactly once.
A column of a tableau $T$ is called \textit{regular} if it does not have a repeated entry. A tableau $T$ is called \textit{regular} if each of its columns is regular.
A tableau $L$ is called \emph{duplex} if each column in $L$ appears an even number of times.
For example, $L={\Yvcentermath1\tiny\young(11113344,2222,3344)}$ is duplex, while $L={\Yvcentermath1\tiny\young(11112344,2223,3344)}$ is not duplex. Duplex tableaux are the main idea behind the construction in \cite{bci:10}, see also \cite[Sec.~6.2]{ike:12b}.

Let $\Sym^D \IC^m$ denote the vector space of homogeneous degree $D$ polynomials in $m$ variables.
Let $G := \GL_m$.
The group $G$ acts on $\Sym^D \IC^m$ via $(gp)(x) := p(g^t x)$ for all $g \in G$, $p \in \Sym^D \IC^m$, $x \in \IC^m$,
where the transpose makes this a left action.
We consider the power sum polynomial $p := x_1^D + \cdots + x_m^D \in \Sym^D \IC^m$.
Clearly $p$ is fixed by permuting variables and by rescaling any variable by $D$th roots of unity.
For $D\geq 3$ these group elements generate the whole stabilizer of $p$, see e.g.~\cite{CKW:10}.
Let $H := \stab p = \IZ_D^m \rtimes \aS_m \subseteq G$ denote the stabilizer of $p$.
Let $q := x_1 \cdots x_D$. Clearly $q$ is fixed under permuting the variables and under rescaling each variable by a scalar $\alpha_i$ such that their product $\prod_{i=1}^D \alpha_i$ equals 1. These elements generate the stabilizer of $q$, see e.g.~\cite[Prop.~3.1]{Ike:19}.

Considering $\Sym^D \IC^m$ as a vector space with a $G$-action, let
$\IC[\Sym^D \IC^m]_d$ be the vector space of homogeneous degree $d$ polynomials on $\Sym^D \IC^m$.
In a natural way, $\IC[\Sym^D \IC^m]_d$ is a $G$-representation: $(gf)(q) := f(g^{-1} q)$ for all $g \in G$, $f \in \IC[\Sym^D \IC^m]_d$, $q \in \Sym^D \IC^m$.
The multiplicity $\mult_{\nu^*} \IC[\Sym^D \IC^m]_d$
is called the \emph{plethysm coefficient}, denoted by $a_\nu(d,D)$.
Note that it does not depend on $m$ for $m \geq \ell(\la)$, see e.g.~\cite[Sec.~4.3]{ike:12b}.
For the empty partition $(0)$ we define $a_{(0)}(0,i)=1$.
Whether or not the plethysm coefficient has a nice combinatorial description is an open research question in algebraic combinatorics, see Problem~9 in \cite{sta:00}. Among computer scientists, this question is commonly phrased as whether or not the map $(\nu,d,D)\mapsto a_\nu(d,D)$ is in the complexity class $\#\P$.

Given two irreducible $G$-representations $\{\mu\}$ and $\{\nu\}$, their tensor product $\{\mu\}\otimes\{\nu\}$ is a $G \times G$-representation. Embedding $G \hookrightarrow G \times G$ diagonally via $g \mapsto (g,g)$ the tensor product $\{\mu\}\otimes\{\nu\}$ becomes a $G$-representation that decomposes into irreducibles. The multiplicity $\mult_\la \{\mu\}\otimes\{\nu\}$ is called the \emph{Littlewood-Richardson coefficient} $c^\la_{\mu,\nu}$.
This quantity has numerous beautiful combinatorial interpretations, see e.g.\ \cite{bz:92}, \cite[Sec.~5]{fult:97}, \cite{knta:99}, \cite{buc:00}, \cite[Sec.~10]{ike:12b} and many more.
In particular, the map $(\la,\mu,\nu)\mapsto c^\la_{\mu,\nu}$ is in the complexity class $\#\P$. Even though the exact computation of $c^\la_{\mu,\nu}$ is $\NP$-hard \cite{nara:06}, deciding its positivity is possible in polynomial time \cite{DLM:06}, \cite{MNS:12}, \cite{BI:13LRC}.
Completely analogous properties hold when we take tensor products of polynomially many irreducible $G$-representations $\{\mu^1\}\otimes \cdots \otimes \{\mu^d\}$ and embed $G \hookrightarrow G \times G \times \cdots \times G$. The corresponding coefficient is called the \emph{multi-Littlewood-Richardson coefficient}
$c_{\mu^1,\mu^2,\ldots,\mu^d}^\la$.

For a partition $\varrho \vdash_m d$ the \emph{frequency notation} $\hat\varrho \in \IN^m$ is defined via
$\hat\varrho_i := |\{j \mid \varrho_j = i\}|$.
For example, the frequency notation of $\varrho=(3,3,2,0)$ is $\hat\varrho=(0,1,2,0)$.
We observe that $|\varrho|=\sum_i i \hat\varrho_i$.

\section{Result details}\label{sec:resultdetails}
The following Proposition~\ref{pro:gctandsymmetries} writes the multiplicity
$\mult_{\la^*} \IC[G p]$ as a nonnegative sum of products of multi-Littlewood-Richardson coefficients and plethysm coefficients.

\begin{proposition}\label{pro:gctandsymmetries}
Let $\la \vdash_m dD$.
Define 
\[
b(\la,\varrho,D,d) := \sum_{\mu^1,\mu^2,\ldots,\mu^d \atop \mu^i \vdash D i \hat\varrho_i} c_{\mu^1,\mu^2,\ldots,\mu^d}^\la \prod_{i=1}^d a_{\mu^i}(\hat\varrho_i,iD).
\]
Then
\[
\mult_{\la^*} \IC[G p] = \sum_{\varrho\vdash_m d} b(\la,\varrho,D,d).
\]
\end{proposition}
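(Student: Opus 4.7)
The plan is to compute $\mult_{\la^*}\IC[Gp] = \dim\{\la\}^H$ by restricting the $H$-invariants first to the abelian subgroup $\IZ_D^m$ and then to the permutation subgroup $\aS_m$, then to massage the result into the advertised sum via block-diagonal $\GL$-branching and a classical plethysm identity.

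First, since $p$ is stable with $\stab p = H$, the orbit $Gp\cong G/H$, whence $\IC[Gp]=\IC[G]^H$ (where $H$ acts by right multiplication). Combining with the algebraic Peter--Weyl decomposition \eqref{eq:algpeterweyl} and reading off the $\{\la^*\}$-isotypic component for the left $G$-action gives $\mult_{\la^*}\IC[Gp] = \dim\{\la\}^H = \dim(\{\la\}^{\IZ_D^m})^{\aS_m}$, since $H=\IZ_D^m\rtimes\aS_m$.

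Next, a diagonal $(\zeta_1,\ldots,\zeta_m)\in\IZ_D^m$ acts on the $\GL_m$-weight space $V_\la^\tau\subseteq\{\la\}$ by $\prod_i\zeta_i^{\tau_i}$, so $\{\la\}^{\IZ_D^m}$ is exactly the direct sum of the weight spaces $V_\la^{D\sigma}$ with $\sigma\in\IN_0^m$ and $|\sigma|=d$. The group $\aS_m$ permutes these weight spaces according to its permutation action on tuples $\sigma$; orbits are indexed by partitions $\varrho\vdash_m d$. Averaging over $\aS_m$ (equivalently, Frobenius reciprocity) identifies the $\aS_m$-invariants in a single orbit's direct sum with the $\stab_{\aS_m}(\varrho)$-invariants of $V_\la^{D\varrho}$, where $\stab_{\aS_m}(\varrho)=\prod_{i\geq 0}\aS_{\hat\varrho_i}$ is the Young subgroup associated to the block structure of $\varrho$. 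Hence
\[
\mult_{\la^*}\IC[Gp]=\sum_{\varrho\vdash_m d}\dim\bigl(V_\la^{D\varrho}\bigr)^{\prod_{i\geq 0}\aS_{\hat\varrho_i}}.
\]

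To evaluate the summand, embed the Levi subgroup $L:=\prod_{i\geq 0}\GL_{\hat\varrho_i}\hookrightarrow\GL_m$ block-diagonally with blocks arranged according to the values $0,1,\ldots,d$ appearing in $\varrho$, so that the permutation subgroup of $L$ is precisely $\stab_{\aS_m}(\varrho)$. The $\GL$-branching rule gives $\{\la\}\big|_L=\bigoplus_{(\mu^0,\ldots,\mu^d)}c^\la_{\mu^0,\ldots,\mu^d}\bigotimes_i\{\mu^i\}_{\GL_{\hat\varrho_i}}$, and extracting the $D\varrho$-weight space amounts to taking on each tensor factor the weight-$(iD,\ldots,iD)$ subspace of $\{\mu^i\}$; the degree constraint forces $\mu^i\vdash iD\hat\varrho_i$ (and $\mu^0=(0)$, contributing a factor $1$). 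The final ingredient is the classical identity
\[
\dim\bigl(\{\mu\}_{\GL_n}^{(r,\ldots,r)}\bigr)^{\aS_n}=a_\mu(n,r),
\]
which one derives by Schur--Weyl duality together with the definition $\Sym^n(\Sym^r\IC^N)=\bigoplus_\mu a_\mu(n,r)\{\mu\}$: indeed, both sides compute the multiplicity of $\{\mu\}$ in $\Sym^n(\Sym^r\IC^N)$ after passing to $\GL_N$ of sufficiently large rank. Applying this factor by factor yields $\prod_{i=1}^d a_{\mu^i}(\hat\varrho_i,iD)$, and assembling the pieces recovers $b(\la,\varrho,D,d)$.

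The main obstacle is bookkeeping rather than conceptual depth: one must fix conventions for weights versus contragredient weights carefully so that the Peter--Weyl reading gives $\{\la^*\}$ and not $\{\la\}$, and one must verify that the block-diagonal Levi embedding in the third paragraph identifies the symmetric-group stabilizer of $\varrho$ with the diagonal permutation subgroup of $L$, so that the plethysm identity applies in each block. Once these conventions are aligned, the proof is essentially a two-stage invariants computation followed by the branching-plus-plethysm identification.
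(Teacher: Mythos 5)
Your proposal is correct, and its overall skeleton is the same as the paper's: identify $\mult_{\la^*}\IC[Gp]$ with $\dim\{\la\}^H$ via the algebraic Peter--Weyl theorem and the isomorphism $\IC[Gp]\cong\IC[G]^{H}$, take $\IZ_D^m$-invariants to isolate the weight spaces of weight $D\sigma$, and reduce the $\aS_m$-invariants of each $\aS_m$-orbit of weight spaces to the $\stab\varrho$-invariants of the single weight space $\{\la\}^{D\varrho}$ (your Frobenius-reciprocity argument is exactly the projection isomorphism in the Claim inside Proposition~\ref{pro:precisedecomposition}). Where you genuinely diverge is in evaluating $\dim\bigl(\{\la\}^{D\varrho}\bigr)^{\stab\varrho}$: the paper passes to the symmetric-group side, identifying $\{\la\}^{D\varrho}\cong[\la]^{G_\varrho}$ and applying Schur--Weyl duality once to realize the whole invariant space inside $\bigl(\tensor^{dD}V\bigr)^{G_\varrho\rtimes\stab\varrho}\cong\bigotimes_i\IC[\Sym^{iD}V]^*_{\hat\varrho_i}$, reading off the multi-Littlewood--Richardson coefficients and the plethysms simultaneously; you stay on the $\GL$ side, branching $\{\la\}$ to the block Levi $\prod_i\GL_{\hat\varrho_i}$ (which is where the multi-LR coefficients enter), extracting the rectangular weight $((iD)^{\hat\varrho_i})$ in each block, and invoking Gay's identity $\dim\bigl(\{\mu\}^{((iD)^{\hat\varrho_i})}\bigr)^{\aS_{\hat\varrho_i}}=a_{\mu}(\hat\varrho_i,iD)$ blockwise. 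The two routes are equivalent in content -- Gay's identity is precisely the rectangular special case of the Specht/Schur--Weyl computation the paper carries out, and your per-$\varrho$ computation also yields the refined statement of Proposition~\ref{pro:precisedecomposition} -- but yours is more modular, quoting two standard facts (Levi branching and Gay's theorem) instead of re-deriving them, while the paper's version is self-contained and organizes the bookkeeping (including the trivial zero block, which you correctly dispose of via $\mu^0=\emptyset$) inside a single duality argument.
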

The proof technique is based on the technique in \cite{BI:11}. The proof is postponed to Section~\ref{sec:gctandsymmetries}.

We remark that if Problem~9 in \cite{sta:00} is resolved positively, then Proposition~\ref{pro:gctandsymmetries} implies that the multiplicity $\mult_{\la^*}\IC[Gp]$ has a combinatorial description, i.e., the map $(\la,m,d,D)\mapsto \mult_{\la^*}\IC[Gp]$ is in $\#\P$. The same holds also for its summands $b(\lambda,\varrho,D,d)$.
It is known that $\mult_{\la^*}\IC[Gq] = a_\la(D,d)$ (see e.g.~\cite[Sec.~9.2.3]{Lan:17}), so the same holds for $\mult_{\la^*}\IC[Gq]$.

Our main technical theorem that enables us to find obstructions is the following.
\begin{theorem}[Main technical theorem]\label{thm:main}
Let $m, d, D \in \IN$. If $D$ is odd, we assume that ${{2(D-1)}\choose{D-1}} \geq 2(m-1)$.
Let $\la \vdash_m dD$.
For each $\varrho\vdash_m d$ define the number $e_\varrho$ as follows:
        \begin{itemize}
                 \item if $D$ is even, let $e_\varrho := \sum_{i=1}^{m} \lceil \frac{\varrho_i}{D-2} \rceil$.
                 \item if $D$ is odd, let $e_\varrho := \sum_{i=1}^{m} 2\lceil \frac{\varrho_i}{2(D-2)} \rceil$.
        \end{itemize}
Let $\Xi$ be a subset of the set of all partitions $\varrho\vdash_m d$.
Let $e_\Xi := \max\{e_\varrho \mid \varrho \in \Xi\}$.
Then
$$\mult_{(\la+(m \times e_\Xi D))^*} \IC[\overline{Gp}] \geq \sum_{\varrho \in \Xi} b(\la,\varrho,D,d).$$
\end{theorem}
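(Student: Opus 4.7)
The plan is to first extract many highest weight vectors of type $\la^*$ inside $\IC[Gp]$, indexed by the partitions $\varrho \in \Xi$, and then to transport them into $\IC[\overline{Gp}]$ at the price of a rectangular shift of the weight. My starting point will be Proposition~\ref{pro:gctandsymmetries}, which expresses $\mult_{\la^*} \IC[Gp]$ as a sum of contributions $b(\la,\varrho,D,d)$ over $\varrho \vdash_m d$. The proof of that proposition is constructive, so I may assume that for each $\varrho$ there are $b(\la,\varrho,D,d)$ linearly independent highest weight vectors $f_{\varrho,T}$ of weight $\la^*$ in $\IC[Gp]$, combinatorially indexed by the multi-Littlewood--Richardson and plethysm data $T$. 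Restricting to $\varrho \in \Xi$ gives a linearly independent family of total size $\sum_{\varrho \in \Xi} b(\la,\varrho,D,d)$.

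Next I will invoke the Tableau Lifting Theorem~\ref{thm:prolongation}, the combinatorial heart of the paper. Informally, it asserts that after multiplying $f_{\varrho,T}$ by a natural $G$-semi-invariant $\Delta$ to the power $e_\varrho$ — which shifts the $\GL_m$-weight by the rectangle $m\times e_\varrho D$ — the product extends regularly from $Gp$ to $\overline{Gp}$, i.e., it lies in the subalgebra $\IC[\overline{Gp}] \subseteq \IC[Gp]$. The integer $e_\varrho$ measures how much padding is needed to clear the singularities that $f_{\varrho,T}$ acquires on the boundary $\overline{Gp} \setminus Gp$; the two piecewise formulas in the statement encode how tableau columns of length $m$ can be assembled out of segments of size $D-2$ (for $D$ even) or out of duplex pairs of size $2(D-2)$ (for $D$ odd), which is where the parity split enters.

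To collect all these lifts in a single weight space I will then further multiply by $\Delta^{e_\Xi - e_\varrho}$, pushing every lifted vector into the $(\la + m\times e_\Xi D)^*$-isotypic component of $\IC[\overline{Gp}]$. Because $\Delta$ is a non-zerodivisor on the irreducible variety $\overline{Gp}$, multiplication by $\Delta^{e_\Xi}$ is injective, and therefore the lifted highest weight vectors remain linearly independent in $\IC[\overline{Gp}]$, producing the claimed lower bound on $\mult_{(\la + m \times e_\Xi D)^*} \IC[\overline{Gp}]$.

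The main obstacle in this plan is the Tableau Lifting Theorem itself, whose proof occupies the remainder of the paper. The even-$D$ case should be accessible by a relatively direct construction on semistandard tableaux, but the odd-$D$ case will be substantially harder: one must pass to duplex tableaux in order to neutralize the sign changes coming from reordering repeated columns, and this is precisely what accounts for the extra factor of $2$ in the definition of $e_\varrho$ and for the technical hypothesis ${2(D-1)\choose D-1} \geq 2(m-1)$, which guarantees that enough distinct columns of length $D-1$ are available to execute the duplex construction.
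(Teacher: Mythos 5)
Your high-level architecture is indeed the paper's: produce $\sum_{\varrho\in\Xi}b(\la,\varrho,D,d)$ linearly independent highest weight vectors of weight $\la^*$ on the orbit, lift each to weight $(\la+(m\times e_\varrho D))^*$ on the closure, equalize weights by multiplying with powers of the fundamental invariant of \cite{BI:17} (your $\Delta$), and retain linear independence (your non-zerodivisor argument is a legitimate variant of the paper's restriction to $\SL_m p$). But there is a genuine gap at the central step: Theorem~\ref{thm:prolongation} does not assert, even informally, that $\Delta^{e_\varrho}f$ extends regularly to $\overline{Gp}$. It is a purely combinatorial existence statement about a tableau $T$ of shape $\la+(m\times e_\varrho D)$ and its specializations $\varphi(T)$; it mentions no functions, no semi-invariant, and no orbit closure. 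By paraphrasing it as the extension statement you need, you have assumed precisely the part of Theorem~\ref{thm:main} whose proof is being requested.

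To close the gap you need the bridge between tableaux and functions, none of which appears in your proposal. First, Theorem~\ref{thm:functionsonorbit} (with Proposition~\ref{pro:precisedecomposition}) realizes your orbit HWVs concretely as $gp\mapsto\gamma(gP_mS_{\varrho,i})$ for semistandard tableaux $S_{\varrho,i}$ of content $D\varrho$; these tableaux are the actual input to Theorem~\ref{thm:prolongation}. Second, Theorem~\ref{thm:functionsonorbitclosure} is needed to know that the lifted tableau $T_{\varrho,i}$ yields a highest weight function $g\mapsto\gamma(gM_{\delta,m}T_{\varrho,i})$ of weight $(\la+(m\times e_\varrho D))^*$ on $\overline{Gp}$. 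Third---and this is where properties \eqref{enum:rightpartinSmS}, \eqref{enum:duplex}, \eqref{enum:existspreimage} are actually consumed---one must compute that this closure function restricts on $\SL_m p$ to a nonzero multiple of $\gamma(gP_mS_{\varrho,i})$: non-regular specializations vanish by Corollary~\ref{cor:gammazero}, the duplex left part contributes $(\pm1)^2=1$ by Corollaries~\ref{cor:gammafactorization} and~\ref{cor:pm1}, the right parts lie in $\aS_mS_{\varrho,i}$, and an $\aS_m$-equivariance count shows that every tableau in $\aS_mS_{\varrho,i}$ has the same number $\alpha$ of regular preimages, with $\alpha\neq 0$ guaranteed by part~\eqref{enum:existspreimage}. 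Only after this computation (plus a weight comparison) is the statement ``$\Phi^{e_\varrho}f$ lies in $\IC[\overline{Gp}]\subseteq\IC[Gp]$'' justified. A smaller slip: for odd $D$ there is no semi-invariant of weight $\det^D$ on the orbit, since $\det(h)^D$ equals the sign of the permutation part of $h\in H$; one must use $\Phi$ with $\Phi(gp)=\det(g)^{2D}$ and the exponents $e_\varrho/2$ and $(e_\Xi-e_\varrho)/2$, which is consistent because your $e_\varrho$ are even, but your write-up does not make this adjustment.
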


The proof of Theorem~\ref{thm:main} is postponed to Section~\ref{sec:proofmain}.

\subsection*{Explicit obstructions via symmetries}

We use Theorem~\ref{thm:main} as follows to construct obstructions.

\begin{theorem}\label{thm:proofofconcept}
Let $m=D\geq 4$ be even.
Let $d=2$.
Let $\la = (2m)$ and let $\nu = (2m)+(m\times 2m)$.
Let $p:=x_1^m+\cdots+x_m^m$ and $q := x_1 x_2 \cdots x_m$.
Let $\Xi = \{(2),(1,1)\}$.
In the notation of Theorem~\ref{thm:main} we have $e_{(2)}=1$ and $e_{(1,1)}=2$, thus $e_\Xi = 2$.
Note that $\nu = \la + (m \times e_\Xi D)$.
We have
\begin{itemize}
 \item $b((2m),(2),m,2)=1$ and $b((2m),(1,1),m,2)=1$ and hence with Theorem~\ref{thm:main} we have $\mult_{\nu^*} \IC[\overline{Gp}] \geq 2$.
 \item $1 \geq \mult_{\nu^*} \IC[G q] \geq \mult_{\nu^*} \IC[\overline{G q}]$.
\end{itemize}
In particular $\mult_{\nu^*} \IC[\overline{Gp}] \geq 2 > 1 \geq \mult_{\nu^*} \IC[\overline{Gq}]$ and hence
$\nu$ is a multiplicity obstruction that proves the separation $\overline{Gp}\not\subseteq\overline{Gq}$.
\end{theorem}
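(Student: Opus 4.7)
The proof splits into two independent estimates: the lower bound $\mult_{\nu^*}\IC[\overline{Gp}]\ge 2$ via Theorem~\ref{thm:main}, and the upper bound $\mult_{\nu^*}\IC[\overline{Gq}]\le 1$ via the symmetries of $q$, combined with the subalgebra inclusion $\IC[\overline{Gq}]\subseteq\IC[Gq]$ (exactly as recalled for $\det_n$ in the introduction). The first thing I would do is verify the numerical setup. Since $D=m\ge 4$ is even, we are in the even-$D$ case of Theorem~\ref{thm:main}, so no auxiliary inequality needs to be checked. With $D-2\ge 2$ we read off $e_{(2)}=\lceil 2/(D-2)\rceil=1$ and $e_{(1,1)}=2\lceil 1/(D-2)\rceil=2$, giving $e_\Xi=2$ and $\la+(m\times e_\Xi D)=(2m)+(m\times 2m)=\nu$.

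Next I would unwind the two $b$-coefficients from Proposition~\ref{pro:gctandsymmetries} using the frequency notation. For $\varrho=(2)$ we have $\hat\varrho_2=1$ and $\hat\varrho_i=0$ otherwise, so only one shape profile contributes: $\mu^1$ must be empty and $\mu^2\vdash 2D$. The multi-Littlewood--Richardson coefficient $c^{(2D)}_{\emptyset,\mu^2}$ collapses to the Kronecker delta $[\mu^2=(2D)]$, and $a_{(2D)}(1,2D)=1$ since $\IC[\Sym^{2D}\IC^m]_1$ is itself the irreducible $\{(2D)^*\}$. For $\varrho=(1,1)$ we have $\hat\varrho_1=2$, so symmetrically $\mu^2=\emptyset$ and $\mu^1=(2D)$, leaving the plethysm $a_{(2D)}(2,D)$; this equals $1$ by the standard decomposition $\Sym^2\{(D)\}=\bigoplus_{k=0}^{\lfloor D/2\rfloor}\{(2D-2k,2k)\}$, which is valid for any $D$ and in particular isolates the highest-weight summand $\{(2D)\}$ with multiplicity one. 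Summing, $b((2m),(2),m,2)+b((2m),(1,1),m,2)=2$, and Theorem~\ref{thm:main} delivers $\mult_{\nu^*}\IC[\overline{Gp}]\ge 2$.

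For the upper bound I would use the algebraic Peter-Weyl identity (as in the introduction's $\det_n$ discussion) to write $\mult_{\nu^*}\IC[Gq]=\dim\{\nu\}^{H_q}$, and then enlarge the invariant space to pass to the codimension-one subtorus $T=\{\diag(\alpha_1,\ldots,\alpha_m)\in\GL_m:\prod_i\alpha_i=1\}\subseteq H_q$. The $T$-invariants of $\{\nu\}$ are precisely the weight space of the uniform weight $(k,\ldots,k)$ with $k=|\nu|/m=2m+2$, whose dimension is the Kostka number $K_{\nu,(k,\ldots,k)}$. I would then show this Kostka number equals $1$: the transpose $\nu^{t}$ consists of $2m$ columns of length $m$ followed by $2m$ columns of length $1$, so every length-$m$ column of an SSYT over the alphabet $\{1,\ldots,m\}$ must read $(1,2,\ldots,m)$ from top to bottom. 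This forces rows $2,\ldots,m$ entirely (each constant of value $i$), and leaves for row~$1$ the task of distributing $2m+2$ ones and exactly two copies of each of $2,\ldots,m$; the weakly increasing constraint then produces a unique filling.

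The argument is essentially self-contained, with all ingredients drawn from the preliminaries. The only non-mechanical step is the Kostka-number calculation, which becomes elementary precisely because $\nu^{t}$ has only the two extreme column lengths $m$ and $1$. The main pitfall I would watch for is a convention mismatch between $a_\la(d,D)$ and $a_\la(D,d)$ when citing $\mult_{\la^*}\IC[Gq]$, but routing the upper bound through the $H_q$-invariant space avoids this altogether and emphasizes the paper's philosophical point that the separation is extracted directly from the symmetry groups of $p$ and $q$.
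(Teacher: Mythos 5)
Your proof is correct, and the lower-bound half is essentially the paper's own argument: you unwind $b(\la,\varrho,D,d)$ with the same degenerate multi-Littlewood--Richardson coefficients $c^{\la}_{\mu,(0)}=\delta_{\mu,\la}$, use $a_{(2D)}(1,2D)=1$, and identify $a_{(2m)}(2,m)=1$ from the classical decomposition of $\Sym^2(\Sym^D)$ (the paper cites the $h_2\circ h_D$ formula in Macdonald for exactly this). The only genuine divergence is the upper bound: the paper passes through the identity $\mult_{\nu^*}\IC[Gq]=a_\nu(m,2m+2)$ of \eqref{eq:reverseplethysm} and then invokes the Kostka bound $a_\nu(m,2m+2)\le K(\nu,m\times(2m+2))$ attributed to Gay, whereas you bound $\mult_{\nu^*}\IC[Gq]=\dim\{\nu\}^{H_q}\le\dim\{\nu\}^{T}$ for the subtorus $T=\{\diag(\alpha):\prod_i\alpha_i=1\}\subseteq H_q$, and observe that $\dim\{\nu\}^{T}$ is precisely the weight space of the uniform weight $(2m+2,\dots,2m+2)$, i.e.\ the same Kostka number $K(\nu,m\times(2m+2))$. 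Both routes terminate in the identical combinatorial computation that this Kostka number is $1$ (your column-length analysis of $\nu^t=(m^{2m},1^{2m})$ matches the paper's explicit tableau), so the content is the same; your version trades the citation of \eqref{eq:reverseplethysm} and Gay's bound for the orbit--coset Peter--Weyl identity $\mult_{\nu^*}\IC[Gq]=\dim\{\nu\}^{H_q}$, which the paper spells out only for $p$ and $\det_n$ but which applies verbatim to $q$, and it arguably makes the ``obstruction from both symmetry groups'' philosophy even more explicit.
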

\begin{proof}
We use the following well-known properties about Littlewood-Richardson coefficients (see e.g.~\cite[Ch.~5]{fult:97}).\\
\noindent $\bullet$ The Littlewood-Richardson coefficient is symmetric in the subscript parameters:
$c_{\mu,\nu}^\la = c_{\nu,\mu}^\la$.\\
\noindent $\bullet$ $c_{\mu,(0)}^\la$ equals 1 iff $\mu=\la$, otherwise $c_{\mu,(0)}^\la=0$.

Recall that $\hat\varrho$ is the frequency notation of $\varrho$.
We calculate
\begin{eqnarray*}
b(\la,(2),m,2) &=& \sum_{\mu^1,\mu^2 \atop \mu^i \vdash m i \widehat{(2)}_i} c_{\mu^1,\mu^2}^\la \prod_{i=1}^2 a_{\mu^i}(\widehat{(2)}_i,im) \\
&=&
\sum_{\mu^2 \vdash 2D} c_{(0),\mu^2}^\la \cdot
a_{(0)}(\widehat{(2)}_1,m) \cdot a_{\mu^2}(\widehat{(2)}_2,2m)
\underbrace{=}_{c_{(0),\mu}^\la = \delta_{\mu,\la}}
a_{(0)}(\widehat{(2)}_1,m) \cdot a_{\la}(\widehat{(2)}_2,2m) \\
&=&
a_{(0)}(0,m) \cdot a_{(2m)}(1,2m) = 1 \cdot 1 = 1.
\end{eqnarray*}
Here we used the trivial fact that $a_{(\delta)}(1,\delta)=1$ for all $\delta$.
\begin{eqnarray*}
b(\la,(1,1),m,2) &=& \sum_{\mu^1,\mu^2 \atop \mu^i \vdash m i \widehat{(1,1)}_i} c_{\mu^1,\mu^2}^\la \prod_{i=1}^2 a_{\mu^i}(\widehat{(1,1)}_i,im) \\
&=&
\sum_{\mu^1 \vdash m} c_{\mu^1,(0)}^\la \cdot
a_{\mu^1}(\widehat{(1,1)}_1,m) \cdot a_{(0)}(\widehat{(1,1)}_2,2m)
\underbrace{=}_{c_{\mu,(0)}^\la = \delta_{\mu,\la}}
a_{\la}(\widehat{(1,1)}_1,m) \cdot a_{(0)}(\widehat{(1,1)}_2,2m) \\
&=&
a_{(2m)}(2,m) \cdot a_{(0)}(0,2m) = 1 \cdot 1 = 1.
\end{eqnarray*}
Here we used the classical fact that $a_{\la}(2,m)=1$ if $\la$ has $2m$ boxes and at most 2 rows and both rows have even length 
(see the formula for $h_2 \circ h_n$ in \cite[I.8, Exa.~9(a), p.~140]{MaD:95}).
This proves the first part of the claim.

$\IC[\overline{Gq}] \subseteq \IC[Gq]$ is a subalgebra, so
$\mult_{\nu^*} \IC[\overline{Gq}] \leq \mult_{\nu^*} \IC[Gq]$.
To show $\mult_{\nu^*} \IC[Gq] \leq 1$ we use that
\begin{equation}\label{eq:reverseplethysm}
\mult_{\nu^*} \IC[Gq] = a_\nu(m,2m+2), 
\end{equation}
see e.g.~\cite[Sec.~9.2.3]{Lan:17}.
We apply the upper bound given by the Kostka numbers $a_\nu(m,2m+2) \leq K(\nu,m \times (2m+2))$, which is the number of semistandard tableaux of shape $\nu$ and content $m \times (2m+2)$.
This classical upper bound can be deduced for example from \cite{Gay:76}, see also \cite[Sec.~4.3(A)]{ike:12b}.
It is clear from the special shape of $\nu$ that this Kostka number is 1.
As an illustration, we give an example of this tableau in the case where $m=4$:
\begin{center}{\tiny
\young(1111111111223344,22222222,33333333,44444444)
}\end{center}
\qedhere
\end{proof}

In Section~\ref{sec:notoccobsorvanidoccobs} we prove that the obstructions in Theorem~\ref{thm:proofofconcept} are neither occurrence obstructions nor vanishing ideal occurrence obstructions (in infinitely many cases. This holds in \emph{all} cases if the Alon-Tarsi conjecture is true).

\paragraph{A remark on plethysm coefficients}
As far as we know, Theorem~\ref{thm:main} is the first result of its type in the literature so far.
Even the following direct corollary about plethysm coefficient positivity is new.
\begin{corollary}\label{cor:plethpos}
Let $D\geq 3$ be odd and let $m$ be arbitrary with
$\binom{2(D-1)}{D-1}\geq 2(m-1)$.
Let $d$ be arbitrary.
Let $e=2\lceil\frac d {2(D-2)}\rceil$. Then $a_{(dD)+m\times eD}(d+me,D)\geq 1$.
\end{corollary}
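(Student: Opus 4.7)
The plan is to reduce the positivity claim to Theorem~\ref{thm:main} by choosing $\la$ and $\Xi$ so that both sides of the Theorem~\ref{thm:main} inequality isolate exactly the plethysm coefficient in the corollary. Specifically, I would set $\la:=(dD)$, which trivially satisfies $\la\vdash_m dD$ since $\ell(\la)=1\leq m$, and take $\Xi:=\{(d)\}\subseteq \{\varrho\vdash_m d\}$, a singleton. Then $\la+(m\times e_\Xi D)=(dD)+m\times eD$ will exactly be the partition appearing in the corollary, once I verify $e_\Xi = e$.

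The first step is to compute $e_{(d)}$ and $b((dD),(d),D,d)$. For $\varrho=(d)$, the only nonzero part is $\varrho_1=d$, so
\[
e_{(d)} = 2\bigl\lceil \tfrac{d}{2(D-2)}\bigr\rceil = e,
\]
matching the corollary. For $b((dD),(d),D,d)$, the frequency vector $\hat\varrho$ is supported only at index $d$ with value $1$, so every factor partition $\mu^i$ with $i\neq d$ must be empty and $\mu^d\vdash dD$. The multi-Littlewood-Richardson coefficient $c^{(dD)}_{(0),\ldots,(0),\mu^d,(0),\ldots,(0)}$ equals $\delta_{\mu^d,(dD)}$, because tensoring with trivial $G$-representations is the identity. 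The sum therefore collapses to $a_{(dD)}(1,dD)\cdot \prod_{i\neq d}a_{(0)}(0,iD)$, and since $\Sym^{dD}\IC^m$ is the irreducible $\GL_m$-module of type $(dD)$, we get $a_{(dD)}(1,dD)=1$, so $b((dD),(d),D,d)=1$.

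The second step is to invoke Theorem~\ref{thm:main}, whose odd-$D$ hypothesis $\binom{2(D-1)}{D-1}\geq 2(m-1)$ is precisely the hypothesis of the corollary. This immediately yields
\[
\mult_{((dD)+m\times eD)^*}\IC[\overline{Gp}] \ \geq\ b((dD),(d),D,d) \ =\ 1.
\]
Finally, $\overline{Gp}\subseteq\Sym^D\IC^m$ is a $G$-stable closed subvariety, so there is a $G$-equivariant surjection $\IC[\Sym^D\IC^m]\twoheadrightarrow\IC[\overline{Gp}]$ of graded algebras. By Schur's lemma the multiplicity of any irreducible in each graded piece can only decrease under this surjection, and by the very definition of the plethysm coefficient, the multiplicity of $\{((dD)+m\times eD)^*\}$ in the degree-$(d+me)$ piece of $\IC[\Sym^D\IC^m]$ equals $a_{(dD)+m\times eD}(d+me,D)$. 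Chaining the two inequalities gives $a_{(dD)+m\times eD}(d+me,D)\geq 1$.

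There is no real obstacle here: all the substantive work is absorbed into Theorem~\ref{thm:main}, and the role of the corollary is to record that a particularly clean choice of $\la$ and $\Xi$ makes the right-hand side of Theorem~\ref{thm:main} reduce to a single term equal to $1$, after which a routine upper semicontinuity of multiplicities from the ambient polynomial ring to the quotient closes the argument.
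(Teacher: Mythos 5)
Your proposal is correct and follows essentially the same route as the paper: choose $\la=(dD)$ and $\Xi=\{(d)\}$, apply Theorem~\ref{thm:main} (whose odd-$D$ hypothesis is exactly the corollary's hypothesis) to get $\mult_{((dD)+(m\times eD))^*}\IC[\overline{Gp}]\geq b((dD),(d),D,d)=1$, and then pass to $\IC[\Sym^D\IC^m]$ via the closed subvariety inclusion to bound the plethysm coefficient from below. The only difference is that you spell out the computation $b((dD),(d),D,d)=1$ and the semicontinuity of multiplicities under the restriction surjection, which the paper leaves implicit; both verifications are accurate.
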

\begin{proof}
Let $\lambda=(dD)$ and $\varrho=(d)$, $\Xi = \{(d)\}$.
Theorem~\ref{thm:main} implies $\mult_{((dD)+(m \times e D))^*} \IC[\overline{Gp}] \geq b((dD),(d),D,d) = 1$.
Since $\overline{Gp}$ is a subvariety of $\Sym^D\IC^m$, it follows $a_{(dD)+m\times eD}(d+me,D)\geq 1$.
\end{proof}
Many additional direct corollaries of this type can be drawn from Theorem~\ref{thm:main}.

\section{Neither occurrence obstructions nor vanishing ideal occurrence obstructions}
\label{sec:notoccobsorvanidoccobs}
The main new property of our obstructions in Theorem~\ref{thm:proofofconcept} is that they use both the symmetry group of $p$ and the symmetry group of~$q$. This is a fundamentally new way of constructing obstructions.
To highlight the novelty of the approach, in this section we prove that the obstructions in Theorem~\ref{thm:proofofconcept} are not vanishing ideal occurrence obstructions. We also prove that they are not occurrence obstructions, provided a property of Latin squares is true (which we know is true for an infinite number of cases).
The novelty of our method gives hope that more and stronger results can be proved in a similar way.

Recall the following proposition from \cite{BI:17}:
\begin{lemma}[\cite{BI:17}]\label{lem:shift}
If $D$ is even, then $\mult_{\la^*} \IC[G p] = \mult_{(\la+(m \times D))^*} \IC[G p]$.
If $D$ is odd, then $\mult_{\la^*} \IC[G p] = \mult_{(\la+(m \times 2D))^*} \IC[G p]$.
\end{lemma}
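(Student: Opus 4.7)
The plan is to translate the statement into one about $H$-invariants via the algebraic Peter-Weyl theorem, and then implement the desired shift by tensoring with a power of the determinant representation. Since $p$ has stabilizer $H = \IZ_D^m \rtimes \aS_m$, the orbit $Gp$ is isomorphic to $G/H$ and hence $\IC[Gp] = \IC[G]^H$. Taking $H$-invariants on the left tensor factor of \eqref{eq:algpeterweyl} yields
\begin{equation*}
\mult_{\la^*} \IC[Gp] = \dim \{\la\}^H,
\end{equation*}
so it suffices to exhibit an isomorphism of $H$-modules $\{\la\} \cong \{\la + (m \times k)\}$ for the appropriate $k$, namely $k = D$ if $D$ is even and $k = 2D$ if $D$ is odd.

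Next I would use the standard fact that tensoring with the one-dimensional representation $\det^k$ shifts the highest weight by $(k, k, \ldots, k)$, giving $\{\la\} \otimes \det^k \cong \{\la + (m \times k)\}$ as $\GL_m$-representations. Hence it is enough to verify that $\det^k$ is a trivial $H$-module for the specified values of $k$, for then the isomorphism above is automatically $H$-equivariant and taking $H$-invariants on both sides preserves dimensions. I would check triviality on the generators of $H$: a diagonal element $\diag(\zeta^{a_1}, \ldots, \zeta^{a_m}) \in \IZ_D^m$, where $\zeta$ is a primitive $D$th root of unity, acts on $\det^k$ by $\zeta^{k(a_1 + \cdots + a_m)}$, which equals $1$ for all choices of $a_i$ iff $D \mid k$; and a permutation matrix $\pi \in \aS_m$ acts on $\det^k$ by $\sgn(\pi)^k$, which equals $1$ for all $\pi$ iff $k$ is even.

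The smallest $k$ satisfying both constraints is $k = D$ when $D$ is even, since then divisibility already forces $k$ to be even, and $k = 2D$ when $D$ is odd, since one must double $D$ in order to achieve even parity. In either case $\det^k$ is a trivial $H$-line, so the isomorphism of $\GL_m$-representations is actually an isomorphism of $H$-modules, yielding $\dim \{\la\}^H = \dim \{\la + (m \times k)\}^H$ and thus the lemma. The argument is short; the only genuine subtlety is the parity check for the $\aS_m$ action on $\det^k$, which is precisely the mechanism producing the even-versus-odd dichotomy in the statement.
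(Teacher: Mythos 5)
Your proposal is correct and is essentially the standard argument: the reduction $\mult_{\la^*}\IC[Gp]=\dim\{\la\}^H$ is exactly the Peter--Weyl framework the paper sets up in Section~\ref{sec:hworbit}, and the shift then follows from $\{\la\}\otimes\det^k\cong\{\la+(m\times k)\}$ together with your (correct) check that $\det^k$ restricts to the trivial character of $H=\IZ_D^m\rtimes\aS_m$ precisely when $D\mid k$ and $k$ is even, giving $k=D$ for even $D$ and $k=2D$ for odd $D$. Note the paper does not prove Lemma~\ref{lem:shift} itself but cites \cite{BI:17}; your argument matches the mechanism used there.
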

\begin{lemma}\label{lem:cormain}
Let $\la \vdash_m dD$.
If $D$ is even, let
\[
e := \begin{cases}
     d & \text{ if } d\leq m \\
     m + \lfloor \frac{d-m}{D-2} \rfloor & \text{ if } d\geq m
     \end{cases}.
\]
If $D$ is odd and ${{2(D-1)}\choose{D-1}} \geq 2(m-1)$, let
\[
e := \begin{cases}
     2d & \text{ if } d\leq m \\
     2m + 2\lfloor \frac{d-m}{2(D-2)} \rfloor & \text{ if } d\geq m
     \end{cases}.
\]
In both cases we have $\mult_{(\la+(m \times e D))^*} \IC[\overline{Gp}] = \mult_{\la^*} \IC[G p] = \mult_{(\la+(m \times e D))^*} \IC[Gp]$.
\end{lemma}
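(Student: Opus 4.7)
The plan is to instantiate Theorem~\ref{thm:main} with $\Xi$ equal to the entire set of partitions $\varrho \vdash_m d$, so that its right-hand side collapses, via Proposition~\ref{pro:gctandsymmetries}, to $\mult_{\la^*}\IC[Gp]$. For this to produce the precise shift $\la + (m \times eD)$ appearing in the statement, the first task is to verify that the resulting $e_\Xi$ equals $e$. In the even case, $e_\varrho = \sum_i \lceil \varrho_i/(D-2)\rceil$. Since $D\geq 4$ is even, $\lceil x/(D-2)\rceil \leq x$ for every integer $x\geq 1$, and the ceiling is subadditive: $\lceil(a+b)/(D-2)\rceil \leq \lceil a/(D-2)\rceil + \lceil b/(D-2)\rceil$. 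Splitting a part into two positive parts therefore never decreases $e_\varrho$, so the maximizer uses as many nonzero rows as it can. For $d \leq m$ the maximizer is $(1^d)$ with value $d$; for $d \geq m$ all $m$ rows must be used and the best strategy is to place $m-1$ ones together with one large part of size $d-m+1$, giving $e_\Xi = (m-1) + \lceil (d-m+1)/(D-2)\rceil = m + \lfloor (d-m)/(D-2)\rfloor$. The odd case is entirely analogous with $D-2$ replaced by $2(D-2)$ and an overall factor of $2$, reproducing the stated formula for $e$.

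With $e_\Xi = e$ in hand, Theorem~\ref{thm:main} combined with Proposition~\ref{pro:gctandsymmetries} gives
\[
\mult_{(\la+(m\times eD))^*}\IC[\overline{Gp}] \;\geq\; \sum_{\varrho \vdash_m d} b(\la,\varrho,D,d) \;=\; \mult_{\la^*}\IC[Gp].
\]
Conversely, because $\IC[\overline{Gp}] \subseteq \IC[Gp]$ is a $G$-stable subalgebra,
\[
\mult_{(\la+(m\times eD))^*}\IC[\overline{Gp}] \;\leq\; \mult_{(\la+(m\times eD))^*}\IC[Gp].
\]
The final ingredient is Lemma~\ref{lem:shift}. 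In the even case one application replaces $\la^*$ by $(\la + m\times D)^*$ while preserving the multiplicity in $\IC[Gp]$, so $e$ iterations yield $\mult_{(\la+(m\times eD))^*}\IC[Gp] = \mult_{\la^*}\IC[Gp]$ (using $e\cdot(m\times D) = m\times eD$). In the odd case Lemma~\ref{lem:shift} shifts by $m\times 2D$, but the odd-case formula forces $e$ to be even ($2d$ when $d\leq m$, and $2m + 2\lfloor (d-m)/(2(D-2))\rfloor$ otherwise), so $e/2$ iterations suffice. Chaining the three relations sandwiches all three multiplicities between the same quantity, forcing equality.

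The main obstacle I anticipate is the combinatorial optimization: $e_\Xi$ must be pinned down exactly, not merely bounded, and in the $d \geq m$ regime the length constraint $\ell(\varrho)\leq m$ interacts with the non-monotone behavior of $\lceil\,\cdot\,/(D-2)\rceil$ as a function of part size. Subadditivity of the ceiling is the clean lever reducing the optimization to ``use as many $1$s as possible, then dump the remainder into a single large row'', and a routine Euclidean-division calculation then matches the floor-and-ceiling expression in the statement. Everything else is bookkeeping using results already on the table.
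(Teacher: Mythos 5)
Your overall route is the same as the paper's: take $\Xi$ to be the set of \emph{all} $\varrho\vdash_m d$ in Theorem~\ref{thm:main}, identify the resulting sum with $\mult_{\la^*}\IC[Gp]$ via Proposition~\ref{pro:gctandsymmetries}, get the reverse inequality from the subalgebra inclusion $\IC[\overline{Gp}]\subseteq\IC[Gp]$, and close the sandwich by iterating Lemma~\ref{lem:shift} ($e$ times for even $D$, $e/2$ times for odd $D$, using that $e$ is even in the odd case). The one place where your argument does not do what you claim is the identification $e_\Xi=e$ in the regime $d\geq m$. Subadditivity of $x\mapsto\lceil x/(D-2)\rceil$ shows that splitting a part never decreases $e_\varrho$, so some maximizer has $\min(d,m)$ nonzero rows; for $d\leq m$ this indeed pins down $e_\Xi=d$. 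But for $d\geq m$ it only tells you that a maximizer uses all $m$ rows; it does not tell you that, among partitions with $m$ positive parts, the configuration with $m-1$ ones and one large part is optimal. An exchange argument based on subadditivity fails outright: for $D=4$, $d=6$, $m=2$, moving a unit from $(3,3)$ to get $(4,2)$ drops $e_\varrho$ from $4$ to $3$ (and $(3,3)$ already ties the maximum attained by $(5,1)$), so ``concentrating the remainder'' is not monotone, and your phrase ``the best strategy is\ldots'' is precisely the claim that needs proof rather than a consequence of subadditivity.

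The missing piece is an upper bound valid for \emph{every} $\varrho\vdash_m d$, and it is short: using $\lceil x/(D-2)\rceil=\lfloor (x-1)/(D-2)\rfloor+1$ for all integers $x\geq 0$ (here $D-2\geq 2$), one gets $e_\varrho=m+\sum_{i=1}^m\lfloor(\varrho_i-1)/(D-2)\rfloor\leq m+\lfloor(d-m)/(D-2)\rfloor$ by superadditivity of the floor, and the partition $(1+d-m,1,\ldots,1)$ attains this; the odd case is identical with $D-2$ replaced by $2(D-2)$ and an overall factor $2$. This is exactly the computation in the paper's proof of the lemma. If your closing ``routine Euclidean-division calculation'' was meant to include this bound over all $\varrho$, and not merely the evaluation of $e_\varrho$ at the concentrated partition, then your argument is complete and coincides with the paper's; as written, that bound is the genuinely load-bearing step you left unproved.
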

\begin{proof}
The second equality follows from Lemma~\ref{lem:shift}.

Let $\Xi$ denote the set of \emph{all} partitions $\varrho\vdash_m d$.
Using Theorem~\ref{thm:main}, then according to Proposition~\ref{pro:gctandsymmetries} we have
$$\mult_{(\la+(m \times e_\Xi D))^*} \IC[\overline{Gp}] \geq \mult_{\la^*} \IC[G p].$$
Using Lemma~\ref{lem:shift} (and the fact that $e_\Xi$ is even if $D$ is odd) we see that
$$\mult_{(\la+(m \times e_\Xi D))^*} \IC[\overline{Gp}] = \mult_{\la^*} \IC[G p].$$
It remains to show that $e_\Xi = e$. Recall that for natural numbers $a,b$ we have $\lceil\frac{1+a}{b}\rceil-1=\lfloor\frac a b\rfloor$.

Let $D$ be even and $d \leq m$. Then the number of nonzero $\varrho_i$ is at most $d$. Hence $e_\varrho \leq d$.
On the other hand, $\varrho=(1,1,\ldots,1,0,0,\ldots,0) \vdash_m d$ provides $e_\varrho = d$, so the bound is tight.
The argument for $D$ odd and $d \leq m$ is completely analogous and yields $e = 2d$.

Let $D$ be even and $d \geq m$.
Then $\varrho = (1+d-m,1,1,\ldots,1)\vdash_m d$ provides $e_\varrho = m+(\lceil\frac{1+d-m}{D-2}\rceil-1) = m + \lfloor \frac{d-m}{D-2}\rfloor$.
The upper bound is provided via
\[
\textstyle e_\varrho = {\displaystyle\sum_{i=1}^m}\lceil\frac{\varrho_i}{D-2}\rceil = {\displaystyle\sum_{i=1}^m}\left(\lceil\frac{\varrho_i+1-1}{D-2}\rceil+1-1\right) = {\displaystyle\sum_{i=1}^m} \left( \lfloor\frac{\varrho_i-1}{D-2}\rfloor +1 \right) = m+\underbrace{\textstyle{\displaystyle{\displaystyle\sum_{i=1}^m}}\lfloor\frac{\varrho_i-1}{D-2}\rfloor}_{\leq \lfloor\frac{d-m}{D-2}\rfloor} \leq m+\lfloor\frac{d-m}{D-2}\rfloor.
\]

Let $D$ be odd and $d \geq m$.
Then $\varrho = (1+d-m,1,1,\ldots,1)\vdash_m d$ provides $e_\varrho = 2m+2(\lceil\frac{1+d-m}{2(D-2)}\rceil-1) = 2m + 2\lfloor \frac{d-m}{2(D-2)}\rfloor$.
The upper bound is provided via
\begin{eqnarray*}
\textstyle e_\varrho &=& {\displaystyle\sum_{i=1}^m}2\lceil\tfrac{\varrho_i}{2(D-2)}\rceil = {\displaystyle\sum_{i=1}^m}2\left(\lceil\tfrac{\varrho_i+1-1}{2(D-2)}\rceil+1-1\right) = {\displaystyle\sum_{i=1}^m} 2\left( \lfloor\tfrac{\varrho_i-1}{2(D-2)}\rfloor +1 \right) \\
&=& 2m+2\underbrace{\textstyle{\displaystyle{\displaystyle\sum_{i=1}^m}}\lfloor\tfrac{\varrho_i-1}{2(D-2)}\rfloor}_{\leq \lfloor\tfrac{d-m}{2(D-2)}\rfloor} 
\leq 2m+2\lfloor\tfrac{d-m}{2(D-2)}\rfloor.
\end{eqnarray*}

\vspace{-0.5cm}\end{proof}

%\subsection{Latin squares}
A Latin square of dimension $m$ is an $m \times m$ matrix for which in each row and in each column each number $1,\ldots,m$ appears exactly once.
The \emph{sign} of a column is 1 if the permutation in the column is even, and $-1$ otherwise.
The \emph{sign} of a Latin square is defined as the product of all column signs.
A Latin square is called \emph{even} if its sign is 1, and \emph{odd} otherwise.
If $m$ is odd, then it is easy to construct an involution on the set of all $m \times m$ Latin squares that pairs each even Latin square with an odd one. Hence, if $m$ is odd, then the number of even $m \times m$ Latin squares equals the number of odd $m \times m$ Latin squares. If $m$ is even, then Alon and Tarsi \cite{AT:92} conjecture that the number of even $m \times m$ Latin squares \emph{differs from} the number of odd $m \times m$ Latin squares (see also \cite{HR:94} and \cite{KL:15} for equivalent formulations). This is proved for all $m=\tau+1$ \cite{Dri:97} for an odd prime number $\tau$ and for all $m=\tau-1$ \cite{Gly:10} for an odd prime number $\tau$, making $m=26$ the smallest open case.
If the number of even $m \times m$ Latin squares differs from the number of odd $m \times m$ Latin squares, then we say that $m$ satisfies the \emph{Alon-Tarsi condition}.
The Alon-Tarsi condition first appeared in connection with geometric complexity theory in \cite{Kum:15}.

\begin{proposition}\label{pro:alontarsi}
Let $\nu$, $m$, $D$, $d$ be as in Theorem~\ref{thm:proofofconcept}.
\begin{itemize}
\item $\mult_{\nu^*} \IC[\overline{Gp}]< a_\nu(2(m+2),m)$ and hence $\nu$ is not a vanishing ideal occurrence obstruction.
\item If $m$ satisfies the Alon-Tarsi condition, then $\mult_{\nu^*} \IC[\overline{Gq}] > 0$ and hence $\nu$ is not an occurrence obstruction. In particular this is true for $m=\tau\pm1$ for all odd primes $\tau$.
\end{itemize}
\end{proposition}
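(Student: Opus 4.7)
The strategy first pins down $\mult_{\nu^*}\IC[\overline{Gp}]$ exactly, and then compares with the ambient plethysm coefficient and with $\mult_{\nu^*}\IC[\overline{Gq}]$. I would first observe that $\mult_{\nu^*}\IC[\overline{Gp}] = 2$: the lower bound is Theorem~\ref{thm:proofofconcept}, and for the matching upper bound I apply Lemma~\ref{lem:cormain} with $\la=(2m)$, $d=2 \leq m$, and $D=m$ even, which yields $e=2$ and $\la+(m\times eD)=\nu$. Hence $\mult_{\nu^*}\IC[\overline{Gp}] = \mult_{(2m)^*}\IC[Gp]$, which is $1+1 = 2$ by Proposition~\ref{pro:gctandsymmetries} and the calculations in the proof of Theorem~\ref{thm:proofofconcept}.

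For the first bullet, since $\mult_{\nu^*}\IC[\overline{Gp}]=2$ and $|\nu|=2m(m+1)$, the claim reduces to showing that the ambient plethysm coefficient $a_\nu(2(m+1),m)$ is at least $3$, so that $\nu^*$ genuinely occurs in the vanishing ideal of $\overline{Gp}$. My plan is to exploit a semigroup-type lower bound for plethysm coefficients along $\nu = (2m)+(m\times 2m)$ to obtain $a_\nu(2(m+1),m) \geq a_{(2m)}(2,m)\cdot a_{(m\times 2m)}(2m,m) = a_{(m\times 2m)}(2m,m)$, and then to push this past $2$ either through the further decomposition $(m\times 2m) = (m\times m)+(m\times m)$ combined with explicit tableau constructions of three linearly independent highest weight vectors of weight $\nu^*$ in $\IC[\Sym^m\IC^m]_{2(m+1)}$, or via a refined plethysm identity.

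For the second bullet, the proof of Theorem~\ref{thm:proofofconcept} already gives $\mult_{\nu^*}\IC[\overline{Gq}] \leq \mult_{\nu^*}\IC[Gq] \leq 1$, so it suffices to construct a single nonzero highest weight vector of type $\nu^*$ in $\IC[\overline{Gq}]_{2(m+1)}$ under the Alon-Tarsi hypothesis. My plan is a multiplicative construction. Since Brill's equations for the Chow variety $\overline{Gq}$ start in degree $m+1$, both $\IC[\overline{Gq}]_2$ and $\IC[\overline{Gq}]_m$ coincide with the corresponding graded pieces of $\IC[\Sym^m\IC^m]$. Unconditionally, $a_{(2m)}(2,m)=1$ provides a nonzero \HWV{} $B \in \IC[\overline{Gq}]_2$ of weight $(0,\ldots,0,-2m)$ (of type $\{(2m)^*\}$). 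Under the Alon-Tarsi condition, $a_{(m\times m)}(m,m)\neq 0$ provides a nonzero \HWV{} $A \in \IC[\overline{Gq}]_m$ of weight $(-m,\ldots,-m)$ (of type $\{(m\times m)^*\}$). The product $A^2\cdot B$ has weight $(-2m,\ldots,-2m,-4m)$, which is precisely the highest weight of $\{\nu^*\}$; it is a highest weight vector by the Leibniz rule for the action of the upper-triangular nilpotent radical on a commutative algebra, and is nonzero because $\IC[\overline{Gq}]$ is an integral domain (the orbit closure is irreducible). Thus $\mult_{\nu^*}\IC[\overline{Gq}] \geq 1$, completing the second bullet.

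The hard part is the first bullet: an unconditional lower bound $a_\nu(2(m+1),m) \geq 3$. The natural semigroup reduction produces $a_{(m\times m)}(m,m)^2$, exactly the square of the Alon-Tarsi number, which is only conditionally positive. Reaching $3$ without invoking Alon-Tarsi likely requires a direct construction of three independent highest weight vectors that exploits the specific shape $\nu=(2m)+(m\times 2m)$, or a specialized plethysm identity that bypasses the Alon-Tarsi factor.
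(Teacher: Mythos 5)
Your second bullet is essentially the paper's argument and is fine: the upper bound $\mult_{\nu^*}\IC[Gq]\leq 1$ is already in Theorem~\ref{thm:proofofconcept}, and the lower bound comes from multiplying highest weight vectors in the integral domain $\IC[\overline{Gq}]$, using Kumar's Alon--Tarsi-conditional occurrence of $(m\times m)^*$. The paper writes $\nu=(m\times m)+(m\times m)+(m)+(m)$ and cites the semigroup property; you write $\nu=(2m)+2(m\times m)$ and use a degree-$2$ factor of type $(2m)^*$ instead of two degree-$1$ factors of type $(m)^*$. That variation is harmless, although your justification that the degree-$2$ factor survives restriction to $\overline{Gq}$ (``Brill's equations start in degree $m+1$'') is not by itself a proof that the ideal has no lower-degree components; it is cleaner to simply evaluate the weight-$(0,\ldots,0,-2m)$ vector at the point $x_m^m\in\overline{Gq}$, or to argue as the paper does.

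The genuine gap is the first bullet, and you have named it yourself: you never prove the unconditional bound $a_\nu(2(m+1),m)\geq 3$, and the routes you sketch do not get there. The semigroup property for plethysm coefficients only propagates positivity, not the product lower bound $a_\nu\geq a_{(2m)}(2,m)\cdot a_{(m\times 2m)}(2m,m)$ you write down (products of linearly independent highest weight vectors need not stay independent), and in any case, as you observe, this route bottoms out at $a_{(m\times m)}(m,m)$, i.e.\ exactly the Alon--Tarsi quantity, which would make the first bullet conditional and defeat its purpose. The paper's missing idea (Section~\ref{sec:threetableaux}, Proposition~\ref{pro:threetableaux}) is to produce the third highest weight vector \emph{explicitly} via Theorem~\ref{thm:functionsonorbitclosure}: take $T=T_{\textsf{left}}+T_{\textsf{right}}$ where $T_{\textsf{left}}$ is an $m\times(m+2)$ rectangle filled rowwise so that every symbol occurs exactly $m$ times, and $T_{\textsf{right}}$ completes the shape to $\nu$ so that $T$ is duplex. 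Duplexity makes every summand of $\gamma(A M_{n,n}T)$ a product of squared determinants, hence nonnegative, and a generic $A$ makes the identity summand strictly positive, so the function is a nonzero HWV of weight $\nu^*$ on $\IC[\Sym^m\IC^m]$; on the other hand no $\varphi T_{\textsf{left}}$ can be regular (a regular filling of an $m\times(m+2)$ rectangle with $m$ rows would use each entry exactly $m+2$ times, while every symbol of $\varphi T_{\textsf{left}}$ occurs a multiple of $m$ times), so by Corollary~\ref{cor:gammazero} the function vanishes identically on $Gp$. Together with the two HWVs that span the weight space of functions not vanishing on $Gp$ (from $\mult_{\nu^*}\IC[Gp]=2$), this gives three linearly independent HWVs in $\IC[\Sym^m\IC^m]$ and hence $a_\nu(2(m+1),m)\geq 3 > 2 = \mult_{\nu^*}\IC[\overline{Gp}]$. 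Without some such explicit equation for Waring rank (or another unconditional source of a third copy of $\{\nu^*\}$), your proof of the first bullet does not close.
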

\begin{proof}
The first bullet point is treated as follows.
$\mult_{\nu^*} \IC[\overline{Gp}] = \mult_{\nu^*} \IC[{Gp}]$ by Lemma~\ref{lem:cormain}.
Moreover, $\mult_{\nu^*} \IC[{Gp}]=2$ (Proposition~\ref{pro:gctandsymmetries} and Theorem~\ref{thm:proofofconcept}).
It remains to prove that $a_\nu(2(m+1),m)\geq 3$, which is postponed to Proposition~\ref{pro:threetableaux}.

The second bullet point is treated as follows.
Kumar proved \cite{Kum:15} that $\mult_{(m)^*} \IC[\overline{G(x_1\cdots x_m)}] \geq 1$
and that $\mult_{(m \times m)^*} \IC[\overline{G(x_1\cdots x_m)}] \geq 1$,
provided that $m$ satisfies the Alon-Tarsi condition.
Since $\nu = (m \times m) + (m \times m) + (m) + (m)$,
the semigroup property for occurrences in $\IC[\overline{G(x_1\cdots x_m)}]$ (see e.g.~\cite{DIP:19}) yields that $\mult_{\nu^*} \IC[\overline{G(x_1\cdots x_m)}] \geq 1$.
\end{proof}

The rest of this paper (besides Section~\ref{sec:threetableaux}) is dedicated to the proof of Theorem~\ref{thm:main}.

\section{Preliminaries - Representation Theory} \label{sec:prelrepr}
In the remainder of this paper we write $G:=\textup{GL}_m(\mathbb{C})$ to denote the general linear group for some fixed natural number $m$. 
A \emph{representation} of $G$ is a finite dimensional complex vector space $\mathscr{V}$ together with a group homomorphism $\xi: G \rightarrow \textup{GL}(\mathscr{V})$.
If $\xi$ is given by a polynomial map, then we call $(\mathscr{V},\xi)$ a \emph{polynomial representation}.
We write $gf := \xi(g)(f)$ for all $g \in G$ and $f \in \mathscr{V}$. A linear subspace $\mathscr{W} \subseteq \mathscr{V}$ that is closed under the action of $G$ is called a \textit{subrepresentation}. A representation $\mathscr{V}$ is called \textit{irreducible} iff it has only two subrepresentations, namely the zero-dimensional subspace and $\mathscr{V}$ itself.

For a $G$-representation $\mathscr{V}$, a \textit{highest weight vector} $f \in \mathscr{V}$ of weight $\lambda \in \IZ^m$ is defined to be a vector that satisfies the following two properties:
\begin{itemize}
        \item for all upper triangular matrices $g$ with 1s on the main diagonal we have $gf=f$, and
        \item for all diagonal matrices $g := \textup{diag}(a_1,\dots, a_m)$ we have $gf = a_1^{\lambda_1}\dots a_m^{\lambda_m} f$.
\end{itemize}
We denote by $\HWV_\lambda(\mathscr{V})$ the vector space of highest weight vectors of weight $\lambda$ in $\mathscr{V}$. It turns out that for an irreducible polynomial $G$-representation $\mathscr{V}$ there is exactly one partition $\lambda \in \mathbb{N}^m$ such that $\HWV_\lambda(\mathscr{V})$ has dimension $=1$,
while for all $\mu\neq\la$ we have $\dim\HWV_\mu(\mathscr{V})=0$.
In this case we write $\mathscr{V} = \{\lambda\}$. Moreover, for each partition $\lambda \vdash_m$ there is an irreducible polynomial $G$-representation $\{\lambda\}$ and we call it the irreducible $G$-representation of \textit{isomorphism type} $\lambda$. Furthermore, these $\{\lambda\}$ are pairwise non-isomorphic. We can now define the \textit{multiplicity} $\textup{mult}_\lambda(\mathscr{V})$ of $\{\lambda\}$ in $\mathscr{V}$ as
\begin{align*}
\textup{mult}_\lambda(\mathscr{V}) = \dim \HWV_\lambda(\mathscr{V}),
\end{align*}
which is the same as the multiplicity with which $\{\la\}$ appears as a summand in a decomposition of $\mathscr{V}$ into a direct sum of irreducible $G$-representations.

\subsection*{The irreducible $G$-representations}
In the following exposition we closely follow \cite{fult:97}.

For a tablean $T$ let $T(r,c)$ denote the entry of $T$ in row~$r$ and column~$c$.
Let $(e_i)_i$ be the standard basis of $\IC^m$.
Let $\mu = \lambda^t$.

To each tableau $T : \la \to \{1,\ldots,m\}$ we assign a tensor
\[
e_{T(1,1)} \otimes e_{T(2,1)} \otimes \cdots \otimes e_{T(\mu_1,1)} \otimes e_{T(1,2)} \otimes e_{T(2,2)} \otimes \cdots \otimes e_{T(\mu_2,2)} \otimes \cdots \cdots \otimes e_{T(\mu_{\la_1},\la_1)}.
\]

These form a basis of $\tensor^{|\la|}\IC^m$.
In this way, the space $\tensor^{|\la|}\IC^m$ is isomorphic to the space of formal linear combinations of tableaux $T : \la \to \{1,\ldots,m\}$.
Using this isomorphism, the space of tableaux inherits the natural $G$-action on $\tensor^{|\la|}\IC^m$, which is given by
\begin{align*}
g(v_1 \otimes \dots \otimes v_{|\lambda|}) = g(v_1) \otimes \dots \otimes g(v_{|\lambda|}).
\end{align*}
The following vectors span the linear subspace $K(\la)$ of \textit{Grassmann-Plücker relations} (sometimes called \textit{shuffle relations}), which is invariant under the $G$-action:
\begin{itemize}
        \item $T+T'$, where $T'$ is a tableau that arises from $T$ by switching two numbers within one column.
        \item $T-\Sigma S$, where for two fixed columns $j,j'$ and a fixed number of entries $k$ the sum is over all tableaux $S$ that arise from $T$ by exchanging the top $k$ entries in column $j$ with any $k$ entries in column $j'$, preserving the internal vertical order.
\end{itemize}
It turns out that $\{\la\} \simeq (\tensor^{|\la|}\IC^m) / K(\la)$, see \cite[Ch.~8]{fult:97}.
A basis of $\{\lambda\}$ is given by the semistandard tableaux of shape $\lambda$ with entries from $\{1,\dots,m\}$.
The unique highest weight vector (up to scale) in $\{\lambda\}$ is the \textit{superstandard tableau} of shape $\lambda$, which is the semistandard tableau that has only entries $i$ in row $i$. It has weight $\lambda$.
\section{Two similar projections of tableaux} \label{sec:twoprojections}
In this section we present two symmetrizations that look quite similar on the basis of tableaux.
We crucially use this similarity in Section~\ref{sec:proofmain} in the proof of the Main Theorem~\ref{thm:main}.
Indeed, this peculiarity is the driving force behind our result.

We embed $\mathfrak{S}_m \subseteq G$ via permutation matrices. Given a tableau $S:\lambda \rightarrow \{1,\dots,m\}$ we define:
\begin{align*}
        P_m S := \sum_{\pi \in \mathfrak{S}_m} \pi S
\end{align*}
Interpreting a tableau $S:\lambda \rightarrow \{1,\dots,m\}$ as an element in $\{\lambda\}$ this can be seen as a map $P_m: \{\lambda\} \rightarrow \{\lambda\}^{\mathfrak{S}_m}$, where $\{\lambda\}^{\mathfrak{S}_m}$ consists of the elements in $\{\lambda\}$ that are invariant under the action of~$\mathfrak{S}_m$. For example, using the Grassmann-Plücker relations we get that:
\begin{eqnarray*}
        P_{3}\,{\Yvcentermath1\small\young(1122,33)} &=& {\Yvcentermath1\small\young(1122,33)}+\underbrace{{\Yvcentermath1\small\young(2211,33)}}_{=\,{\Yvcentermath1\tiny\young(1133,22)}\,-2\,{\Yvcentermath1\tiny\young(1123,23)}\,+\,{\Yvcentermath1\tiny\young(1122,33)}}+\underbrace{{\Yvcentermath1\small\young(3322,11)}}_{=\,{\Yvcentermath1\tiny\young(1122,33)}}\\
        &+&{\Yvcentermath1\small\young(1133,22)} + \underbrace{{\Yvcentermath1\small\young(2233,11)}}_{=\,{\Yvcentermath1\tiny\young(1133,22)}}+\underbrace{{\Yvcentermath1\small\young(3311,22)}}_{=\,{\Yvcentermath1\tiny\young(1133,22)}\,-2\,{\Yvcentermath1\tiny\young(1123,23)}\,+\,{\Yvcentermath1\tiny\young(1122,33)}} \\
        &=& 4\,{\Yvcentermath1\small\young(1122,33)}-4\,{\Yvcentermath1\small\young(1123,23)}+4\,{\Yvcentermath1\small\young(1133,22)}
\end{eqnarray*}

Now we consider tableaux that have $\delta$ many symbols.
Let $\varphi: \{1,\dots,\delta\} \rightarrow \{1,\dots,m\}$ be a map. For a tableau $T: \lambda \rightarrow \{1,\dots,\delta\}$, we define $\varphi T$ as the tableau that is the result of replacing the entries from $\{1,\dots,\delta\}$ in $T$ with entries from $\{1,\dots,m\}$.
Let
\[
        \mathcal{M}_{\delta,m} := \{\varphi \mid \varphi: \{1,\dots,\delta\} \rightarrow \{1,\dots,m\} \; \text{is a map}\}.
\]
We use this to define
\begin{align*}
M_{\delta,m}T:=\sum_{\varphi \in \mathcal M_{\delta,m}} \varphi T \in \{\la\}^{\aS_m}.
\end{align*}
\section{Tableau contraction} \label{sec:tableaucontraction}
Let $\IS \in \{\la\}$ be the superstandard tableau.
Let $\gamma \in \{\la\}^*$ denote the vector dual to $\IS$, i.e., the linear map that satisfies $\gamma(\IS)=1$ and $\gamma(T)=0$ for every semistandard tableau $T \neq \IS$.

In the following our goal is to understand $\gamma$ explicitly in terms of determinants, see eq.~\eqref{eq:gammadet} below.
For a matrix $g$ let $g_{1..j,i} \in \IC^j$ denote the vector that consists of the top $j$ elements in the $i$th column of $g$.
We interpret a list of $j$ vectors in $\IC^i$ as an $i \times j$ matrix of column vectors.
A determinant of a matrix with more rows than columns is defined as the determinant of the square matrix of its top rows.
For a tableau $T : \la \to \{1,\ldots,m\}$ let $T(r,c) \in \{1,\ldots,m\}$ denote the number in row $r$ and column $c$.
\begin{lemma}\label{lem:coldets}
For $g\in\IC^{m \times m}$ we have
\begin{equation}\label{eq:gammadet}
\gamma(gT) = \prod_{c=1}^{\la_1}\det(g_{1..\mu_c, T(1,c)},\ldots,g_{1..\mu_c, T(\mu_c,c)}),
\end{equation}
where $\mu=\la^t$.
\end{lemma}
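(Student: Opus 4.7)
The plan is to realize the right-hand side of \eqref{eq:gammadet} as a well-defined linear functional on $\{\la\}$ and identify it with $T\mapsto\gamma(gT)$. Set
\[
\Phi_g(T)\;:=\;\prod_{c=1}^{\la_1}\det\bigl(g_{1..\mu_c,T(1,c)},\ldots,g_{1..\mu_c,T(\mu_c,c)}\bigr)
\]
on each basis tableau $T:\la\to\{1,\ldots,m\}$ of $\tensor^{|\la|}\IC^m$ and extend linearly. The goal is then to show that $\Phi_g$ vanishes on the Grassmann--Pl\"ucker subspace $K(\la)$ (so that it descends to a linear functional $\bar\Phi_g$ on $\{\la\}$) and that the induced functional equals $\gamma\circ g$.

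First I would verify the descent. Column antisymmetry is immediate, as swapping two entries inside column $c$ swaps two columns inside the corresponding $\mu_c\times\mu_c$ determinant. The shuffle relations $T-\sum_S S$ reduce, after expanding the two affected determinants entry-wise in $g$, to the classical Laplace/Pl\"ucker identity for products of minors of a matrix. Next I would prove the equivariance identity $\Phi_g(hT)=\Phi_{gh}(T)$ for every $h\in\IC^{m\times m}$: expanding $hT=\sum_{\phi:\la\to\{1,\ldots,m\}}\bigl(\prod_{(r,c)}h_{\phi(r,c),T(r,c)}\bigr)T_\phi$ and using multilinearity of the determinant in each column reduces both sides to $\prod_c\det\bigl((gh)_{1..\mu_c,T(-,c)}\bigr)$ via $\sum_k g_{ik}h_{kj}=(gh)_{ij}$. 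This step is a tensor-level computation that needs no well-definedness assumption.

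It then remains to settle the single case $T=\IS$. Here $\Phi_g(\IS)=\prod_c\det\bigl(g_{[1..\mu_c],[1..\mu_c]}\bigr)$ is the product of leading principal minors of $g$. On the dense open subset of $G$ where a Gauss decomposition $g=u_-\,d\,u_+$ exists (with $u_\pm$ unitriangular and $d$ diagonal), the highest-weight property gives $u_+\IS=\IS$, while $u_-\IS-\IS$ is a sum of weight vectors of weight strictly lower than $\la$ and is hence annihilated by $\gamma$; so $\gamma(g\IS)=d^\la=\prod_c d_1\cdots d_{\mu_c}$, which matches $\Phi_g(\IS)$ since the leading principal minors of $u_-du_+$ are exactly the partial products of the $d_i$. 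Both sides are polynomial in $g$, so the equality extends to all of $\IC^{m\times m}$.

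Finally, since $\{\la\}$ is irreducible, the orbit $G\cdot\IS$ spans it, and for every $h\in G$,
\[
\bar\Phi_g(h\IS)\;=\;\Phi_{gh}(\IS)\;=\;\gamma((gh)\IS)\;=\;\gamma(g(h\IS))
\]
by combining the equivariance step with the $T=\IS$ case, so $\bar\Phi_g(T)=\gamma(gT)$ on a spanning set and hence on all of $\{\la\}$. The main obstacle I anticipate is the shuffle-vanishing part of the first step: matching the specific tableau shuffle relation from Section~\ref{sec:prelrepr} against the exact form of the classical Pl\"ucker identity for minors requires careful bookkeeping of the ``top $k$ entries'' and the internal vertical order prescribed in the definition of $K(\la)$.
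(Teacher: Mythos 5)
Your argument is correct, but it takes a genuinely different route from the paper. The paper proves the identity by a direct computation: it expands $gT$ in the tableau basis of $\tensor^{|\la|}\IC^m$, observes that $\gamma$ kills every tableau $S$ whose columns are not permutations of the initial segments $\{1,\ldots,\mu_c\}$ (relations preserve content, non-regular columns die, and no other content can produce $\IS$ after straightening), evaluates $\gamma$ on column permutations of $\IS$ as the product of column signs, and regroups the sum into the product of determinants. You instead build the bideterminant functional $\Phi_g$ on the tensor space, check it kills $K(\la)$ so that it descends to $\{\la\}$, prove the equivariance $\Phi_g(hT)=\Phi_{gh}(T)$, pin down the value at the highest weight vector via Gauss decomposition together with Zariski density and polynomiality, and conclude by irreducibility (the orbit $G\,\IS$ spans $\{\la\}$). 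The trade-off: your approach is more structural and isolates the representation-theoretic content (equivariance, weight considerations, spanning by an orbit), but its load-bearing step is exactly the one you flag, namely that products of minors satisfy the shuffle relations spanning $K(\la)$; this is the classical exchange (Sylvester/Pl\"ucker) identity and is precisely what \cite[Ch.~8]{fult:97} supplies when realizing $\{\la\}$ via bideterminants, so it is standard but should be cited or carried out rather than left as bookkeeping. The paper's direct expansion avoids that identity entirely, needing only column antisymmetry and content preservation, which is why its proof is shorter and self-contained; on the other hand, your evaluation $\gamma(g\IS)=d^\la$ via $g=u_-du_+$ and the density argument are correct and would generalize painlessly to other dual-basis functionals attached to weight bases.
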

\begin{proof}
\[
gT = \sum_{S:\la\to\{1,\ldots,m\}} \left(\prod_{(r,c)\in \la} g_{S(r,c),T(r,c)} \right) S = \sum_{S:\la\to\{1,\ldots,m\}\atop S \text{ regular}} \left(\prod_{(r,c)\in \la} g_{S(r,c),T(r,c)} \right) S
\]
The fact that $\IS$ is superstandard implies that $\gamma(S)=0$ for all $S$ that do not have a permutation of $\{1,\ldots,\mu_i\}$ in column $i$ for all $1 \leq i \leq \la_1$.
Therefore
\[
\gamma(gT) = \sum_{S:\la\to\{1,\ldots,m\}\atop S \text{ is a col-perm.\ of $\IS$}} \left(\prod_{(r,c)\in \la} g_{S(r,c),T(r,c)} \right) \gamma(S) =
\sum_{S:\la\to\{1,\ldots,m\}\atop S \text{ is a col-perm.\ of $\IS$}} \left(\prod_{(r,c)\in \la} g_{S(r,c),T(r,c)} \right) \prod_{\text{column }c} \sgn(S,c),
\]
where $\sgn(S,c)$ is the sign of the permutation of column $c$ of $S$.
We conclude
\[
\gamma(gT) = \prod_{c=1}^{\la_1} \sum_{\pi \in \aS_{\mu_c}} \sgn(\pi) \left(\prod_{r=1}^{\mu_c} g_{\pi(r),T(r,c)} \right),
\]
which finishes the proof.
\end{proof}
We draw three quick corollaries:
\begin{corollary}\label{cor:gammafactorization}
$\gamma(gT) = \prod_{c=1}^{\la_1}\gamma(g \col_c),$
where $\col_c$ is the $c$-th column of $T$.
\end{corollary}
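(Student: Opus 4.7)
The plan is to deduce the corollary immediately by applying Lemma~\ref{lem:coldets} twice: once to the full tableau $T$, and once to each single-column tableau $\col_c$. The key observation is that $\gamma$ is defined for each shape separately as the dual vector to the superstandard tableau of that shape, so the symbol $\gamma$ on the right-hand side refers to the instance of $\gamma$ for the shape $(1^{\mu_c})$ (a single column of height $\mu_c$), while on the left it refers to the instance for shape $\la$.

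First, I would invoke Lemma~\ref{lem:coldets} on the whole tableau $T$ to get
\[
\gamma(gT) \;=\; \prod_{c=1}^{\la_1}\det\bigl(g_{1..\mu_c,\, T(1,c)},\ldots,g_{1..\mu_c,\, T(\mu_c,c)}\bigr).
\]
Next, I would apply Lemma~\ref{lem:coldets} to the single-column tableau $\col_c$ of shape $(1^{\mu_c})$. In that case the shape has only one column of length $\mu_c$, so the product in \eqref{eq:gammadet} collapses to a single determinant, giving
\[
\gamma(g\,\col_c) \;=\; \det\bigl(g_{1..\mu_c,\, T(1,c)},\ldots,g_{1..\mu_c,\, T(\mu_c,c)}\bigr),
\]
since $\col_c(r,1)=T(r,c)$. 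Substituting these equalities into the first display yields the claimed factorization.

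There is essentially no obstacle; the corollary is a transparent rewriting of Lemma~\ref{lem:coldets} once one notices that the $c$-th factor in \eqref{eq:gammadet} is literally the right-hand side of Lemma~\ref{lem:coldets} applied to the shape $(1^{\mu_c})$. The only mild subtlety worth flagging in the write-up is the overloading of the symbol $\gamma$ across different shapes, which is why I would state explicitly that the $\gamma$ on the right denotes the dual of the superstandard column $\IS=(1,2,\ldots,\mu_c)^t$ of shape $(1^{\mu_c})$.
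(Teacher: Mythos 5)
Your proposal is correct and matches the paper's argument: the paper's proof is simply ``Obvious from Lemma~\ref{lem:coldets}'', and you spell out exactly that deduction by applying the lemma to $T$ and to each single-column tableau $\col_c$. Your explicit remark about the overloading of $\gamma$ across shapes is a reasonable clarification of what the paper leaves implicit.
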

\begin{proof}
 Obvious from Lemma~\ref{lem:coldets}.
\end{proof}
\begin{corollary}\label{cor:pm1}
For a tableau $T$ that consists of a single regular column of $m$ boxes, we have $\gamma(gT) \in \{-\det(g),\det(g)\}$.
In particular, if $g \in \SL_m$, we have $\gamma(gT) \in \{-1,1\}$.
\end{corollary}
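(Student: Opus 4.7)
The plan is to apply Lemma~\ref{lem:coldets} directly to the special case of a single-column tableau. When $T$ consists of a single regular column of $m$ boxes, the shape is $\lambda = (1^m)$, so $\lambda_1 = 1$ and $\mu_1 = m$. The product in \eqref{eq:gammadet} then collapses to a single factor, namely
\[
\gamma(gT) = \det\bigl(g_{1..m,\, T(1,1)},\, g_{1..m,\, T(2,1)},\, \ldots,\, g_{1..m,\, T(m,1)}\bigr).
\]
Each $g_{1..m,i}$ is simply the $i$-th column of $g$, so this is the determinant of the matrix obtained from $g$ by permuting its columns according to the map $r \mapsto T(r,1)$.

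Next, I invoke the regularity assumption on $T$: since the column has no repeated entry and contains $m$ entries drawn from $\{1,\ldots,m\}$, the list $T(1,1),\ldots,T(m,1)$ is a permutation $\sigma$ of $\{1,\ldots,m\}$. Permuting the columns of $g$ by $\sigma$ multiplies its determinant by $\sgn(\sigma)$, hence
\[
\gamma(gT) = \sgn(\sigma)\,\det(g) \in \{-\det(g),\, \det(g)\}.
\]
The ``in particular'' clause is then immediate: if $g \in \SL_m$, then $\det(g)=1$, so $\gamma(gT) \in \{-1,+1\}$.

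There is no genuine obstacle here; the result is a one-line specialization of Lemma~\ref{lem:coldets} together with the definition of regularity of a column. The only care needed is to correctly track that the columns being permuted are those of $g$ (not its rows) and that the sign arising from the permutation does not interfere with the claim, since the set $\{-\det(g),\det(g)\}$ is closed under negation.
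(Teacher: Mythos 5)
Your proof is correct and follows essentially the same route as the paper: both specialize Lemma~\ref{lem:coldets} to the single-column case and use that a regular column of $m$ boxes contains each of $1,\ldots,m$ exactly once, so the resulting determinant is $\det(g)$ up to the sign of the column permutation. Your version merely spells out the sign bookkeeping slightly more explicitly.
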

\begin{proof}
Since $T$ is regular, the entries of $T$ are precisely all numbers $1,\ldots,m$, not necessarily sorted. Thus according to Lemma~\ref{lem:coldets}, $\gamma(gT)\in\{-\det(g),\det(g)\}$.
\end{proof}
\begin{corollary}\label{cor:gammazero}
If $T$ is not regular, then $\gamma(gT)=0$, independent of $g$.
\end{corollary}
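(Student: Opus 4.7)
The plan is to read off the statement directly from the determinantal formula established in Lemma~\ref{lem:coldets}, combined with the column factorization in Corollary~\ref{cor:gammafactorization}. Recall that
\[
\gamma(gT) = \prod_{c=1}^{\la_1}\det\bigl(g_{1..\mu_c, T(1,c)},\ldots,g_{1..\mu_c, T(\mu_c,c)}\bigr).
\]
Since the product is over the columns of $T$, it suffices to show that the determinant factor corresponding to any non-regular column vanishes; one bad column kills the whole product regardless of what $g$ does elsewhere.

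So first I would fix a column index $c$ such that the $c$-th column of $T$ has a repeated entry, say $T(r,c) = T(r',c)$ with $r \neq r'$ and $1 \leq r < r' \leq \mu_c$. The column vectors being fed into the determinant in position $c$ are the top $\mu_c$ entries of the $T(1,c)$-th, $T(2,c)$-th, \ldots, $T(\mu_c,c)$-th columns of $g$. The $r$-th and $r'$-th of these vectors are therefore literally the same vector $g_{1..\mu_c, T(r,c)} = g_{1..\mu_c, T(r',c)}$ of $\IC^{\mu_c}$.

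Hence the $\mu_c \times \mu_c$ matrix whose determinant is being computed has two equal columns, so its determinant is zero, independent of any other entries of $g$. Plugging this zero factor into the product formula above gives $\gamma(gT) = 0$ for every $g \in \IC^{m\times m}$. There is no genuine obstacle here; the only thing to double-check is the bookkeeping that a non-regular column really produces a repeated column vector (not a repeated row) in the matrix inside the determinant, which is immediate from the fact that $T(r,c)$ is an index into the columns of $g$ in the formula from Lemma~\ref{lem:coldets}.
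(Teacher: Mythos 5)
Your argument is correct and is essentially the paper's own proof: both use the determinantal product formula from Lemma~\ref{lem:coldets} (equivalently the factorization in Corollary~\ref{cor:gammafactorization}) and observe that a repeated entry in a column of $T$ forces the corresponding determinant factor to have two identical columns, hence to vanish. Your extra bookkeeping check that the repetition produces equal columns (not rows) of the matrix is exactly the right point and matches the paper's reasoning.
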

\begin{proof}
$\gamma(gT)$ factorizes according to Cor.~\ref{cor:gammafactorization}.
Since $T$ is not regular, there is a column with a repeated entry.
Thus, according to Lemma~\ref{lem:coldets},
one of the factors of $\gamma(gT)$ equals the determinant of a matrix with a repeated column,
and hence is zero.
\end{proof}

\section{Highest weight functions on the orbit}\label{sec:hworbit}
In this section we prove the following theorem.
\begin{theorem}\label{thm:functionsonorbit}
The vector space $\HWV_{\la^*}(\IC[Gp]_d)$
decomposes into a direct sum of vector spaces
$\HWV_{\la^*}(\IC[Gp]_d) = \bigoplus_{\varrho\vdash_m d} \sW_\varrho$,
and each $\sW_\varrho$ is generated by the functions
\[
g \mapsto \gamma(g P_m S),
\]
where $S$ runs over all semistandard tableaux $S$ of shape $\la$
and content $\varrho D$.

Let $\sW_\varrho'$ denote the linear space spanned by
all semistandard tableaux $S$ of shape $\la$
whose sorted content is $\varrho D$. Then $\sW_\varrho$ is isomorphic to the $\aS_m$-invariant subspace of $\sW_\varrho'$.
\end{theorem}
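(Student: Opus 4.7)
The plan is to apply the algebraic Peter--Weyl theorem~\eqref{eq:algpeterweyl} to identify $\HWV_{\la^*}(\IC[Gp]_d)$ with $\{\la\}^H$ for $H = \IZ_D^m \rtimes \aS_m = \stab p$, and then to decompose $\{\la\}^H$ combinatorially using the semidirect product structure of $H$. First I pass from $\IC[Gp]$ to $\IC[G]^H$ via the isomorphism $Gp \cong G/H$, so that the degree-$d$ piece of $\IC[Gp]$ corresponds to the degree-$dD$ piece of $\IC[G]^H$ in the matrix entries. Peter--Weyl together with the fact that $\IS$ is the distinguished highest weight vector of $\{\la\}$ then yields the explicit $G$-equivariant map
\[
\{\la\}^H \;\xrightarrow{\,\sim\,}\; \HWV_{\la^*}(\IC[Gp]_d), \qquad v \longmapsto \bigl(g \mapsto \gamma(gv)\bigr),
\]
with $\gamma$ as in Section~\ref{sec:tableaucontraction}: right $H$-invariance of $g \mapsto \gamma(gv)$ follows from $v \in \{\la\}^H$; the correct weight $\la^*$ under the left $G$-action follows from $\IS$ having weight $\la$ and $\gamma$ isolating its coefficient; and the map is injective, hence bijective by the dimension count $\mult_{\la^*}\IC[Gp] = \dim\{\la\}^H$, because $\gamma$ is nonzero on the irreducible $\{\la\}$.

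With this identification in hand, $\{\la\}^H$ is computed combinatorially in two stages. The finite diagonal subgroup $\IZ_D^m \subseteq G$ acts on a semistandard tableau $T$ of content $\alpha\in\IN^m$ by the character $(\zeta_1,\ldots,\zeta_m)\mapsto\prod_i \zeta_i^{\alpha_i}$, which is trivial exactly when each $\alpha_i$ is divisible by $D$; hence $\{\la\}^{\IZ_D^m}$ has as basis the semistandard tableaux of shape $\la$ whose content lies in $D \cdot \IN^m$. The subgroup $\aS_m \subseteq G$ permutes alphabet entries and therefore permutes content vectors while leaving the sorted content invariant; grouping these tableaux by sorted content $\varrho D$, where $\varrho\vdash_m d$ is forced by $|\la|=dD$, produces an $\aS_m$-stable direct sum $\{\la\}^{\IZ_D^m} = \bigoplus_{\varrho\vdash_m d} \sW_\varrho'$, and taking $\aS_m$-invariants yields $\{\la\}^H = \bigoplus_\varrho (\sW_\varrho')^{\aS_m}$.

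Finally, the Reynolds operator $\tfrac{1}{m!}P_m$ projects $\sW_\varrho'$ onto $(\sW_\varrho')^{\aS_m}$, so the latter is spanned by the vectors $P_m S$ as $S$ ranges over semistandard tableaux in $\sW_\varrho'$. To reduce from sorted content to actual content $\varrho D$ in this spanning set, I would argue as follows: if $S$ has actual content $\pi(\varrho D)$ for some $\pi\in\aS_m$, then $\pi^{-1} S$ has actual content $\varrho D$, and its expansion in the semistandard basis uses only tableaux of actual content $\varrho D$, since both types of Grassmann--Pl\"ucker relation preserve the multiset of entries; combined with $P_m\pi^{-1} = P_m$ this shows $P_m S$ lies in the span of $\{P_m S' : S' \text{ semistandard of content } \varrho D\}$. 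Transporting back through the Peter--Weyl isomorphism gives the asserted decomposition $\HWV_{\la^*}(\IC[Gp]_d) = \bigoplus_\varrho \sW_\varrho$ with $\sW_\varrho \cong (\sW_\varrho')^{\aS_m}$ generated by $g \mapsto \gamma(gP_m S)$. I expect the main technical obstacle to be the first step: verifying the Peter--Weyl identification with the specific functional $\gamma$, since it requires matching the paper's conventions for the left $G$-action, the dual representation $\{\la^*\}$, and the role of $\gamma$ as dual to the highest weight vector $\IS$.
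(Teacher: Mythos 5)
Your proposal is correct and follows essentially the same route as the paper: the explicit algebraic Peter--Weyl theorem combined with the isomorphism $\IC[Gp]\cong\IC[G]^{\vv H}$ to identify $\HWV_{\la^*}(\IC[Gp]_d)$ with $\{\la\}^H$ via $v\mapsto\bigl(g\mapsto\gamma(gv)\bigr)$, then the two-step computation of $\{\la\}^H$ through the semidirect product structure ($\IZ_D^m$-invariance forcing contents divisible by $D$, then $\aS_m$-invariants of the sorted-content blocks $\sW_\varrho'$, realized by the symmetrization $P_m$). Your additional remark reducing the spanning set from sorted content to actual content $\varrho D$ (using $P_m\pi^{-1}=P_m$ and the fact that the straightening relations preserve content) is a correct elaboration of a step the paper treats implicitly.
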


\subsection*{Explicit algebraic Peter-Weyl theorem}
$G \subseteq \IC^{m \times m}$ is the nonvanishing set of a polynomial and thus $G$ is a variety.
It carries a canonical action of $G\times G$ via $(h',h)g := h' g h^{-1}$ for all $g,h,h'\in G$.
This action lifts to to the coordinate ring $\IC[G]$ via the canonical pullback: $((h',h)f)(g) := f({h'}^{-1} g h)$ for all $g,h,h'\in G$, $f \in \IC[G]$.
\begin{theorem}[{Explicit algebraic Peter-Weyl theorem (see\ e.g.\ \cite[Thm.~8.6.4.3]{Lan:17})}]\label{thm:explalgpeterweyl}
        As a $G \times G$-representation we have
        \[
        \IC[G] = \bigoplus_{\la} \{\la\}^* \otimes \{\la\}
        \]
        If we embed $G \hookrightarrow G \times G$ via $g \mapsto (g,\textup{id})$,
        then the $\la^*$-isotypic component of $\IC[G]$ is $\{\la\}^* \otimes \{\la\}$,
        which is spanned by functions
        \[
        g \mapsto l(gv),
        \]
        where $l\in\{\la\}^*$ and $v \in \{\la\}$.
\end{theorem}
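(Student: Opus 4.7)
The plan is to combine the explicit algebraic Peter-Weyl theorem with the stabilizer structure $H = \IZ_D^m \rtimes \aS_m$ of $p$, unpack the resulting invariant space $\{\la\}^H$ using the semistandard tableau weight basis and the $\aS_m$-orbits of weights, and finally verify that the symmetrization map $P_m$ surjects onto each $\aS_m$-invariant block.

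Since $Gp \cong G/H$, we have $\IC[Gp] = \IC[G]^H$ where $H$ acts by right translation. Applying Theorem~\ref{thm:explalgpeterweyl} and taking right $H$-invariants yields $\IC[G]^H = \bigoplus_\mu \{\mu\}^* \otimes \{\mu\}^H$. Restricting to HWVs of weight $\la^*$ under the left $G$-action kills every summand except $\mu = \la$, because $\HWV_{\la^*}(\{\mu\}^*)$ is nonzero only for $\mu = \la$, where it is one-dimensional and spanned by $\gamma$. Thus
\[
\HWV_{\la^*}(\IC[Gp]) \cong \{\la\}^H, \qquad v \mapsto \bigl(g \mapsto \gamma(gv)\bigr).
\]
The central character of $\IC[Gp]_d$ forces $|\la|=dD$, so only one $\la$ contributes in a fixed degree.

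Next I would unpack $\{\la\}^H = (\{\la\}^{\IZ_D^m})^{\aS_m}$ via weights. The torus $\IZ_D^m$ acts on the weight space $W_\mu \subseteq \{\la\}$ by $\prod_i \omega_i^{\mu_i}$, trivially iff $D\mid \mu_i$ for all $i$; since $W_\mu$ is spanned by SSYT of content $\mu$, we get $\{\la\}^{\IZ_D^m} = \bigoplus_\nu W_{\nu D}$ over weak compositions $\nu$ of $d$ with at most $m$ parts. The group $\aS_m$ permutes these summands via $\pi\cdot W_{\nu D}=W_{(\pi\nu)D}$, so grouping by $\aS_m$-orbit, with unique partition representative $\varrho$, yields the $\aS_m$-stable decomposition $\{\la\}^{\IZ_D^m} = \bigoplus_{\varrho\vdash_m d}\sW_\varrho'$. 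Taking $\aS_m$-invariants commutes with the direct sum, and transporting through the Peter-Weyl isomorphism produces $\HWV_{\la^*}(\IC[Gp]_d)=\bigoplus_\varrho\sW_\varrho$ with $\sW_\varrho\cong(\sW_\varrho')^{\aS_m}$, which is the second half of the statement.

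Finally I would show that $P_m:W_{\varrho D}\to(\sW_\varrho')^{\aS_m}$ is surjective, so that $(\sW_\varrho')^{\aS_m}$ is spanned by the vectors $P_m S$ as $S$ ranges over SSYT of shape $\la$ and exact content $\varrho D$. For any $\aS_m$-invariant $u\in\sW_\varrho'$, decomposing $u=\sum_\nu u_\nu$ along weight spaces gives $u_{\pi\nu}=\pi u_\nu$, so $u$ is determined by $u_{\varrho D}\in W_{\varrho D}$; a short computation grouping $\aS_m$ by cosets of $\text{Stab}_{\aS_m}(\varrho)$ yields $P_m u_{\varrho D}=|\text{Stab}_{\aS_m}(\varrho)|\cdot u$. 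This surjectivity, combined with the Peter-Weyl isomorphism, identifies $\sW_\varrho$ as the span of the functions $g \mapsto \gamma(g P_m S)$ claimed. The only non-bookkeeping step is this surjectivity argument, and the technical subtlety there is handling the stabilizer overcounting when averaging with the full group $\aS_m$.
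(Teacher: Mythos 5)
You have not proved the stated theorem. The statement here is the explicit algebraic Peter--Weyl theorem for the coordinate ring $\IC[G]$ of $G=\GL_m$ itself: the multiplicity-free $G\times G$-decomposition $\IC[G]=\bigoplus_\la \{\la\}^*\otimes\{\la\}$ together with the identification of each $\la^*$-isotypic component with the span of the matrix coefficients $g\mapsto l(gv)$. Your argument takes this theorem as given (``Applying Theorem~\ref{thm:explalgpeterweyl} and taking right $H$-invariants yields\ldots'') and then derives from it the decomposition of $\HWV_{\la^*}(\IC[Gp]_d)$ into the blocks $\sW_\varrho$; that is, you have written a proof of Theorem~\ref{thm:functionsonorbit}, not of Theorem~\ref{thm:explalgpeterweyl}. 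As a proof of the given statement this is circular: the key content you were asked to establish is precisely the input you assume. (The paper itself does not reprove this classical fact either; it cites \cite{Lan:17}.)

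A proof of the actual statement must work with $\IC[G]$ directly. The standard route: for each irreducible $V=\{\la\}$ the matrix-coefficient map $\{\la\}^*\otimes\{\la\}\to\IC[G]$, $l\otimes v\mapsto(g\mapsto l(gv))$, is $G\times G$-equivariant and injective; every regular function on $G$ lies in a finite-dimensional $G\times G$-stable subspace (since the translates of a fixed $f$ span a finite-dimensional space inside $\IC[x_{ij},\det^{-1}]$) and is therefore a sum of matrix coefficients of irreducibles; and the multiplicity of $\{\mu\}^*\otimes\{\la\}$ in $\IC[G]$ equals $\dim\mathrm{Hom}_G(\{\mu\},\{\la\})=\delta_{\mu\la}$ by Frobenius reciprocity or Schur's lemma applied to the right-regular action on a left isotypic component. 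None of these steps appears in your write-up. For what it is worth, the argument you did write is essentially the paper's own proof of Theorem~\ref{thm:functionsonorbit} (the two-step analysis of $\{\la\}^H=(\{\la\}^{\IZ_D^m})^{\aS_m}$ via tableau contents, followed by the coset argument showing $P_m$ hits every $\aS_m$-invariant), and that part looks sound --- it simply answers a different question.
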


\subsection*{$H$-invariants}
Recall the definition of the symmetry subgroup $H \leq G$ from Section~\ref{sec:preliminaries}.
Again, consider the action of $G\times G$ on $G$ via $(h',h)g := h' g h^{-1}$ for all $g,h,h'\in G$.
The action of $G\times G$ lifts to the coordinate ring $\IC[G]$ and we denote by $\IC[G]^{\vv H}$ the linear subspace of right $H$-invariants. $\IC[G]^{\vv H}$ carries a left $G$-action.
There is a bijection $gp \sim gH$ between points in the orbit $Gp$ and left cosets of $H$.
This leads to the following explicit $G$-equivariant algebra isomorphism \cite[25.4.6, Prop.]{TY:05}:
\begin{eqnarray}
\IC[G p] &\stackrel{\sim}{\longrightarrow}& \IC[G]^{\vv H} \label{eq:algiso}\\
F & \mapsto & [g \mapsto F(gp)] \nonumber\\
\mbox{~} [gp \mapsto f(g)] & \reflectbox{\ensuremath{\mapsto}} & f \nonumber
\end{eqnarray}

Taking right $H$-invariants in Theorem~\ref{thm:explalgpeterweyl} yields:
\[
\IC[G]^{\vv H} = \bigoplus_{\la} \{\la\}^* \otimes \{\la\}^H,
\]
and explicitly: the left $\la^*$-isotypic component of $\IC[G]^{\vv H}$ is spanned by functions
\[
g \mapsto l(gv),
\]
where $l\in\{\la\}^*$ and $v \in \{\la\}^H$.

Taking left highest weight functions, we see that $\HWV_{\la^*}(\IC[G]^{\vv H}) \simeq \{\la\}^H$
and that $\HWV_{\la^*}(\IC[G]^{\vv H})$
is spanned by functions
\begin{equation}\label{eq:ggammagv}
g \mapsto \gamma(gv),
\end{equation}
where $v \in \{\la\}^H$. In the next subsection we will use the special structure of $H$ to finish the proof of Theorem~\ref{thm:functionsonorbit}.

\subsection*{The coordinate ring of the orbit of the power sum}
Considering eq.~\eqref{eq:ggammagv}, in order to understand $\HWV_{\la^*}(\IC[G]^{\vv H})$
we need to understand the $H$-invariant space $\{\la\}^H$.
The group $H$ is the semidirect product $\{\la\}^H = (\{\la\}^{\IZ_D^m})^{\aS_m}$, which enables us to analyze $\{\la\}^H$ via a two-step process
by first analyzing $\{\la\}^{\IZ_D^m}$ and then taking $\aS_m$-invariants.
We consider the basis of $\{\la\}$ given by the semistandard tableaux $S:\la\to\{1,\ldots,m\}$.
Note that for $g \in \IZ_D^m$ we have that $S$ and $gS$ coincide up to rescaling with a $D$th root of unity.
Indeed, if any symbol in the tableau $S$ does not occur a multiple of $D$ times, then $S$ vanishes under the symmetrization
\[
S \mapsto \frac{1}{D^m} \sum_{g \in \IZ_D^m} \pi S.
\]
Moreover, each $S$ in which every number appears a multiple of $D$ many times is fixed under this symmetrization map.
Hence $\{\la\}^{\IZ_D^m}$ has a basis given by semistandard tableaux of shape $\la$ in which each number appears a multiple of $D$ many times.
For a partition $\varrho$ let $\sW_\varrho'$ denote the linear space spanned by all semistandard tableaux of shape $\la$ whose sorted content is $\varrho D$.
This gives a decomposition $\{\la\}^{\IZ_D^m} = \bigoplus_{\varrho}\sW_\varrho'$ as a direct sum.
The action of $\aS_m$ leaves $\sW_\varrho'$ fixed.
Hence $(\{\la\}^{\IZ_D^m})^{\aS_m} = \bigoplus_\varrho (\sW_\varrho')^{\aS_m}$.
A generating set for $\{\la\}^H$ is thus obtained by taking the $\sW_\varrho'$ and their tableau bases
and independently symmetrizing over $\mathfrak S_m$, which is done (up to scale) by applying $P_m$.
Using eq.~\eqref{eq:ggammagv} we thus obtain the following proposition.

\begin{proposition}
The vector space $\HWV_{\la^*}(\IC[G]^{\vv H}_d)$
decomposes into a direct sum of vector spaces
$\HWV_{\la^*}(\IC[G]^{\vv H}_d) = \bigoplus_{\varrho\vdash_m d} \sW_\varrho$,
and each $\sW_\varrho$ is generated by the functions
\[
g \mapsto \gamma(g P_m S),
\]
where $S$ runs over all semistandard tableaux $S$ of shape $\la$
and content $\varrho D$.

Let $\sW_\varrho'$ denote the linear space spanned by
all semistandard tableaux $S$ of shape $\la$
whose sorted content is $\varrho D$. Then $\sW_\varrho$ is isomorphic to the $\aS_m$-invariant subspace of $\sW_\varrho'$.
\end{proposition}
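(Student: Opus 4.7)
The plan is to chain the structural identifications prepared in this section, reducing the statement to an explicit description of $\{\la\}^H$ as an $\aS_m$-module refined by content. The key leverage is the semidirect product decomposition $H = \IZ_D^m \rtimes \aS_m$, which allows me to compute invariants stepwise in the semistandard tableau basis.

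First I would invoke the explicit algebraic Peter-Weyl theorem (Theorem~\ref{thm:explalgpeterweyl}) together with right $H$-invariance to get $\IC[G]^{\vv H} = \bigoplus_\mu \{\mu\}^* \otimes \{\mu\}^H$, so that the left $\la^*$-isotypic component is spanned by matrix coefficients $g \mapsto l(gv)$ with $l \in \{\la\}^*$ and $v \in \{\la\}^H$. Extracting left highest weight functions of weight $\la^*$ forces $l$ to be the superstandard dual functional $\gamma$ of Section~\ref{sec:tableaucontraction} (up to scale), and yields an isomorphism
\[
\HWV_{\la^*}(\IC[G]^{\vv H}) \;\xrightarrow{\sim}\; \{\la\}^H, \qquad v \;\mapsto\; [g\mapsto \gamma(gv)].
\]
The polynomial degree $d$, inherited on $\IC[G]^{\vv H}$ via the isomorphism~\eqref{eq:algiso}, corresponds under this bijection to the condition $|\la| = dD$, since the $\GL_m$-action on $\Sym^D \IC^m$ is homogeneous of degree $D$ in~$g$.

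Next I would compute $\{\la\}^H = (\{\la\}^{\IZ_D^m})^{\aS_m}$ in the semistandard basis. Semistandard tableaux are torus weight vectors, so $\diag(\zeta_1,\ldots,\zeta_m) \in \IZ_D^m$ scales a semistandard tableau $S$ of content $\beta$ by $\prod_i \zeta_i^{\beta_i}$; averaging over $\IZ_D^m$ is therefore the projector onto the span of those semistandard tableaux whose content is entrywise a multiple of $D$. Grouping these by sorted content yields $\{\la\}^{\IZ_D^m} = \bigoplus_{\varrho \vdash_m d} \sW_\varrho'$. Since $\aS_m$ permutes labels, it preserves sorted content and sends each $\sW_\varrho'$ to itself, so $\{\la\}^H = \bigoplus_\varrho (\sW_\varrho')^{\aS_m}$, and a spanning set of each $(\sW_\varrho')^{\aS_m}$ is obtained (up to a positive scalar) by applying the symmetrizer $P_m$ of Section~\ref{sec:twoprojections} to each semistandard tableau basis vector. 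Transporting back through $v \mapsto [g \mapsto \gamma(gv)]$ produces the direct sum decomposition of $\HWV_{\la^*}(\IC[G]^{\vv H}_d)$ and the spanning functions $g \mapsto \gamma(g P_m S)$ in the statement, and the isomorphism $\sW_\varrho \cong (\sW_\varrho')^{\aS_m}$ is immediate from the construction.

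The only points that deserve care are bookkeeping: that the Grassmann--Pl\"ucker relations $K(\la)$ are content-homogeneous, so that the $\IZ_D^m$-action is well-defined and diagonal in the semistandard basis (this is immediate from the generators listed in Section~\ref{sec:prelrepr}), and cleanly tracking the degree grading through the isomorphism~\eqref{eq:algiso}. No step in the plan seems to pose a real obstacle.
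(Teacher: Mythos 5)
Your proposal is correct and follows essentially the same route as the paper: Peter--Weyl plus right $H$-invariants, the identification $\HWV_{\la^*}(\IC[G]^{\vv H})\simeq\{\la\}^H$ via $v\mapsto[g\mapsto\gamma(gv)]$, and the two-step computation of $\{\la\}^H=(\{\la\}^{\IZ_D^m})^{\aS_m}$ in the semistandard basis, with $P_m$ producing the spanning set. Your explicit remark that the Grassmann--Pl\"ucker relations are content-homogeneous (so the $\IZ_D^m$-action is diagonal and $\aS_m$ preserves each $\sW_\varrho'$) is a point the paper leaves implicit, but otherwise the arguments coincide.
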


Theorem~\ref{thm:functionsonorbit} now follows immediately by applying the algebra isomorphism \eqref{eq:algiso}.

\section{A formula for the multiplicities in the coordinate ring of the orbit of the power sum (Proof of Prop.~\protect{\ref{pro:gctandsymmetries}})}\label{sec:gctandsymmetries}
Crucial parts of this section are the result of a collaboration with Greta Panova and appeared in the first author's unpublished lecture notes for a winter 2017/2018 course on geometric complexity theory \cite{Ike:19}.

In this section we prove Proposition~\ref{pro:gctandsymmetries}.
In fact, we prove a slightly stronger result as follows.
\begin{proposition}\label{pro:precisedecomposition}
Let $\{\la\}_{\varrho} \subseteq \{\la\}$ denote the linear subspace spanned by the tableaux whose sorted content is $\varrho D$.
Then
$\dim (\{\la\}_{\varrho})^{\aS_m} = b(\la,\varrho,D,d)$.
\end{proposition}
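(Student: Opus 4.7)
My plan is to reduce $\dim(\{\la\}_\varrho)^{\aS_m}$ step by step: first to $\aS_\varrho$-invariants in an ordered weight space via Frobenius reciprocity, then to a product over blocks via multi-Littlewood-Richardson branching to a block-diagonal Levi, and finally to plethysm coefficients via a Cauchy-identity argument that identifies $\aS_n$-invariants in rectangular weight spaces with plethysms.

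First, $\aS_m$ permutes the $\GL_m$-weight spaces of $\{\la\}$, and the stabilizer in $\aS_m$ of the ordered composition $\varrho D := (\varrho_1 D, \ldots, \varrho_m D)$ is the Young subgroup $\aS_\varrho := \prod_{i \geq 0}\aS_{\hat\varrho_i}$. Hence $\{\la\}_\varrho \cong \mathrm{Ind}_{\aS_\varrho}^{\aS_m}\bigl(\{\la\}_{\varrho D}\bigr)$ as $\aS_m$-modules (with $\{\la\}_{\varrho D}$ on the right being the weight space at the ordered composition), and Frobenius reciprocity gives $\dim(\{\la\}_\varrho)^{\aS_m} = \dim(\{\la\}_{\varrho D})^{\aS_\varrho}$. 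Next, restricting $\{\la\}$ to the Levi $L := \prod_{i \geq 0}\GL_{\hat\varrho_i} \subseteq \GL_m$, the multi-Littlewood-Richardson rule gives $\{\la\}|_L \cong \bigoplus c^\la_{\mu^0,\ldots,\mu^d}\bigotimes_i \{\mu^i\}_{\GL_{\hat\varrho_i}}$. Extracting the weight-$\varrho D$ subspace places each $\{\mu^i\}$ in its rectangular weight space $(iD)^{\hat\varrho_i}$, forcing $|\mu^i| = iD\hat\varrho_i$; since $\aS_\varrho$ acts block-wise, $\aS_\varrho$-invariants factor block by block:
\[
\dim(\{\la\}_{\varrho D})^{\aS_\varrho} = \sum_{(\mu^0,\ldots,\mu^d)} c^\la_{\mu^0,\ldots,\mu^d}\prod_{i \geq 0}\dim\bigl(\{\mu^i\}_{(iD)^{\hat\varrho_i}}\bigr)^{\aS_{\hat\varrho_i}}.
\]

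The hard part is the key identity $\dim\bigl(\{\mu\}_{(k)^n}\bigr)^{\aS_n} = a_\mu(n, k)$ for $\mu \vdash nk$ with $\ell(\mu) \leq n$, where $\aS_n \hookrightarrow \GL_n$ as permutation matrices. I would prove this by a Cauchy-identity argument: pick auxiliary $W := \IC^N$ with $N \geq n$; Cauchy's formula gives $\Sym^{nk}(\IC^n \otimes W) \cong \bigoplus_\mu \{\mu\}_{\GL_n} \otimes \{\mu\}_{\GL_N}$. Extracting the $\GL_n$-weight $(k)^n$ subspace, the right-hand side becomes $\bigoplus_\mu(\{\mu\}_{\GL_n})_{(k)^n}\otimes\{\mu\}_{\GL_N}$, while a direct monomial computation (sum over matrices $A \in \mathbb{Z}_{\geq 0}^{n \times N}$ with row sums $k$, contributing $y^{\text{col-sums}(A)}$) gives $(\Sym^k W)^{\otimes n}$ as a $\GL_N$-module, with the residual $\aS_n \subseteq \GL_n$-action corresponding to permutation of the $n$ tensor factors. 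Taking $\aS_n$-invariants on both sides yields the $\GL_N$-module equality $\Sym^n\Sym^k W \cong \bigoplus_\mu\bigl(\{\mu\}_{\GL_n}\bigr)_{(k)^n}^{\aS_n}\otimes \{\mu\}_{\GL_N}$; comparing with $\Sym^n\Sym^k W = \bigoplus_\mu a_\mu(n, k)\{\mu\}_{\GL_N}$ and reading off multiplicities of each $\{\mu\}_{\GL_N}$ yields the identity.

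Finally, plugging the key identity (with $n = \hat\varrho_i$, $k = iD$) into the factored formula, and using that the $i = 0$ block contributes trivially ($\mu^0 = ()$ gives $a_{()}(\hat\varrho_0, 0) = 1$ and $c^\la_{(),\mu^1,\ldots} = c^\la_{\mu^1,\ldots}$), we arrive at $\dim(\{\la\}_\varrho)^{\aS_m} = \sum c^\la_{\mu^1,\ldots,\mu^d}\prod_{i=1}^d a_{\mu^i}(\hat\varrho_i, iD) = b(\la, \varrho, D, d)$, proving the proposition. The main obstacle is the key identity; the Cauchy-identity argument dispatches it cleanly.
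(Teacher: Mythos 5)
Your argument is correct, and its first step coincides with the paper's: the reduction $\dim(\{\la\}_\varrho)^{\aS_m}=\dim\bigl(\{\la\}^{\varrho D}\bigr)^{\stab\varrho}$ is exactly the Claim inside the paper's proof, which is established there by an explicit projection onto one weight space rather than by quoting Frobenius reciprocity for an induced module. After that the routes genuinely differ. The paper does not branch to a Levi at all: it identifies the weight space $\{\la\}^{\varrho D}$ with the Young-subgroup invariants $[\la]^{G_\varrho}$ of the Specht module and then applies Schur--Weyl duality over an auxiliary space $V$, so that $\dim\bigl([\la]^{G_\varrho}\bigr)^{\stab\varrho}$ becomes the multiplicity of $\{\la\}$ in $(\tensor^{dD}V)^{G_\varrho\rtimes\stab\varrho}$, a tensor product of factors $\IC[\Sym^{iD}V]^*_{\hat\varrho_i}$ whose decompositions give the plethysm coefficients; the multi-Littlewood--Richardson coefficient enters only at the very end, by distributivity over this tensor product. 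You stay on the $\GL_m$ side: the multi-LR coefficient enters first, via restriction to the block Levi, and each block is finished off with the identity $\dim(\{\mu\}_{(k)^n})^{\aS_n}=a_\mu(n,k)$, which you prove with the Cauchy formula. That identity is essentially Gay's theorem \cite{Gay:76}, which the paper invokes elsewhere (for the Kostka-number bound in Theorem~\ref{thm:proofofconcept}), and it is precisely what the paper's combination of the Specht-module identification with Schur--Weyl accomplishes in one global stroke. Your organization isolates a clean, reusable lemma and never needs Specht modules; the paper's single Schur--Weyl computation produces the plethysms and the multi-LR coefficient simultaneously. Two small bookkeeping points, neither a gap: your factored sum runs over tuples with $\ell(\mu^i)\le\hat\varrho_i$ (forced by the Levi), so to match the paper's $b(\la,\varrho,D,d)$, whose sum is over all $\mu^i\vdash Di\hat\varrho_i$, you should note that $a_\mu(n,k)=0$ whenever $\ell(\mu)>n$ (immediate from the same Cauchy comparison); and $N\ge n$ indeed suffices in your key lemma because the multiplicity of $\{\mu\}$ in $\Sym^n\Sym^k\IC^N$ is already stable once $N\ge\ell(\mu)$.
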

\begin{proof}
Let $\{\la\}^\varrho$ denote the $\varrho$-weight space, i.e., the linear space spanned by tableaux of shape $\la$ and content $\varrho$.
Then $\{\la\}_\varrho = \bigoplus_{\gamma \in \aS_m\varrho} \{\lambda\}^\gamma$.

For a partition $\varrho \vdash_m d$ let $\aS_m\varrho \subseteq \IN^m$ denote the orbit of $\varrho$. Note that $\varrho$ is the only partition in its orbit, while the other lists are not in the correct order.
Let $\stab \varrho \leq \aS_m$ denote the stabilizer of $\varrho$.

\begin{claim}
$\dim (\{\la\}_\varrho)^{\aS_m} = \dim\left(\{\lambda\}^\varrho\right)^{\stab \varrho}.$
\end{claim}
\begin{proof}
We construct an isomorphism of vector spaces.

Let $W := \{\la\}$.
Let $\pi_1,\ldots,\pi_r$ be a system of representatives of left cosets for $\stab \varrho \leq \aS_m$ with $\pi_1 = \text{id}$,
i.e., $\aS_m = \pi_1 \stab \varrho \ \dot\cup \ \cdots \ \dot\cup \ \pi_r\stab\varrho$ and we have $\aS_m \varrho = \{\pi_1 \varrho , \ldots, \pi_r \varrho\}$.
Therefore we have the decomposition
\[
W_\varrho = \bigoplus_{j=1}^r \pi_j W^\varrho.
\]
Let $\overline p : W_\varrho \twoheadrightarrow W^\varrho$ be the projection according to this decomposition.
We claim that the restriction
\[
p : \left(W_\varrho\right)^{\aS_m} \to \left(W^\varrho\right)^{\stab\varrho}
\]
is an isomorphism of vector spaces. This then finishes the proof.
We verify well-definedness, injectivity, and surjectivity of $p$.

Well-definedness: The spaces $\pi_1 W^\varrho,\ldots,\pi_r W^\varrho$ are permuted by $\aS_m$.
Every $\sigma \in \stab \varrho$ fixes $W^\varrho$, thus $\sigma v_1 = v_1$ if $v_1 \in W^\varrho$.
Thus the map $v = \sum_{j=1}^r v_j \stackrel{\overline p}{\mapsto} v_1$ maps $W_\varrho$ to $(W^\varrho)^{\stab\varrho}$.

Injectivity: If $v \in (W_\varrho)^{\aS_m}$, then $v = \pi v = \sum_j \pi v_j$. Therefore $v_j = \pi_j v_1$.
If $p(v)=0$, then $v_1=0$, thus all $v_j=0$, which proves injectivity.

Surjectivity: Let $v_1 \in (W^\varrho)^{\stab \varrho}$. Set $v_j := \pi_j v_1$ and put $v:=\sum_j v_j$.
Clearly $p(v)=v_1$. It remains to verify that $v$ is $\aS_m$-invariant.
\[
v = \sum_{j=1}^r \pi_j v_1 = \sum_{j=1}^r \tfrac{1}{|\stab \varrho|} \sum_{\tau \in \stab \varrho} \pi_j \tau v_1 = \tfrac{1}{|\stab \varrho|} \sum_{\pi \in \aS_m}\pi v_1,
\]
which is $\aS_m$-invariant.
\end{proof}

We are left with determining $\dim\left(\{\la\}^{\varrho}\right)^{\stab \varrho}$.
We use a detour via Specht modules:
the Specht module $[\la]$ is an irreducible $\aS_{|\la|}$-representation that can be constructed as the subrepresentation of $\{\la\}$ spanned by all standard tableaux.
Define the Young subgroup
$G_\varrho \subseteq \aS_{dD} := \aS_{\varrho_1 D} \times \cdots \times \aS_{\varrho_m D}$.
We use that $\{\la\}^{\varrho} \simeq [\la]^{G_\varrho}$, see e.g.~\cite[Sec.~4.3(A)]{ike:12b}.

Schur-Weyl duality implies that
\begin{eqnarray*}
\dim\left([\la]^{G_\varrho}\right)^{\stab \varrho} = \dim\HWV_\la(\{\la\}\otimes ([\la]^{G_\varrho})^{\stab \varrho} ) = \dim\HWV_\la((\otimes^{dD}V)^{G_\varrho\rtimes{\stab \varrho}})
\end{eqnarray*}
for $V$ having large enough dimension.

\begin{eqnarray*}
(\otimes^{dD}V)^{G_\varrho\rtimes{\stab \varrho}} &=& (\Sym^{D\varrho_1}V \otimes \cdots \otimes \Sym^{D\varrho_m}V)^{\stab \varrho}\\
&=& \left(\bigotimes^{\hat\varrho_1}\Sym^{D}V \otimes \bigotimes^{\hat\varrho_2}\Sym^{2D}V \otimes \cdots \otimes \bigotimes^{\hat\varrho_d}\Sym^{dD}V\right)^{\stab \varrho} \\
&=& \underbrace{\IC[\Sym^{D}V]^*_{\hat\varrho_1}}_{=\bigoplus_{\mu^1} \{\mu^1\}^{\oplus a_{\mu^1}(\hat\varrho_1,D)}} \otimes \IC[\Sym^{2D}V]^*_{\hat\varrho_2} \otimes \cdots \otimes \underbrace{\IC[\Sym^{dD}V]^*_{\hat\varrho_d}}_{{=\bigoplus_{\mu^d} \{\mu^d\}^{\oplus a_{\mu^d}(\hat\varrho_d,dD)}}} \quad\quad\quad\hfill(\dagger).
\end{eqnarray*}
The multiplicity of $\{\mu^i\}^*$ in $\IC[\Sym^{iD}V]_{\hat\varrho_i}$ is $a_{\mu^i}(\hat\varrho_i,iD)$.
Using distributivity we obtain that the multiplicity of $\{\la\}$ in the representation $(\dagger)$ equals
\[
\sum_{\mu^1,\mu^2,\ldots,\mu^d \atop \mu^i \vdash \hat\varrho_i D i} c_{\mu^1,\mu^2,\ldots,\mu^d}^\la \prod_{i=1}^d a_{\mu^i}(\hat\varrho_i,iD)
\]
\end{proof}
\begin{proof}[Proof of Proposition~\ref{pro:gctandsymmetries}]
$\mult_\la \IC[Gp]_d \stackrel{\text{Thm. }\ref{thm:functionsonorbit}}{=} \sum_{\varrho\vdash_m d}(\sW_\varrho)^{\aS_m} \stackrel{\text{Prop.~\ref{pro:precisedecomposition}}}{=} \sum_{\varrho\vdash_m d} b(\la,\varrho,D,d)$.
\end{proof}

\section{Highest weight functions on the orbit closure}\label{sec:hworbitclosure}

In this section we study the coordinate ring $\IC[\overline{Gp}]$. Much less is known about this ring compared to $\IC[Gp]$, in particular we do not have formulas for $\IC[\overline{Gp}]$ that are comparable to Prop.~\ref{pro:gctandsymmetries}.
Nevertheless, the following theorem provides a way to analyze $\IC[\overline{Gp}]$ and connect it to $\IC[Gp]$.

For an $m \times n$ matrix $A$ let $A p := \ell_1^D+\ell_2^D+\cdots+\ell_n^D$, where $\ell_i$ is the linear form given by the $i$-th column of $A$. Note that this is a generalization of $gp$ for $g \in \GL_m$.

\begin{theorem}\label{thm:functionsonorbitclosure}
If $|\la|$ is not divisible by $D$, then
$\HWV_{\la^*}(\IC[\overline{Gp}])=0$.
        
Let $\la \vdash \delta D$.
The vector space $\HWV_{\la^*}(\IC[\overline{Gp}])$ is generated by the functions
\[
g \mapsto \gamma(g M_{\delta,m} T),
\]
where $T$ runs over all semistandard tableaux of shape $\la$ in which each entry $1,\ldots,\delta$ appears exactly $D$ many times.

Additionally, for a semistandard tableau $T$ of shape $\la$ in which each entry $1,\ldots,\delta$ appears exactly $D$ many times, the function
\[
A \mapsto \gamma(A M_{\delta,m} T)
\]
is either zero or a HWV of weight $\la^*$ in $\IC[\Sym^D \IC^m]$.
\end{theorem}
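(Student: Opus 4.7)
The plan is to exploit the $G$-equivariant surjection $\phi^{*}\colon\IC[\Sym^{D}\IC^{m}]\twoheadrightarrow\IC[\overline{Gp}]$ pulled back along $\phi\colon\IC^{m\times m}\to\Sym^{D}\IC^{m}$, $A\mapsto Ap$, whose image is dense in $\overline{Gp}$. By complete reducibility of $G$-representations, every HWV of weight $\la^{*}$ in $\IC[\overline{Gp}]$ lifts to one in the ambient $\IC[\Sym^{D}\IC^{m}]$. Since $\IC[\Sym^{D}\IC^{m}]_{d}=\Sym^{d}\Sym^{D}(\IC^{m})^{*}$ is a subrepresentation of $((\IC^{m})^{*})^{\otimes dD}$, whose weights all sum to $-dD$, a nonzero HWV of weight $\la^{*}$ forces $|\la|=dD$; in particular $D\mid|\la|$, proving the first bullet.

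For the third bullet I would analyze $A\,M_{\delta,m}T$ at the tensor level. The $|\la|=\delta D$ boxes of $T$ split into $\delta$ blocks of $D$ boxes each, one per symbol of $T$. Let $\sigma_{T}\in\aS_{|\la|}$ permute tensor positions so as to move the boxes of block $j$ into the $j$-th consecutive slot of $D$ positions. In every summand $\varphi T$ of $M_{\delta,m}T=\sum_{\varphi}\varphi T$ the positions of block $j$ are all decorated with the identical vector $e_{\varphi(j)}$, so after the reordering the sum factorizes over independent choices $\varphi(1),\ldots,\varphi(\delta)\in\{1,\ldots,m\}$:
\[
\sigma_{T}\!\cdot\!M_{\delta,m}T\;=\;\bigotimes_{j=1}^{\delta}\Bigl(\sum_{k=1}^{m}e_{k}^{\otimes D}\Bigr)\;=\;t_{p}^{\otimes\delta},
\]
where $t_{p}:=\sum_{i}e_{i}^{\otimes D}$ is the symmetric-tensor realization of $p\in\Sym^{D}\IC^{m}$. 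The diagonal action of any $A\in\IC^{m\times m}$ commutes with the position permutation $\sigma_{T}$, so $A\,M_{\delta,m}T=\sigma_{T}^{-1}t_{Ap}^{\otimes\delta}$, and therefore
\[
\gamma(A\,M_{\delta,m}T)\;=\;\gamma\bigl(\sigma_{T}^{-1}\,t_{Ap}^{\otimes\delta}\bigr)\;=\;\tilde{f}_{T}(Ap)
\]
for the polynomial $\tilde{f}_{T}(v):=\gamma(\sigma_{T}^{-1}v^{\otimes\delta})$ on $\Sym^{D}\IC^{m}$. Since $\gamma\in\{\la\}^{*}$ carries the transformation properties of the weight $\la^{*}$ and the map $v\mapsto\sigma_{T}^{-1}v^{\otimes\delta}$ is $G$-equivariant, $\tilde{f}_{T}$ is either zero or a HWV of weight $\la^{*}$, establishing the third bullet.

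For the spanning in the second bullet, the lifting argument reduces the problem to showing that the $\tilde{f}_{T}$'s span $\HWV_{\la^{*}}(\IC[\Sym^{D}\IC^{m}]_{d})$, a space of dimension $a_{\la}(d,D)$. A convenient route is to keep everything inside $\{\la\}^{H}$: a short invariance check shows each $M_{\delta,m}T$ is fixed both by $\aS_{m}$ (by re-indexing the sum over $\varphi$) and by $\IZ_{D}^{m}$ (since every entry of $\varphi T$ appears a multiple of $D$ times), so $M_{\delta,m}T\in\{\la\}^{H}$, and the tensor computation above produces the extension from $G$ to $\IC^{m\times m}$ factoring through $\phi$. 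Through Theorem~\ref{thm:functionsonorbit} and the injection $\HWV_{\la^{*}}(\IC[\overline{Gp}])\hookrightarrow\HWV_{\la^{*}}(\IC[Gp])\simeq\{\la\}^{H}$, the image of $\phi^{*}$ on HWVs identifies with the subspace of $\{\la\}^{H}$ consisting of those $v$ for which $A\mapsto\gamma(Av)$ on $\IC^{m\times m}$ factors through $\phi$, and this subspace plainly contains the $M_{\delta,m}T$'s.

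The principal obstacle is the reverse inclusion: showing every $v\in\{\la\}^{H}$ that yields a polynomial factoring through $\phi$ already lies in the span of the $M_{\delta,m}T$'s. The natural attack is a dimension count combining (i) the surjectivity of $\phi^{*}$ on HWVs, which bounds the subspace dimension above by $a_{\la}(d,D)$, and (ii) a matching lower bound from the classical fact that the semistandard tableaux of shape $\la$ with content $(D^{d})$ provide a spanning family for the plethysm HWV space (via $\aS_{D}\wr\aS_{d}$-symmetrization), corresponding under the tensor identification above exactly to the $\tilde{f}_{T}$'s.
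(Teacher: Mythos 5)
Your treatment of the first and third claims is essentially the paper's own argument run in the opposite direction: the paper starts from ambient highest weight vectors $f(y)=\langle \pi v_\la, y^{\otimes\delta}\rangle$ and evaluates at $y=Ap$, while you start from $T$, reorder tensor slots by $\sigma_T$ to get $t_p^{\otimes\delta}$, and read off $\tilde f_T(Ap)=\gamma(A M_{\delta,m}T)$; these are the same computation, and the weight/degree argument for $D\mid|\la|$ matches the paper's. The problem is the second claim (spanning), and there your proposal has a genuine gap at exactly the step the paper is careful about: \emph{why semistandard tableaux suffice}. The paper gets this for free from its direction of travel: by Schur--Weyl duality and polarization, $\HWV_{\la^*}(\IC[\Sym^D\IC^m]_\delta)$ is spanned by HWVs indexed by \emph{standard} tableaux $S$; restricting to $\overline{Gp}$, each such HWV becomes $g\mapsto\gamma(gM_{\delta,m}T)$ where $T$ is the block-collapse of $S$; if $T$ is not regular the function vanishes (Corollary~\ref{cor:gammazero}), and a regular collapse of a standard tableau is automatically semistandard. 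Surjectivity of the restriction map on HWVs then finishes the proof.

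You instead invoke as a ``classical fact'' that the semistandard tableaux of shape $\la$ and content $(D^{\delta})$ already index a spanning family of the ambient plethysm HWV space. That statement is true, but it is essentially the crux of the theorem rather than a citable black box: the $\aS_D\wr\aS_\delta$-symmetrization you mention only produces a spanning set indexed by \emph{all} tableaux of content $(D^\delta)$, and reducing to semistandard ones requires either the paper's standard-tableau collapse trick or a verification that $T\mapsto\tilde f_T$ is compatible with column antisymmetry and the Grassmann--Pl\"ucker shuffle relations (a bideterminant straightening argument), neither of which appears in your proposal. The auxiliary ``dimension count'' does not repair this: the upper bound $a_\la(\delta,D)$ on $\dim\HWV_{\la^*}(\IC[\overline{Gp}]_\delta)$ coming from surjectivity of restriction cannot be matched by a lower bound on the span of the restricted $\tilde f_T$'s except via the very spanning claim in question (restriction is far from injective here --- indeed Proposition~\ref{pro:threetableaux} exhibits ambient HWVs killed by restriction), and if the ambient spanning claim were granted, restriction surjectivity alone would conclude, making the dimension count superfluous. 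Likewise, the detour through $\{\la\}^H$ and the description of the image of restriction as those $v$ for which $A\mapsto\gamma(Av)$ ``factors through $\phi$'' is neither justified nor needed. To complete your route you must prove the semistandard spanning statement, e.g.\ by showing $\tilde f_T=0$ for non-regular $T$, $\tilde f_T=\pm\tilde f_{T'}$ under column sorting, and that the shuffle relations are respected, or simply adopt the paper's standard-tableau argument.
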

This is a rephrasing of \cite{AIR:16}, which is a special case of \cite[Prop.~4.5 and Thm.~4.7]{BIP:19}.
Indeed, \cite{BIP:19} covers more general cases.
For the sake of completeness and to highlight that the proof technique is very different from the technique in Section~\ref{sec:hworbit}, we prove Theorem~\ref{thm:functionsonorbitclosure}.

\begin{proof}
The first observation follows from the fact that $\overline{Gp} \subseteq \Sym^D\IC^m$ is a closed subvariety that is closed under rescaling,
and hence $\IC[\overline{Gp}]$ is a graded subalgebra of $\IC[\Sym^D\IC^m]$.
We know that in each degree $\delta$ component $\IC[\Sym^D\IC^m]_\delta$ of $\IC[\Sym^D\IC^m]$ the only types $\la^*$ that occur satisfy $\la \vdash \delta D$, see e.g.~\cite[Lemma~4.3.3]{ike:12b}.

Schur-Weyl duality yields that
\begin{equation}\label{eq:schurweyl}
\tensor^{\delta D} \IC^{m*} = \bigoplus_{\la \vdash_m \delta D} \{\la^*\} \otimes [\la].
\end{equation}
A highest weight vector of weight $\la^*$ in $\tensor^{\delta D} \IC^{m*}$ is given for example by
\[
v_\la := x_1 \wedge x_2 \wedge \cdots \wedge x_{\mu_1} \otimes x_1 \wedge x_2 \wedge \cdots \wedge x_{\mu_2} \otimes \cdots \cdots \otimes x_1 \wedge x_2 \wedge \cdots \wedge x_{\mu_{\la_1}},
\]
where $\mu = \la^t$ and $\{x_i\}_i$ is the basis of $\IC^{m*}$.
Let $T_\la$ denote the column-standard tableau of shape $\la$, i.e., the tableau that is filled with the numbers $1,\ldots,|\la|$ in a columnwise fashion from left to right, top to bottom.
Since $[\la]$ is irreducible,
from \eqref{eq:schurweyl} we see that
$\HWV_{\la^*}(\tensor^{\delta D}\IC^{m*})$ is generated by the set $\{v_\la \pi \mid \pi \in \aS_{\delta D} \text{ s.t. $\pi T_\la$ is standard}\}$.

By the polarization principle (see e.g.~\cite[Claim~4.2.13]{ike:12b}),
all functions $f$ in $\IC[\Sym^D{\IC^{m*}}]_\delta$ can be obtained via some tensor $v_f \in \tensor^{\delta D}\IC^m$ and defining
\[
f(y) := \langle v_f , y^{\otimes \delta}\rangle.
\]
The resulting function $f$ is a highest weight function iff $v_f$ is a HWV.
Thus we see that $\HWV_{\la^*}(\IC[\Sym^D\IC^m]_\delta)$ is generated by the functions
\[
f(y) := \langle \pi v_\la, y^{\otimes \delta}\rangle.
\]
In the following we analyze how to restrict these functions to $\overline{Gp}$.
When $y = gp = \ell_1^D+\cdots+\ell_m^D$ is in the orbit of the power sum, then clearly
\[
y^{\otimes \delta} = \sum_{\varphi:\{1,\ldots,\delta\}\to\{1,\ldots,m\}} \ell_{\varphi(1)}^D \otimes \cdots \otimes \ell_{\varphi(\delta)}^D.
\]
The linear forms $\ell_i$ correspond to the vectors $g_{1..m,i}$.
We evaluate
\[
f(y) := \langle \pi v_\la, y^{\otimes \delta}\rangle = \langle v_\la, \pi(y^{\otimes \delta})\rangle
\]
\[
= \sum_{\beta:\{1,\ldots,\delta D\}\to\{1,\ldots,m\}\atop\text{respecting $\pi T_\la$}} \prod_{c=1}^{\la_1}\det(g_{1..\mu_c, \beta(\pi T_\la(1,c))},\ldots,g_{1..\mu_c, \beta(\pi T_\la(\mu_c,c))}),
\]
where $\beta$ \emph{respects} a tableau $S$ if all numbers $1,\ldots,D$ are mapped to the same value, and all numbers $D+1,\ldots,2D$ are mapped to the same value, and so on.

Therefore the vector space $\HWV_{\la^*}(\IC[\overline{Gp}])$ is generated by the functions
\begin{equation}\label{eq:respectS}
g \mapsto \sum_{\beta:\{1,\ldots,Dd\}\to\{1,\ldots,m\}\atop\text{respecting $S$}} \prod_{c=1}^{\la_1}\det(g_{1..\mu_c, \beta(S(1,c))},\ldots,g_{1..\mu_c, \beta(S(\mu_c,c))}),
\end{equation}
where $S$ runs over all standard tableaux of shape $\la$.

Given a standard tableau $S$ we define a tableau $T$ by replacing the first $D$ entries $1,\ldots,D$ by the number $1$,
the next $D$ entries $D+1,\ldots,2D$ by the number $2$, and so on.
It is easy to check that if $T$ is not regular, then the function corresponding to $S$ describes the zero function, because each summand in \eqref{eq:respectS} has a zero factor that is the determinant of a matrix with a repeating column.
We assume from now on that $T$ is regular.
Since $T$ is regular and $S$ is standard, $T$ is semistandard.
We rewrite \eqref{eq:respectS} as follows:

\begin{equation}\label{eq:sumsofprodofdets}
g \mapsto \sum_{\varphi:\{1,\ldots,\delta\}\to\{1,\ldots,m\}} \prod_{c=1}^{\la_1}\det(g_{1..\mu_c, \varphi(T(1,c))},\ldots,g_{1..\mu_c, \varphi(T(\mu_c,c))})
\end{equation}

Using \eqref{eq:gammadet}, we can write this in terms of $\gamma$ as follows:

\[g \mapsto \sum_{\varphi:\{1,\ldots,\delta\}\to\{1,\ldots,m\}} \gamma(g \varphi T).\]

By definition of $M_{\delta,m}$, this can be rewritten as:

\[g \mapsto \gamma(g M_{\delta,m} T),\]
which finishes the proof of the second part of Theorem~\ref{thm:functionsonorbitclosure}.

For the last claim, we note that in this construction of highest weight functions we did not use that $g$ is a square matrix. A rectangular matrix $A$ works in the same way.
\end{proof}

\section{An equation for Waring rank (Proof of the missing part in Prop.~\protect{\ref{pro:alontarsi}})}\label{sec:threetableaux}
In this section we prove the following proposition that was used in the proof of Proposition~\ref{pro:alontarsi}.
\begin{proposition}\label{pro:threetableaux}
Let $\nu = (2m)+m\times 2m$.
Then $a_\nu(2(m+1),m) > 2$.
\end{proposition}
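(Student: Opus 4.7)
The goal is to show $a_\nu(2(m+1), m) \geq 3$. Since Theorem~\ref{thm:proofofconcept} combined with Lemma~\ref{lem:cormain} already establishes $\mult_{\nu^*}\IC[\overline{Gp}] = 2$, and the short exact sequence of graded $\GL_m$-modules
\[
0 \to I(\overline{Gp}) \to \IC[\Sym^m\IC^m] \to \IC[\overline{Gp}] \to 0
\]
splits on each $\nu^*$-isotypic component, one has $a_\nu(2(m+1), m) = 2 + \mult_{\nu^*} I(\overline{Gp})_{2(m+1)}$. My plan is therefore to exhibit a single nonzero highest-weight vector $F$ of weight $\nu^*$ and degree $2(m+1)$ in the vanishing ideal of the power-sum orbit closure---equivalently, a polynomial equation of bidegree $(2(m+1), m)$ cutting out (a component of) the variety of forms of Waring rank $\leq m$ that also transforms under $\GL_m$ by the weight $\nu^*$.

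For the construction, I would use classical catalecticant equations. The second catalecticant $C_2(y)\colon \Sym^2 \IC^{m*} \to \Sym^{m-2}\IC^m$ for $y\in\Sym^m\IC^m$ has matrix entries linear in the coefficients of $y$ and matrix rank bounded by the Waring rank of $y$; thus for $y\in\overline{Gp}$ every $(m+1)\times(m+1)$-minor of $C_2(y)$ is a degree-$(m+1)$ element of $I(\overline{Gp})$. Products of two such minors lie in $I(\overline{Gp})_{2(m+1)}$, and the span of these products is a $\GL_m$-subrepresentation of $\IC[\Sym^m\IC^m]_{2(m+1)}$. Decomposing this span into irreducibles and projecting onto the $\nu^*$-isotypic summand would yield the desired $F$, provided the projection is nonzero. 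Nonvanishing of $F$ on $\IC[\Sym^m\IC^m]$ would then be verified by evaluating at a generic form $y$ of Waring rank strictly greater than $m$, which exists for $m \geq 4$ since the generic Waring rank in $\Sym^m\IC^m$ exceeds $m$ by Alexander--Hirschowitz.

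The main obstacle is the representation-theoretic bookkeeping in the second step: one must verify that the $\nu^*$-isotypic component of the minor-product representation is indeed nonzero, which requires carefully tracking $\GL_m$-weights through the minor construction and the multiplication. The weight $\nu^* = (-2m, -2m, \ldots, -2m, -4m)$ is quite specific, so identifying its appearance among the many irreducible summands of the minor-product space demands delicate weight analysis. A more combinatorial alternative---hinted at by the internal label \emph{threetableaux} given to the proposition---would exhibit three explicit semistandard tableaux of shape $\nu$ and content $(m^{2(m+1)})$ whose associated highest-weight vectors (via a rectangular-matrix extension of Theorem~\ref{thm:functionsonorbitclosure}) span a three-dimensional subspace of $\HWV_{\nu^*}\IC[\Sym^m\IC^m]$, with linear independence verified by evaluation at three test polynomials such as $p$, $q = x_1\cdots x_m$, and a generic form.
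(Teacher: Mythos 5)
Your reduction is exactly the paper's: since $\mult_{\nu^*}\IC[\overline{Gp}]=2$, it suffices to exhibit one nonzero highest weight vector of weight $\nu^*$ and degree $2(m+1)$ that lies in the vanishing ideal of $\overline{Gp}$, i.e.\ vanishes on $Gp$. The gap is that you never actually produce this vector. In the catalecticant route you stop precisely at the decisive step, namely showing that the $\nu^*$-isotypic component of the span of products of two $(m+1)\times(m+1)$ minors of $C_2$ is nonzero, and you acknowledge this is unverified. It is not at all clear that it holds: the equation the paper constructs is an Aronhold-type object (a generalization of Aronhold's degree-$4$ invariant of ternary cubics), and already in that model case the Aronhold invariant does not arise from catalecticant minors, so there is no a priori reason the very particular weight $\nu=(4m,2m,\ldots,2m)$ occurs among your minor products. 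Also note that your proposed check, evaluating at a generic form of Waring rank $>m$, does not address this difficulty: nonvanishing of a product of minors at a point says nothing about nonvanishing of its $\nu^*$-isotypic projection. As it stands your argument only re-proves $a_\nu(2(m+1),m)\geq 2$.

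For comparison, the paper's proof applies the last part of Theorem~\ref{thm:functionsonorbitclosure} to a single explicit tableau $T=T_{\textsf{left}}+T_{\textsf{right}}$ of shape $\nu$ with $n=2m+2$ symbols, each occurring exactly $m$ times, where $T_{\textsf{left}}$ is an $m\times(m+2)$ rectangle filled rowwise. Two features do the work missing from your sketch: first, $T$ is duplex, so for a generic real $A$ every summand of $\gamma(AM_{n,n}T)$ is a product of determinants in which each factor occurs an even number of times, hence nonnegative, and the identity map contributes a strictly positive term; this shows the function is a nonzero HWV of weight $\nu^*$ in $\IC[\Sym^m\IC^m]$ with no isotypic projection needed. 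Second, a divisibility count shows no $\varphi\in\mathcal{M}_{m+2,m}$ makes $\varphi T_{\textsf{left}}$ regular (a regular image would contain each of the $m$ values exactly $m+2$ times, while every value occurs a multiple of $m$ times), so by Corollary~\ref{cor:gammazero} the function vanishes identically on $Gp$. Linear independence of the three HWVs is then automatic, because the two coming from the orbit do not vanish on $Gp$ while the new one does; your combinatorial alternative, which would verify independence by evaluating three unspecified tableau functions at three test points, leaves both the choice of tableaux and that nondegeneracy claim unproved.
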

\begin{proof}
By Proposition~\ref{pro:gctandsymmetries} and Lemma~\ref{lem:cormain} we have 
$\mult_{\nu^*} \IC[\overline{Gp}] = \mult_{\nu^*} \IC[{Gp}] = 2.
$
This means that there are two linearly independent HWVs of weight $\nu^*$ that do not vanish on $Gp$.
To finish the proof it suffices to construct a nonzero HWV of weight $\nu^*$ that vanishes on $Gp$, because then these three HWVs are linearly independent.
Note that in particular we construct an equation that vanishes on all polynomials of Waring rank at most $m$.

We use the last part of Theorem~\ref{thm:functionsonorbitclosure} to construct this third function.
Let $n := 2m+2$.
Let $T_\textsf{left}$ be the $m \times (m+2)$ tableau that is filled in a rowwise fashion from top to bottom and from left to right with $m$ many 1s, then $m$ many 2s, and so on, until $m$ many $(m+2)$s.
For example, if $m=6$, then
\[
T_\textsf{left} =
{\ytableausetup{boxsize=1.1em}
\ytableaushort{
11111122,
22223333,
33444444,
55555566,
66667777,
77888888
}}
\]
We remark that the function corresponding to $T_\textsf{left}$ via Theorem~\ref{thm:functionsonorbitclosure} is a generalization of Aronhold's degree 4 invariant on ternary cubics, see also \cite{BI:17} for other (related) generalizations.

Let $T_\textsf{right}$ be the $(m \times 2)+(m^2-2m)$ tableau whose first two columns are equal and
consist of the entries $m+3,m+4,\ldots,2m+2$ from top to bottom. The remaining singleton columns get filled with $m-2$ many entries $m+3$, $m-2$ many entries $m+4$, $m-2$ many entries $m+5$, and so on, until $m-2$ many entries $2m+2$.
For example, if $m=6$, then
\[
T_\textsf{right} =
{\ytableausetup{boxsize=1.1em}
\ytableaushort{
999999{10}{10}{10}{10}{11}{11}{11}{11}{12}{12}{12}{12}{13}{13}{13}{13}{14}{14}{14}{14},
{10}{10},
{11}{11},
{12}{12},
{13}{13},
{14}{14}
}}
\]
The tableau $T$ is defined as the concatenation $T := T_\textsf{left}+T_\textsf{right}$.
We observe that $T$ is duplex.

By Theorem~\ref{thm:functionsonorbitclosure} the function $f : A \mapsto \gamma(A M_{n,n} T)$
is either zero or a HWV of weight $\nu^*$ in
$\IC[\Sym^m \IC^m]$.
%$\IC[\overline{\GL_{n}(x_1^D+\cdots+x_n^D)}]$.
\begin{claim}
$f$ does not vanish on $\Sym^m\IC^m$.
\end{claim}
\begin{proof}
This is due to the fact that $T$ is duplex, in complete analogy to \cite{bci:10}.
Choose $A$ to be an $m \times n$ matrix whose entries are real numbers chosen generically (one can alternatively think of the entries being chosen uniformly at random for example from a Gaussian distribution).
Since $T$ is duplex, each summand in $\gamma(A M_{n,n}T)$ is a product of determinants (see \eqref{eq:sumsofprodofdets}), but each factor in the product appears an even number of times and hence the product is nonnegative.
Since $A$ was chosen generically, for the identity map $\textsf{id} \in \mathcal M_{n,n}$ we have $\gamma(A \,\textsf{id}\, T) > 0$. Any finite sum of nonnegative numbers that contains at least one positive number is nonzero, so $\gamma(A M_{n,n}T)$ is nonzero.
This finishes the proof.
\end{proof}
The preceding claim implies that $f$
is a nonzero HWV of weight $\nu^*$ in $\IC[\Sym^m \IC^m]$.
To finish the proof of Proposition~\ref{pro:threetableaux}
it suffices to prove that $f$ vanishes on $Gp$.
The crucial property is that
no tableau in $\mathcal M_{m+2,m}T_\textsf{left}$ is regular:
Since $T_\textsf{left}$ is rectangular with the maximum number of rows,
a regular tableau in $\mathcal M_{m+2,m}T_\textsf{left}$ has $m+2$ many 1s, $m+2$ many 2s, and so on, but every symbol in $\mathcal M_{m+2,m}T_\textsf{left}$ appears a multiple of $m$ many times.
Since no tableau in $\mathcal M_{m+2,m}T_\textsf{left}$ is regular, no tableau in $\mathcal M_{n,m}T$ is regular. Hence all summands in $\gamma(g M_{n,m}T)$ are zero, see Corollary~\ref{cor:gammazero}.
This finishes the proof of Proposition~\ref{pro:threetableaux}.
\end{proof}

\section{Proof of the Main Technical Theorem~\ref{thm:main} using the Tableau Lifting Theorem}\label{sec:proofmain}
In this section we prove Theorem~\ref{thm:main}, based on 
the following combinatorial Tableau Lifting Theorem~\ref{thm:prolongation} whose long and combinatorial proof we will develop during the remaining sections of this paper.
Much simpler forms of other tableau lifting theorems appeared in \cite{KL:12, BIP:19}.

Fix a shape $\la$ and natural numbers $m$ and $e$.
For a tableau $T$ of shape $(m \times e) + \la$ we define $\leftpart(T)$ to be the $m\times e$ rectangular subtableau consisting of the leftmost $e$ columns, and we define $\rightpart(T)$ to be the shape $\la$ subtableau consisting of the rightmost $\la_1$ columns. In particular, we have $T = \leftpart(T)+\rightpart(T)$.
For $\varrho \in (\IN_{\geq 0})^m$ a tableau $S$ has content $D\varrho$ if each number $i$ appears exactly $D\varrho_i$ many times in $S$.

\begin{theorem}[Tableau Lifting Theorem] \label{thm:prolongation}
Let $D \geq 3$ and $m \geq 2$. Given a regular tableau $S$ of shape $\lambda \vdash_m dD$ and content $D \varrho$ for $\varrho\vdash_m d$.
Let
\begin{itemize}
\item $e_\varrho:=\sum_{i=1}^{m} \lceil \frac{\varrho_i}{D-2} \rceil$ in the case where $D$ is even
\item and $e_\varrho:=\sum_{i=1}^{m} 2\lceil \frac{\varrho_i}{2(D-2)} \rceil$ in the case where $D$ is odd and ${{2(D-1)}\choose{D-1}} \geq 2(m-1)$.
\end{itemize}
Let $\delta := m e_\varrho+d$.
In both cases there exists a 
tableau $T:\lambda +(m \times e_\varrho D) \rightarrow \{1,\dots,\delta\}$ in which every entry appears exactly $D$ many times such that
\begin{enumerate}
\item\label{enum:rightpartinSmS} For each $\varphi \in \mathcal{M}_{\delta,m}$ for which $\varphi(T)$ is regular we have that
$\rightpart(\varphi(T))\in\aS_m S$,
\item\label{enum:duplex} For each $\varphi \in \mathcal{M}_{\delta,m}$ for which $\varphi(T)$ is regular we have that $\leftpart(\varphi(T))$ is duplex,
\item\label{enum:existspreimage} there exists $\varphi\in\mathcal{M}_{\delta,m}$ such that $\varphi(T)$ is regular and $\rightpart(\varphi(T))=S$.
\end{enumerate}
\end{theorem}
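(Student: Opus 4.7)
The plan is to construct $T$ explicitly as a concatenation $T = L + R$, where $L = \leftpart(T)$ is the left rectangular part of shape $m \times e_\varrho D$ and $R = \rightpart(T)$ is the right part of shape $\la$, and then verify the three required properties by means of a rigidity argument. For the right part $R$, I would use $d$ fresh symbols grouped into disjoint sets $A_1,\ldots,A_m$ with $|A_i| = \varrho_i$, and obtain $R$ from $S$ by replacing the $\varrho_i D$ occurrences of $i$ in $S$ with the $\varrho_i$ symbols of $A_i$ in such a way that every fresh symbol is used exactly $D$ times. A natural choice is a column-by-column distribution that keeps $R$ semistandard under a fixed total order on the fresh alphabet. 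The canonical map $\varphi_0$ that collapses each $A_i$ back to $i$ then gives $\varphi_0(R) = S$, which combined with a suitable extension of $\varphi_0$ to the $me_\varrho$ symbols used in $L$ yields property (\ref{enum:existspreimage}).

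The construction of $L$ is where the even/odd distinction enters. My plan is to partition the $e_\varrho D$ columns of $L$ into $e_\varrho$ column-blocks of width $D$ (in the even case) or $e_\varrho/2$ blocks of width $2D$ (in the odd case; note that $e_\varrho$ is even by definition of $e_\varrho$ for odd $D$, so $e_\varrho D$ is always even as required for duplexity). Each block uses a fixed set of fresh symbols arranged so that, under any $\varphi$ for which $\varphi(T)$ is regular, the columns of a block pair up into identical columns. In the even case a $D$-fold duplicated permutation pattern suffices. In the odd case the naive duplication fails because $D$ is odd, and one needs genuinely distinct column patterns that nonetheless collapse into identical pairs under $\varphi$; the hypothesis $\binom{2(D-1)}{D-1} \geq 2(m-1)$ is used to provide enough distinct $D$-element subsets of $\{1,\ldots,m\}$ to set up such a pairing, and the ceiling expressions in $e_\varrho$ are precisely what the packing argument dictates.

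Properties (\ref{enum:rightpartinSmS}) and (\ref{enum:duplex}) will both follow from what I will call a \emph{rigidity principle}: any $\varphi$ for which $\varphi(T)$ is regular must be constant on every $A_i$ and on every block's symbol class in $L$, and the induced map $i \mapsto \varphi(A_i)$ must be a permutation $\sigma \in \aS_m$. Once rigidity is established, $\rightpart(\varphi(T)) = \sigma S \in \aS_m S$ proves (\ref{enum:rightpartinSmS}), and the block-by-block pairing built into $L$ delivers (\ref{enum:duplex}). The main obstacle I anticipate is proving the rigidity principle, especially for odd $D$: here the columns of $R$ in isolation are too short to pin down $\varphi$ on all of $A_i$, and one must carefully show that the column constraints from the specific block structure of $L$ (each column a permutation of $m$ distinct values, subject to the block-internal pairing constraints) propagate through to force $\varphi$ to be globally rigid. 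I expect the bulk of the remaining technical work—the combinatorial Sections~\ref{EVENsec:hypergraphsevenD} onward—to be devoted to such an analysis, likely by induction on the number of blocks, combined with an auxiliary hypergraph-coloring step for the odd case that uses the binomial inequality to guarantee the existence of the required mutually-compatible block patterns.
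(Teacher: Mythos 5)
There is a genuine gap, and it lies at the heart of your construction: you put the right part $R$ on an alphabet that is \emph{disjoint} from the alphabet of the left part $L$ (the $d$ fresh symbols in the sets $A_i$ each appear exactly $D$ times inside $R$, and $L$ uses $m e_\varrho$ other symbols). Since $T=L+R$ is a row-wise concatenation, every column of $T$ is either a column of $L$ or a column of $R$, so the regularity constraints on $\varphi(T)$ decouple completely: the only constraints on $\varphi$ restricted to the $R$-symbols come from the columns of the shape $\la$ itself. But $\la$ may have very short columns — in the actual application (Theorem~\ref{thm:proofofconcept}) one takes $\la=(2m)$, a single row, so the columns of $R$ are singletons and impose \emph{no} constraint. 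Then you can take $\varphi$ to collapse all symbols of all the $A_i$ to the single value $1$ while choosing $\varphi$ on the $L$-symbols canonically; $\varphi(T)$ is regular, yet $\rightpart(\varphi(T))$ is a constant tableau, which is not in $\aS_m S$ whenever $S$ has more than one distinct entry. So the ``rigidity principle'' you want is simply false for your construction, and no choice of $L$ on a disjoint alphabet can repair it. The missing idea is that the symbols decorating $S$ must be \emph{shared} between $\leftpart(T)$ and $\rightpart(T)$: in the paper each symbol $i_\ell$ appears $n(i_\ell)$ times inside $\leftpart(T)$ and $D-n(i_\ell)$ times inside $\rightpart(T)$ (see \eqref{EVENeq:symbolsappearDtimes}), and it is placed in left-part columns whose other $m-1$ entries are left-only symbols $j_k^i$, so that regularity of those columns pins down $\varphi(i_\ell)$; the hypergraph connectivity then forces all $\varphi(i_\ell)$ for the same $i$ to agree (Claim~\ref{EVENcla:varphii}), and this is the only conduit through which the left-part constraints control the relabeling of $S$. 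The ceilings $\lceil \varrho_i/(D-2)\rceil$ come precisely from how many such shared symbols a width-$D$ block can host, not from an independent packing of $L$.

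Two further points. First, even with shared symbols, property~(\ref{enum:rightpartinSmS}) needs a concrete device forcing \emph{distinct} $i$'s to distinct values $\varphi^{\circ}(i)$ — you assert that the induced map is a permutation but give no mechanism; the paper engineers a special link-vertex column ($\check B_{\zeta^{(h)}}$, containing the symbols $i_1$ for all $i\in I$ simultaneously) exactly for this (Claim~\ref{EVENcla:differentphiNEW}). Second, a smaller inaccuracy: for odd $D$ the bound $\binom{2(D-1)}{D-1}\geq 2(m-1)$ is not about $D$-element subsets of $\{1,\ldots,m\}$; it counts the cardinality-$D$ subsets of the $2D$ vertices of a paired block that contain one bridge vertex and avoid the other, which is what guarantees the $m-1$ distinct sets $\barred(i,j,k)$ needed to make the two blocks of a pair collapse to equal columns under any regular $\varphi$ (Claim~\ref{cla:barcoincide}). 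Your even-case idea of $D$ identical ``permutation'' columns per block does make $\varphi(L)$ duplex, but by itself it buys nothing toward properties~(\ref{enum:rightpartinSmS}) and the coupling that the theorem actually requires.
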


\begin{proof}[Proof of the Main Theorem~\ref{thm:main}]
By Theorem~\ref{thm:functionsonorbit} and Proposition~\ref{pro:precisedecomposition} we know that there exists a set of regular tableaux $\{S_{\varrho,i}\mid \varrho\in\Xi, \ 1 \leq i \leq b(\la,\varrho,D,d)\}$ of shape $\la$
such that
each $S_{\varrho,i}$ has content $D\varrho$ and
the set of corresponding functions
\[
f^{S_{\varrho,i}}\in \IC[Gp], \quad gp \mapsto \gamma(g P_m S_{\varrho,i}).
\]
is linearly independent.
Since all these functions $f^{S_{\varrho,i}}$ are homogeneous of the same degree $d$, they are not only linearly independent as functions on $\GL_m p$, but also their restrictions to $\SL_m p$ are linearly independent.
Using the Tableau Lifting Theorem~\ref{thm:prolongation},
for each $S_{\varrho,i}$
we construct a tableau $T_{\varrho,i}$ of shape $\la+(m\times e_\varrho D)$ satisfying the properties listed in Theorem~\ref{thm:prolongation}.
We claim that for all $\varrho,i$ there exists $\alpha\neq 0$ such that
\begin{equation}\label{eq:alpha}\tag{$\ast$}
% \text{each tableau in $\aS_m S_{\varrho,i}$
% has exactly $\alpha$ many regular preimages in $\mathcal{M}_{d,m}T_{\varrho,i}$ under the map $\rightpart$.}
\begin{minipage}{\textwidth-2cm}
under the map $\psi : \varphi\mapsto\rightpart(\varphi T_{\varrho,i})$ that maps from $\mathcal M_{\delta,m}$ to the set of tableaux of shape $\lambda$,
each tableau in $\aS_m S_{\varrho,i}$
has exactly $\alpha$ many preimages in $\mathcal{M}_{\delta,m}$ for which $\varphi T_{\varrho,i}$ is regular.
\end{minipage}
\end{equation}
Proof of \eqref{eq:alpha}:
Clearly $\varphi T_{\varrho,i}$ is regular iff $\pi \varphi T_{\varrho,i}$ is regular.
Note that $\psi$ is $\aS_m$-equivariant in the following sense:
$\psi(\pi \circ \varphi) = \pi\psi(\varphi)$.
Hence taking the preimage $\psi^{-1}$ is also $\aS_m$-equivariant. Thus for all $\hat S \in \aS_m S_{\varrho,i}$ we have
$\varphi \in \psi^{-1}(\hat S)$ iff $\pi \circ \varphi \in \psi^{-1}(\pi \hat S)$.
Thus the application of $\pi$ gives a bijection between the preimages of $\hat S$ and $\pi \hat S$.
To prove claim~\eqref{eq:alpha} it remains to show that $\alpha \neq 0$.
This follows from Thm.~\ref{thm:prolongation}\eqref{enum:existspreimage}.
$\square$

According to \cite[Prop.~3.25]{BI:17} there exists an $\SL_m$-invariant function $\Phi$ in $\IC[\overline{Gp}]$
with $\Phi(p)=1$, called the fundamental invariant.
Moreover,
\[
\begin{cases}
\text{$\Phi$ has degree $m$ and $\Phi(gp) = \det(g)^D p$ for $g \in \GL_m$} & \text{ if $D$ is even} \\
\text{$\Phi$ has degree $2m$ and $\Phi(gp) = \det(g)^{2D} p$ for $g \in \GL_m$} & \text{ if $D$ is odd and $2m \leq \binom{2D}{D}$}.
\end{cases}
\]
Since orbit closures are irreducible varieties,
given two highest weight vectors $f$ of weight $\la$ and $\tilde f$ of weight $\tilde \la$ in $\IC[\overline{Gp}]$, their product $f \cdot \tilde f$ is nonzero and a highest weight vector of weight $\la+\tilde \la$.
Let $\overline{f}^{S_{\varrho,i}}$ be the product of $\Phi^{e_\Xi-e_\varrho}$ ($\Phi^{(e_\Xi-e_\varrho)/2}$ if $D$ is odd) and the function corresponding to $T_{\varrho,i}$:
\[
 \overline{f}^{S_{\varrho,i}}(gp) = \det(g)^{D(e_\Xi-e_\varrho)}\cdot \gamma(g M_{m,d} T_{\varrho,i}) \quad\quad\text{(holds for $D$ even and odd.)}
\]
We claim that $\overline{f}^{S_{\varrho,i}}$ coincides with $f^S_{\varrho,i}$ when restricted to $\SL_m p$ (up to nonzero a factor), which can be seen as follows. For $g \in \SL_m$ we have
\begin{eqnarray*}
 \overline{f}^{S_{\varrho,i}}(gp) &=& 1 \cdot \gamma(g M_{m,d} T_{\varrho,i}) 
 = \sum_{\varphi \in \mathcal M_{m,d}}\gamma(g \varphi T_{\varrho,i}) \\
 &\stackrel{\text{Cor.}~\ref{cor:gammazero}}{=}& \sum_{\substack{\varphi \in \mathcal M_{m,d}\\\varphi T_{\varrho,i}\text{ regular}}}\gamma(g \varphi T_{\varrho,i}) \\
 &\stackrel{\text{Cor.}~\ref{cor:gammafactorization}\text{ and Cor.}~\ref{cor:pm1}\text{ and Thm.}~\ref{thm:prolongation}\eqref{enum:duplex}}{=}&(\pm 1)^2\sum_{\substack{\varphi \in \mathcal M_{m,d}\\\varphi T_{\varrho,i}\text{ regular}}}\gamma(g \rightpart(\varphi T_{\varrho,i})) \\
 &\stackrel{\text{Thm.~}\ref{thm:prolongation}\eqref{enum:rightpartinSmS}\text{ and }\eqref{eq:alpha}}{=}& \alpha \sum_{\hat S \in \aS_m S_{\varrho,i}}\gamma(g \hat S) 
 = \tfrac{|\aS_m S_{\varrho,i}|\alpha}{m!} \gamma(g P_m S_{\varrho,i}) \\
 &=& \tfrac{|\aS_m S_{\varrho,i}|\alpha}{m!} f^{S_{\varrho,i}}(gp).
\end{eqnarray*}
Since each $\overline{f}^{S_{\varrho,i}}$ coincides with $f^{S_{\varrho,i}}$ when restricted to $\SL_m p$ (up to a nonzero factor),
the $\overline f^{S_{\varrho,i}}$ are linearly independent.
It follows from Theorem~\ref{thm:functionsonorbitclosure} that the $\overline{f}^{S_{\varrho,i}}$ are restrictions of functions in $\HWV_{(\la+(m\times e_\Xi D))^*}\IC[\overline{Gp}]$.
This proves the main Theorem \ref{thm:main}.
\end{proof}

It remains to prove the Tableau Lifting Theorem~\ref{thm:prolongation}, whose purely combinatorial proof will be the focus of the rest of the paper.
The construction for odd $D$ is more complicated than for even $D$, which is why we focus on the case where $D$ is even first.

We will construct $\leftpart(T)$ and $\rightpart(T)$ mainly independently.

For even $D$, the alphabet that we are using for $T$ is not $\{1,\ldots,\delta\}$, but a more descriptive alphabet using the symbols $i_\ell$ and $j_k^i$.
For each box $\square$ in $S$, if $\square$ has the entry $i$, then the box corresponding to $\square$ in $\rightpart(T)$ has the symbol $i_\ell$ for some $\ell$.
The only other constraints for $\rightpart(T)$ are concerned with how often the different symbols $i_\ell$ appear.
The symbols $j_k^i$ do not appear in $\rightpart(T)$, but only in $\leftpart(T)$.
The tableau $\leftpart(T)$ is constructed in several steps, starting with a tableau obtained from a set of hypergraphs $H^{(i)}$, and then reordering entries within the rows.

For odd $D$ the situation is similar.
The alphabet that we are using for $T$ is not $\{1,\ldots,\delta\}$, but a more descriptive alphabet using the symbols $i_\ell$, $j_k^i$, $j_{\overline k}^i$.
For each box $\square$ in $S$, if $\square$ has the entry $i$, then the box corresponding to $\square$ in $\rightpart(T)$ has the symbol $i_\ell$ for some $\ell$.
The only other constraints for $\rightpart(T)$ are concerned with how often the different symbols $i_\ell$ appear.
Symbols $j_k^i$ and $j_{\overline k}^i$ do not appear in $\rightpart(T)$, but only in $\leftpart(T)$.
The tableau $\leftpart(T)$ is constructed in several steps, starting with a tableau obtained from a set of hypergraphs $H^{(i)}$ (similar to those hypergraphs in the case where $D$ is even), and then reordering entries within the rows.

The construction for odd $D$ has many more subtleties than the construction for even $D$ and there are numerous slight differences between the two cases. We think that the readability would suffer greatly
if we did not explain the whole construction again for odd $D$ in a self-contained manner, including the parts that are very similar to the even case.
Therefore in the following sections we first treat the case for even $D$ and then treat the case for odd $D$ in a fairly self-contained manner. The reader will see that much more care and attention to the details is necessary in the case where $D$ is odd.

\section{The hypergraphs $H^{(i)}$ for even $D$}\label{EVENsec:hypergraphsevenD}
Let $I := \{i \mid \varrho_i \neq 0\}$.
In order to construct $\leftpart(T)$ we will first
constuct a so-called $(D,\varrho_i)$-hypergraph for each $i \in I$. We refer to that hypergraph as $H^{(i)}$.
The number of columns in $\leftpart(T)$ will precisely
be the number of vertices in all these hypergraphs together. So we want the hypergraphs to be as small as possible.

We first recall some basic terms.
Let $H = (V,E)$ by a hypergraph and $e \in E$ be an edge. Then we define the \textit{size} of an edge $\size(e)$ as the number of vertices in $e$. 
Let $v,w \in V$. Then a \textit{path} between $v$ and $w$ is a sequence of edges $(e_1,e_2,\ldots,e_l)$ such that $v \in e_1$, $w \in e_l$, and $e_i \cap e_{i+1} \neq \emptyset$.
We say that two vertices are connected in $H$ iff there exists a path between them.
We say that a hypergraph is \textit{connected} iff every pair of vertices is connected.
For a nonempty set $X$ a \textit{set partition} $P$ of $X$ is a set of pairwise disjoint subsets whose union is $X$.
\begin{definition} \label{def:dkhypergraph}
         Let $D,K$ be integers. A \textit{$(D,K)$-hypergraph} is defined to be a hypergraph $H=(V,E)$ that satisfies the following properties:
        \begin{enumerate}[label={(\arabic*)}]
                \item $H$ is connected. \label{eq:def:connected}
                \item $H$ has two different types of hyperedges: the \textit{block edges} and the \textit{name edges}. \label{eq:def:blockandnameedges}
                \item Each block edge has size $D$, and the set of block edges $E_{\text{Block}} \subseteq E$ is a set partition of $V$.
                \label{eq:def:blockedgessizeD}
                \item Each name edge has size strictly less than $D$, but at least size $1$, and
                the set of name edges $E_{\text{Name}} \subseteq E$ is a set partition of $V$.
                \label{eq:def:nameedgessizelessthanD}
                \item $|E_{\text{Name}}| - |E_{\text{Block}}| = K$.\label{eq:def:nameblockdifference}
                \item 
                There exists a name edge $e_{\text{Name}}$ and a block edge $e_{\text{Block}}$ whose intersection contains at least 2 vertices. We choose one of these two vertices and call it the \emph{link vertex}.
                \label{prop:hypergraphshareblockname}
        \end{enumerate}
\end{definition}

Several examples for $(6,K)$-hypergraphs are given in Figure~\ref{EVENfig:6Khypergraphs}.
\begin{figure}[htb]
\scalebox{1.4}{
\begin{tikzpicture}
[
scale=0.52,
n/.style={circle,fill=black,inner sep=0pt, minimum size = 0.2cm},
nlink/.style={diamond,fill=white,draw=black,inner sep=0pt, minimum size = 0.2cm},
be/.style={fill=lightgray, rounded corners, inner sep=0.05cm, dashed},        
se/.style={draw, rounded corners, inner sep=0.1cm},        
]
\node at (-2,0) {\scalebox{0.71}{$K=1:$}};
\node [nlink] (1) at (0,0) {};
\node [n] (2) at (1,0) {};
\node [n] (3) at (2,0) {};
\node [n] (4) at (3,0) {};
\node [n] (5) at (4,0) {};
\node [n] (6) at (5,0) {};
\node [be, fit= (1) (2) (3) (4) (5) (6)] {};
\node [se, fit= (1) (2) (3) (4) (5)] {};
\node [se, fit= (6)] {};
%repeat vertices so that they get drawn on top:
\node [nlink] (1) at (0,0) {};
\node [n] (2) at (1,0) {};
\node [n] (3) at (2,0) {};
\node [n] (4) at (3,0) {};
\node [n] (5) at (4,0) {};
\node [n] (6) at (5,0) {};
\end{tikzpicture}
}%scalebox

\smallskip

\scalebox{1.4}{
\begin{tikzpicture}
[
scale=0.52,
n/.style={circle,fill=black,inner sep=0pt, minimum size = 0.2cm},
nlink/.style={diamond,fill=white,draw=black,inner sep=0pt, minimum size = 0.2cm},
be/.style={fill=lightgray, rounded corners, inner sep=0.05cm, dashed},        
se/.style={draw, rounded corners, inner sep=0.1cm},        
]
\node at (-2,0) {\scalebox{0.71}{$K=2:$}};
\node [nlink] (1) at (0,0) {};
\node [n] (2) at (1,0) {};
\node [n] (3) at (2,0) {};
\node [n] (4) at (3,0) {};
\node [n] (5) at (4,0) {};
\node [n] (6) at (5,0) {};
\node [be, fit= (1) (2) (3) (4) (5) (6)] {};
\node [se, fit= (1) (2) (3) (4)] {};
\node [se, fit= (5)] {};
\node [se, fit= (6)] {};
%repeat vertices so that they get drawn on top:
\node [nlink] (1) at (0,0) {};
\node [n] (2) at (1,0) {};
\node [n] (3) at (2,0) {};
\node [n] (4) at (3,0) {};
\node [n] (5) at (4,0) {};
\node [n] (6) at (5,0) {};
\end{tikzpicture}
}%scalebox

\smallskip

\scalebox{1.4}{
\begin{tikzpicture}
[
scale=0.52,
n/.style={circle,fill=black,inner sep=0pt, minimum size = 0.2cm},
nlink/.style={diamond,fill=white,draw=black,inner sep=0pt, minimum size = 0.2cm},
be/.style={fill=lightgray, rounded corners, inner sep=0.05cm, dashed},        
se/.style={draw, rounded corners, inner sep=0.1cm},        
]
\node at (-2,0) {\scalebox{0.71}{$K=3:$}};
\node [nlink] (1) at (0,0) {};
\node [n] (2) at (1,0) {};
\node [n] (3) at (2,0) {};
\node [n] (4) at (3,0) {};
\node [n] (5) at (4,0) {};
\node [n] (6) at (5,0) {};
\node [be, fit= (1) (2) (3) (4) (5) (6)] {};
\node [se, fit= (1) (2) (3)] {};
\node [se, fit= (4)] {};
\node [se, fit= (5)] {};
\node [se, fit= (6)] {};
%repeat vertices so that they get drawn on top:
\node [nlink] (1) at (0,0) {};
\node [n] (2) at (1,0) {};
\node [n] (3) at (2,0) {};
\node [n] (4) at (3,0) {};
\node [n] (5) at (4,0) {};
\node [n] (6) at (5,0) {};
\end{tikzpicture}
}%scalebox

\smallskip

\scalebox{1.4}{
\begin{tikzpicture}
[
scale=0.52,
n/.style={circle,fill=black,inner sep=0pt, minimum size = 0.2cm},
nlink/.style={diamond,fill=white,draw=black,inner sep=0pt, minimum size = 0.2cm},
be/.style={fill=lightgray, rounded corners, inner sep=0.05cm, dashed},        
se/.style={draw, rounded corners, inner sep=0.1cm},        
]
\node at (-2,0) {\scalebox{0.71}{$K=4:$}};
\node [nlink] (1) at (0,0) {};
\node [n] (2) at (1,0) {};
\node [n] (3) at (2,0) {};
\node [n] (4) at (3,0) {};
\node [n] (5) at (4,0) {};
\node [n] (6) at (5,0) {};
\node [be, fit= (1) (2) (3) (4) (5) (6)] {};
\node [se, fit= (1) (2)] {};
\node [se, fit= (3)] {};
\node [se, fit= (4)] {};
\node [se, fit= (5)] {};
\node [se, fit= (6)] {};
%repeat vertices so that they get drawn on top:
\node [nlink] (1) at (0,0) {};
\node [n] (2) at (1,0) {};
\node [n] (3) at (2,0) {};
\node [n] (4) at (3,0) {};
\node [n] (5) at (4,0) {};
\node [n] (6) at (5,0) {};
\end{tikzpicture}
}%scalebox

\smallskip

\scalebox{1.4}{
\begin{tikzpicture}
[
scale=0.52,
n/.style={circle,fill=black,inner sep=0pt, minimum size = 0.2cm},
nlink/.style={diamond,fill=white,draw=black,inner sep=0pt, minimum size = 0.2cm},
be/.style={fill=lightgray, rounded corners, inner sep=0.05cm, dashed},        
se/.style={draw, rounded corners, inner sep=0.1cm},        
]
\node at (-2,0) {\scalebox{0.71}{$K=5:$}};
\node [nlink] (1) at (0,0) {};
\node [n] (2) at (1,0) {};
\node [n] (3) at (2,0) {};
\node [n] (4) at (3,0) {};
\node [n] (5) at (4,0) {};
\node [n] (6) at (5,0) {};
\node [n] (7) at (6,0) {};
\node [n] (8) at (7,0) {};
\node [n] (9) at (8,0) {};
\node [n] (10) at (9,0) {};
\node [n] (11) at (10,0) {};
\node [n] (12) at (11,0) {};
\node [be, fit= (1) (2) (3) (4) (5) (6)] {};
\node [be, fit= (7) (8) (9) (10) (11) (12)] {};
\node [se, fit= (1) (2) (3) (4) (5)] {};
\node [se, fit= (6) (7)] {};
\node [se, fit= (8)] {};
\node [se, fit= (9)] {};
\node [se, fit= (10)] {};
\node [se, fit= (11)] {};
\node [se, fit= (12)] {};
%repeat vertices so that they get drawn on top:
\node [nlink] (1) at (0,0) {};
\node [n] (2) at (1,0) {};
\node [n] (3) at (2,0) {};
\node [n] (4) at (3,0) {};
\node [n] (5) at (4,0) {};
\node [n] (6) at (5,0) {};
\node [n] (7) at (6,0) {};
\node [n] (8) at (7,0) {};
\node [n] (9) at (8,0) {};
\node [n] (10) at (9,0) {};
\node [n] (11) at (10,0) {};
\node [n] (12) at (11,0) {};
\end{tikzpicture}
}%scalebox

\smallskip

\scalebox{1.4}{
\begin{tikzpicture}
[
scale=0.52,
n/.style={circle,fill=black,inner sep=0pt, minimum size = 0.2cm},
nlink/.style={diamond,fill=white,draw=black,inner sep=0pt, minimum size = 0.2cm},
be/.style={fill=lightgray, rounded corners, inner sep=0.05cm, dashed},        
se/.style={draw, rounded corners, inner sep=0.1cm},        
]
\node at (-2,0) {\scalebox{0.71}{$K=6:$}};
\node [nlink] (1) at (0,0) {};
\node [n] (2) at (1,0) {};
\node [n] (3) at (2,0) {};
\node [n] (4) at (3,0) {};
\node [n] (5) at (4,0) {};
\node [n] (6) at (5,0) {};
\node [n] (7) at (6,0) {};
\node [n] (8) at (7,0) {};
\node [n] (9) at (8,0) {};
\node [n] (10) at (9,0) {};
\node [n] (11) at (10,0) {};
\node [n] (12) at (11,0) {};
\node [be, fit= (1) (2) (3) (4) (5) (6)] {};
\node [be, fit= (7) (8) (9) (10) (11) (12)] {};
\node [se, fit= (1) (2) (3) (4)] {};
\node [se, fit= (5)] {};
\node [se, fit= (6) (7)] {};
\node [se, fit= (8)] {};
\node [se, fit= (9)] {};
\node [se, fit= (10)] {};
\node [se, fit= (11)] {};
\node [se, fit= (12)] {};
%repeat vertices so that they get drawn on top:
\node [nlink] (1) at (0,0) {};
\node [n] (2) at (1,0) {};
\node [n] (3) at (2,0) {};
\node [n] (4) at (3,0) {};
\node [n] (5) at (4,0) {};
\node [n] (6) at (5,0) {};
\node [n] (7) at (6,0) {};
\node [n] (8) at (7,0) {};
\node [n] (9) at (8,0) {};
\node [n] (10) at (9,0) {};
\node [n] (11) at (10,0) {};
\node [n] (12) at (11,0) {};
\end{tikzpicture}
}%scalebox

\smallskip

\scalebox{1.4}{
\begin{tikzpicture}
[
scale=0.52,
n/.style={circle,fill=black,inner sep=0pt, minimum size = 0.2cm},
nlink/.style={diamond,fill=white,draw=black,inner sep=0pt, minimum size = 0.2cm},
be/.style={fill=lightgray, rounded corners, inner sep=0.05cm, dashed},        
se/.style={draw, rounded corners, inner sep=0.1cm},        
]
\node at (-2,0) {\scalebox{0.71}{$K=7:$}};
\node [nlink] (1) at (0,0) {};
\node [n] (2) at (1,0) {};
\node [n] (3) at (2,0) {};
\node [n] (4) at (3,0) {};
\node [n] (5) at (4,0) {};
\node [n] (6) at (5,0) {};
\node [n] (7) at (6,0) {};
\node [n] (8) at (7,0) {};
\node [n] (9) at (8,0) {};
\node [n] (10) at (9,0) {};
\node [n] (11) at (10,0) {};
\node [n] (12) at (11,0) {};
\node [be, fit= (1) (2) (3) (4) (5) (6)] {};
\node [be, fit= (7) (8) (9) (10) (11) (12)] {};
\node [se, fit= (1) (2) (3)] {};
\node [se, fit= (4)] {};
\node [se, fit= (5)] {};
\node [se, fit= (6) (7)] {};
\node [se, fit= (8)] {};
\node [se, fit= (9)] {};
\node [se, fit= (10)] {};
\node [se, fit= (11)] {};
\node [se, fit= (12)] {};
%repeat vertices so that they get drawn on top:
\node [nlink] (1) at (0,0) {};
\node [n] (2) at (1,0) {};
\node [n] (3) at (2,0) {};
\node [n] (4) at (3,0) {};
\node [n] (5) at (4,0) {};
\node [n] (6) at (5,0) {};
\node [n] (7) at (6,0) {};
\node [n] (8) at (7,0) {};
\node [n] (9) at (8,0) {};
\node [n] (10) at (9,0) {};
\node [n] (11) at (10,0) {};
\node [n] (12) at (11,0) {};
\end{tikzpicture}
}%scalebox

\smallskip

\scalebox{1.4}{
\begin{tikzpicture}
[
scale=0.52,
n/.style={circle,fill=black,inner sep=0pt, minimum size = 0.2cm},
nlink/.style={diamond,fill=white,draw=black,inner sep=0pt, minimum size = 0.2cm},
be/.style={fill=lightgray, rounded corners, inner sep=0.05cm, dashed},        
se/.style={draw, rounded corners, inner sep=0.1cm},        
]
\node at (-2,0) {\scalebox{0.71}{$K=8:$}};
\node [nlink] (1) at (0,0) {};
\node [n] (2) at (1,0) {};
\node [n] (3) at (2,0) {};
\node [n] (4) at (3,0) {};
\node [n] (5) at (4,0) {};
\node [n] (6) at (5,0) {};
\node [n] (7) at (6,0) {};
\node [n] (8) at (7,0) {};
\node [n] (9) at (8,0) {};
\node [n] (10) at (9,0) {};
\node [n] (11) at (10,0) {};
\node [n] (12) at (11,0) {};
\node [be, fit= (1) (2) (3) (4) (5) (6)] {};
\node [be, fit= (7) (8) (9) (10) (11) (12)] {};
\node [se, fit= (1) (2)] {};
\node [se, fit= (3)] {};
\node [se, fit= (4)] {};
\node [se, fit= (5)] {};
\node [se, fit= (6) (7)] {};
\node [se, fit= (8)] {};
\node [se, fit= (9)] {};
\node [se, fit= (10)] {};
\node [se, fit= (11)] {};
\node [se, fit= (12)] {};
%repeat vertices so that they get drawn on top:
\node [nlink] (1) at (0,0) {};
\node [n] (2) at (1,0) {};
\node [n] (3) at (2,0) {};
\node [n] (4) at (3,0) {};
\node [n] (5) at (4,0) {};
\node [n] (6) at (5,0) {};
\node [n] (7) at (6,0) {};
\node [n] (8) at (7,0) {};
\node [n] (9) at (8,0) {};
\node [n] (10) at (9,0) {};
\node [n] (11) at (10,0) {};
\node [n] (12) at (11,0) {};
\end{tikzpicture}
}%scalebox

\smallskip

\scalebox{1.4}{
\begin{tikzpicture}
[
scale=0.52,
n/.style={circle,fill=black,inner sep=0pt, minimum size = 0.2cm},
nlink/.style={diamond,fill=white,draw=black,inner sep=0pt, minimum size = 0.2cm},
be/.style={fill=lightgray, rounded corners, inner sep=0.05cm, dashed},        
se/.style={draw, rounded corners, inner sep=0.1cm},        
]
\node at (-2,0) {\scalebox{0.71}{$K=9:$}};
\node [nlink] (1) at (0,0) {};
\node [n] (2) at (1,0) {};
\node [n] (3) at (2,0) {};
\node [n] (4) at (3,0) {};
\node [n] (5) at (4,0) {};
\node [n] (6) at (5,0) {};
\node [n] (7) at (6,0) {};
\node [n] (8) at (7,0) {};
\node [n] (9) at (8,0) {};
\node [n] (10) at (9,0) {};
\node [n] (11) at (10,0) {};
\node [n] (12) at (11,0) {};
\node [n] (13) at (12,0) {};
\node [n] (14) at (13,0) {};
\node [n] (15) at (14,0) {};
\node [n] (16) at (15,0) {};
\node [n] (17) at (16,0) {};
\node [n] (18) at (17,0) {};
\node [be, fit= (1) (2) (3) (4) (5) (6)] {};
\node [be, fit= (7) (8) (9) (10) (11) (12)] {};
\node [be, fit= (13) (14) (15) (16) (17) (18)] {};
\node [se, fit= (1) (2) (3) (4) (5)] {};
\node [se, fit= (6) (7)] {};
\node [se, fit= (8)] {};
\node [se, fit= (9)] {};
\node [se, fit= (10)] {};
\node [se, fit= (11)] {};
\node [se, fit= (12) (13)] {};
\node [se, fit= (14)] {};
\node [se, fit= (15)] {};
\node [se, fit= (16)] {};
\node [se, fit= (17)] {};
\node [se, fit= (18)] {};
%repeat vertices so that they get drawn on top:
\node [nlink] (1) at (0,0) {};
\node [n] (2) at (1,0) {};
\node [n] (3) at (2,0) {};
\node [n] (4) at (3,0) {};
\node [n] (5) at (4,0) {};
\node [n] (6) at (5,0) {};
\node [n] (7) at (6,0) {};
\node [n] (8) at (7,0) {};
\node [n] (9) at (8,0) {};
\node [n] (10) at (9,0) {};
\node [n] (11) at (10,0) {};
\node [n] (12) at (11,0) {};
\node [n] (13) at (12,0) {};
\node [n] (14) at (13,0) {};
\node [n] (15) at (14,0) {};
\node [n] (16) at (15,0) {};
\node [n] (17) at (16,0) {};
\node [n] (18) at (17,0) {};
\end{tikzpicture}
}%scalebox
       \caption{$(6,K)$-hypergraphs for several values of $K$.
       Each block edges is represented as a filled gray blob.
       Each name edges is enclosed by a solid black curve.
       The link vertices are drawn as white diamonds.}
        \label{EVENfig:6Khypergraphs}
\end{figure}
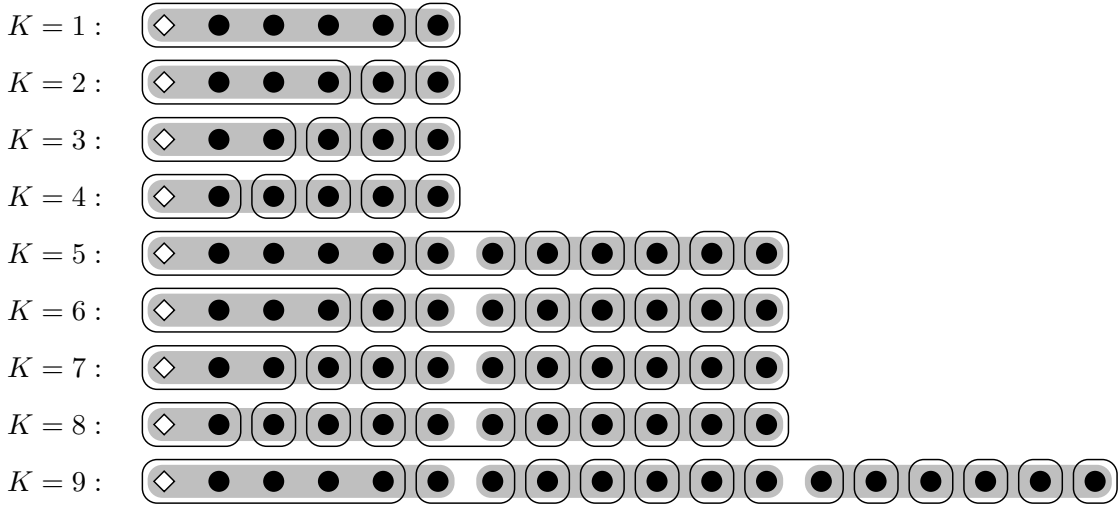

\begin{proposition}\label{EVENprop:hypergraph}
For even $D \geq 4$, $K\neq 0$, there exists a $(D,K)$-hypergraph that has exactly $\lceil \frac{K}{D-2} \rceil$ many block edges.
\end{proposition}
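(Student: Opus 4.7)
The plan is to give an explicit construction. Set $B := \lceil K/(D-2)\rceil$ and write $K = (B-1)(D-2) + r$ with $1 \le r \le D-2$; this is possible because $(B-1)(D-2) < K \le B(D-2)$. I will build a $(D,K)$-hypergraph on the vertex set $V := \{v_{i,j} : 1\le i\le B,\ 1\le j\le D\}$, with block edges $e^{\text{Block}}_i := \{v_{i,1},\ldots,v_{i,D}\}$ for $1 \le i \le B$. By construction these form a set partition of $V$ into $B$ edges of size $D$, so Properties~\ref{eq:def:blockedgessizeD} and \ref{eq:def:nameblockdifference} (the block-count half of it) are immediate.

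The name edges are defined in three families: (i) a single \emph{link edge} $\{v_{1,1},\ldots,v_{1,D-r}\}$ sitting inside block $1$, which has size $D-r$; since $1\le r\le D-2$ this size lies in $\{2,\ldots,D-1\}$; (ii) \emph{bridging edges} $\{v_{i,D},v_{i+1,1}\}$ for $1 \le i \le B-1$ (vacuous if $B=1$), each of size $2$; (iii) a \emph{singleton} name edge $\{v\}$ for every vertex $v\in V$ not yet used in (i) or (ii). By inspection these edges partition $V$ and every edge has size in $\{1,\ldots,D-1\}$, giving Property~\ref{eq:def:nameedgessizelessthanD}. The link edge shares $D-r\ge 2$ vertices with $e^{\text{Block}}_1$, witnessing Property~\ref{prop:hypergraphshareblockname}. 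For connectivity, within each block all vertices are mutually reachable via the block edge, and consecutive blocks are joined by the bridging edge $\{v_{i,D},v_{i+1,1}\}$, so the entire hypergraph is connected.

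It remains to verify $|E_{\text{Name}}| - |E_{\text{Block}}| = K$. For $B=1$ the name edges are the link edge plus $r$ singletons, so $|E_{\text{Name}}| = 1+r = 1+K$, giving $K$ after subtracting $B=1$. For $B\ge 2$, block~$1$ contributes the link edge plus $r-1$ singletons (the vertices $v_{1,D-r+1},\ldots,v_{1,D-1}$), each intermediate block $2\le i\le B-1$ contributes $D-2$ singletons (the vertices $v_{i,2},\ldots,v_{i,D-1}$), block~$B$ contributes $D-1$ singletons (the vertices $v_{B,2},\ldots,v_{B,D}$), and the bridging edges contribute $B-1$ edges. Summing,
\[
|E_{\text{Name}}| \;=\; 1 + (r-1) + (B-2)(D-2) + (D-1) + (B-1) \;=\; B + (B-1)(D-2) + r \;=\; B+K,
\]
so again $|E_{\text{Name}}| - |E_{\text{Block}}| = K$, completing Property~\ref{eq:def:nameblockdifference}.

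I do not expect a genuine obstacle here: the construction is forced once one notices that a single block of $D$ vertices can absorb at most $D-2$ units of $K$ (the gap between $1+(D-1) = D$ potential singletons-plus-link and the $1$ block edge), so $B=\lceil K/(D-2)\rceil$ blocks are both necessary and sufficient, and the only thing to check carefully is that the link edge stays in the allowed size range $[2,D-1]$ at both endpoints $r=1$ and $r=D-2$, which it does. Parity of $D$ plays no role, so in fact the same construction works for all $D\ge 3$.
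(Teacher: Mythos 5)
Your construction is correct and is essentially the paper's own: $\lceil K/(D-2)\rceil$ blocks of size $D$ arranged linearly, size-$2$ name edges bridging consecutive blocks, one larger name edge of size $D-r\ge 2$ inside the first block containing the link vertex, and singletons elsewhere, exactly as in Figure~\ref{EVENfig:6Khypergraphs}. The only difference is presentational—you make the division with remainder and the edge count explicit where the paper argues by counting leftover vertices versus leftover name edges—so nothing further is needed.
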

\begin{proof}
Let $n := \lceil \frac{K}{(D-2)} \rceil$.
We start the construction by considering $n$ many disjoint block edges with $D$ many vertices each.
We arrange the vertices in a linear fashion as in Figure~\ref{EVENfig:6Khypergraphs}.
The leftmost vertex is the link vertex.
We now place the vertices in $K+n$ many name edges as follows.
As in Figure~\ref{EVENfig:6Khypergraphs}, the rightmost vertex of every block edge but the last shall be placed in a size 2 name edge with the leftmost vertex of the next block edge.
The resulting hypergraph is connected.
At this point we have $nD-2(n-1)=n(D-2)+2$ vertices that are not in name edges yet; and we have $K+n-(n-1) = K+1$ name edges left to put vertices in. Since
\[
(n(D-2)+2)-(K+1) =
(\lceil \tfrac{K}{(D-2)} \rceil(D-2)+2)-(K+1)
\geq
(K+2)-(K+1) = 1 > 0,
\]
we can position the name edges so that the link vertex is in a name edge of size at least $2$.
Moreover, we position that name edge of size at least 2 in such a way that the link vertex has a vertex that not only lies in the same name edge, but also in the same block edge.
\end{proof}

\section{Construction of $\leftpart(T)$ for even $D$}
\label{EVENsec:rightpartevenD}
For each $i \in I$ let $H^{(i)}$ be a $(D,\varrho_i)$-hypergraph from Proposition~\ref{EVENprop:hypergraph}. We write $E_{\text{Block}}^{(i)}$ to denote its set of block edges and $E_{\text{Name}}^{(i)}$ to denote its set of name edges.

In this section, for every $i \in I$ and every $e \in E_{\text{Block}}^{(i)}$ we construct an $m \times D$ block tableau $\check B_e$ such that
$\leftpart(T)$ is constructed as the concatenation
\begin{equation}\label{EVENeq:leftpartmadefromblocks}
\leftpart(T) := \sum_{i\in I} \sum_{e \in E_{\text{Block}}^{(i)}} \check B_e.
\end{equation}
Notice that since every block edge has size $D$ (see Def.~\ref{def:dkhypergraph}\eqref{eq:def:blockedgessizeD}), this implies that the number of columns in $\leftpart(T)$ is equal to the total number of vertices in the hypergraphs $H^{(i)}$, $i \in I$.

Each $m\times D$ block tableau $\check B_e$ is constructed in three steps: First we construct an $m\times D$ block tableau $B_e$ in which each column corresponds to a vertex in $H^{(i)}$, then we exchange entries between columns that correspond to link vertices.

Let $\zeta^{(i)}$ denote the link vertex in $H^{(i)}$.
We attach some additional data to each $H^{(i)}$ as follows.
We put a linear order on the set of name edges $E_{\text{name}}^{(i)}$
and for each vertex $v$ in $H^{(i)}$ we define $\ell(v)$ to be the index of its corresponding name edge. Here $\ell(v)=1$ if $v$ lies in the first name edge, $\ell(v)=2$ for the next name edge, and so on.
We ensure that
\begin{equation}\label{EVENeq:ellzetaione}
\ell(\zeta^{(i)})=1.
\end{equation}
In the same way, we put a linear order on the set of block edges;
for each block edge $e$ we write $k(e)$ for its index
and for each vertex $v$ in $H^{(i)}$ we define $k(v)$ to be the index of its corresponding block edge.
We ensure that
\begin{equation}\label{EVENeq:kzetaione}
k(\zeta^{(i)})=1.
\end{equation}
Moreover, for every vertex $v$ in any $H^{(i)}$ we define $i(v):=i$.

In the following, for each vertex $v$ we define an $m \times 1$ rectangular tableau (i.e., a column of length $m$) called $B_v$.
Concatenating them results in $B_e := \sum_{v \in e} B_v$.
The order of columns does not matter, but it is convenient to have the vertices of $H^{(i)}$ ordered from left to right in the same way as the columns of $\leftpart(T)$.
Later we define $\check B_e$, from which we can extract $\check B_v$ for $v \in e$ as follows:
If $B_v$ is the $n$-th column of $B_e$, then $\check B_v$ is the $n$-th column of $\check B_e$.

\subsection*{Starting with $B$}
Let $e$ be in the $k$-th block edge in $H^{(i)}$ and let $v \in e$.
The column $B_v$ is defined by the following properties.
\begin{eqnarray}
 \text{ the $i$-th entry of } B_v \text{ is } i_{\ell(v)} \label{EVENeq:niBv}\\
 \text{ the $j$-th entry ($j \neq i$) of } B_v \text{ is } j_{k}^{i}
 \label{EVENeq:njBv}
\end{eqnarray}
An example is given in Figure~\ref{EVENfig:exampleB}.

\begin{figure}
\scalebox{1.3}{
\begin{tikzpicture}
\node at (0,0) {%<hypergraph>
\begin{tikzpicture}
[
scale=0.52,
n/.style={circle,fill=black,inner sep=0pt, minimum size = 0.2cm},
nlink/.style={diamond,fill=white,draw=black,inner sep=0pt, minimum size = 0.2cm},
be/.style={fill=lightgray, rounded corners, inner sep=0.05cm, dashed},        
se/.style={draw, rounded corners, inner sep=0.1cm},        
]
\node [nlink] (1) at (0,0) {};
\node [n] (2) at (1,0) {};
\node [n] (3) at (2,0) {};
\node [n] (4) at (3,0) {};
\node [n] (5) at (4,0) {};
\node [n] (6) at (5,0) {};
\node [n] (7) at (6,0) {};
\node [n] (8) at (7,0) {};
\node [n] (9) at (8,0) {};
\node [n] (10) at (9,0) {};
\node [n] (11) at (10,0) {};
\node [n] (12) at (11,0) {};
\node [be, fit= (1) (2) (3) (4) (5) (6)] {};
\node [be, fit= (7) (8) (9) (10) (11) (12)] {};
\node [se, fit= (1) (2) (3) (4)] {};
\node [se, fit= (5)] {};
\node [se, fit= (6) (7)] {};
\node [se, fit= (8)] {};
\node [se, fit= (9)] {};
\node [se, fit= (10)] {};
\node [se, fit= (11)] {};
\node [se, fit= (12)] {};
%repeat vertices so that they get drawn on top:
\node [nlink] (1) at (0,0) {};
\node [n] (2) at (1,0) {};
\node [n] (3) at (2,0) {};
\node [n] (4) at (3,0) {};
\node [n] (5) at (4,0) {};
\node [n] (6) at (5,0) {};
\node [n] (7) at (6,0) {};
\node [n] (8) at (7,0) {};
\node [n] (9) at (8,0) {};
\node [n] (10) at (9,0) {};
\node [n] (11) at (10,0) {};
\node [n] (12) at (11,0) {};
\end{tikzpicture}
}; %</hypergraph>
\node at (-0.05,-1.4) {
\ytableausetup{boxsize=1.3em}
\ytableaushort{
{1^2_1}{1^2_1}{1^2_1}{1^2_1}{1^2_1}{1^2_1}
{1^2_2}{1^2_2}{1^2_2}{1^2_2}{1^2_2}{1^2_2}
,
{*(lightgray)2_1}{*(lightgray)2_1}{*(lightgray)2_1}{*(lightgray)2_1}{*(lightgray)2_2}{*(lightgray)2_3}{*(lightgray)2_3}{*(lightgray)2_4}{*(lightgray)2_5}{*(lightgray)2_6}{*(lightgray)2_7}{*(lightgray)2_8}
,
{3^2_1}{3^2_1}{3^2_1}{3^2_1}{3^2_1}{3^2_1}
{3^2_2}{3^2_2}{3^2_2}{3^2_2}{3^2_2}{3^2_2}
,
{4^2_1}{4^2_1}{4^2_1}{4^2_1}{4^2_1}{4^2_1}
{4^2_2}{4^2_2}{4^2_2}{4^2_2}{4^2_2}{4^2_2}
}
};
\end{tikzpicture}
}%scalebox
\caption{
Here $i=2$.
A $(6,6)$-hypergraph $H^{(2)}$ with two block edges $e_1$ and $e_2$ and the corresponding concatenated tableau $B_{e_1} + B_{e_2}$.
Vertices are drawn directly above their corresponding columns.
To make the value of $i$ easy to see, the second row is highlighted.
}
\label{EVENfig:exampleB}
\end{figure}
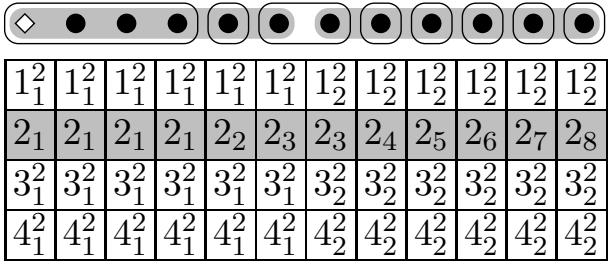

\subsection*{From $B$ to $\check B$}
Most columns $B_v$ and $\check B_v$ coincide, as we define
$\check B_v := B_v$ if $v \notin \{\zeta^{(i)} \mid i \in I\}$.
This means that only the columns corresponding to link vertices are adjusted.

Let $h$ denote the smallest number in $I$.
For $i \in I$, $i \neq h$, the column $\check B_{\zeta^{(i)}}$ arises from $B_{\zeta^{(i)}}$ by switching the $i$-th entry with the $i$-th entry in $B_{\zeta^{(h)}}$. This means that the column $\check B_{\zeta^{(h)}}$ arises from $B_{\zeta^{(i)}}$ by switching the $i$-th entry with the $i$-th entry in $B_{\zeta^{(i)}}$ for all $i \in I$.

The columnwise description of $\check B_v$ thus as follows.
\begin{itemize}
 \item If $v$ is not a link vertex, then
\begin{eqnarray}
 \text{ the $i$-th entry of } \check B_v \text{ is } i_{\ell(v)} \label{EVENeq:iBv}\\
 \text{ the $j$-th entry ($j \neq i$) of } \check B_v \text{ is } j_{k(v)}^{(i)} \label{EVENeq:jBv}
\end{eqnarray}
 \item If $i \neq h$, then
\begin{eqnarray}
 \text{ the $i$-th entry of } \check B_{\zeta^{(i)}} \text{ is } i_{k(v)}^h \label{EVENeq:iBzetaioneh}\\
 \text{ the $j$-th entry ($j \neq i$) of } \check B_{\zeta^{(i)}} \text{ is } j_{k(v)}^i \label{EVENeq:iBzetajoneh}
\end{eqnarray}
\item Moreover,
\begin{eqnarray}
 \text{ the $h$-th entry of } \check B_{\zeta^{(h)}} \text{ is } h_1 \label{EVENeq:hBzetah}\\
 \text{ for $j \neq h$, $j \in I$, the $j$-th entry of } \check B_{\zeta^{(h)}} \text{ is } j_1 \label{EVENeq:jBzetahneq}\\
 \text{ for $j \neq h$, $j \notin I$, the $j$-th entry of } \check B_{\zeta^{(h)}} \text{ is } j_{1}^h\label{EVENeq:jBzetaheq}
\end{eqnarray}
\end{itemize}

An example is provided in Figure~\ref{EVENfig:examplecheckB}.

\begin{figure}
\begin{tikzpicture}
\node at (0.05,0) {%<hypergraph>
\scalebox{1.2}{
\begin{tikzpicture}
[
scale=0.505,
n/.style={circle,fill=black,inner sep=0pt, minimum size = 0.2cm},
nlink/.style={diamond,fill=white,draw=black,inner sep=0pt, minimum size = 0.2cm},
be/.style={fill=lightgray, rounded corners, inner sep=0.05cm, dashed},        
se/.style={draw, rounded corners, inner sep=0.1cm},        
]
\node [nlink] (1) at (0,0) {};
\node [n] (2) at (1,0) {};
\node [n] (3) at (2,0) {};
\node [n] (4) at (3,0) {};
\node [n] (5) at (4,0) {};
\node [n] (6) at (5,0) {};
\node [n] (7) at (6,0) {};
\node [n] (8) at (7,0) {};
\node [n] (9) at (8,0) {};
\node [n] (10) at (9,0) {};
\node [n] (11) at (10,0) {};
\node [n] (12) at (11,0) {};
\node [nlink] (13) at (12,0) {};
\node [n] (14) at (13,0) {};
\node [n] (15) at (14,0) {};
\node [n] (16) at (15,0) {};
\node [n] (17) at (16,0) {};
\node [n] (18) at (17,0) {};
\node [nlink] (19) at (18,0) {};
\node [n] (20) at (19,0) {};
\node [n] (21) at (20,0) {};
\node [n] (22) at (21,0) {};
\node [n] (23) at (22,0) {};
\node [n] (24) at (23,0) {};
\node [be, fit= (1) (2) (3) (4) (5) (6)] {};
\node [be, fit= (7) (8) (9) (10) (11) (12)] {};
\node [be, fit= (13) (14) (15) (16) (17) (18)] {};
\node [be, fit= (19) (20) (21) (22) (23) (24)] {};
\node [se, fit= (1) (2) (3) (4) (5)] {};
\node [se, fit= (6) (7)] {};
\node [se, fit= (8)] {};
\node [se, fit= (9)] {};
\node [se, fit= (10)] {};
\node [se, fit= (11)] {};
\node [se, fit= (12)] {};
\node [se, fit= (13) (14) (15) (16)] {};
\node [se, fit= (17)] {};
\node [se, fit= (18)] {};
\node [se, fit= (19) (20) (21) (22) (23)] {};
\node [se, fit= (24)] {};
%repeat vertices so that they get drawn on top:
\node [nlink] (1) at (0,0) {};
\node [n] (2) at (1,0) {};
\node [n] (3) at (2,0) {};
\node [n] (4) at (3,0) {};
\node [n] (5) at (4,0) {};
\node [n] (6) at (5,0) {};
\node [n] (7) at (6,0) {};
\node [n] (8) at (7,0) {};
\node [n] (9) at (8,0) {};
\node [n] (10) at (9,0) {};
\node [n] (11) at (10,0) {};
\node [n] (12) at (11,0) {};
\node [nlink] (13) at (12,0) {};
\node [n] (14) at (13,0) {};
\node [n] (15) at (14,0) {};
\node [n] (16) at (15,0) {};
\node [n] (17) at (16,0) {};
\node [n] (18) at (17,0) {};
\node [nlink] (19) at (18,0) {};
\node [n] (20) at (19,0) {};
\node [n] (21) at (20,0) {};
\node [n] (22) at (21,0) {};
\node [n] (23) at (22,0) {};
\node [n] (24) at (23,0) {};
\end{tikzpicture}
}%scalebox
}; %</hypergraph>
\node at (-0.05,-1.6) {
\ytableausetup{boxsize=1.3em}
\scalebox{1.2}{
\ytableaushort{
{*(lightgray)1_1}{*(lightgray)1_1}{*(lightgray)1_1}{*(lightgray)1_1}{*(lightgray)1_1}{*(lightgray)1_2}{*(lightgray)1_2}{*(lightgray)1_3}{*(lightgray)1_4}{*(lightgray)1_5}{*(lightgray)1_6}{*(lightgray)1_7}
{1^2_1}{1^2_1}{1^2_1}{1^2_1}{1^2_1}{1^2_1}
{1^3_1}{1^3_1}{1^3_1}{1^3_1}{1^3_1}{1^3_1}
,
{2^1_1}{2^1_1}{2^1_1}{2^1_1}{2^1_1}{2^1_1}
{2^1_2}{2^1_2}{2^1_2}{2^1_2}{2^1_2}{2^1_2}
{*(lightgray)2_1}{*(lightgray)2_1}{*(lightgray)2_1}{*(lightgray)2_1}{*(lightgray)2_2}{*(lightgray)2_3}
{2^3_1}{2^3_1}{2^3_1}{2^3_1}{2^3_1}{2^3_1}
,
{3^1_1}{3^1_1}{3^1_1}{3^1_1}{3^1_1}{3^1_1}
{3^1_2}{3^1_2}{3^1_2}{3^1_2}{3^1_2}{3^1_2}
{3^2_1}{3^2_1}{3^2_1}{3^2_1}{3^2_1}{3^2_1}
{*(lightgray)3_1}{*(lightgray)3_1}{*(lightgray)3_1}{*(lightgray)3_1}{*(lightgray)3_1}{*(lightgray)3_2}
,
{4^1_1}{4^1_1}{4^1_1}{4^1_1}{4^1_1}{4^1_1}
{4^1_2}{4^1_2}{4^1_2}{4^1_2}{4^1_2}{4^1_2}
{4^2_1}{4^2_1}{4^2_1}{4^2_1}{4^2_1}{4^2_1}
{4^3_1}{4^3_1}{4^3_1}{4^3_1}{4^3_1}{4^3_1}
}
}%scalebox
};
\node at (-0.05,-4.5) {
\ytableausetup{boxsize=1.3em}
\scalebox{1.2}{
\ytableaushort{
{*(lightgray)1_1}{*(lightgray)1_1}{*(lightgray)1_1}{*(lightgray)1_1}{*(lightgray)1_1}{*(lightgray)1_2}{*(lightgray)1_2}{*(lightgray)1_3}{*(lightgray)1_4}{*(lightgray)1_5}{*(lightgray)1_6}{*(lightgray)1_7}
{1^2_1}{1^2_1}{1^2_1}{1^2_1}{1^2_1}{1^2_1}
{1^3_1}{1^3_1}{1^3_1}{1^3_1}{1^3_1}{1^3_1}
,
{*(black)\textcolor{white}{2_1}}{2^1_1}{2^1_1}{2^1_1}{2^1_1}{2^1_1}
{2^1_2}{2^1_2}{2^1_2}{2^1_2}{2^1_2}{2^1_2}
{*(black)\textcolor{white}{2^1_1}}{*(lightgray)2_1}{*(lightgray)2_1}{*(lightgray)2_1}{*(lightgray)2_2}{*(lightgray)2_3}
{2^3_1}{2^3_1}{2^3_1}{2^3_1}{2^3_1}{2^3_1}
,
{*(black)\textcolor{white}{3_1}}{3^1_1}{3^1_1}{3^1_1}{3^1_1}{3^1_1}
{3^1_2}{3^1_2}{3^1_2}{3^1_2}{3^1_2}{3^1_2}
{3^2_1}{3^2_1}{3^2_1}{3^2_1}{3^2_1}{3^2_1}
{*(black)\textcolor{white}{3^1_1}}{*(lightgray)3_1}{*(lightgray)3_1}{*(lightgray)3_1}{*(lightgray)3_1}{*(lightgray)3_2}
,
{4^1_1}{4^1_1}{4^1_1}{4^1_1}{4^1_1}{4^1_1}
{4^1_2}{4^1_2}{4^1_2}{4^1_2}{4^1_2}{4^1_2}
{4^2_1}{4^2_1}{4^2_1}{4^2_1}{4^2_1}{4^2_1}
{4^3_1}{4^3_1}{4^3_1}{4^3_1}{4^3_1}{4^3_1}
}
}%scalebox
};
\end{tikzpicture}
\caption{$D=6$, $\varrho=(5,2,1)$.
On top: Three $(D,\varrho_i)$-hypergraphs $H^{(1)}$, $H^{(2)}$, $H^{(3)}$. In the middle: The corresponding tableaux $B_e$. On the bottom: The tableaux $\check B_e$. The only differences between the middle and the bottom are highlighted in black and happen in columns that correspond to link vertices.
}
\label{EVENfig:examplecheckB}
\end{figure}

We quickly observe the following.
\begin{claim}\label{EVENcla:acutecheck}
For $i \in I$ and a block edge $e$ in $H^{(i)}$ we have that
\begin{itemize}
 \item if $\zeta^{(i)} \notin e$, then $B_e = \check B_e$,
 \item if $\zeta^{(i)} \in e$, $i \neq h$, then $B_e$ and $\check B_e$ differ only in a single entry:
 The $i$-th entry of the column $\check B_{\zeta^{(i)}}$ is $i^{h}_{1}$ instead of $i_{1}$.
\end{itemize}
\end{claim}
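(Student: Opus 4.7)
The claim is essentially an unfolding of the definitions that construct $\check B$ from $B$, so my plan is a short, direct verification.

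First I would observe that $B_e$ and $\check B_e$ are built column-by-column from the vertices $v\in e$, with $\check B_v = B_v$ whenever $v$ is not a link vertex. So to compare $B_e$ and $\check B_e$ it suffices to identify which vertices of $e$ are link vertices, i.e.\ which vertices of $e$ belong to $\{\zeta^{(j)}\mid j\in I\}$.

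The key structural point is that each $H^{(j)}$ is a separate hypergraph, and its block edges partition only its own vertex set. Thus a block edge $e$ of $H^{(i)}$ contains only vertices of $H^{(i)}$, and the only link vertex that can possibly lie in $e$ is $\zeta^{(i)}$. This immediately handles the first bullet: if $\zeta^{(i)}\notin e$, then $e$ contains no link vertex at all, so $\check B_v=B_v$ for every $v\in e$ and hence $B_e=\check B_e$.

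For the second bullet, assume $\zeta^{(i)}\in e$ with $i\neq h$. By the preceding observation, the only column in which $B_e$ and $\check B_e$ can differ is the one corresponding to $\zeta^{(i)}$. I would then compute both columns from the definitions using $\ell(\zeta^{(i)})=1$ and $k(\zeta^{(i)})=1$ (from \eqref{EVENeq:ellzetaione} and \eqref{EVENeq:kzetaione}). From \eqref{EVENeq:niBv}--\eqref{EVENeq:njBv}, the column $B_{\zeta^{(i)}}$ has $i$-th entry $i_1$ and, for $j\neq i$, $j$-th entry $j_1^i$. From \eqref{EVENeq:iBzetaioneh}--\eqref{EVENeq:iBzetajoneh}, the column $\check B_{\zeta^{(i)}}$ has $i$-th entry $i_1^h$ and, for $j\neq i$, $j$-th entry $j_1^i$. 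Comparing the two columns entrywise shows they agree in every row except row $i$, where $B_{\zeta^{(i)}}$ has $i_1$ and $\check B_{\zeta^{(i)}}$ has $i_1^h$, which is exactly what the claim asserts.

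Since the proof is nothing more than substituting $\ell(\zeta^{(i)})=k(\zeta^{(i)})=1$ into the defining formulas and checking one column, the only potential pitfall is a bookkeeping error in which column gets modified; the argument above avoids this by first isolating the unique candidate column via the partition property of block edges.
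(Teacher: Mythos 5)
Your proposal is correct and is essentially the paper's own argument: the paper proves this claim by exactly the same unfolding of \eqref{EVENeq:iBzetaioneh} together with $\ell(\zeta^{(i)})=1$ and $k(\zeta^{(i)})=1$ from \eqref{EVENeq:ellzetaione} and \eqref{EVENeq:kzetaione}, just stated more tersely. Your extra remark that a block edge of $H^{(i)}$ can only contain the link vertex $\zeta^{(i)}$ (and no other $\zeta^{(j)}$) is a harmless elaboration of what the paper leaves implicit.
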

\begin{proof}
This follows from \eqref{EVENeq:iBzetaioneh} and using \eqref{EVENeq:ellzetaione}and \eqref{EVENeq:kzetaione}.
\end{proof}

\begin{claim}\label{EVENcla:leftpart}
In each row $j$ of $\leftpart(T)$ there are only entries $j_\ell$ for some $\ell$, or $j_k^i$ for some $k$, $i$.
\end{claim}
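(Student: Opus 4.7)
The plan is to verify the claim by directly inspecting the construction of $\leftpart(T)$ column by column. Since $\leftpart(T) = \sum_{i \in I} \sum_{e \in E_{\text{Block}}^{(i)}} \check B_e$ as a concatenation, and each $\check B_e$ is itself the concatenation of its columns $\check B_v$ for $v \in e$, the claim reduces to the following statement about single columns: for every vertex $v$ appearing in any hypergraph $H^{(i)}$, the entry in row $j$ of the column $\check B_v$ has ``base'' $j$, meaning it is either $j_\ell$ for some subscript $\ell$, or $j_k^{i'}$ for some superscript $i'$ and subscript $k$.

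I would then proceed by a short case analysis based on whether $v$ is a link vertex of its hypergraph. First, for a non-link vertex $v$ in block edge $k$ of $H^{(i)}$, formulas \eqref{EVENeq:iBv} and \eqref{EVENeq:jBv} put $i_{\ell(v)}$ into row $i$ and $j_k^i$ into row $j$ for every $j \neq i$, so the base in each row $j$ matches the row index. Second, for the link vertex $\zeta^{(i)}$ with $i \neq h$, formulas \eqref{EVENeq:iBzetaioneh} and \eqref{EVENeq:iBzetajoneh} put $i_{k(v)}^h$ into row $i$ and $j_{k(v)}^i$ into row $j$ for $j \neq i$; both entries again carry the correct base. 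Third, for $\zeta^{(h)}$, formulas \eqref{EVENeq:hBzetah}, \eqref{EVENeq:jBzetahneq}, and \eqref{EVENeq:jBzetaheq} place, in row $j$, the entry $h_1$ (if $j = h$), $j_1$ (if $j \neq h$, $j \in I$), or $j_1^h$ (if $j \neq h$, $j \notin I$). All three possibilities are of the permitted form $j_\ell$ or $j_k^i$.

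Assembling these cases, every column $\check B_v$ contributing to $\leftpart(T)$ has its row-$j$ entry with base $j$, and concatenating columns preserves this row-by-row property. I do not foresee any serious obstacle: the claim is essentially a bookkeeping verification against the defining equations of Section~\ref{EVENsec:rightpartevenD}, and the only mild care needed is to remember that the swap defining $\check B$ from $B$ affects the $i$-th entry of $\check B_{\zeta^{(i)}}$ but keeps its base equal to $i$ (since the swap is between two entries both lying in row $i$, by construction).
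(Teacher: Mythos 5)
Your proof is correct and follows essentially the same route as the paper: the paper's own argument is precisely to recall the concatenation \eqref{EVENeq:leftpartmadefromblocks} and read off the row-$j$ entries from the columnwise defining equations \eqref{EVENeq:iBv}, \eqref{EVENeq:jBv}, \eqref{EVENeq:iBzetaioneh}, \eqref{EVENeq:iBzetajoneh}, \eqref{EVENeq:hBzetah}, \eqref{EVENeq:jBzetahneq}, and \eqref{EVENeq:jBzetaheq}, which is exactly your case analysis. Your closing observation that the $B\to\check B$ swap only exchanges entries within row $i$ is a correct and useful clarification, though the paper leaves it implicit.
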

\begin{proof}
Recall \eqref{EVENeq:leftpartmadefromblocks}. The claim now follows from combining \eqref{EVENeq:iBv}, \eqref{EVENeq:jBv}, \eqref{EVENeq:iBzetaioneh}, \eqref{EVENeq:iBzetajoneh}, \eqref{EVENeq:hBzetah}, \eqref{EVENeq:jBzetahneq}, and \eqref{EVENeq:jBzetaheq}.
\end{proof}

\begin{claim}\label{EVENcla:symbolsleft}
If $i \notin I$, then no symbol $i_\ell$ appears in $\leftpart(T)$ for any $\ell$.
For a fixed $i \in I$, the symbol $i_\ell$ appears in $\leftpart(T)$ iff there is a vertex $v$ in $H^{(i)}$ with $\ell(v)=\ell$. Moreover, $i_\ell$ appears exactly as many times as there are vertices $v$ in $H^{(i)}$ with $\ell(v)=\ell$.
\end{claim}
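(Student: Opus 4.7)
\medskip

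\noindent\textbf{Proof plan.} The strategy is pure bookkeeping from the columnwise definitions. I start from the decomposition
\[
\leftpart(T) \;=\; \sum_{i\in I}\sum_{e\in E_{\text{Block}}^{(i)}} \check B_e, \qquad \check B_e \;=\; \sum_{v\in e}\check B_v,
\]
so the multiset of entries in $\leftpart(T)$ is exactly the disjoint union of the multisets of entries of the columns $\check B_v$, where $v$ ranges over the vertices of all hypergraphs $H^{(i)}$, $i\in I$. By Claim~\ref{EVENcla:leftpart}, a symbol of the form $i_\ell$ (with no upper index) can occur only in row $i$, so it suffices to count, for each fixed $i$ and $\ell$, the number of row-$i$ entries equal to $i_\ell$.

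Next I would walk through the defining formulas \eqref{EVENeq:iBv}--\eqref{EVENeq:jBzetaheq} and identify \emph{the only} places an un-superscripted symbol $i_\ell$ is produced:
\eqref{EVENeq:iBv} contributes $i_{\ell(v)}$ in row $i$ for each non-link $v\in H^{(i)}$;
\eqref{EVENeq:hBzetah} contributes $h_1$ in row $h$;
\eqref{EVENeq:jBzetahneq} contributes $j_1$ in row $j$ for each $j\in I\setminus\{h\}$. Every other formula (namely \eqref{EVENeq:jBv}, \eqref{EVENeq:iBzetaioneh}, \eqref{EVENeq:iBzetajoneh}, \eqref{EVENeq:jBzetaheq}) outputs a symbol carrying an upper index, hence nothing of the shape $i_\ell$. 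In particular, for $i'\neq h$ the link column $\check B_{\zeta^{(i')}}$ contributes no un-superscripted symbols at all.

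With this inventory in hand, the first statement is immediate: if $i\notin I$, then there is no hypergraph $H^{(i)}$, so \eqref{EVENeq:iBv} produces nothing with letter $i$; the other two sources require $i=h\in I$ or $i\in I\setminus\{h\}$, both of which force $i\in I$. So no $i_\ell$ can appear. For the second statement, fix $i\in I$ and pair vertices in $H^{(i)}$ with row-$i$ occurrences: each non-link $v\in H^{(i)}$ pairs with its contribution $i_{\ell(v)}$ via \eqref{EVENeq:iBv}, while the link vertex $\zeta^{(i)}$ pairs with the unique row-$i$ contribution coming from $\check B_{\zeta^{(h)}}$ --- namely $h_1$ from \eqref{EVENeq:hBzetah} if $i=h$, and $i_1$ from \eqref{EVENeq:jBzetahneq} if $i\ne h$. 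By \eqref{EVENeq:ellzetaione} we have $\ell(\zeta^{(i)})=1$, so in both cases the symbol equals $i_{\ell(\zeta^{(i)})}$. This pairing is a bijection between vertices of $H^{(i)}$ and row-$i$ entries of the form $i_\bullet$, which proves the count.

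The main (and only) obstacle is the subtle effect of the link-vertex swap used to pass from $B_v$ to $\check B_v$: naively one would expect the $i_1$-symbol for $\zeta^{(i)}$ to live in its own column $\check B_{\zeta^{(i)}}$, but the swap moves it away (to become $i_1^{h}$ in that column, via \eqref{EVENeq:iBzetaioneh}) and simultaneously plants an $i_1$ in $\check B_{\zeta^{(h)}}$ via \eqref{EVENeq:jBzetahneq}. Tracking this single displacement is the crux of the bookkeeping; everything else is formula-matching against the definitions.
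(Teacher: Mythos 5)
Your proposal is correct and follows essentially the same route as the paper: both are direct bookkeeping from the decomposition \eqref{EVENeq:leftpartmadefromblocks} and the columnwise defining formulas. The only cosmetic difference is that the paper counts in the pre-swap tableaux $B_e$ (using that passing from $B$ to $\check B$ merely permutes entries, then invoking \eqref{EVENeq:niBv} and \eqref{EVENeq:njBv}), whereas you count directly in the $\check B$ columns and explicitly track the displaced entry $i_1$ landing in $\check B_{\zeta^{(h)}}$ via \eqref{EVENeq:jBzetahneq} and \eqref{EVENeq:hBzetah} --- the same count, performed after the swap instead of before it.
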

\begin{proof}
Consider \eqref{EVENeq:leftpartmadefromblocks}
and observe that $\leftpart(T)$ is obtained by a permutation of the entries of the tableau
\[
\sum_{i\in I} \sum_{e \in E_{\text{Block}}^{(i)}} B_e.
\]
Now use \eqref{EVENeq:niBv} and \eqref{EVENeq:njBv}.
\end{proof}

\section{Construction of $\rightpart(T)$ for even $D$} \label{EVENsec:constructC}
The tableau $\rightpart(T)$ is constructed in any way (for example in a greedy fashion) such that the following constraints are satisfied:
\begin{flalign}\label{EVENeq:sameshape}
\text{$\rightpart(T)$ has the same shape as $S$,}
&&\end{flalign}
\begin{equation}
\label{EVENeq:directreplacement}
\begin{minipage}{15.1cm}
a box in $S$ has entry $i$ iff there is some $\ell$ for which the corresponding box in $\rightpart(T)$ has entry $i_\ell$,
\end{minipage}
\end{equation}
\begin{flalign}\label{EVENeq:symbolsappearDtimes}
\text{The symbol $i_\ell$ appears in $\rightpart(T)$ and $\leftpart(T)$ together exactly $D$ many times,}
&&\end{flalign}
\begin{flalign}\label{EVENeq:allsymbolsappearleftiffright}
\text{the symbol $i_\ell$ appears in $\rightpart(T)$ iff $i_\ell$ appears in $\leftpart(T)$}.
&&\end{flalign}
Such a tableau might not be unique, but we only care about its existence.
The existence can be shown as follows.

Let $n(i_\ell)$ denote the number of times the symbol $i_\ell$ appears in $\leftpart(T)$.
If $n(i_\ell)>0$, then Claim~\ref{EVENcla:symbolsleft} implies that there are $n(i_\ell)>0$ many vertices $v$ in $H^{(i)}$ with $\ell(v)=i$. By Def.~\ref{def:dkhypergraph}\eqref{eq:def:nameedgessizelessthanD} we know that $n(i_\ell)<D$.
We construct $\rightpart(T)$ by arbitrarily replacing $D-n(i_\ell)$ many entries $i$ in $S$ by the symbol $i_\ell$ for each $i$, $\ell$ for which $n(i_\ell)>0$.
Claim~\ref{EVENcla:divisibility} below shows that this procedure replaces exactly all entries of $S$ (recall that $i$ appears in $S$ exactly $D\varrho_i$ many times).
It is clear that this construction satisfies \eqref{EVENeq:sameshape}, \eqref{EVENeq:directreplacement} and \eqref{EVENeq:symbolsappearDtimes}.
Since $0 < n(i_\ell) < D$ iff $0 < D-n(i_\ell) < D$, we conclude \eqref{EVENeq:allsymbolsappearleftiffright}.

\begin{claim}\label{EVENcla:divisibility}
        \[
        \forall i\in I: \quad \sum_{\ell \textup{ with } n(i_\ell)>0} (D-n(i_\ell)) = D\varrho_i.
        \]
\end{claim}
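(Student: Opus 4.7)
The plan is to rewrite both sides of the desired identity in terms of the combinatorial data of the hypergraph $H^{(i)}$, and then invoke the defining properties of a $(D,\varrho_i)$-hypergraph.

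First I would use Claim~\ref{EVENcla:symbolsleft} to replace the multiplicities $n(i_\ell)$ with hypergraph-side quantities. For the fixed $i \in I$, that claim says the symbol $i_\ell$ appears in $\leftpart(T)$ exactly as many times as there are vertices $v$ in $H^{(i)}$ with $\ell(v)=\ell$, i.e.\ $n(i_\ell)$ equals the size of the $\ell$-th name edge of $H^{(i)}$ (and is $0$ when no such name edge exists). Consequently the set of indices $\ell$ with $n(i_\ell)>0$ is in bijection with $E_{\text{Name}}^{(i)}$, so the number of terms in the sum on the left-hand side is exactly $|E_{\text{Name}}^{(i)}|$, and $\sum_{\ell:n(i_\ell)>0} n(i_\ell)$ equals the total number of vertices in $H^{(i)}$.

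Next I would split the sum and compute both pieces. We have
\[
\sum_{\ell:\, n(i_\ell)>0}\bigl(D-n(i_\ell)\bigr) \;=\; D\cdot|E_{\text{Name}}^{(i)}| \;-\; \sum_{\ell:\, n(i_\ell)>0} n(i_\ell) \;=\; D\cdot|E_{\text{Name}}^{(i)}| \;-\; |V(H^{(i)})|.
\]
Since block edges form a set partition of $V(H^{(i)})$ with each block edge of size exactly $D$ (Def.~\ref{def:dkhypergraph}\eqref{eq:def:blockedgessizeD}), we get $|V(H^{(i)})|=D\cdot|E_{\text{Block}}^{(i)}|$. Substituting and using Def.~\ref{def:dkhypergraph}\eqref{eq:def:nameblockdifference}, together with the fact that $H^{(i)}$ is a $(D,\varrho_i)$-hypergraph, yields
\[
D\cdot|E_{\text{Name}}^{(i)}| - D\cdot|E_{\text{Block}}^{(i)}| = D\bigl(|E_{\text{Name}}^{(i)}|-|E_{\text{Block}}^{(i)}|\bigr) = D\varrho_i,
\]
which is the claim.

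The only step that needs a small sanity check is the invocation of Claim~\ref{EVENcla:symbolsleft}: that claim is about $\leftpart(T)=\sum_i\sum_{e}\check B_e$, whereas its proof argues via $\sum_i\sum_{e} B_e$. So implicitly I need to confirm that passing from $B$ to $\check B$ preserves the multiset of symbols of the form $i_\ell$ for each~$i$. This is immediate from the description of the swap: for $i\in I\setminus\{h\}$ the link-vertex modification exchanges the $i$-th entry $i_1$ of $B_{\zeta^{(i)}}$ with the $i$-th entry $i_1^{h}$ of $B_{\zeta^{(h)}}$, so the count of every symbol $i_\ell$ (and of every $j_k^i$) is unchanged. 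Hence Claim~\ref{EVENcla:symbolsleft} applies as stated, and no further subtlety arises.
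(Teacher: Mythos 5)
Your proof is correct and follows essentially the same route as the paper: both arguments combine Claim~\ref{EVENcla:symbolsleft} with Def.~\ref{def:dkhypergraph}\eqref{eq:def:blockedgessizeD} (block edges partition $V$ into size-$D$ parts) and Def.~\ref{def:dkhypergraph}\eqref{eq:def:nameblockdifference} ($|E_{\text{Name}}^{(i)}|-|E_{\text{Block}}^{(i)}|=\varrho_i$), differing only in bookkeeping (you sum over vertices via $|V(H^{(i)})|$, the paper sums $D-\size(e)$ over name edges before translating to the $n(i_\ell)$). Your closing sanity check about passing from $B$ to $\check B$ is already implicit in the paper's proof of Claim~\ref{EVENcla:symbolsleft}, which notes that $\leftpart(T)$ is a permutation of the entries of $\sum_{i}\sum_{e}B_e$.
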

\begin{proof}
Since $H^{(i)}$ satisfies Def.~\ref{def:dkhypergraph}\ref{eq:def:nameblockdifference} we have that
\[
|E^{(i)}_{\text{Name}}|-|E^{(i)}_{\text{Block}}| = \varrho_i
\]
and hence
\begin{equation} \label{EVENproof_C_*}
D|E^{(i)}_{\text{Name}}|-D|E^{(i)}_{\text{Block}}| = D\varrho_i \tag{*}.
\end{equation}
Moreover Def.~\ref{def:dkhypergraph}\ref{eq:def:blockedgessizeD} states that block edges form a set partition of $V$ and each block edge has size $D$. Together with the fact that the name edges form a set partition of $V$ (Def.~\ref{def:dkhypergraph}\ref{eq:def:nameedgessizelessthanD}) we see that
$
D|E^{(i)}_{\text{Block}}| = \sum_{e \in E_{\text{Name}}^{(i)}} \size(e).
$
Together with \eqref{EVENproof_C_*} we obtain
$
D|E^{(i)}_{\text{Name}}| - \sum_{e \in E_{\text{Name}}^{(i)}} \size(e) = D\varrho_i
$
and hence
\[
\sum_{e \in E_{\text{Name}}^{(i)}} (D-\size(e)) = D\varrho_i.
\]

Since for each vertex $v$ in a name edge $e$ the value $\ell(v)$ is the same, we write $\ell(e):=\ell(v)$.
From Claim~\ref{EVENcla:symbolsleft} we know that for all $e\in E_{\text{Name}}^{(i)}$ we have $n(i_{\ell(e)})=\size(e)$.
Therefore
        \[
        \sum_{e \in E_{\text{Name}}^{(i)}} (D-n(i_{\ell(e)})) = D\varrho_i
        \]

All numbers $\ell(e)$, $e \in E_{\text{Name}}^{(i)}$, are distinct by definition.
Hence all symbols $i_{\ell(e)}$ are distinct.
All $i_{\ell(e)}$ satisfy $n(i_{\ell(e)})>0$ by Claim~\ref{EVENcla:symbolsleft}.
Moreover, for each $\ell$ with $n(i_\ell)>0$ there exists some $e$ with $\ell(e)=\ell$ also by Claim~\ref{EVENcla:symbolsleft}.
Therefore we can rewrite the sum as
        \[
        \sum_{\ell \textup{ with } n(i_\ell)>0} (D-n(i_\ell)) = D\varrho_i,
        \]
        which concludes the proof.
\end{proof}

An example of the whole construction can be seen in Figure~\ref{EVENfig:fullexample}.

\begin{figure}
\begin{tikzpicture}
\node at (-3,0) {
\rotatebox{53}{
\ytableaushort{
11111111111111111111111111111123,
2222222222233,
333
}
}
};
\node at (0,0) {
\scalebox{0.9}{
\rotatebox{53}{
\ytableaushort{
{*(lightgray)1_1}{*(lightgray)1_1}{*(lightgray)1_1}{*(lightgray)1_1}{*(lightgray)1_1}{*(lightgray)1_2}{*(lightgray)1_2}{*(lightgray)1_3}{*(lightgray)1_4}{*(lightgray)1_5}{*(lightgray)1_6}{*(lightgray)1_7}
{1^2_1}{1^2_1}{1^2_1}{1^2_1}{1^2_1}{1^2_1}
{1^3_1}{1^3_1}{1^3_1}{1^3_1}{1^3_1}{1^3_1}
{1_1}{1_2}{1_2}{1_2}{1_2}{1_3}{1_3}{1_3}{1_3}{1_3}{1_4}{1_4}{1_4}{1_4}{1_4}{1_5}{1_5}{1_5}{1_5}{1_5}{1_6}{1_6}{1_6}{1_6}{1_6}{1_7}{1_7}{1_7}{1_7}{1_7}{2_3}{3_2}
,
{*(black)\textcolor{white}{2_1}}{2^1_1}{2^1_1}{2^1_1}{2^1_1}{2^1_1}
{2^1_2}{2^1_2}{2^1_2}{2^1_2}{2^1_2}{2^1_2}
{*(black)\textcolor{white}{2^1_1}}{*(lightgray)2_1}{*(lightgray)2_1}{*(lightgray)2_1}{*(lightgray)2_2}{*(lightgray)2_3}
{2^3_1}{2^3_1}{2^3_1}{2^3_1}{2^3_1}{2^3_1}
{2_1}{2_1}{2_2}{2_2}{2_2}{2_2}{2_2}{2_3}{2_3}{2_3}{2_3}{3_2}{3_2}
,
{*(black)\textcolor{white}{3_1}}{3^1_1}{3^1_1}{3^1_1}{3^1_1}{3^1_1}
{3^1_2}{3^1_2}{3^1_2}{3^1_2}{3^1_2}{3^1_2}
{3^2_1}{3^2_1}{3^2_1}{3^2_1}{3^2_1}{3^2_1}
{*(black)\textcolor{white}{3^1_1}}{*(lightgray)3_1}{*(lightgray)3_1}{*(lightgray)3_1}{*(lightgray)3_1}{*(lightgray)3_2}
{3_1}{3_2}{3_2}
,
{4^1_1}{4^1_1}{4^1_1}{4^1_1}{4^1_1}{4^1_1}
{4^1_2}{4^1_2}{4^1_2}{4^1_2}{4^1_2}{4^1_2}
{4^2_1}{4^2_1}{4^2_1}{4^2_1}{4^2_1}{4^2_1}
{4^3_1}{4^3_1}{4^3_1}{4^3_1}{4^3_1}{4^3_1}
}
}%scalebox
}%rotatebox
};
\end{tikzpicture}
\caption{A full example of a tableau $S$ (on top) and the corresponding tableau $T$ (below). Here $D = 6$, $m=4$.}
\label{EVENfig:fullexample}
\end{figure}

We draw some quick corollaries.
\begin{claim}\label{EVENcla:symbols}
If $i \notin I$, then the symbol $i_\ell$ does not appear in $T$ for any $\ell$.
For a fixed $i \in I$, the symbol $i_\ell$ appears in $T$ iff $i_\ell$ appears in $\leftpart(T)$ iff $i_\ell$ appears in $\rightpart(T)$ iff there is a vertex $v$ in $H^{(i)}$ with $\ell(v)=\ell$.
\end{claim}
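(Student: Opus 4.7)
The plan is to prove this by chaining together two already-established facts: Claim~\ref{EVENcla:symbolsleft}, which controls which symbols $i_\ell$ appear in $\leftpart(T)$ in terms of the vertices of $H^{(i)}$, and the defining property \eqref{EVENeq:allsymbolsappearleftiffright} of $\rightpart(T)$, which synchronizes the support of the symbols $i_\ell$ between $\leftpart(T)$ and $\rightpart(T)$. Since by construction $T = \leftpart(T) + \rightpart(T)$, an entry appears in $T$ if and only if it appears in at least one of the two pieces.

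For the first statement, suppose $i \notin I$. Claim~\ref{EVENcla:symbolsleft} immediately tells us that no symbol of the form $i_\ell$ appears in $\leftpart(T)$. By \eqref{EVENeq:allsymbolsappearleftiffright}, this is equivalent to $i_\ell$ not appearing in $\rightpart(T)$ either. Therefore $i_\ell$ appears nowhere in $T$, as desired. (One may alternatively note that $i \notin I$ means $\varrho_i = 0$, so the letter $i$ does not appear in $S$, and by \eqref{EVENeq:directreplacement} no symbol $i_\ell$ can appear in $\rightpart(T)$; this gives an independent route to the same conclusion.)

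For the second statement, fix $i \in I$. The equivalence
\[
i_\ell \text{ appears in } \leftpart(T) \iff \text{there is a vertex } v \in H^{(i)} \text{ with } \ell(v) = \ell
\]
is precisely the second half of Claim~\ref{EVENcla:symbolsleft}. The equivalence
\[
i_\ell \text{ appears in } \leftpart(T) \iff i_\ell \text{ appears in } \rightpart(T)
\]
is \eqref{EVENeq:allsymbolsappearleftiffright}. Finally, since $T$ is the concatenation of $\leftpart(T)$ and $\rightpart(T)$, the symbol $i_\ell$ appears in $T$ iff it appears in one of the two parts; combined with the previous equivalence this is the same as appearing in $\leftpart(T)$ (equivalently, in $\rightpart(T)$). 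Chaining these three equivalences yields the claim.

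There is no real obstacle here; the statement is a pure bookkeeping consequence of the way $\leftpart(T)$ and $\rightpart(T)$ were just constructed, and all the work has already been done in Claim~\ref{EVENcla:symbolsleft} and in listing the defining properties \eqref{EVENeq:sameshape}--\eqref{EVENeq:allsymbolsappearleftiffright} of $\rightpart(T)$. The claim will be used later to read off, for each $\varphi \in \mathcal{M}_{\delta,m}$, exactly which symbols can (and must) be sent to which row, so its value is in giving a clean combinatorial description of the alphabet of $T$ rather than in any nontrivial estimate.
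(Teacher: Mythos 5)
Your proof is correct and follows essentially the same route as the paper, which likewise just combines Claim~\ref{EVENcla:symbolsleft} with the synchronization property \eqref{EVENeq:allsymbolsappearleftiffright} of $\rightpart(T)$; you merely spell out the chaining of equivalences (and note the alternative route via \eqref{EVENeq:directreplacement}) in more detail than the paper's one-line argument.
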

\begin{proof}
We combine Claim~\ref{EVENcla:symbolsleft} and \eqref{EVENeq:allsymbolsappearleftiffright}.
\end{proof}

\begin{claim}\label{EVENcla:jki}
If a symbol $j_k^i$ appears in $T$, then it appears exactly $D$ many times in $T$.
\end{claim}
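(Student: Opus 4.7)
The plan is to compute the multiplicity of $j_k^i$ in two stages: first in the pre-swap block tableau $B := \sum_{i' \in I} \sum_{e \in E^{(i')}_{\text{Block}}} B_e$, and then to show that the link-vertex swap producing $\leftpart(T) = \check B := \sum_{i' \in I} \sum_{e \in E^{(i')}_{\text{Block}}} \check B_e$ preserves multiplicities.

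First I would note that symbols $j_k^i$ never appear in $\rightpart(T)$: by \eqref{EVENeq:directreplacement}, that part of $T$ is populated only by symbols of the form $i_\ell$. So the multiplicity of $j_k^i$ in $T$ equals its multiplicity in $\leftpart(T)$. Next, inspecting the defining rules \eqref{EVENeq:niBv} and \eqref{EVENeq:njBv}, the only one that can produce a symbol of the form $j_k^i$ (necessarily with $j \neq i$) is \eqref{EVENeq:njBv}, which places $j_k^i$ in row $j$ of $B_v$ precisely when $v \in H^{(i)}$ and $k(v) = k$. By Definition~\ref{def:dkhypergraph}\eqref{eq:def:blockedgessizeD}, every block edge contains exactly $D$ vertices, so this gives $D$ occurrences of $j_k^i$ in $B$ whenever the $k$-th block edge of $H^{(i)}$ exists, and $0$ otherwise.

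The second step is to verify that $B \mapsto \check B$ is a permutation of entries, and hence preserves the multiset of symbols (and in particular the multiplicity of every $j_k^i$). By construction, the modification is performed column by column: for each $i' \in I$ with $i' \neq h$, the $i'$-th entry of the column $B_{\zeta^{(i')}}$ is exchanged with the $i'$-th entry of the column $B_{\zeta^{(h)}}$. These swaps involve pairwise disjoint pairs of positions — the positions $(i', \zeta^{(i')})$ lie in distinct columns as $i'$ varies, and the positions $(i', \zeta^{(h)})$ lie in the common column $\zeta^{(h)}$ but in pairwise distinct rows $i'$ — so the transpositions commute and their composition is a bijection on positions. In particular, the global multiset of entries is unchanged. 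Combining the two steps yields that $j_k^i$ appears in $\leftpart(T) = \check B$ exactly $D$ times (whenever it appears at all), and therefore the same count holds in $T$.

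The main obstacle — more a matter of careful bookkeeping than of real difficulty — will be confirming that the rules \eqref{EVENeq:iBv}--\eqref{EVENeq:jBzetaheq} for $\check B$ really do describe precisely the transposition just named, i.e., that no other rule secretly produces a symbol of the form $j_k^i$ not already accounted for by \eqref{EVENeq:njBv} together with the transposition. Running through each of the six cases for the entries of $\check B_v$ and comparing with $B_v$ dispatches this routinely.
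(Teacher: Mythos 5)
Your proof is correct and follows essentially the same route as the paper: the symbols $j_k^i$ live only in $\leftpart(T)$, which differs from $\sum_{i\in I}\sum_{e} B_e$ only by a permutation of entries (your verification that the link-vertex swaps are disjoint transpositions is exactly this point), and in $B$ each occurring $j_k^i$ appears exactly $D$ times by \eqref{EVENeq:njBv} and the size-$D$ block edges. The paper states this more tersely, but the content is identical.
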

\begin{proof}
By \eqref{EVENeq:directreplacement} the symbols  $j_k^i$ only appear in $\leftpart(T)$.
Consider \eqref{EVENeq:leftpartmadefromblocks}
and observe that $\leftpart(T)$ is obtained by a permutation of the entries of the tableau
\[
\sum_{i\in I} \sum_{e \in E_{\text{Block}}^{(i)}} B_e.
\]
Now use Def.~\ref{def:dkhypergraph}\eqref{eq:def:blockedgessizeD}, \eqref{EVENeq:niBv}, and \eqref{EVENeq:njBv}.
\end{proof}

\section{Proof of the Tableau Lifting Theorem~\ref{thm:prolongation} for even $D$}\label{EVENsec:psipropertiesDodd}
In this section we prove the Tableau Lifting Theorem~\ref{thm:prolongation} for even $D$.

First we observe that the shape of $T$ is indeed the required shape: This follows from Proposition~\ref{EVENprop:hypergraph}, \eqref{EVENeq:leftpartmadefromblocks}, and the fact that $B_e$ and $\check B_e$ have the same rectangular shape $m \times D$.

We remark that every symbol in $T$ appears exactly $D$ many times: For the symbols $j_k^i$ this follows from Claim~\ref{EVENcla:jki}. For the symbols $i_\ell$ this follows from \eqref{EVENeq:symbolsappearDtimes}.

It remains to prove the parts \eqref{enum:rightpartinSmS}, \eqref{enum:duplex}, and \eqref{enum:existspreimage} of Theorem~\ref{thm:prolongation}.
We start with part~\eqref{enum:existspreimage},
then build up insights that then eventually lead to the proof of parts~\eqref{enum:rightpartinSmS} and~\eqref{enum:duplex}.

\subsection*{Proof of part~\eqref{enum:existspreimage} of Theorem~\ref{thm:prolongation}}
Part~\eqref{enum:existspreimage} of Theorem~\ref{thm:prolongation} is proved as follows.
We choose $\varphi(i_\ell) := i$ and $\varphi(j_k^i) := j$.
We observe that $\rightpart(\varphi(T))=S$, see~\eqref{EVENeq:directreplacement}.
Since $S$ is regular, $\rightpart(\varphi(T))$ is regular.
It remains to show that $\leftpart(\varphi(T))$ is also regular.
From Claim~\ref{EVENcla:leftpart} we see that every column of $\leftpart(\varphi(T))$ contains all entries $1,\ldots,m$, sorted from top to bottom. Thus $\leftpart(\varphi(T))$ is regular.
Since $\leftpart(\varphi(T))$ and $\rightpart(\varphi(T))$ are both regular, we conclude that $\varphi(T)$ is regular, which finishes the proof of part \eqref{enum:existspreimage} of Theorem~\ref{thm:prolongation}.

\subsection*{Parts~\eqref{enum:rightpartinSmS} and~\eqref{enum:duplex} of Theorem~\ref{thm:prolongation}: Preliminaries}

In order to prove parts \eqref{enum:rightpartinSmS} and \eqref{enum:duplex} of Theorem~\ref{thm:prolongation}, we start with some preliminary observations.

\begin{claim}\label{EVENcla:varphii}
If $\varphi(T)$ is regular, then for each $i \in I$ we have:
For every $i_\ell$ that appears in $T$,
$\varphi(i_\ell)$ only depends on $i$ and does not depend on $\ell$.
\end{claim}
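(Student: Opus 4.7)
The plan is to exploit the very restrictive column structure of $\leftpart(T)$: essentially every column contains exactly one symbol of the form $i_\ell$ (in row $i$) and one symbol $j_k^i$ in each other row $j\neq i$, so regularity after $\varphi$ forces the value $\varphi(i_\ell)$ to be determined by the other $m-1$ entries in its column. I will turn this into an invariant $x_k^i$ attached to each block edge, and then use the connectivity of $H^{(i)}$ to show this invariant is independent of $k$, which proves the claim.

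First I would fix $i\in I$ and introduce, for each block edge $e$ of $H^{(i)}$ with index $k=k(e)$, the quantity
\[
x_k^i := \text{the unique value of } \{1,\dots,m\}\setminus\{\varphi(j_k^i)\mid j\neq i\}.
\]
This is well-defined whenever $\varphi(T)$ is regular, because the $m-1$ entries $\varphi(j_k^i)$, $j\neq i$, in column $\check B_v$ for a non-link vertex $v\in e$ must be pairwise distinct (see \eqref{EVENeq:jBv}). Inspecting the column $\check B_v$ then shows $\varphi(i_{\ell(v)})=x_k^i$ for every non-link $v\in e$, so $\varphi(i_{\ell(v)})$ depends on $v$ only through $k(v)$.

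Next I would deal with the link-vertex column. For $i\neq h$, the column $\check B_{\zeta^{(i)}}$ differs from the expected $B_{\zeta^{(i)}}$ only by having $i_1^h$ in row $i$ (see \eqref{EVENeq:iBzetaioneh}), while still carrying $j_1^i$ in every other row. Regularity forces $\varphi(i_1^h)=x_1^i$. By property \eqref{prop:hypergraphshareblockname} in Definition~\ref{def:dkhypergraph}, name edge~$1$ of $H^{(i)}$ contains a non-link vertex $w$ lying in block~$1$; therefore $\ell(w)=1$ and $k(w)=1$, giving $\varphi(i_1)=x_1^i$. Hence $\varphi(i_1)=\varphi(i_1^h)$, i.e.\ the deviation introduced by the link vertex is ``invisible'' to $\varphi$. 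A completely analogous analysis of $\check B_{\zeta^{(h)}}$ (using \eqref{EVENeq:hBzetah}--\eqref{EVENeq:jBzetaheq}) supplies the corresponding identifications needed in $H^{(h)}$.

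Finally I would upgrade the per-block invariant $x_k^i$ to a single value by a connectivity argument in $H^{(i)}$. Given two blocks $e_{B_1},e_{B_2}$, a path in the hypergraph yields an alternating sequence of name and block edges; it suffices to prove $x_{k(e_B)}^i=x_{k(e_{B'})}^i$ whenever a name edge $e_N$ meets both $e_B$ and $e_{B'}$. Pick $v\in e_N\cap e_B$ and $v'\in e_N\cap e_{B'}$; since they lie in the same name edge we have $\ell(v)=\ell(v')$. If neither $v$ nor $v'$ is the link vertex, the equality follows directly from $\varphi(i_{\ell(v)})=x_{k(e_B)}^i$ and $\varphi(i_{\ell(v')})=x_{k(e_{B'})}^i$. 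If one of them, say $v$, equals $\zeta^{(i)}$, then $e_B$ is block~$1$ and $\ell(v')=1$, and the preceding paragraph gives $x_1^i=\varphi(i_1)=\varphi(i_{\ell(v')})=x_{k(e_{B'})}^i$. Iterating along the path proves that all $x_k^i$ coincide, and hence $\varphi(i_\ell)$ takes a single value for every $\ell\in L_i:=\{\ell\mid i_\ell\text{ appears in }T\}$.

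The main obstacle is the bookkeeping around the link vertex: every argument that would otherwise be a trivial ``same name edge'' or ``same block edge'' equality has to be checked not to break when $\zeta^{(i)}$ (or $\zeta^{(h)}$) intervenes. The two observations that (i) property \eqref{prop:hypergraphshareblockname} guarantees a non-link companion of $\zeta^{(i)}$ sharing both its name edge and its block edge, and (ii) regularity of $\check B_{\zeta^{(i)}}$ pins $\varphi(i_1^h)$ to the same missing value as $\varphi(i_1)$, are precisely what make the two potentially dangerous situations collapse into the routine ones.
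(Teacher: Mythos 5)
Your proposal is correct and follows essentially the same route as the paper's proof: the ``missing element'' argument in a regular column of a non-link vertex (your invariant $x_k^i$) gives coincidence of $\varphi(i_{\ell(v)})$ within each block edge, the companion vertex guaranteed by Definition~\ref{def:dkhypergraph}\eqref{prop:hypergraphshareblockname} neutralizes the link-vertex column, and connectedness of $H^{(i)}$ transfers the common value across name edges. The only difference is cosmetic: you additionally derive $\varphi(i_1^h)=\varphi(i_1)$, which the paper defers to Claim~\ref{EVENcla:droph} and is not needed for this claim.
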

\begin{proof}
By definition, for every name edge $e$ in $H^{(i)}$ the values $\ell(v)$ coincide for all $v \in e$. This trivially implies that
\begin{equation}\label{EVENeq:nameedgevarphicoincideNEW}
\text{for every name edge $e$ in $H^{(i)}$:
the values $\varphi(i_{\ell(v)})$ coincide for all $v \in e$.}
\end{equation}

We claim that
\begin{equation}\label{EVENeq:blockedgevarphicoincideNEW}
\text{for every block edge $e$ in $H^{(i)}$:
the values $\varphi(i_{\ell(v)})$ coincide for all $v \in e$.}
\end{equation}
Proof of \eqref{EVENeq:blockedgevarphicoincideNEW}:
Let $k:=k(e)$.
According to Def.~\ref{def:dkhypergraph}\eqref{eq:def:nameblockdifference} there exists a vertex $\xi^{(i)}\neq\zeta^{(i)}$ that has the same name edge and block edge as $\zeta^{(i)}$, i.e., $\ell(\zeta^{(i)}) = \ell(\xi^{(i)})$ and $k=k(\zeta^{(i)}) = k(\xi^{(i)})$.
For each $v \in e$, $v \neq \zeta^{(i)}$ we have that
the $j$-th entry ($j \neq i$) of $\check B^{(i)}_v$ is $j_k^i$, see \eqref{EVENeq:jBv}.
Moreover, the symbol that appears as the $i$-th entry of $\check B^{(i)}_v$
is $i_{\ell(v)}$, see \eqref{EVENeq:iBv}.
By construction of $T$, we have that $\check B^{(i)}_v$ is a column in $T$.
Since by assumption $\varphi(T)$ is regular,
it follows that $\varphi(\check B^{(i)}_v)$ is regular.
Hence if $v \neq \zeta^{(i)}$,
the $\varphi(j_k^i)$ are pairwise distinct. Thus $\varphi(i_{\ell(v)})$ equals the one element in $\{1,\ldots,m\} \setminus \{\varphi(j_k^i) \mid 1 \leq j \leq m, \ j\neq i\}$. This is independent of~$\ell$.
Hence the values $\varphi(i_{\ell(v)})$ coincide for all $v \in e$, $v \neq \zeta^{(i)}$. This proves \eqref{EVENeq:blockedgevarphicoincideNEW} for all $v\in e$, $v \neq \zeta^{(i)}$. Now, if $\zeta^{(i)} \in e$, then $\xi^{(i)} \in e$, for which we have $\ell(\zeta^{(i)}) = \ell(\xi^{(i)})$, and thus clearly $\varphi(i_{\ell(\zeta^{(i)})})=\varphi(i_{\ell(\xi^{(i)})})$.
This proves the claim \eqref{EVENeq:blockedgevarphicoincideNEW}.

Since $H^{(i)}$ is connected (Def.~\ref{def:dkhypergraph}\ref{eq:def:connected}), we conclude with \eqref{EVENeq:nameedgevarphicoincideNEW} and  \eqref{EVENeq:blockedgevarphicoincideNEW}:
The values $\varphi(i_{\ell(v)})$ coincide for all $v$ in $H^{(i)}$.
Since the symbol $i_\ell$ appears in $T$ iff there is some vertex $v$ in $H^{(i)}$ with $\ell(v)=\ell$
(see Claim~\ref{EVENcla:symbols}),
Claim~\ref{EVENcla:varphii} follows.
\end{proof}
For $i\in I$ we define
\begin{equation}\label{EVENeq:defvarphii}
\varphi^{\circ}(i):=\varphi(i_1).
\end{equation}
This definition is natural, because we saw in Claim~\ref{EVENcla:varphii} that if $\varphi(T)$ is regular, then
\[
\varphi^{\circ}(i)=\varphi(i_1)=\varphi(i_2)=\ldots
\]

\begin{claim}\label{EVENcla:differentphiNEW}
Let $\varphi(T)$ be regular. Let $i, j \in I$, $i\neq j$.
Then $\varphi^{\circ}(i)\neq \varphi^{\circ}(j)$.
\end{claim}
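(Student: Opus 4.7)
The plan is to exploit the single column $\check B_{\zeta^{(h)}}$, which by construction already contains all the symbols $i_1$ for $i \in I$ simultaneously. Regularity of $\varphi(T)$ then forces $\varphi$ to be injective on this set, which is exactly what we need.

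More precisely, I would first recall the explicit form of the column corresponding to the link vertex of $H^{(h)}$, where $h = \min I$. By \eqref{EVENeq:hBzetah} the $h$-th entry of $\check B_{\zeta^{(h)}}$ is $h_1$, and by \eqref{EVENeq:jBzetahneq} for every $j \in I$ with $j \neq h$ the $j$-th entry of $\check B_{\zeta^{(h)}}$ is $j_1$. (The remaining entries, for $j \notin I$, are of the form $j_1^h$ by \eqref{EVENeq:jBzetaheq}, and they will play no role.) Hence the column $\check B_{\zeta^{(h)}}$ is a column of $\leftpart(T)$ which contains, among its $m$ entries, the symbol $i_1$ for every single $i \in I$.

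Next I would invoke the regularity assumption. Since $\check B_{\zeta^{(h)}}$ is a column of $T$ and $\varphi(T)$ is regular by hypothesis, the column $\varphi(\check B_{\zeta^{(h)}})$ has pairwise distinct entries in $\{1,\ldots,m\}$. In particular, for any two distinct $i,j \in I$ the values $\varphi(i_1)$ and $\varphi(j_1)$, which appear in two different rows of this column, must differ. By the definition \eqref{EVENeq:defvarphii} of $\varphi^{\circ}$, this gives $\varphi^{\circ}(i) = \varphi(i_1) \neq \varphi(j_1) = \varphi^{\circ}(j)$, as desired.

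There is no real obstacle here: the whole point of building the adjusted columns $\check B_{\zeta^{(i)}}$ by swapping entries into $\check B_{\zeta^{(h)}}$ was precisely to bring all the top-level symbols $i_1$ ($i \in I$) into one common column, so that a single regularity constraint suffices to witness their pairwise inequality under any $\varphi$ that keeps $T$ regular. The argument above is just the direct harvest of that design choice, and no induction, hypergraph-connectivity argument, or Grassmann–Plücker manipulation is needed.
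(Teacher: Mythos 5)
Your proof is correct and is essentially identical to the paper's: both exploit the fact that the single column $\check B_{\zeta^{(h)}}$ contains $i_1$ in row $i$ for every $i \in I$ (by \eqref{EVENeq:hBzetah} and \eqref{EVENeq:jBzetahneq}), so regularity of that one column under $\varphi$ forces $\varphi(i_1)\neq\varphi(j_1)$, which by \eqref{EVENeq:defvarphii} is the claim.
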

\begin{proof}
The column $\check B_{\zeta^{(h)}}$ contains the symbol $i_1$ in row $i$ and the symbol $j_1$ in row $j$, see \eqref{EVENeq:hBzetah} and \eqref{EVENeq:jBzetahneq}.
The fact that $\varphi(T)$ is regular implies that $\varphi(i_1)\neq\varphi(j_1)$.
By \eqref{EVENeq:defvarphii} this concludes the proof.
\end{proof}

\begin{claim}\label{EVENcla:droph}
Let $\varphi(T)$ be regular. Let $i \in I$, $i \neq h$.
Then $\varphi(i_1^h) = \varphi(i_1) = \varphi^{\circ}(i)$.
\end{claim}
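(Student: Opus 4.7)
The plan is to exploit the regularity of $\varphi(T)$ on two well-chosen columns of $\leftpart(T)$ that agree in every row except row~$i$, where one has $i^h_1$ and the other has $i_1$. Once such a pair is exhibited, the equality $\varphi(i^h_1)=\varphi(i_1)$ follows immediately from the pigeonhole principle applied to the remaining $m-1$ rows.

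For the first column I will take $\check B_{\zeta^{(i)}}$ itself: by \eqref{EVENeq:iBzetaioneh} and \eqref{EVENeq:iBzetajoneh} this column contains $i^h_1$ in row~$i$ and $j^i_1$ in every other row~$j$. For the second column, I will invoke Definition~\ref{def:dkhypergraph}\eqref{prop:hypergraphshareblockname} applied to $H^{(i)}$ to obtain a vertex $\xi^{(i)} \neq \zeta^{(i)}$ that shares both its name edge and its block edge with the link vertex. Together with \eqref{EVENeq:ellzetaione} and \eqref{EVENeq:kzetaione} this gives $k(\xi^{(i)}) = \ell(\xi^{(i)}) = 1$, and since each hypergraph has a unique link vertex we have $\check B_{\xi^{(i)}} = B_{\xi^{(i)}}$. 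Hence by \eqref{EVENeq:iBv} and \eqref{EVENeq:jBv} this column contains $i_1$ in row~$i$ and $j^i_1$ in every other row~$j$.

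The two columns then agree in every row $j \neq i$ and both appear as columns of $T$ by \eqref{EVENeq:leftpartmadefromblocks}, so both of their images under $\varphi$ are regular columns of length~$m$. Writing $S := \{\varphi(j^i_1) \mid j \neq i\}$, regularity forces both $\varphi(i^h_1)$ and $\varphi(i_1)$ to equal the unique element of $\{1,\ldots,m\} \setminus S$; hence $\varphi(i^h_1) = \varphi(i_1)$, and the second equality $\varphi(i_1) = \varphi^{\circ}(i)$ is just the definition \eqref{EVENeq:defvarphii}. I do not anticipate any real obstacle: the only nontrivial step is the availability of the non-link partner $\xi^{(i)}$, which is guaranteed by the hypergraph axioms.
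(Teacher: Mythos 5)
Your proof is correct and follows essentially the same route as the paper: the paper compares the column $\check B_{\zeta^{(i)}}$ with $B_{\zeta^{(i)}}$ (two length-$m$ columns that coincide except in row $i$, where they carry $i^h_1$ resp.\ $i_1$) and lets regularity force the two row-$i$ values to be the same missing element of $\{1,\dots,m\}$. Your explicit use of the companion vertex $\xi^{(i)}$ from Definition~\ref{def:dkhypergraph}\eqref{prop:hypergraphshareblockname}, whose column $\check B_{\xi^{(i)}}=B_{\xi^{(i)}}$ has exactly the entries of $B_{\zeta^{(i)}}$ and genuinely occurs in $T$, just makes explicit why that unbarred column can be taken to be regular --- a point the paper leaves implicit (and which it spells out only in the proof of Claim~\ref{EVENcla:varphii}).
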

\begin{proof}
The last equality is \eqref{EVENeq:defvarphii}.
We now prove the first equality.
Let $e$ be the block edge in $H^{(i)}$ that contains the link vertex $\zeta^{(i)}$.
Then $\check B_e^{(i)}$ is an $m \times D$ subtableau of $\leftpart(T)$, which differs from $B_e^{(i)}$ only in a single entry in the length $m$ column corresponding to $\zeta^{(i)}$: The $i$-th entry of the column $\check B_{\zeta^{(i)}}$ is $i^{h}_{1}$ instead of $i_{1}$, see Claim~\ref{EVENcla:acutecheck}.
Hence $\varphi(B_{\zeta^{(i)}})$ and $\varphi(\check B_{\zeta^{(i)}})$ are columns that coincide in all but at most this single box.
Since $\varphi(T)$ is regular and the $\varphi(T)$ only contains entries from $\{1,\ldots,m\}$ and the columns $\varphi(B_{\zeta^{(i)}})$ and $\varphi(\check B_{\zeta^{(i)}})$ are of length $m$, we conclude
that $\varphi(i^{h}_{1}) = \varphi(i_{1})$.
\end{proof}

\subsection*{Proof of part~\eqref{enum:rightpartinSmS} of Theorem~\ref{thm:prolongation}}

We now prove part \eqref{enum:rightpartinSmS} of Theorem~\ref{thm:prolongation}.
The tableau $\rightpart(T)$ only contains entries $i_\ell$ and no entries $j_k^i$, see \eqref{EVENeq:directreplacement}.
As also seen in \eqref{EVENeq:directreplacement},
if $\rightpart(T)$ contains an entry $i_\ell$, then the corresponding entry of $S$ is $i$.
Therefore $\varphi^{\circ}(S) = \varphi(\rightpart(T))$,
where we lifted the map $\varphi^{\circ} : I \to \{1,\ldots,m\}$ to a map with the same name that is defined on tableaux with entries from $I$.
Claim~\ref{EVENcla:differentphiNEW} proves property \eqref{enum:rightpartinSmS} of Theorem~\ref{thm:prolongation}.

\subsection*{Proof of part~\eqref{enum:duplex} of Theorem~\ref{thm:prolongation}}

The rest of this section is devoted to proving part~\ref{enum:duplex} of Theorem~\ref{thm:prolongation}.
A rectangular tableau whose columns all coincide is called \emph{uniform}.
In the following proof we will crucially use that a uniform tableau with an even number of columns is duplex.
Indeed, we prove part~\ref{enum:duplex} of Theorem~\ref{thm:prolongation}
by showing that if $\varphi(T)$ is regular, then for every block edge $e$:
\begin{itemize}
 \item[(I)] $\varphi(\check B_{e})$ is uniform if $e$ does not contain any link vertex $\zeta^{(i)}$,
 \item[(II)] $\varphi(\check B_{e})$ is uniform if $\zeta^{(i)} \in e$ for $i \neq h$, and
 \item[(III)] $\varphi(\check B_{e})$ is uniform if $\zeta^{(h)} \in e$.
\end{itemize}
It is clear that these three properties cover all cases and hence $\varphi(T)$ is uniform by construction \eqref{EVENeq:leftpartmadefromblocks}.
This implies part~\ref{enum:duplex} of Theorem~\ref{thm:prolongation}.

We start with proving (I).
\begin{claim}\label{EVENcla:ithentry}
Let $\varphi(T)$ be regular.
Given a block edge $e$ in $H^{(i)}$.
For all $v \in e$, $v \neq \zeta^{(i)}$, we have that
the $i$-th entry of $\varphi(\check B_v)$ is $\varphi^{\circ}(i)$.
\end{claim}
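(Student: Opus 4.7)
The plan is to chase definitions: identify the $i$-th entry of $\check B_v$ as a specific symbol $i_\ell$ and then invoke Claim~\ref{EVENcla:varphii} to see that $\varphi$ collapses all such symbols (for fixed~$i$) to a single value, namely $\varphi^{\circ}(i)$.

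First I would argue that $\check B_v$ coincides with $B_v$. The only columns along which the two block tableaux differ are the columns indexed by link vertices $\zeta^{(j)}$ for $j\in I$ (this is the content of Claim~\ref{EVENcla:acutecheck}, and is also visible from the case distinction defining $\check B$). Since the hypergraphs $H^{(j)}$ have pairwise disjoint vertex sets and $v$ lies in $H^{(i)}$, the only way $v$ could be a link vertex is $v=\zeta^{(i)}$, which is excluded by hypothesis. Consequently $\check B_v=B_v$, and \eqref{EVENeq:niBv} tells us that the $i$-th entry of this column is the symbol $i_{\ell(v)}$.

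Now applying $\varphi$ reduces the statement to showing $\varphi(i_{\ell(v)})=\varphi^{\circ}(i)$. Note that $i\in I$ because the block edge $e$ lives in $H^{(i)}$, so in particular the symbol $i_{\ell(v)}$ actually appears in $T$. Since $\varphi(T)$ is regular, Claim~\ref{EVENcla:varphii} applies and asserts that $\varphi(i_\ell)$ is independent of $\ell$ for every $i_\ell$ occurring in $T$. By the definition \eqref{EVENeq:defvarphii} of $\varphi^{\circ}$, this common value is $\varphi^{\circ}(i)$, which is exactly what had to be proved.

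The argument is short and I do not expect an obstacle: the claim is really just an unpacking of how $B$, $\check B$, and $\varphi^{\circ}$ were set up, serving as the easy base case (I) of the three-part duplexity analysis. The substantive work will come in the subsequent cases (II) and (III) at the link vertices, where Claim~\ref{EVENcla:droph} and the explicit description \eqref{EVENeq:hBzetah}--\eqref{EVENeq:jBzetaheq} of $\check B_{\zeta^{(h)}}$ must be combined with Claim~\ref{EVENcla:differentphiNEW} to force uniformity of $\varphi(\check B_e)$ in the presence of the swapped entries.
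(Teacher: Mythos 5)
Your proposal is correct and follows essentially the paper's own argument: the paper's proof is simply ``combine \eqref{EVENeq:iBv} and Claim~\ref{EVENcla:varphii}'', and your route through $\check B_v=B_v$ and \eqref{EVENeq:niBv} is the same observation (for a non-link vertex the two column descriptions coincide), finished off with Claim~\ref{EVENcla:varphii} and \eqref{EVENeq:defvarphii} exactly as the paper does. The extra care you take in noting that $v$ cannot be a link vertex of any $H^{(j)}$ and that $i_{\ell(v)}$ indeed appears in $T$ just makes explicit what the paper leaves implicit.
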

\begin{proof}
Combine \eqref{EVENeq:iBv} and Claim~\ref{EVENcla:varphii}.
\end{proof}
\begin{claim}\label{EVENcla:jBvneqzeta}
Let $\varphi(T)$ be regular.
Given a block edge $e$ in $H^{(i)}$.
For all $j \neq i$ we have that the set
\[
\{
\text{$j$-th entry of $\varphi(\check B_v)$} \mid v \in e, v \neq \zeta^{(i)}
\}
\]
consists of the single element $\varphi(j_{k(e)}^{i})$.
\end{claim}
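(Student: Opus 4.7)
The plan is a one-step unfolding of the combinatorial construction of $\check B_v$, and I expect no real obstacle. The key observation is that every vertex $v \in e$ other than $\zeta^{(i)}$ is a non-link vertex of $H^{(i)}$: since $e$ is a block edge of $H^{(i)}$, all its vertices lie in $H^{(i)}$, and the only link vertex associated with $H^{(i)}$ is $\zeta^{(i)}$. (Link vertices $\zeta^{(i')}$ for $i' \neq i$ belong to different hypergraphs and hence cannot lie in a block edge of $H^{(i)}$.)

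Given this, I would apply equation \eqref{EVENeq:jBv} directly: for such a $v$ and for $j \neq i$, the $j$-th entry of $\check B_v$ is the symbol $j_{k(v)}^i$. Since $v \in e$ and $e$ is by definition the $k(e)$-th block edge of $H^{(i)}$, we have $k(v) = k(e)$. Thus the $j$-th entry of $\check B_v$ equals $j_{k(e)}^i$ — the same symbol for every $v \in e \setminus \{\zeta^{(i)}\}$. Applying $\varphi$ entrywise, the $j$-th entry of $\varphi(\check B_v)$ equals $\varphi(j_{k(e)}^i)$ uniformly in such $v$, so the set in question is the singleton $\{\varphi(j_{k(e)}^i)\}$, as claimed.

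Two small remarks on the structure of the argument: first, the hypothesis that $\varphi(T)$ is regular is not actually used here; it appears in the statement because this claim sits inside a sequence of lemmas where regularity is the standing assumption, but the identity for the $j$-th entries is a purely combinatorial consequence of the definitions \eqref{EVENeq:iBzetaioneh}--\eqref{EVENeq:jBzetaheq} telling us which columns get modified in the passage from $B$ to $\check B$. Second, I anticipate using this claim in tandem with Claim~\ref{EVENcla:ithentry}: together they will show that all non-link columns of $\varphi(\check B_e)$ agree entry-by-entry, namely row $i$ holds $\varphi^{\circ}(i)$ and row $j \neq i$ holds $\varphi(j_{k(e)}^i)$. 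That will establish case (I) of the uniformity strategy outlined above, and cases (II) and (III) will then require handling only the single column corresponding to a link vertex via Claim~\ref{EVENcla:droph} and equations \eqref{EVENeq:hBzetah}--\eqref{EVENeq:jBzetaheq}.
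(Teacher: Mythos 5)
Your proof is correct and matches the paper's: the paper also just invokes \eqref{EVENeq:jBv}, since every $v \in e$ with $v \neq \zeta^{(i)}$ is a non-link vertex, $k(v)=k(e)$, and hence all these columns carry the same symbol $j_{k(e)}^{i}$ in row $j$. Your side remark is also accurate — in the even case regularity of $\varphi(T)$ is not needed for this claim (unlike the odd case, where Claim~\ref{cla:barcoincide} is required).
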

\begin{proof}
This follows from \eqref{EVENeq:jBv}.
\end{proof}
Combining Claim~\ref{EVENcla:ithentry} and Claim~\ref{EVENcla:jBvneqzeta} we see that
(I) is true.

We now prove (II). Let $i \neq h$ and let $e$ be the block edge in $H^{(i)}$ that contains $\zeta^{(i)}$. Note that $k(e)=1$.
\begin{claim}\label{EVENcla:almostexceptionalcolumn}
Let $\varphi(T)$ be regular.
Then  $\varphi(\check B_{\zeta^{(i)}})$ coincides with $\varphi(\check B_v)$, $v \in e$, $i \neq h$.
\end{claim}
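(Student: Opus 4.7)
The plan is to reduce this claim almost entirely to machinery that has already been set up, in particular Claim~\ref{EVENcla:droph}, which was engineered precisely to handle the one entry where $\check B_{\zeta^{(i)}}$ differs from the other columns of $\check B_e$. First I would fix $i \neq h$, let $e$ be the block edge of $H^{(i)}$ containing $\zeta^{(i)}$, and recall that $k(e) = k(\zeta^{(i)}) = 1$ by \eqref{EVENeq:kzetaione}. For any $v \in e$ with $v \neq \zeta^{(i)}$, Claim~\ref{EVENcla:ithentry} gives that the $i$-th entry of $\varphi(\check B_v)$ is $\varphi^{\circ}(i)$, and Claim~\ref{EVENcla:jBvneqzeta} gives that, for every $j \neq i$, the $j$-th entry of $\varphi(\check B_v)$ equals $\varphi(j_1^i)$. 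In particular all columns $\varphi(\check B_v)$ with $v \in e \setminus \{\zeta^{(i)}\}$ already coincide.

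It then remains to show that $\varphi(\check B_{\zeta^{(i)}})$ agrees with this common column entry by entry. For $j \neq i$, formula \eqref{EVENeq:iBzetajoneh} tells us that the $j$-th entry of $\check B_{\zeta^{(i)}}$ is $j_{1}^i$, which matches the $j$-th entry of $\varphi(\check B_v)$ described above. For the $i$-th entry, \eqref{EVENeq:iBzetaioneh} says the symbol is $i_1^h$, and its image under $\varphi$ is $\varphi(i_1^h)$. Here the whole point of Claim~\ref{EVENcla:droph} kicks in: under the regularity hypothesis on $\varphi(T)$, we have $\varphi(i_1^h) = \varphi^{\circ}(i)$, which is exactly the $i$-th entry of $\varphi(\check B_v)$ for the other vertices in $e$.

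The only mild subtlety — and the step where one must be careful — is making sure that Claim~\ref{EVENcla:droph}'s hypothesis $i \neq h$ is genuinely available; this is guaranteed by the case hypothesis of the present claim. No additional input beyond the three already-established claims is needed, so I do not expect any real obstacle: the argument is essentially a two-line column-by-column comparison, and the substantive work has already been absorbed into Claims~\ref{EVENcla:varphii} and~\ref{EVENcla:droph} earlier in the section.
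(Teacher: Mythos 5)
Your proposal is correct and follows essentially the same route as the paper: an entrywise comparison of the columns, handling the entries in rows $j\neq i$ via the construction formulas (with $k(\zeta^{(i)})=1$) and resolving the single differing $i$-th entry $i_1^h$ via Claim~\ref{EVENcla:droph}. The only cosmetic difference is that you first observe the non-link columns agree among themselves (via Claims~\ref{EVENcla:ithentry} and~\ref{EVENcla:jBvneqzeta}) before matching the link column, which is a harmless repackaging of the same argument.
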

\begin{proof}
We compare the columns entrywise.
Note that $k(v)=k(\zeta^{(i)})=1$.
We make a case distinction.

Case 1: Let $j \neq i$.
The $j$-th entry of $\check B_v$ is $j_{1}^i$, see \eqref{EVENeq:jBv}.
The $j$-th entry of $\check B_{\zeta^{(i)}}$ is $j_{1}^i$, see \eqref{EVENeq:jBzetaheq}.
Hence the $j$-th entry of $\varphi(\check B_v)$ equals the $j$-th entry of $\varphi(\check B_{\zeta^{(i)}})$.

Case 2:
The $i$-th entry of $\check B_v$ is $i_1$, see \eqref{EVENeq:iBv}.
The $i$-th entry of $\check B_{\zeta^{(i)}}$ is $i_1^h$, see \eqref{EVENeq:iBzetaioneh}. Hence Claim~\ref{EVENcla:droph} implies that
the $i$-th entry of $\varphi(\check B_v)$ equals the $i$-th entry of $\varphi(\check B_{\zeta^{(i)}})$.
\end{proof}
It follows from Claim~\ref{EVENcla:almostexceptionalcolumn} that all columns in $\varphi(\check B_e)$ coincide, i.e., $\varphi(\check B_e)$ is uniform.
Thus (II) is proved.

It remains to show (III), i.e., that $\varphi(\check B_{e})$ is uniform if $\zeta^{(h)} \in e$.

\begin{claim}\label{EVENcla:hthentry}
Let $\varphi(T)$ be regular and $\zeta^{(h)}$ the link vertex in the block edge $e$.
For all $v \in e$, $v \neq \zeta^{(h)}$, we have that
the $h$-th entry of $\varphi(\check B_v)$ is $\varphi^{\circ}(h)$.
\end{claim}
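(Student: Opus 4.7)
The plan is to mirror the proof of Claim~\ref{EVENcla:ithentry} almost verbatim, now specializing to $i = h$. Since $\zeta^{(h)} \in e$ means that $e$ is a block edge in $H^{(h)}$, every vertex $v \in e$ with $v \neq \zeta^{(h)}$ is a non-link vertex in $H^{(h)}$, so by the definition of $\check B_v$ (i.e.\ $\check B_v = B_v$ for non-link vertices) we can read off its $h$-th entry directly from \eqref{EVENeq:iBv} with $i = h$.

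First I would invoke \eqref{EVENeq:iBv} to conclude that the $h$-th entry of $\check B_v$ is the symbol $h_{\ell(v)}$. Then I would apply Claim~\ref{EVENcla:varphii} (which is available because $\varphi(T)$ is assumed regular and $h \in I$) to conclude that $\varphi(h_{\ell(v)})$ is independent of $\ell(v)$ and equals $\varphi(h_1) = \varphi^{\circ}(h)$ by the definition \eqref{EVENeq:defvarphii}. Stringing these together gives that the $h$-th entry of $\varphi(\check B_v)$ is $\varphi^{\circ}(h)$, which is exactly the statement of the claim.

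There is no real obstacle here; the work has already been done in the two cited results. The only point to verify is the hypothesis of Claim~\ref{EVENcla:varphii}, namely that $h \in I$ and that the symbol $h_{\ell(v)}$ actually appears in $T$. The first is automatic since $h$ was defined as the smallest index in $I$. The second follows from Claim~\ref{EVENcla:symbols}: the symbol $h_{\ell(v)}$ appears in $T$ because there is a vertex (namely $v$ itself) in $H^{(h)}$ with label $\ell(v)$. I expect the entire proof to fit in two lines and to be stated as an immediate consequence of \eqref{EVENeq:iBv} combined with Claim~\ref{EVENcla:varphii}, in full parallel with Claim~\ref{EVENcla:ithentry}.
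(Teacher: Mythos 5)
Your proof is correct and is essentially the paper's argument: the paper simply observes that the claim is the $i=h$ case of Claim~\ref{EVENcla:ithentry}, whose proof is exactly the combination of \eqref{EVENeq:iBv} and Claim~\ref{EVENcla:varphii} that you spell out. Unwinding that citation, as you do, changes nothing of substance.
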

\begin{proof}
This is a direct implication of Claim~\ref{EVENcla:ithentry}.
\end{proof}

\begin{claim}\label{EVENcla:exceptionalcolumn}
Let $\varphi(T)$ be regular and $\zeta^{(h)} \in e$.
Then $\varphi(\check B_{\zeta^{(h)}})$ coincides with $\varphi(\check B_{v})$,
$v \in e$.
\end{claim}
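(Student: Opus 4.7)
The plan is to prove the claim by comparing the columns $\varphi(\check B_{\zeta^{(h)}})$ and $\varphi(\check B_v)$ row by row, with a case split on the row index $j$. This will directly mirror the argument of Claim~\ref{EVENcla:almostexceptionalcolumn}, except that the special role played there by the exchange rule~\eqref{EVENeq:iBzetaioneh} at row $i$ is here distributed across all rows $j \in I$ with $j \neq h$, via rules~\eqref{EVENeq:hBzetah}, \eqref{EVENeq:jBzetahneq}, \eqref{EVENeq:jBzetaheq}. Note that since $\zeta^{(h)} \in e$ we have $k(e) = k(\zeta^{(h)}) = 1$, so the indices on the $j^i_k$ symbols in $\check B_v$ all carry $k = 1$.

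The main step is to fix $v \in e$ with $v \neq \zeta^{(h)}$ and verify that the $j$-th entries of $\varphi(\check B_{\zeta^{(h)}})$ and $\varphi(\check B_v)$ agree for every $j$. For $j = h$, the column $\check B_v$ has $h_{\ell(v)}$ in row $h$ by~\eqref{EVENeq:iBv}, while $\check B_{\zeta^{(h)}}$ has $h_1$ by~\eqref{EVENeq:hBzetah}; Claim~\ref{EVENcla:varphii} (applied for $i = h$) together with the definition~\eqref{EVENeq:defvarphii} of $\varphi^\circ$ gives $\varphi(h_{\ell(v)}) = \varphi^\circ(h) = \varphi(h_1)$. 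For $j \neq h$ with $j \in I$, the column $\check B_v$ has $j^h_1$ in row $j$ by~\eqref{EVENeq:jBv}, while $\check B_{\zeta^{(h)}}$ has $j_1$ by~\eqref{EVENeq:jBzetahneq}; the equality $\varphi(j^h_1) = \varphi(j_1)$ is exactly Claim~\ref{EVENcla:droph}. For $j \neq h$ with $j \notin I$, both columns carry the \emph{same} symbol $j^h_1$ in row $j$ (by~\eqref{EVENeq:jBv} and~\eqref{EVENeq:jBzetaheq}), so the agreement is trivial.

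Having established that $\varphi(\check B_{\zeta^{(h)}}) = \varphi(\check B_v)$ for every $v \in e$ with $v \neq \zeta^{(h)}$, the claim follows, since Claims~\ref{EVENcla:hthentry} and~\ref{EVENcla:jBvneqzeta} already show that the columns $\varphi(\check B_v)$ for $v \in e \setminus \{\zeta^{(h)}\}$ all coincide. I do not expect any real obstacle here: once the preceding claims are in hand, the statement reduces to a careful bookkeeping of which symbol occupies which row of the two column types, and all three rules~\eqref{EVENeq:hBzetah}--\eqref{EVENeq:jBzetaheq} were designed precisely to make this agreement hold under $\varphi$.
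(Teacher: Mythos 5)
Your proposal is correct and follows essentially the same route as the paper's proof: an entrywise comparison with the same three-case split (row $h$ via Claim~\ref{EVENcla:varphii}/\ref{EVENcla:hthentry} and \eqref{EVENeq:defvarphii}, rows $j\in I$, $j\neq h$ via Claim~\ref{EVENcla:droph}, and rows $j\notin I$ trivially since both columns carry $j_1^h$). The closing remark about the other columns of $\varphi(\check B_e)$ coinciding is not needed for this claim itself, but it is harmless and matches how the paper then deduces uniformity.
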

\begin{proof}
We compare the columns entrywise, considering three cases.

Case 1: We compare the $h$-th entry:
According to Claim~\ref{EVENcla:hthentry}, the $h$-th entry of $\varphi(\check B_{v})$ is $\varphi^{\circ}(h)$. According to \eqref{EVENeq:hBzetah}
the $h$-th entry of $\check B_{\zeta^{(h)}}$ is $h_1$, so
the $h$-th entry of $\varphi(\check B_{\zeta^{(h)}})$ is $\varphi(h_1)=\varphi^{\circ}(h)$, see \eqref{EVENeq:defvarphii}.

Case 2: We compare the $j$-th entry, $j \neq h$, in the case $j \notin I$:
According to \eqref{EVENeq:jBv}, the $j$-th entry of $\check B_{v}$ is $j_1^{h}$.
The $j$-th entry of $\check B_{\zeta^{(h)}}$ is also $j_1^{h}$, see \eqref{EVENeq:jBzetaheq}. Therefore
the $j$-th entry of $\varphi(\check B_{v})$ equals the $j$-th entry of $\varphi(\check B_{\zeta^{(h)}})$.

Case 3: We compare the $j$-th entry, $j \neq h$, in the case $j\in I$:
According to \eqref{EVENeq:jBv}, the $j$-th entry of $\check B_{v}$ is $j_1^{h}$.
The $j$-th entry of $\check B_{\zeta^{(h)}}$ is $j_1$, see \eqref{EVENeq:jBzetahneq}.
Claim~\ref{cla:droph} shows that
the $j$-th entry of $\varphi(\check B_{v})$ equals the $j$-th entry of $\varphi(\check B_{\zeta^{(h)}})$.
\end{proof}
It follows from Claim~\ref{EVENcla:exceptionalcolumn} that all columns in $\varphi(\check B_e)$ coincide, i.e., $\varphi(\check B_e)$ is uniform.
Thus (III) is proved.
This finishes the proof of part~\ref{enum:duplex} of Theorem~\ref{thm:prolongation}.

Theorem~\ref{thm:prolongation} is now completely proved for even $D$.

\section{The hypergraphs $H^{(i)}$ for odd $D$}\label{sec:hypergraphsoddD}
Let $I := \{i \mid \varrho_i \neq 0\}$.
In order to construct $\leftpart(T)$ we will first
construct a so-called $(D,\varrho_i)$-paired-hypergraph for each $i \in I$.
The number of columns in $\leftpart(T)$ will be precisely
the number of vertices in all these hypergraphs together. So we want the hypergraphs to be as small as possible.

We will need the basic terms from section~\ref{EVENsec:hypergraphsevenD}.
Moreover, we will need the definition of a $(D,K)$-hypergraph (Def.~\ref{def:dkhypergraph}).

\begin{definition} \label{def:dkpairedhypergraph}
        Let $D,K$ be integers. A \textit{$(D,K)$-paired-hypergraph} is defined to be a $(D,K)$-hypergraph $H=(V,E)$ that satisfies the following additional property:
        \begin{equation}
        \begin{minipage}{\textwidth-2cm}
Each block edge $e \in E_{\text{Block}}$ is paired with another block edge $\overline{e} \in E_{\text{Block}}$ such that they are connected by a name edge, i.e., there are vertices $v \in e$ and $\overline{v} \in \overline{e}$ called \emph{bridge vertices} and a name edge $e_{\text{Name}} \in E_{\text{Name}}$ such that $v,\overline{v} \in e_{\text{Name}}$. In other words, the set of block edges can be written as a disjoint union of sets of cardinality two such that the elements of each of the sets are connected by a name edge.
        \end{minipage}
         \label{eq:hypergraphoddpairs}
        \end{equation}
\end{definition}

Several examples for $(5,K)$-paired-hypergraphs are given in Figure~\ref{fig:5Kpairedhypergraphs}.
For a block edge $e$ we write $\overline e$ to denote the other block edge in its pair, and $\overline{\overline e} = e$.
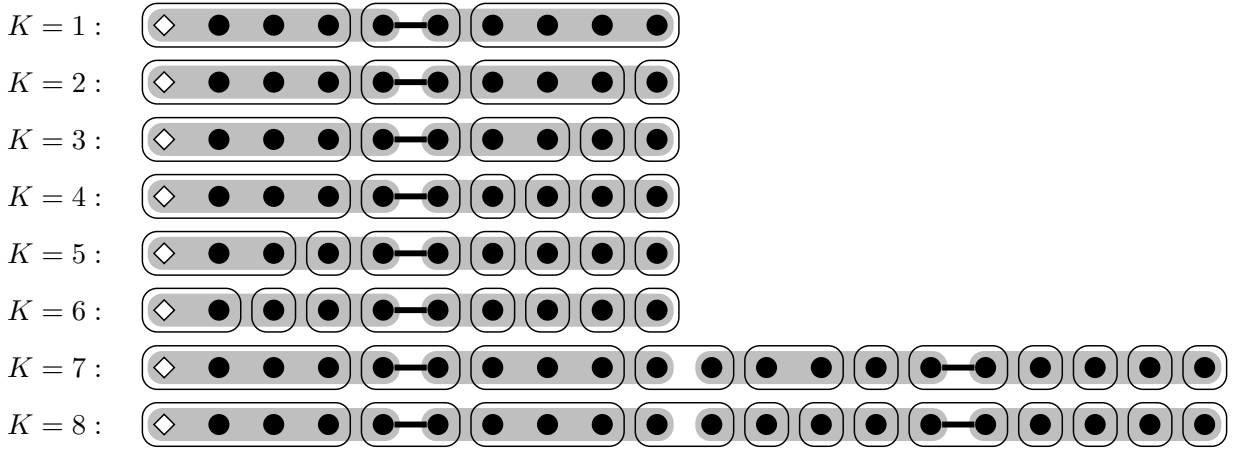
\begin{figure}[htb]
\scalebox{1.4}{
\begin{tikzpicture}
[
scale=0.52,
n/.style={circle,fill=black,inner sep=0pt, minimum size = 0.2cm},
nlink/.style={diamond,fill=white,draw=black,inner sep=0pt, minimum size = 0.2cm},
be/.style={fill=lightgray, rounded corners, inner sep=0.05cm, dashed},        
se/.style={draw, rounded corners, inner sep=0.1cm},        
]
\node at (-2,0) {\scalebox{0.71}{$K=1:$}};
\node [nlink] (1) at (0,0) {};
\node [n] (2) at (1,0) {};
\node [n] (3) at (2,0) {};
\node [n] (4) at (3,0) {};
\node [n] (5) at (4,0) {};
\node [n] (6) at (5,0) {};
\node [n] (7) at (6,0) {};
\node [n] (8) at (7,0) {};
\node [n] (9) at (8,0) {};
\node [n] (10) at (9,0) {};
\node [be, fit= (1) (2) (3) (4) (5)] {};
\node [be, fit= (6) (7) (8) (9) (10)] {};
\node [se, fit= (1) (2) (3) (4)] {};
\node [se, fit= (7) (8) (9) (10)] {};
\node [se, fit= (5) (6)] {};
\draw[color=black, ultra thick] (5) -- (6);
%repeat vertices so that they get drawn on top:
\node [nlink] (1) at (0,0) {};
\node [n] (2) at (1,0) {};
\node [n] (3) at (2,0) {};
\node [n] (4) at (3,0) {};
\node [n] (5) at (4,0) {};
\node [n] (6) at (5,0) {};
\node [n] (7) at (6,0) {};
\node [n] (8) at (7,0) {};
\node [n] (9) at (8,0) {};
\node [n] (10) at (9,0) {};
\end{tikzpicture}
}%scalebox

\smallskip

\scalebox{1.4}{
\begin{tikzpicture}
[
scale=0.52,
n/.style={circle,fill=black,inner sep=0pt, minimum size = 0.2cm},
nlink/.style={diamond,fill=white,draw=black,inner sep=0pt, minimum size = 0.2cm},
be/.style={fill=lightgray, rounded corners, inner sep=0.05cm, dashed},        
se/.style={draw, rounded corners, inner sep=0.1cm},        
]
\node at (-2,0) {\scalebox{0.71}{$K=2:$}};
\node [nlink] (1) at (0,0) {};
\node [n] (2) at (1,0) {};
\node [n] (3) at (2,0) {};
\node [n] (4) at (3,0) {};
\node [n] (5) at (4,0) {};
\node [n] (6) at (5,0) {};
\node [n] (7) at (6,0) {};
\node [n] (8) at (7,0) {};
\node [n] (9) at (8,0) {};
\node [n] (10) at (9,0) {};
\node [be, fit= (1) (2) (3) (4) (5)] {};
\node [be, fit= (6) (7) (8) (9) (10)] {};
\node [se, fit= (1) (2) (3) (4)] {};
\node [se, fit= (7) (8) (9)] {};
\node [se, fit= (10)] {};
\node [se, fit= (5) (6)] {};
\draw[color=black, ultra thick] (5) -- (6);
%repeat vertices so that they get drawn on top:
\node [nlink] (1) at (0,0) {};
\node [n] (2) at (1,0) {};
\node [n] (3) at (2,0) {};
\node [n] (4) at (3,0) {};
\node [n] (5) at (4,0) {};
\node [n] (6) at (5,0) {};
\node [n] (7) at (6,0) {};
\node [n] (8) at (7,0) {};
\node [n] (9) at (8,0) {};
\node [n] (10) at (9,0) {};
\end{tikzpicture}
}%scalebox

\smallskip

\scalebox{1.4}{
\begin{tikzpicture}
[
scale=0.52,
n/.style={circle,fill=black,inner sep=0pt, minimum size = 0.2cm},
nlink/.style={diamond,fill=white,draw=black,inner sep=0pt, minimum size = 0.2cm},
be/.style={fill=lightgray, rounded corners, inner sep=0.05cm, dashed},        
se/.style={draw, rounded corners, inner sep=0.1cm},        
]
\node at (-2,0) {\scalebox{0.71}{$K=3:$}};
\node [nlink] (1) at (0,0) {};
\node [n] (2) at (1,0) {};
\node [n] (3) at (2,0) {};
\node [n] (4) at (3,0) {};
\node [n] (5) at (4,0) {};
\node [n] (6) at (5,0) {};
\node [n] (7) at (6,0) {};
\node [n] (8) at (7,0) {};
\node [n] (9) at (8,0) {};
\node [n] (10) at (9,0) {};
\node [be, fit= (1) (2) (3) (4) (5)] {};
\node [be, fit= (6) (7) (8) (9) (10)] {};
\node [se, fit= (1) (2) (3) (4)] {};
\node [se, fit= (7) (8)] {};
\node [se, fit= (9)] {};
\node [se, fit= (10)] {};
\node [se, fit= (5) (6)] {};
\draw[color=black, ultra thick] (5) -- (6);
%repeat vertices so that they get drawn on top:
\node [nlink] (1) at (0,0) {};
\node [n] (2) at (1,0) {};
\node [n] (3) at (2,0) {};
\node [n] (4) at (3,0) {};
\node [n] (5) at (4,0) {};
\node [n] (6) at (5,0) {};
\node [n] (7) at (6,0) {};
\node [n] (8) at (7,0) {};
\node [n] (9) at (8,0) {};
\node [n] (10) at (9,0) {};
\end{tikzpicture}
}%scalebox

\smallskip

\scalebox{1.4}{
\begin{tikzpicture}
[
scale=0.52,
n/.style={circle,fill=black,inner sep=0pt, minimum size = 0.2cm},
nlink/.style={diamond,fill=white,draw=black,inner sep=0pt, minimum size = 0.2cm},
be/.style={fill=lightgray, rounded corners, inner sep=0.05cm, dashed},        
se/.style={draw, rounded corners, inner sep=0.1cm},        
]
\node at (-2,0) {\scalebox{0.71}{$K=4:$}};
\node [nlink] (1) at (0,0) {};
\node [n] (2) at (1,0) {};
\node [n] (3) at (2,0) {};
\node [n] (4) at (3,0) {};
\node [n] (5) at (4,0) {};
\node [n] (6) at (5,0) {};
\node [n] (7) at (6,0) {};
\node [n] (8) at (7,0) {};
\node [n] (9) at (8,0) {};
\node [n] (10) at (9,0) {};
\node [be, fit= (1) (2) (3) (4) (5)] {};
\node [be, fit= (6) (7) (8) (9) (10)] {};
\node [se, fit= (1) (2) (3) (4)] {};
\node [se, fit= (7)] {};
\node [se, fit= (8)] {};
\node [se, fit= (9)] {};
\node [se, fit= (10)] {};
\node [se, fit= (5) (6)] {};
\draw[color=black, ultra thick] (5) -- (6);
%repeat vertices so that they get drawn on top:
\node [nlink] (1) at (0,0) {};
\node [n] (2) at (1,0) {};
\node [n] (3) at (2,0) {};
\node [n] (4) at (3,0) {};
\node [n] (5) at (4,0) {};
\node [n] (6) at (5,0) {};
\node [n] (7) at (6,0) {};
\node [n] (8) at (7,0) {};
\node [n] (9) at (8,0) {};
\node [n] (10) at (9,0) {};
\end{tikzpicture}
}%scalebox

\smallskip

\scalebox{1.4}{
\begin{tikzpicture}
[
scale=0.52,
n/.style={circle,fill=black,inner sep=0pt, minimum size = 0.2cm},
nlink/.style={diamond,fill=white,draw=black,inner sep=0pt, minimum size = 0.2cm},
be/.style={fill=lightgray, rounded corners, inner sep=0.05cm, dashed},        
se/.style={draw, rounded corners, inner sep=0.1cm},        
]
\node at (-2,0) {\scalebox{0.71}{$K=5:$}};
\node [nlink] (1) at (0,0) {};
\node [n] (2) at (1,0) {};
\node [n] (3) at (2,0) {};
\node [n] (4) at (3,0) {};
\node [n] (5) at (4,0) {};
\node [n] (6) at (5,0) {};
\node [n] (7) at (6,0) {};
\node [n] (8) at (7,0) {};
\node [n] (9) at (8,0) {};
\node [n] (10) at (9,0) {};
\node [be, fit= (1) (2) (3) (4) (5)] {};
\node [be, fit= (6) (7) (8) (9) (10)] {};
\node [se, fit= (1) (2) (3)] {};
\node [se, fit= (4)] {};
\node [se, fit= (7)] {};
\node [se, fit= (8)] {};
\node [se, fit= (9)] {};
\node [se, fit= (10)] {};
\node [se, fit= (5) (6)] {};
\draw[color=black, ultra thick] (5) -- (6);
%repeat vertices so that they get drawn on top:
\node [nlink] (1) at (0,0) {};
\node [n] (2) at (1,0) {};
\node [n] (3) at (2,0) {};
\node [n] (4) at (3,0) {};
\node [n] (5) at (4,0) {};
\node [n] (6) at (5,0) {};
\node [n] (7) at (6,0) {};
\node [n] (8) at (7,0) {};
\node [n] (9) at (8,0) {};
\node [n] (10) at (9,0) {};
\end{tikzpicture}
}%scalebox

\smallskip

\scalebox{1.4}{
\begin{tikzpicture}
[
scale=0.52,
n/.style={circle,fill=black,inner sep=0pt, minimum size = 0.2cm},
nlink/.style={diamond,fill=white,draw=black,inner sep=0pt, minimum size = 0.2cm},
be/.style={fill=lightgray, rounded corners, inner sep=0.05cm, dashed},        
se/.style={draw, rounded corners, inner sep=0.1cm},        
]
\node at (-2,0) {\scalebox{0.71}{$K=6:$}};
\node [nlink] (1) at (0,0) {};
\node [n] (2) at (1,0) {};
\node [n] (3) at (2,0) {};
\node [n] (4) at (3,0) {};
\node [n] (5) at (4,0) {};
\node [n] (6) at (5,0) {};
\node [n] (7) at (6,0) {};
\node [n] (8) at (7,0) {};
\node [n] (9) at (8,0) {};
\node [n] (10) at (9,0) {};
\node [be, fit= (1) (2) (3) (4) (5)] {};
\node [be, fit= (6) (7) (8) (9) (10)] {};
\node [se, fit= (1) (2)] {};
\node [se, fit= (3)] {};
\node [se, fit= (4)] {};
\node [se, fit= (7)] {};
\node [se, fit= (8)] {};
\node [se, fit= (9)] {};
\node [se, fit= (10)] {};
\node [se, fit= (5) (6)] {};
\draw[color=black, ultra thick] (5) -- (6);
%repeat vertices so that they get drawn on top:
\node [nlink] (1) at (0,0) {};
\node [n] (2) at (1,0) {};
\node [n] (3) at (2,0) {};
\node [n] (4) at (3,0) {};
\node [n] (5) at (4,0) {};
\node [n] (6) at (5,0) {};
\node [n] (7) at (6,0) {};
\node [n] (8) at (7,0) {};
\node [n] (9) at (8,0) {};
\node [n] (10) at (9,0) {};
\end{tikzpicture}
}%scalebox

\smallskip

\scalebox{1.4}{
\begin{tikzpicture}
[
scale=0.52,
n/.style={circle,fill=black,inner sep=0pt, minimum size = 0.2cm},
nlink/.style={diamond,fill=white,draw=black,inner sep=0pt, minimum size = 0.2cm},
be/.style={fill=lightgray, rounded corners, inner sep=0.05cm, dashed},        
se/.style={draw, rounded corners, inner sep=0.1cm},        
]
\node at (-2,0) {\scalebox{0.71}{$K=7:$}};
\node [nlink] (1) at (0,0) {};
\node [n] (2) at (1,0) {};
\node [n] (3) at (2,0) {};
\node [n] (4) at (3,0) {};
\node [n] (5) at (4,0) {};
\node [n] (6) at (5,0) {};
\node [n] (7) at (6,0) {};
\node [n] (8) at (7,0) {};
\node [n] (9) at (8,0) {};
\node [n] (10) at (9,0) {};
\node [n] (11) at (10,0) {};
\node [n] (12) at (11,0) {};
\node [n] (13) at (12,0) {};
\node [n] (14) at (13,0) {};
\node [n] (15) at (14,0) {};
\node [n] (16) at (15,0) {};
\node [n] (17) at (16,0) {};
\node [n] (18) at (17,0) {};
\node [n] (19) at (18,0) {};
\node [n] (20) at (19,0) {};
\node [be, fit= (1) (2) (3) (4) (5)] {};
\node [be, fit= (6) (7) (8) (9) (10)] {};
\node [be, fit= (11) (12) (13) (14) (15)] {};
\node [be, fit= (16) (17) (18) (19) (20)] {};
\node [se, fit= (1) (2) (3) (4)] {};
\node [se, fit= (5) (6)] {};
\node [se, fit= (7) (8) (9)] {};
\node [se, fit= (10) (11)] {};
\node [se, fit= (12) (13)] {};
\node [se, fit= (14)] {};
\node [se, fit= (15) (16)] {};
\node [se, fit= (17)] {};
\node [se, fit= (18)] {};
\node [se, fit= (19)] {};
\node [se, fit= (20)] {};
\draw[color=black, ultra thick] (5) -- (6);
\draw[color=black, ultra thick] (15) -- (16);
%repeat vertices so that they get drawn on top:
\node [nlink] (1) at (0,0) {};
\node [n] (2) at (1,0) {};
\node [n] (3) at (2,0) {};
\node [n] (4) at (3,0) {};
\node [n] (5) at (4,0) {};
\node [n] (6) at (5,0) {};
\node [n] (7) at (6,0) {};
\node [n] (8) at (7,0) {};
\node [n] (9) at (8,0) {};
\node [n] (10) at (9,0) {};
\node [n] (11) at (10,0) {};
\node [n] (12) at (11,0) {};
\node [n] (13) at (12,0) {};
\node [n] (14) at (13,0) {};
\node [n] (15) at (14,0) {};
\node [n] (16) at (15,0) {};
\node [n] (17) at (16,0) {};
\node [n] (18) at (17,0) {};
\node [n] (19) at (18,0) {};
\node [n] (20) at (19,0) {};
\end{tikzpicture}
}%scalebox

\smallskip

\scalebox{1.4}{
\begin{tikzpicture}
[
scale=0.52,
n/.style={circle,fill=black,inner sep=0pt, minimum size = 0.2cm},
nlink/.style={diamond,fill=white,draw=black,inner sep=0pt, minimum size = 0.2cm},
be/.style={fill=lightgray, rounded corners, inner sep=0.05cm, dashed},        
se/.style={draw, rounded corners, inner sep=0.1cm},        
]
\node at (-2,0) {\scalebox{0.71}{$K=8:$}};
\node [nlink] (1) at (0,0) {};
\node [n] (2) at (1,0) {};
\node [n] (3) at (2,0) {};
\node [n] (4) at (3,0) {};
\node [n] (5) at (4,0) {};
\node [n] (6) at (5,0) {};
\node [n] (7) at (6,0) {};
\node [n] (8) at (7,0) {};
\node [n] (9) at (8,0) {};
\node [n] (10) at (9,0) {};
\node [n] (11) at (10,0) {};
\node [n] (12) at (11,0) {};
\node [n] (13) at (12,0) {};
\node [n] (14) at (13,0) {};
\node [n] (15) at (14,0) {};
\node [n] (16) at (15,0) {};
\node [n] (17) at (16,0) {};
\node [n] (18) at (17,0) {};
\node [n] (19) at (18,0) {};
\node [n] (20) at (19,0) {};
\node [be, fit= (1) (2) (3) (4) (5)] {};
\node [be, fit= (6) (7) (8) (9) (10)] {};
\node [be, fit= (11) (12) (13) (14) (15)] {};
\node [be, fit= (16) (17) (18) (19) (20)] {};
\node [se, fit= (1) (2) (3) (4)] {};
\node [se, fit= (5) (6)] {};
\node [se, fit= (7) (8) (9)] {};
\node [se, fit= (10) (11)] {};
\node [se, fit= (12)] {};
\node [se, fit= (13)] {};
\node [se, fit= (14)] {};
\node [se, fit= (15) (16)] {};
\node [se, fit= (17)] {};
\node [se, fit= (18)] {};
\node [se, fit= (19)] {};
\node [se, fit= (20)] {};
\draw[color=black, ultra thick] (5) -- (6);
\draw[color=black, ultra thick] (15) -- (16);
%repeat vertices so that they get drawn on top:
\node [nlink] (1) at (0,0) {};
\node [n] (2) at (1,0) {};
\node [n] (3) at (2,0) {};
\node [n] (4) at (3,0) {};
\node [n] (5) at (4,0) {};
\node [n] (6) at (5,0) {};
\node [n] (7) at (6,0) {};
\node [n] (8) at (7,0) {};
\node [n] (9) at (8,0) {};
\node [n] (10) at (9,0) {};
\node [n] (11) at (10,0) {};
\node [n] (12) at (11,0) {};
\node [n] (13) at (12,0) {};
\node [n] (14) at (13,0) {};
\node [n] (15) at (14,0) {};
\node [n] (16) at (15,0) {};
\node [n] (17) at (16,0) {};
\node [n] (18) at (17,0) {};
\node [n] (19) at (18,0) {};
\node [n] (20) at (19,0) {};
\end{tikzpicture}
}%scalebox
       \caption{$(5,K)$-paired-hypergraphs for several values of $K$.
       Each block edges is represented as a filled gray blob.
       Each name edges is enclosed by a solid black curve.
       The link vertices are drawn as white diamonds. The bridge vertex pairs $\{v,\overline{v}\}$ for property \protect\eqref{eq:hypergraphoddpairs} are illustrated via thick line segments.}
        \label{fig:5Kpairedhypergraphs}
\end{figure}

\begin{proposition} \label{prop:hypergraphodd}
        For odd $D \geq 3$, $K \neq 0$, there exists a $(D,K)$-paired-hypergraph that has exactly $2\lceil \frac{K}{2(D-2)} \rceil$ many block edges.
\end{proposition}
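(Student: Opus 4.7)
The plan is to construct such a hypergraph directly, in close analogy with the even case (Proposition~\ref{EVENprop:hypergraph}). Set $p := \lceil K/(2(D-2))\rceil$ and $n := 2p$, and introduce $n$ pairwise disjoint block edges $V_1,\dots,V_n$, each of size~$D$, thought of as arranged from left to right with $V_k$ occupying positions $(k-1)D+1,\dots,kD$ (as in Figure~\ref{fig:5Kpairedhypergraphs}). Declare pair $k$ to consist of the two block edges $V_{2k-1}$ and $V_{2k}$ for $k=1,\dots,p$; this immediately provides the pairing demanded by \eqref{eq:hypergraphoddpairs}.

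Next I will add the name edges in three layers. For each $k=1,\dots,p$ introduce a \emph{bridge} name edge of size~$2$ joining the rightmost vertex of $V_{2k-1}$ to the leftmost vertex of $V_{2k}$; these bridges realize the pairing required by \eqref{eq:hypergraphoddpairs}. For each $k=1,\dots,p-1$ introduce a \emph{connector} name edge of size~$2$ joining the rightmost vertex of $V_{2k}$ to the leftmost vertex of $V_{2k+1}$, so that consecutive pairs become connected. Bridges and connectors together account for $2p-1$ name edges and $4p-2$ vertices, leaving $r := 2p(D-2)+2$ vertices to be partitioned into the remaining $K+1$ name edges. Walk through the leftover vertices from left to right and split them into $K+1$ contiguous groups of sizes $s_1,\dots,s_{K+1}$ with $s_i\in\{1,\dots,D-1\}$ and $s_1\geq 2$; since $V_1$ contributes $D-1\geq 2$ leftover vertices at the leftmost end, the first group lies entirely inside $V_1$, and designating its leftmost element as the link vertex secures Def.~\ref{def:dkhypergraph}\ref{prop:hypergraphshareblockname}.

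The main obstacle is verifying that such sizes $s_1,\dots,s_{K+1}$ exist, i.e., that $K+2\leq r\leq (K+1)(D-1)$. The lower bound is immediate from $p\geq K/(2(D-2))$. For the upper bound, the definition of the ceiling gives $2(D-2)(p-1)<K$, so $2p(D-2)\leq K+2(D-2)-1$ and thus $r\leq K+2D-3$; now one computes
\[
(K+1)(D-1)-(K+2D-3)=(K-1)(D-2)\geq 0
\]
for all $K\geq 1$ and $D\geq 3$, as required. Once feasibility is in hand, the remaining verifications are bookkeeping: properties \ref{eq:def:blockandnameedges}--\ref{eq:def:nameedgessizelessthanD} of Def.~\ref{def:dkhypergraph} hold by construction, connectedness \ref{eq:def:connected} is witnessed by the chain of bridges and connectors together with the fact that every leftover vertex shares its name edge with at least one neighbour in its block, and the edge count \ref{eq:def:nameblockdifference} reads $|E_{\text{Name}}|=p+(p-1)+(K+1)=K+2p=K+|E_{\text{Block}}|$. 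The total number of block edges is exactly $n=2\lceil K/(2(D-2))\rceil$, completing the construction.
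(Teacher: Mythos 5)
Your construction is correct and is essentially the paper's own proof: the same linear chain of $2\lceil K/(2(D-2))\rceil$ blocks of size $D$, the same size-$2$ name edges at consecutive block junctions (your bridges within pairs and connectors between pairs), and the same distribution of the remaining $n(D-2)+2$ vertices into the $K+1$ leftover name edges so that the link vertex sits in a name edge of size at least $2$ inside its own block; you even verify the upper bound $r\leq(K+1)(D-1)$ explicitly, which the paper leaves implicit. One cosmetic remark: connectedness already follows from the block edges together with the chain of size-$2$ name edges, so your clause that every leftover vertex shares its name edge with a neighbour in its block is unnecessary (and fails for singleton name edges), but this does not affect the argument.
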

\begin{proof}
 Let $n := 2\lceil \frac{K}{2(D-2)} \rceil$.
We start the construction by considering $n$ many disjoint block edges with $D$ many vertices each.
We arrange the vertices in a linear fashion as in Figure~\ref{fig:5Kpairedhypergraphs}.
The leftmost vertex is the link vertex.
We now place these vertices in $K+n$ many name edges as follows.
As in Figure~\ref{fig:5Kpairedhypergraphs}, the rightmost vertex of every block edge but the last shall be placed in a size 2 name edge with the leftmost vertex of the next block edge.
The resulting hypergraph is connected.
The rightmost vertex of every odd block edge and the leftmost vertex of every even block edge are bridge vertices.
At this point we have $nD-2(n-1)=n(D-2)+2$ vertices that are not in name edges yet; and we have $K+n-(n-1) = K+1$ name edges left to put vertices in. Since
\[
(n(D-2)+2)-(K+1) =
(2\lceil \tfrac{K}{2(D-2)} \rceil(D-2)+2)-(K+1)
\geq
(K+2)-(K+1) = 1 > 0,
\]
we can position the name edges so that the link vertex is in a name edge of size at least $2$.
Moreover, we position that name edge of size at least 2 in such a way that the link vertex has a vertex that not only lies in the same name edge, but also in the same block edge.
\end{proof}

\section{Construction of $\leftpart(T)$ for odd $D$}
\label{sec:rightpartoddD}

For each $i \in I$ let $H^{(i)}$ be the $(D,\varrho_i)$-paired-hypergraph from Proposition~\ref{prop:hypergraphodd}. We write $E_{\text{Block}}^{(i)}$ to denote its set of block edges and $E_{\text{Name}}^{(i)}$ to denote its set of name edges.

In this section, for every $i \in I$ and every $e \in E_{\text{Block}}^{(i)}$ we construct an $m \times D$ block tableau $\check B_e$ such that
$\leftpart(T)$ is constructed as the concatenation
\begin{equation}\label{eq:leftpartmadefromblocks}
\leftpart(T) := \sum_{i\in I} \sum_{e \in E_{\text{Block}}^{(i)}} \check B_e.
\end{equation}
Notice that since every block edge has size $D$ (see Def.~\ref{def:dkhypergraph}\eqref{eq:def:blockedgessizeD}), this implies that the number of columns in $\leftpart(T)$ is equal to the sum of numbers of vertices in the hypergraphs $H^{(i)}$, $i \in I$.

Each $m\times D$ block tableau $\check B_e$ is constructed in three steps: First we construct an $m\times D$ block tableau $B_e$, then we modify its entries to $\acute B_e$, and the we make final adjustments to the entries to obtain $\check B_e$.

Let $\zeta^{(i)}$ denote the link vertex in $H^{(i)}$.
We attach some additional data to each $H^{(i)}$ as follows.
We put a linear order on the set of name edges $E_{\text{name}}^{(i)}$
and for each vertex $v$ in $H^{(i)}$ we define $\ell(v)$ to be the index of its corresponding name edge. Here $\ell(v)=1$ if $v$ lies in the first name edge, $\ell(v)=2$ for the next name edge, and so on.
We ensure that
\begin{equation}\label{eq:ellzetaione}
\ell(\zeta^{(i)})=1.
\end{equation}
In the same way, we put a linear order on the set of block edge pairs;
for each block edge $e$ we write $k(e)$ for the index of its corresponding block edge pair
and for each vertex $v$ in $H^{(i)}$ we define $k(v)$ to be the index of its corresponding block edge pair.
We ensure that
\begin{equation}\label{eq:kzetaione}
k(\zeta^{(i)})=1.
\end{equation}
Moreover, for every vertex $v$ in any $H^{(i)}$ we define $i(v):=i$.

In the following, for each vertex $v$ we define an $m \times 1$ rectangular tableau (i.e., a column of length $m$) called $B_v$.
Concatenating them results in $B_e$: $B_e := \sum_{v \in e} B_v$.
The order of columns does not matter.
Analogously, later we define $\acute B_e := \sum_{v \in e} \acute B_v$ and
$\check B_e := \sum_{v \in e} \check B_v$.

\subsection*{Starting with $B$}
For each block edge pair we choose one block edge to be the \emph{barred} block edge, and the other one to be the \emph{unbarred} block edge.

Let $e$ be in the $k$-th pair of block edges in $H^{(i)}$ and let $v \in e$.
The column $B_v$ is defined by the following properties.
\begin{eqnarray}
 \text{ the $i$-th entry of } B_v \text{ is } i_{\ell(v)} \label{eq:niBv}\\
 \text{ the $j$-th entry ($j \neq i$) of } B_v \text{ is } \begin{cases}
                                                            j_{k}^{i} & \text{ if } e \text{ is unbarred} \\
                                                            j_{\overline k}^{i} & \text{ if } e \text{ is barred} \\
                                                           \end{cases}
 \label{eq:njBv}
\end{eqnarray}
An example is given in Figure~\ref{fig:exampleB}.

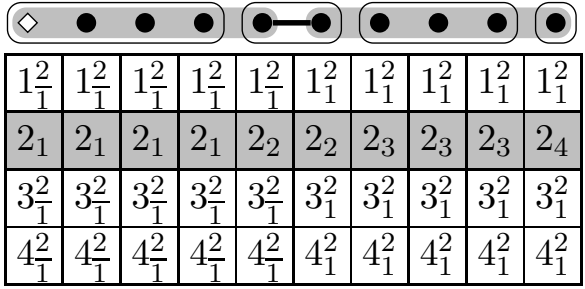
\begin{figure}
\scalebox{1.3}{
\begin{tikzpicture}
\node at (0,0.1) {%<hypergraph>
\begin{tikzpicture}
[
scale=0.6,
n/.style={circle,fill=black,inner sep=0pt, minimum size = 0.2cm},
nlink/.style={diamond,fill=white,draw=black,inner sep=0pt, minimum size = 0.2cm},
be/.style={fill=lightgray, rounded corners, inner sep=0.05cm, dashed},        
se/.style={draw, rounded corners, inner sep=0.1cm},        
]
\node [nlink] (1) at (0,0) {};
\node [n] (2) at (1,0) {};
\node [n] (3) at (2,0) {};
\node [n] (4) at (3,0) {};
\node [n] (5) at (4,0) {};
\node [n] (6) at (5,0) {};
\node [n] (7) at (6,0) {};
\node [n] (8) at (7,0) {};
\node [n] (9) at (8,0) {};
\node [n] (10) at (9,0) {};
\node [be, fit= (1) (2) (3) (4) (5)] {};
\node [be, fit= (6) (7) (8) (9) (10)] {};
\node [se, fit= (1) (2) (3) (4)] {};
\node [se, fit= (7) (8) (9)] {};
\node [se, fit= (10)] {};
\node [se, fit= (5) (6)] {};
\draw[color=black, ultra thick] (5) -- (6);
%repeat vertices so that they get drawn on top:
\node [nlink] (1) at (0,0) {};
\node [n] (2) at (1,0) {};
\node [n] (3) at (2,0) {};
\node [n] (4) at (3,0) {};
\node [n] (5) at (4,0) {};
\node [n] (6) at (5,0) {};
\node [n] (7) at (6,0) {};
\node [n] (8) at (7,0) {};
\node [n] (9) at (8,0) {};
\node [n] (10) at (9,0) {};
\end{tikzpicture}
}; %</hypergraph>
\node at (-0.05,-1.4) {
\ytableausetup{boxsize=1.5em}
\ytableaushort{
{1^2_{\A}}{1^2_{\A}}{1^2_{\A}}{1^2_{\A}}{1^2_{\A}}{1^2_1}{1^2_1}{1^2_1}{1^2_1}{1^2_1}
,
{*(lightgray)2_1}{*(lightgray)2_1}{*(lightgray)2_1}{*(lightgray)2_1}{*(lightgray)2_2}{*(lightgray)2_2}{*(lightgray)2_3}{*(lightgray)2_3}{*(lightgray)2_3}{*(lightgray)2_4}
,
{3^2_{\A}}{3^2_{\A}}{3^2_{\A}}{3^2_{\A}}{3^2_{\A}}{3^2_1}{3^2_1}{3^2_1}{3^2_1}{3^2_1}
,
{4^2_{\A}}{4^2_{\A}}{4^2_{\A}}{4^2_{\A}}{4^2_{\A}}{4^2_1}{4^2_1}{4^2_1}{4^2_1}{4^2_1}
}
};
\end{tikzpicture}
}%scalebox
\caption{
Here $i=2$ and $k=1$.
A $(5,2)$-paired-hypergraph $H^{(2)}$ and the corresponding concatenated tableau $B_e + B_{\overline e}$.
Vertices are drawn directly above their corresponding columns.
The left block edge is considered barred, the right block edge is considered unbarred.
To make the value of $i$ easy to see, the second row is highlighted.
}
\label{fig:exampleB}
\end{figure}

\subsection*{From $B$ to $\acute B$}
Fix $i \in I$.
To go from $B_e$ to $\acute B_e$ we switch some entries $j_k^{i}$ to $j_{\overline k}^{i}$ and vice versa.
We do this by considering the concatenation $B_e+B_{\overline{e}}$
and permuting some entries within the rows of this $m \times (2D)$ block tableau to obtain $\acute B_e+\acute B_{\overline{e}}$.
For each $1 \leq k \leq |E_{\text{Block}}^{(i)}|$,
let $\{e,\overline e\}$ denote the $k$-th pair of block edges in $H^{(i)}$
and 
choose a set of $m-1$ many distinct cardinality $D$ subsets $\barred(i,j,k)$ of the vertex set $e \cup \overline e$ such that
\begin{equation}\label{eq:bridgebar}
\begin{minipage}{16cm}
one of the two bridge vertices is contained in \emph{all} the $m-1$ many sets $\barred(i,j,k)$, $1 \leq j \leq m$, $j \neq i$, and the other bridge vertex is contained in \emph{none} of those sets.
\end{minipage}
\end{equation}

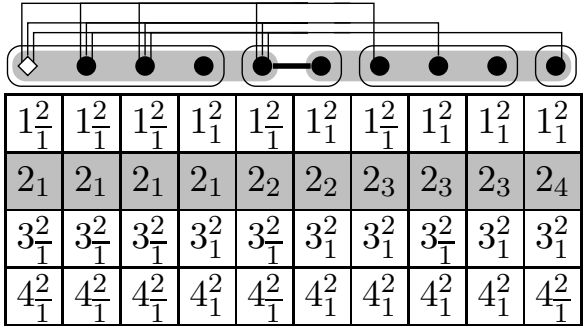
\begin{figure}
\scalebox{1.3}{
\begin{tikzpicture}
\node at (0,0.15) {%<hypergraph>
\begin{tikzpicture}
[
scale=0.6,
n/.style={circle,fill=black,inner sep=0pt, minimum size = 0.2cm},
nlink/.style={diamond,fill=white,draw=black,inner sep=0pt, minimum size = 0.2cm},
be/.style={fill=lightgray, rounded corners, inner sep=0.05cm, dashed},        
se/.style={draw, rounded corners, inner sep=0.1cm},        
]
\node [nlink] (1) at (0,0) {};
\node [n] (2) at (1,0) {};
\node [n] (3) at (2,0) {};
\node [n] (4) at (3,0) {};
\node [n] (5) at (4,0) {};
\node [n] (6) at (5,0) {};
\node [n] (7) at (6,0) {};
\node [n] (8) at (7,0) {};
\node [n] (9) at (8,0) {};
\node [n] (10) at (9,0) {};
\node [be, fit= (1) (2) (3) (4) (5)] {};
\node [be, fit= (6) (7) (8) (9) (10)] {};
\node [se, fit= (1) (2) (3) (4)] {};
\node [se, fit= (7) (8) (9)] {};
\node [se, fit= (10)] {};
\node [se, fit= (5) (6)] {};
\draw[color=black, ultra thick] (5) -- (6);
\node[above = 0.4cm of 6] (2barred) {};
\draw (2barred)+(-0.5,0) -- +(0.5,0);
\draw (1)+(-0.1,0) |- (2barred);
\draw (2)+(-0.1,0) |- (2barred);
\draw (3)+(-0.1,0) |- (2barred);
\draw (5)+(-0.1,0) |- (2barred);
\draw (7)+(-0.1,0) |- (2barred);
\node[above = 0.2cm of 6] (3barred) {};
\draw (3barred)+(-0.5,0) -- +(0.5,0);
\draw (1) |- (3barred);
\draw (2) |- (3barred);
\draw (3) |- (3barred);
\draw (5) |- (3barred);
\draw (8) |- (3barred);
\node[above = 0.1cm of 6] (4barred) {};
\draw (4barred)+(-0.5,0) -- +(0.5,0);
\draw (1)+(0.1,0) |- (4barred);
\draw (2)+(0.1,0) |- (4barred);
\draw (3)+(0.1,0) |- (4barred);
\draw (5)+(0.1,0) |- (4barred);
\draw (10)+(0.1,0) |- (4barred);
%repeat vertices so that they get drawn on top:
\node [nlink] (1) at (0,0) {};
\node [n] (2) at (1,0) {};
\node [n] (3) at (2,0) {};
\node [n] (4) at (3,0) {};
\node [n] (5) at (4,0) {};
\node [n] (6) at (5,0) {};
\node [n] (7) at (6,0) {};
\node [n] (8) at (7,0) {};
\node [n] (9) at (8,0) {};
\node [n] (10) at (9,0) {};
\end{tikzpicture}
}; %</hypergraph>
\node at (-0.05,-1.6) {
\ytableausetup{boxsize=1.5em}
\ytableaushort{
{1^2_{\A}}{1^2_{\A}}{1^2_{\A}}{1^2_1}{1^2_{\A}}{1^2_1}{1^2_{\A}}{1^2_1}{1^2_1}{1^2_1}
,
{*(lightgray)2_1}{*(lightgray)2_1}{*(lightgray)2_1}{*(lightgray)2_1}{*(lightgray)2_2}{*(lightgray)2_2}{*(lightgray)2_3}{*(lightgray)2_3}{*(lightgray)2_3}{*(lightgray)2_4}
,
{3^2_{\A}}{3^2_{\A}}{3^2_{\A}}{3^2_1}{3^2_{\A}}{3^2_1}{3^2_1}{3^2_{\A}}{3^2_1}{3^2_1}
,
{4^2_{\A}}{4^2_{\A}}{4^2_{\A}}{4^2_1}{4^2_{\A}}{4^2_1}{4^2_1}{4^2_1}{4^2_1}{4^2_{\A}}
}
};
\end{tikzpicture}
}%scalebox
\caption{
Here $i=2$ and $k=1$.
On top:
A $(5,1)$-paired-hypergraph $H^{(2)}$ where the three $\barred$ sets of cardinality 5 are indicated by rectangular lines. The top rectangular line represents $\barred(2,1,1)$, the next line for $\barred(2,2,1)$ is not present (because $i=2$), the next rectangular line represents $\barred(2,3,1)$, and the last one $\barred(2,4,1)$.
It can be seen that the left bridge vertex is contained in all sets $\barred(2,.,1)$, while the right bridge vertex is contained in none of those.
Below:
The corresponding tableau $\acute B_e + \acute B_{\overline e}$.
}
\label{fig:exampleacuteB}
\end{figure}

We define
\[
\kbar(i,j,v) := \begin{cases}
                 \overline{k(v)} & \text{ if } v \in \barred(i,j,k(v)) \\
                 k(v) & \text{otherwise}
                \end{cases}.
\]
Now we define
\begin{eqnarray}
 \text{ the $i$-th entry of } \acute B_v \text{ is } i_{\ell(v)}  \label{eq:acuteiBv}\\
 \text{ the $j$-th entry ($j \neq i$) of } \acute B_v \text{ is } j_{\kbar(i,j,v)}^{i}\label{eq:acutejBv}
\end{eqnarray}

An example is provided in Figure~\ref{fig:exampleacuteB}.

Since the sets $\barred(i,j,k)$ have cardinality $D$, it follows:
For $i \in I$ and a block edge $e\in H^{(i)}$ we have that
\begin{equation}\label{eq:shuffleacute}
 B_e+B_{\overline e} \text{ and } \acute B_e+\acute B_{\overline e} \text{ differ only by permutations of boxes within rows, while row $i$ stays the same.}
\end{equation}
For $1 \leq j \leq m$, $j \neq i$, it is now straightforward to see that
\begin{equation}\label{eq:Dmanysymbols}
\text{${B}_{e} + {B}_{\overline e}$ contains exactly the symbols $j_k^i$ and $j_{\overline k}^i$, both exactly $D$ many times.}
\end{equation}

We state another insight at this point:
In $H^{(i)}$,
for the pair of bridge vertices $v$, $w$ of the $k$-th block edge pair pair we have that
\begin{equation}\label{eq:bridge}
\text{$\acute B_v + \acute B_w$ contains all symbols from $S$},
\end{equation}
where $S := \{j_k^i \mid 1 \leq j \leq m, j \neq i\} \cup \{j_{\overline k}^i \mid 1 \leq j \leq m, j \neq i\}$.
\begin{proof}
This follows from combining \eqref{eq:bridgebar} and \eqref{eq:acutejBv}:
For all $j \neq i$ we have that $j_k^i$ is the $j$-th entry of $\acute B_v$ and $j_{\overline k}^i$ is the $j$-th entry of $\acute B_w$ (or vice versa).
\end{proof}

\subsection*{From $\acute B$ to $\check B$}

Most columns $\acute B_v$ and $\check B_v$ coincide, as we define
$\check B_v := \acute B_v$ if $v \notin \{\zeta^{(i)} \mid i \in I\}$.
This means that only the columns corresponding to link vertices are adjusted.

Let $h$ denote the smallest number in $I$.
For $i \in I$, $i \neq h$, the column $\check B_{\zeta^{(i)}}$ arises from $\acute B_{\zeta^{(i)}}$ by switching the $i$-th entry with the $i$-th entry in $\acute B_{\zeta^{(h)}}$. This means that the column $\check B_{\zeta^{(h)}}$ arises from $\acute B_{\zeta^{(i)}}$ by switching the $i$-th entry with the $i$-th entry in $\acute B_{\zeta^{(i)}}$ for all $i \in I$.

The columnwise description of $\check B_v$ thus as follows.
\begin{itemize}
 \item If $v$ is not a link vertex, then
\begin{eqnarray}
 \text{ the $i$-th entry of } \check B_v \text{ is } i_{\ell(v)} \label{eq:iBv}\\
 \text{ the $j$-th entry ($j \neq i$) of } \check B_v \text{ is } j_{\kbar(i,j,v)}^{(i)} \label{eq:jBv}
\end{eqnarray}
 \item If $i \neq h$, then
\begin{eqnarray}
 \text{ the $i$-th entry of } \check B_{\zeta^{(i)}} \text{ is } i_{\kbar(h,i,\zeta^{(i)})}^h \label{eq:iBzetaioneh}\\
 \text{ the $j$-th entry ($j \neq i$) of } \check B_{\zeta^{(i)}} \text{ is } j_{\kbar(i,j,\zeta^{(i)})}^i \label{eq:iBzetajoneh}
\end{eqnarray}
\item Moreover,
\begin{eqnarray}
 \text{ the $h$-th entry of } \check B_{\zeta^{(h)}} \text{ is } h_1 \label{eq:hBzetah}\\
 \text{ for $j \neq h$, $j \in I$, the $j$-th entry of } \check B_{\zeta^{(h)}} \text{ is } j_1 \label{eq:jBzetahneq}\\
 \text{ for $j \neq h$, $j \notin I$, the $j$-th entry of } \check B_{\zeta^{(h)}} \text{ is } j_{\kbar(h,j,\zeta^{(h)})}^h\label{eq:jBzetaheq}
\end{eqnarray}
\end{itemize}

An example is provided in Figure~\ref{fig:examplecheckB}.

\begin{figure}
\scalebox{0.9}{
\begin{tikzpicture}
\node at (0.05,0.1) {%<hypergraph>
\scalebox{0.8}{
\begin{tikzpicture}
[
scale=0.55,
n/.style={circle,fill=black,inner sep=0pt, minimum size = 0.2cm},
nlink/.style={diamond,fill=white,draw=black,inner sep=0pt, minimum size = 0.2cm},
be/.style={fill=lightgray, rounded corners, inner sep=0.05cm, dashed},        
se/.style={draw, rounded corners, inner sep=0.1cm},        
]
\node [nlink] (1) at (0,0) {};
\node [n] (2) at (1,0) {};
\node [n] (3) at (2,0) {};
\node [n] (4) at (3,0) {};
\node [n] (5) at (4,0) {};
\node [n] (6) at (5,0) {};
\node [n] (7) at (6,0) {};
\node [n] (8) at (7,0) {};
\node [n] (9) at (8,0) {};
\node [n] (10) at (9,0) {};
\node [n] (11) at (10,0) {};
\node [n] (12) at (11,0) {};
\node [n] (13) at (12,0) {};
\node [n] (14) at (13,0) {};
\node [n] (15) at (14,0) {};
\node [n] (16) at (15,0) {};
\node [n] (17) at (16,0) {};
\node [n] (18) at (17,0) {};
\node [n] (19) at (18,0) {};
\node [n] (20) at (19,0) {};
\node [nlink] (21) at (20,0) {};
\node [n] (22) at (21,0) {};
\node [n] (23) at (22,0) {};
\node [n] (24) at (23,0) {};
\node [n] (25) at (24,0) {};
\node [n] (26) at (25,0) {};
\node [n] (27) at (26,0) {};
\node [n] (28) at (27,0) {};
\node [n] (29) at (28,0) {};
\node [n] (30) at (29,0) {};
\node [nlink] (31) at (30,0) {};
\node [n] (32) at (31,0) {};
\node [n] (33) at (32,0) {};
\node [n] (34) at (33,0) {};
\node [n] (35) at (34,0) {};
\node [n] (36) at (35,0) {};
\node [n] (37) at (36,0) {};
\node [n] (38) at (37,0) {};
\node [n] (39) at (38,0) {};
\node [n] (40) at (39,0) {};
\node [be, fit= (1) (2) (3) (4) (5)] {};
\node [be, fit= (6) (7) (8) (9) (10)] {};
\node [be, fit= (11) (12) (13) (14) (15)] {};
\node [be, fit= (16) (17) (18) (19) (20)] {};
\node [be, fit= (21) (22) (23) (24) (25)] {};
\node [be, fit= (26) (27) (28) (29) (30)] {};
\node [be, fit= (31) (32) (33) (34) (35)] {};
\node [be, fit= (36) (37) (38) (39) (40)] {};
\node [se, fit= (1) (2) (3) (4)] {};
\node [se, fit= (5) (6)] {};
\draw[color=black, ultra thick] (5) -- (6);
\draw[color=black, ultra thick] (15) -- (16);
\draw[color=black, ultra thick] (25) -- (26);
\draw[color=black, ultra thick] (35) -- (36);
\node [se, fit= (7) (8) (9)] {};
\node [se, fit= (10) (11)] {};
\node [se, fit= (12) (13)] {};
\node [se, fit= (14)] {};
\node [se, fit= (15) (16)] {};
\node [se, fit= (17)] {};
\node [se, fit= (18)] {};
\node [se, fit= (19)] {};
\node [se, fit= (20)] {};
\node [se, fit= (21) (22) (23) (24)] {};
\node [se, fit= (25) (26)] {};
\node [se, fit= (27) (28) (29)] {};
\node [se, fit= (30)] {};
\node [se, fit= (31) (32) (33) (34)] {};
\node [se, fit= (35) (36)] {};
\node [se, fit= (37) (38) (39) (40)] {};
\node[above = 0.3cm of 6] (2barred) {};
\draw (2barred)+(-0.5,0) -- +(0.5,0);
\draw (1)+(-0.1,0) |- (2barred);
\draw (2)+(-0.1,0) |- (2barred);
\draw (3)+(-0.1,0) |- (2barred);
\draw (5)+(-0.1,0) |- (2barred);
\draw (7)+(-0.1,0) |- (2barred);
\node[above = 0.2cm of 6] (3barred) {};
\draw (3barred)+(-0.5,0) -- +(0.5,0);
\draw (1) |- (3barred);
\draw (2) |- (3barred);
\draw (3) |- (3barred);
\draw (5) |- (3barred);
\draw (8) |- (3barred);
\node[above = 0.1cm of 6] (4barred) {};
\draw (4barred)+(-0.5,0) -- +(0.5,0);
\draw (1)+(0.1,0) |- (4barred);
\draw (2)+(0.1,0) |- (4barred);
\draw (3)+(0.1,0) |- (4barred);
\draw (5)+(0.1,0) |- (4barred);
\draw (10)+(0.1,0) |- (4barred);
\node[above = 0.3cm of 16] (2barredB) {};
\draw (2barredB)+(-0.5,0) -- +(0.5,0);
\draw (11)+(-0.1,0) |- (2barredB);
\draw (12)+(-0.1,0) |- (2barredB);
\draw (13)+(-0.1,0) |- (2barredB);
\draw (15)+(-0.1,0) |- (2barredB);
\draw (17)+(-0.1,0) |- (2barredB);
\node[above = 0.2cm of 16] (3barredB) {};
\draw (3barredB)+(-0.5,0) -- +(0.5,0);
\draw (11) |- (3barredB);
\draw (12) |- (3barredB);
\draw (13) |- (3barredB);
\draw (15) |- (3barredB);
\draw (18) |- (3barredB);
\node[above = 0.1cm of 16] (4barredB) {};
\draw (4barredB)+(-0.5,0) -- +(0.5,0);
\draw (11)+(0.1,0) |- (4barredB);
\draw (12)+(0.1,0) |- (4barredB);
\draw (13)+(0.1,0) |- (4barredB);
\draw (15)+(0.1,0) |- (4barredB);
\draw (20)+(0.1,0) |- (4barredB);
\node[above = 0.4cm of 26] (2barredC) {};
\draw (2barredC)+(-0.5,0) -- +(0.5,0);
\draw (21)+(-0.1,0) |- (2barredC);
\draw (22)+(-0.1,0) |- (2barredC);
\draw (23)+(-0.1,0) |- (2barredC);
\draw (25)+(-0.1,0) |- (2barredC);
\draw (27)+(-0.1,0) |- (2barredC);
\node[above = 0.2cm of 26] (3barredC) {};
\draw (3barredC)+(-0.5,0) -- +(0.5,0);
\draw (21) |- (3barredC);
\draw (22) |- (3barredC);
\draw (23) |- (3barredC);
\draw (25) |- (3barredC);
\draw (28) |- (3barredC);
\node[above = 0.1cm of 26] (4barredC) {};
\draw (4barredC)+(-0.5,0) -- +(0.5,0);
\draw (21)+(0.1,0) |- (4barredC);
\draw (22)+(0.1,0) |- (4barredC);
\draw (23)+(0.1,0) |- (4barredC);
\draw (25)+(0.1,0) |- (4barredC);
\draw (30)+(0.1,0) |- (4barredC);
\node[above = 0.4cm of 36] (2barredD) {};
\draw (2barredD)+(-0.5,0) -- +(0.5,0);
\draw (31)+(-0.1,0) |- (2barredD);
\draw (32)+(-0.1,0) |- (2barredD);
\draw (33)+(-0.1,0) |- (2barredD);
\draw (35)+(-0.1,0) |- (2barredD);
\draw (37)+(-0.1,0) |- (2barredD);
\node[above = 0.3cm of 36] (3barredD) {};
\draw (3barredD)+(-0.5,0) -- +(0.5,0);
\draw (31) |- (3barredD);
\draw (32) |- (3barredD);
\draw (33) |- (3barredD);
\draw (35) |- (3barredD);
\draw (38) |- (3barredD);
\node[above = 0.1cm of 36] (4barredD) {};
\draw (4barredD)+(-0.5,0) -- +(0.5,0);
\draw (31)+(0.1,0) |- (4barredD);
\draw (32)+(0.1,0) |- (4barredD);
\draw (33)+(0.1,0) |- (4barredD);
\draw (35)+(0.1,0) |- (4barredD);
\draw (40)+(0.1,0) |- (4barredD);
%
%repeat vertices so that they get drawn on top:
\node [nlink] (1) at (0,0) {};
\node [n] (2) at (1,0) {};
\node [n] (3) at (2,0) {};
\node [n] (4) at (3,0) {};
\node [n] (5) at (4,0) {};
\node [n] (6) at (5,0) {};
\node [n] (7) at (6,0) {};
\node [n] (8) at (7,0) {};
\node [n] (9) at (8,0) {};
\node [n] (10) at (9,0) {};
\node [n] (11) at (10,0) {};
\node [n] (12) at (11,0) {};
\node [n] (13) at (12,0) {};
\node [n] (14) at (13,0) {};
\node [n] (15) at (14,0) {};
\node [n] (16) at (15,0) {};
\node [n] (17) at (16,0) {};
\node [n] (18) at (17,0) {};
\node [n] (19) at (18,0) {};
\node [n] (20) at (19,0) {};
\node [nlink] (21) at (20,0) {};
\node [n] (22) at (21,0) {};
\node [n] (23) at (22,0) {};
\node [n] (24) at (23,0) {};
\node [n] (25) at (24,0) {};
\node [n] (26) at (25,0) {};
\node [n] (27) at (26,0) {};
\node [n] (28) at (27,0) {};
\node [n] (29) at (28,0) {};
\node [n] (30) at (29,0) {};
\node [nlink] (31) at (30,0) {};
\node [n] (32) at (31,0) {};
\node [n] (33) at (32,0) {};
\node [n] (34) at (33,0) {};
\node [n] (35) at (34,0) {};
\node [n] (36) at (35,0) {};
\node [n] (37) at (36,0) {};
\node [n] (38) at (37,0) {};
\node [n] (39) at (38,0) {};
\node [n] (40) at (39,0) {};
\end{tikzpicture}
}%scalebox
}; %</hypergraph>
\node at (-0.05,-1.25) {
\ytableausetup{boxsize=1.5em}
\scalebox{0.75}{
\ytableaushort{
{*(lightgray)1_1}{*(lightgray)1_1}{*(lightgray)1_1}{*(lightgray)1_1}{*(lightgray)1_2}{*(lightgray)1_2}{*(lightgray)1_3}{*(lightgray)1_3}{*(lightgray)1_3}{*(lightgray)1_4}{*(lightgray)1_4}{*(lightgray)1_5}{*(lightgray)1_5}{*(lightgray)1_6}{*(lightgray)1_7}{*(lightgray)1_7}{*(lightgray)1_8}{*(lightgray)1_9}{*(lightgray)\text{$1_{\!1\!0}$}}{*(lightgray)\text{$1_{\!1\!1}$}}
{1^2_{\A}}{1^2_{\A}}{1^2_{\A}}{1^2_1}{1^2_{\A}}{1^2_1}{1^2_{\A}}{1^2_1}{1^2_1}{1^2_1}
{1^3_{\A}}{1^3_{\A}}{1^3_{\A}}{1^3_1}{1^3_{\A}}{1^3_1}{1^3_{\A}}{1^3_1}{1^3_1}{1^3_1}
,
{2^1_{\A}}{2^1_{\A}}{2^1_{\A}}{2^1_1}{2^1_{\A}}{2^1_1}{2^1_{\A}}{2^1_1}{2^1_1}{2^1_1}
{2^1_{\B}}{2^1_{\B}}{2^1_{\B}}{2^1_2}{2^1_{\B}}{2^1_2}{2^1_{\B}}{2^1_2}{2^1_2}{2^1_2}
{*(lightgray)2_1}{*(lightgray)2_1}{*(lightgray)2_1}{*(lightgray)2_1}{*(lightgray)2_2}{*(lightgray)2_2}{*(lightgray)2_3}{*(lightgray)2_3}{*(lightgray)2_3}{*(lightgray)2_4}
{2^3_{\A}}{2^3_{\A}}{2^3_{\A}}{2^3_1}{2^3_{\A}}{2^3_1}{2^3_1}{2^3_{\A}}{2^3_1}{2^3_1}
,
{3^1_{\A}}{3^1_{\A}}{3^1_{\A}}{3^1_1}{3^1_{\A}}{3^1_1}{3^1_1}{3^1_{\A}}{3^1_1}{3^1_1}
{3^1_{\B}}{3^1_{\B}}{3^1_{\B}}{3^1_2}{3^1_{\B}}{3^1_2}{3^1_2}{3^1_{\B}}{3^1_2}{3^1_2}
{3^2_{\A}}{3^2_{\A}}{3^2_{\A}}{3^2_1}{3^2_{\A}}
{3^2_1}{3^2_1}{3^2_{\A}}{3^2_1}{3^2_1}
{*(lightgray)3_1}{*(lightgray)3_1}{*(lightgray)3_1}{*(lightgray)3_1}{*(lightgray)3_2}{*(lightgray)3_2}{*(lightgray)3_3}{*(lightgray)3_3}{*(lightgray)3_3}{*(lightgray)3_3}
,
{4^1_{\A}}{4^1_{\A}}{4^1_{\A}}{4^1_1}{4^1_{\A}}{4^1_1}{4^1_1}{4^1_1}{4^1_1}{4^1_{\A}}
{4^1_{\B}}{4^1_{\B}}{4^1_{\B}}{4^1_2}{4^1_{\B}}{4^1_2}{4^1_2}{4^1_2}{4^1_2}{4^1_{\B}}
{4^2_{\A}}{4^2_{\A}}{4^2_{\A}}{4^2_1}{4^2_{\A}}{4^2_1}{4^2_1}{4^2_1}{4^2_1}{4^2_{\A}}
{4^3_{\A}}{4^3_{\A}}{4^3_{\A}}{4^3_1}{4^3_{\A}}{4^3_1}{4^3_1}{4^3_1}{4^3_1}{4^3_{\A}}
}
}%scalebox
};
\node at (-0.05,-3.25) {
\ytableausetup{boxsize=1.5em}
\scalebox{0.75}{
\ytableaushort{
{*(lightgray)1_1}{*(lightgray)1_1}{*(lightgray)1_1}{*(lightgray)1_1}{*(lightgray)1_2}{*(lightgray)1_2}{*(lightgray)1_3}{*(lightgray)1_3}{*(lightgray)1_3}{*(lightgray)1_4}{*(lightgray)1_4}{*(lightgray)1_5}{*(lightgray)1_5}{*(lightgray)1_6}{*(lightgray)1_7}{*(lightgray)1_7}{*(lightgray)1_8}{*(lightgray)1_9}{*(lightgray)\text{$1_{\!1\!0}$}}{*(lightgray)\text{$1_{\!1\!1}$}}
{1^2_{\A}}{1^2_{\A}}{1^2_{\A}}{1^2_1}{1^2_{\A}}{1^2_1}{1^2_{\A}}{1^2_1}{1^2_1}{1^2_1}
{1^3_{\A}}{1^3_{\A}}{1^3_{\A}}{1^3_1}{1^3_{\A}}{1^3_1}{1^3_{\A}}{1^3_1}{1^3_1}{1^3_1}
,
{*(black)\textcolor{white}{2_1}}{2^1_{\A}}{2^1_{\A}}{2^1_1}{2^1_{\A}}{2^1_1}{2^1_{\A}}{2^1_1}{2^1_1}{2^1_1}
{2^1_{\B}}{2^1_{\B}}{2^1_{\B}}{2^1_2}{2^1_{\B}}{2^1_2}{2^1_{\B}}{2^1_2}{2^1_2}{2^1_2}
{*(black)\textcolor{white}{2^1_{\A}}}{*(lightgray)2_1}{*(lightgray)2_1}{*(lightgray)2_1}{*(lightgray)2_2}{*(lightgray)2_2}{*(lightgray)2_3}{*(lightgray)2_3}{*(lightgray)2_3}{*(lightgray)2_4}
{2^3_{\A}}{2^3_{\A}}{2^3_{\A}}{2^3_1}{2^3_{\A}}{2^3_1}{2^3_1}{2^3_{\A}}{2^3_1}{2^3_1}
,
{*(black)\textcolor{white}{3_1}}{3^1_{\A}}{3^1_{\A}}{3^1_1}{3^1_{\A}}{3^1_1}{3^1_1}{3^1_{\A}}{3^1_1}{3^1_1}
{3^1_{\B}}{3^1_{\B}}{3^1_{\B}}{3^1_2}{3^1_{\B}}{3^1_2}{3^1_2}{3^1_{\B}}{3^1_2}{3^1_2}
{3^2_{\A}}{3^2_{\A}}{3^2_{\A}}{3^2_1}{3^2_{\A}}
{3^2_1}{3^2_1}{3^2_{\A}}{3^2_1}{3^2_1}
{*(black)\textcolor{white}{3^1_{\A}}}{*(lightgray)3_1}{*(lightgray)3_1}{*(lightgray)3_1}{*(lightgray)3_2}{*(lightgray)3_2}{*(lightgray)3_3}{*(lightgray)3_3}{*(lightgray)3_3}{*(lightgray)3_3}
,
{4^1_{\A}}{4^1_{\A}}{4^1_{\A}}{4^1_1}{4^1_{\A}}{4^1_1}{4^1_1}{4^1_1}{4^1_1}{4^1_{\A}}
{4^1_{\B}}{4^1_{\B}}{4^1_{\B}}{4^1_2}{4^1_{\B}}{4^1_2}{4^1_2}{4^1_2}{4^1_2}{4^1_{\B}}
{4^2_{\A}}{4^2_{\A}}{4^2_{\A}}{4^2_1}{4^2_{\A}}{4^2_1}{4^2_1}{4^2_1}{4^2_1}{4^2_{\A}}
{4^3_{\A}}{4^3_{\A}}{4^3_{\A}}{4^3_1}{4^3_{\A}}{4^3_1}{4^3_1}{4^3_1}{4^3_1}{4^3_{\A}}
}
}%scalebox
};
\end{tikzpicture}
}%outermost scalebox
\caption{
On top: Three hypergraphs $H^{(1)}$, $H^{(2)}$, $H^{(3)}$. In the middle: The corresponding tableaux $\acute B_e$. On the bottom: The tableaux $\check B_e$. The only differences between the middle and the bottom are highlighted in black and happen in columns that correspond to link vertices.
}
\label{fig:examplecheckB}
\end{figure}

We quickly observe the following.
\begin{claim}\label{cla:acutecheck}
For $i \in I$ and a block edge $e\in H^{(i)}$ we have that
\begin{itemize}
 \item if $\zeta^{(i)} \notin e$, then $\acute B_e = \check B_e$,
 \item if $\zeta^{(i)} \in e$, $i \neq h$, then $\acute B_e$ and $\check B_e$ differ only in a single entry:
 The $i$-th entry of the column $\check B_{\zeta^{(i)}}$ is $i^{h}_{1}$ or $i^{h}_{\overline 1}$ instead of $i_{1}$.
\end{itemize}
\end{claim}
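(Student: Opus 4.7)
The plan is to simply unpack the definitions of $\acute B$ and $\check B$ in the bullet list preceding the claim, and to exploit the fact that the hypergraphs $H^{(i)}$ are disjoint (so each vertex has a unique hypergraph-index). The only reason a column in the construction gets modified between $\acute B$ and $\check B$ is when it corresponds to the link vertex $\zeta^{(i)}$ for some $i \in I$; so the first task is to locate the link vertices sitting inside the given block edge $e$.

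For the first bullet, assume $\zeta^{(i)} \notin e$. Every $v \in e$ lies in $H^{(i)}$, and the only link vertex of $H^{(i)}$ is $\zeta^{(i)}$ itself, while link vertices $\zeta^{(j)}$ for $j \neq i$ belong to different hypergraphs $H^{(j)}$ and therefore cannot lie in $e$. Hence no $v \in e$ is among $\{\zeta^{(i')} \mid i' \in I\}$, so by the defining rule $\check B_v := \acute B_v$ whenever $v$ is not a link vertex, we conclude $\check B_e = \acute B_e$ columnwise.

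For the second bullet, assume $\zeta^{(i)} \in e$ with $i \neq h$. By the same argument as above, every $v \in e$ with $v \neq \zeta^{(i)}$ is non-link, so $\check B_v = \acute B_v$ for those columns. It remains to compare $\acute B_{\zeta^{(i)}}$ and $\check B_{\zeta^{(i)}}$. By the construction in the preceding paragraph, $\check B_{\zeta^{(i)}}$ is obtained from $\acute B_{\zeta^{(i)}}$ by swapping only the $i$-th entry with the $i$-th entry of $\acute B_{\zeta^{(h)}}$, so the two columns can differ only in row~$i$. Using \eqref{eq:acuteiBv} together with \eqref{eq:ellzetaione}, the $i$-th entry of $\acute B_{\zeta^{(i)}}$ is $i_{\ell(\zeta^{(i)})} = i_1$. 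Using \eqref{eq:acutejBv} (with the roles $i \leftrightarrow h$, $j \leftarrow i$, $v \leftarrow \zeta^{(h)}$) the $i$-th entry of $\acute B_{\zeta^{(h)}}$ is $i_{\kbar(h,i,\zeta^{(h)})}^{\,h}$. Finally, because $k(\zeta^{(h)}) = 1$ by \eqref{eq:kzetaione}, the definition of $\kbar$ forces $\kbar(h,i,\zeta^{(h)}) \in \{1,\overline{1}\}$, so this entry is either $i_1^h$ or $i_{\overline 1}^h$, as claimed.

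The proof is essentially bookkeeping, so the only potential pitfall is notational: one must be careful that the modification in the construction of $\check B$ concerns the $i$-th entries of $\acute B_{\zeta^{(i)}}$ and $\acute B_{\zeta^{(h)}}$ (not of $\acute B_{\zeta^{(i)}}$ evaluated inside $H^{(h)}$), and that the $\kbar$-value to be read off is $\kbar(h,i,\zeta^{(h)})$ rather than $\kbar(h,i,\zeta^{(i)})$, since $\zeta^{(i)}$ is not a vertex of $H^{(h)}$. With this indexing observation in place, the rest is a direct substitution into \eqref{eq:acuteiBv}--\eqref{eq:acutejBv} and \eqref{eq:ellzetaione}--\eqref{eq:kzetaione}, with no nontrivial combinatorics.
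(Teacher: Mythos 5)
Your proof is correct and takes essentially the same route as the paper, whose one-line argument just cites \eqref{eq:iBzetaioneh} together with \eqref{eq:ellzetaione} and \eqref{eq:kzetaione}; you simply unpack this via the swap description of the construction and \eqref{eq:acuteiBv}--\eqref{eq:acutejBv}. Your observation that the relevant index is $\kbar(h,i,\zeta^{(h)})$, read off from the $i$-th entry of $\acute B_{\zeta^{(h)}}$ with $k(\zeta^{(h)})=1$, is the intended reading of \eqref{eq:iBzetaioneh} and is precisely what allows both values $i^{h}_{1}$ and $i^{h}_{\overline 1}$ in the claim.
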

\begin{proof}
This follows from \eqref{eq:iBzetaioneh} and using \eqref{eq:ellzetaione}and \eqref{eq:kzetaione}.
\end{proof}

\begin{claim}\label{cla:leftpart}
In each row $j$ of $\leftpart(T)$ there are only entries $j_\ell$ for some $\ell$, or $j_k^i$ or $j_{\overline k}^i$ for some $k$, $i$.
\end{claim}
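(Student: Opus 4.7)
The plan is to mirror the proof of the even-$D$ analogue (Claim~\ref{EVENcla:leftpart}) and simply unwind the defining equations of the columns $\check B_v$ case by case. Recall from \eqref{eq:leftpartmadefromblocks} that $\leftpart(T)$ is a concatenation of the block tableaux $\check B_e$, and each $\check B_e = \sum_{v\in e}\check B_v$ is itself a concatenation of the length-$m$ columns indexed by the vertices of the block edge $e$ in some $H^{(i)}$. Therefore, every entry in row $j$ of $\leftpart(T)$ appears as the $j$-th entry of some $\check B_v$, and it suffices to show that for every vertex $v$ of every hypergraph $H^{(i)}$, the $j$-th entry of $\check B_v$ has one of the shapes $j_\ell$, $j_k^i$, or $j_{\overline k}^i$.

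I would split this into the three cases corresponding to the three-part columnwise description of $\check B_v$ given in Section~\ref{sec:rightpartoddD}. First, suppose $v$ is not a link vertex. If $j=i(v)$, then \eqref{eq:iBv} says the $j$-th entry is $i_{\ell(v)}$, which is of the form $j_\ell$; if $j\neq i(v)$, then \eqref{eq:jBv} gives $j_{\kbar(i(v),j,v)}^{i(v)}$, which by definition of $\kbar$ equals either $j_{k(v)}^{i(v)}$ or $j_{\overline{k(v)}}^{i(v)}$, i.e., a $j_k^{i}$ or $j_{\overline k}^{i}$. Second, suppose $v=\zeta^{(i)}$ with $i\neq h$. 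Then \eqref{eq:iBzetaioneh} handles $j=i$ and produces $i_{\kbar(h,i,\zeta^{(i)})}^h$, while \eqref{eq:iBzetajoneh} handles $j\neq i$ and produces $j_{\kbar(i,j,\zeta^{(i)})}^i$; both are of the required form. Third, suppose $v=\zeta^{(h)}$. Then \eqref{eq:hBzetah} gives $h_1$ for $j=h$ and \eqref{eq:jBzetahneq} gives $j_1$ for $j\neq h$ with $j\in I$, both of type $j_\ell$; and \eqref{eq:jBzetaheq} gives $j_{\kbar(h,j,\zeta^{(h)})}^h$ for $j\neq h$ with $j\notin I$, which is a $j_k^h$ or $j_{\overline k}^h$.

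Since these three cases exhaust every vertex that can appear as a column index in $\leftpart(T)$, and in each case the $j$-th entry of $\check B_v$ has one of the three permitted forms (with $i$ being the superscript, which in all cases belongs to $I$ because it equals either $i(v)$ or $h$), the claim follows at once. There is no real obstacle here: the only thing to watch is that in the link-vertex case $v=\zeta^{(i)}$ the row corresponding to $j=i$ receives a symbol with superscript $h$ rather than $i$, but this is still covered by the form $j_k^h$ (with $k=1$ or $\overline 1$), so it fits the template stated in the claim. The argument is purely bookkeeping over the defining equations \eqref{eq:iBv}–\eqref{eq:jBzetaheq}.
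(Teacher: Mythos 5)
Your proof is correct and follows essentially the same route as the paper: the paper's argument is precisely to recall \eqref{eq:leftpartmadefromblocks} and combine \eqref{eq:iBv}, \eqref{eq:jBv}, \eqref{eq:iBzetaioneh}, \eqref{eq:iBzetajoneh}, \eqref{eq:hBzetah}, \eqref{eq:jBzetahneq}, and \eqref{eq:jBzetaheq}, which is exactly the case-by-case bookkeeping you carry out. Your explicit remark that in row $i$ of the column $\check B_{\zeta^{(i)}}$ the superscript is $h$ rather than $i$ (still matching the claimed template) is a correct and welcome clarification of the same argument.
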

\begin{proof}
Recall \eqref{eq:leftpartmadefromblocks}. The claim now follows from combining \eqref{eq:iBv}, \eqref{eq:jBv}, \eqref{eq:iBzetaioneh}, \eqref{eq:iBzetajoneh}, \eqref{eq:hBzetah}, \eqref{eq:jBzetahneq}, and \eqref{eq:jBzetaheq}.
\end{proof}

\begin{claim}\label{cla:symbolsleft}
If $i \notin I$, then no symbol $i_\ell$ appears in $\leftpart(T)$ for any $\ell$.
For a fixed $i \in I$, the symbol $i_\ell$ appears in $\leftpart(T)$ iff there is a vertex $v$ in $H^{(i)}$ with $\ell(v)=\ell$. Moreover, $i_\ell$ appears exactly as many times as there are vertices $v$ in $H^{(i)}$ with $\ell(v)=\ell$.
\end{claim}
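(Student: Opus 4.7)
The plan is to mimic the proof of Claim~\ref{EVENcla:symbolsleft} (the even case), but with an extra bookkeeping step to account for the two transformations $B\to\acute B\to\check B$. The key point is that both transformations preserve the multiset of symbols of the form $i_\ell$ occurring in row $i$ across the whole block concatenation.

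First I would observe, from \eqref{eq:leftpartmadefromblocks}, that $\leftpart(T)$ is the concatenation $\sum_{i\in I}\sum_{e\in E^{(i)}_{\text{Block}}} \check B_e$. Reading \eqref{eq:niBv} and \eqref{eq:njBv}, the symbol $i_\ell$ can occur in $\sum_{i\in I}\sum_e B_e$ only in row $i$ of some column $B_v$ with $i(v)=i$, and then it appears exactly when $\ell(v)=\ell$; moreover no symbol $i_\ell$ for $i\notin I$ ever appears. Hence the count of $i_\ell$ in $\sum\sum B_e$ already matches the claim.

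Next I would invoke \eqref{eq:shuffleacute}, which says that passing from $B_e+B_{\overline e}$ to $\acute B_e+\acute B_{\overline e}$ only shuffles boxes within rows and leaves row~$i$ (where $i=i(v)$ for the block) untouched. Since $i_\ell$ symbols only occur in row~$i$ of the columns associated to block edges of $H^{(i)}$, their multiset is unchanged by the $B\to\acute B$ step.

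The only delicate step is $\acute B\to\check B$, which, by Claim~\ref{cla:acutecheck}, modifies entries only in the link-vertex columns $\check B_{\zeta^{(i)}}$. I would tabulate these changes using \eqref{eq:iBzetaioneh}, \eqref{eq:hBzetah}, \eqref{eq:jBzetahneq}, \eqref{eq:jBzetaheq}. Concretely: for $i\in I$ with $i\neq h$, the $i$-th entry of $\check B_{\zeta^{(i)}}$ is $i^h_{\kbar(h,i,\zeta^{(i)})}$ instead of $i_1$, so \emph{one occurrence of $i_1$ is lost}; dually, for the same $i$, the $i$-th entry of $\check B_{\zeta^{(h)}}$ is $i_1$ (by \eqref{eq:jBzetahneq}) instead of $i^h_{\kbar(h,i,\zeta^{(h)})}$, so \emph{one occurrence of $i_1$ is gained}. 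These two cancel. For $i=h$ the $h$-th entry stays $h_1$, and for $j\notin I$ the $j$-th entry of $\check B_{\zeta^{(h)}}$ is of the form $j^h_k$ by \eqref{eq:jBzetaheq}, introducing no new $j_\ell$ symbol. Hence the multiset of symbols $i_\ell$ is identical in $\acute B$ and $\check B$, which together with the previous steps proves the claim.

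The main obstacle will be keeping the link-vertex accounting tidy: one has to carefully check that the $+1/-1$ exchange happens at the right pair of columns ($\zeta^{(i)}$ loses $i_1$, $\zeta^{(h)}$ gains $i_1$), using \eqref{eq:ellzetaione} and \eqref{eq:kzetaione} to verify that the affected subscript is indeed $1$, and to rule out any spurious occurrences coming from columns with $j\notin I$. Once these small cases are laid out, the claim follows directly from the identity of multisets.
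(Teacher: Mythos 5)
Your proposal is correct and is essentially the paper's own argument: the paper simply observes that $\leftpart(T)$ is obtained from $\sum_{i\in I}\sum_{e}B_e$ by a permutation of box entries and then invokes \eqref{eq:niBv} and \eqref{eq:njBv}, which is exactly what your explicit bookkeeping through $B\to\acute B\to\check B$ (row-preserving shuffles via \eqref{eq:shuffleacute}, plus the cancelling $i_1$ swap between $\zeta^{(i)}$ and $\zeta^{(h)}$) verifies in detail. Your link-vertex accounting is accurate, so the only difference is that you make explicit what the paper leaves as an observation.
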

\begin{proof}
Consider \eqref{eq:leftpartmadefromblocks}
and observe that $\leftpart(T)$ is obtained by a permutation of the box entries of the tableau
\[
\sum_{i\in I} \sum_{e \in E_{\text{Block}}^{(i)}} B_e.
\]
Now use \eqref{eq:niBv} and \eqref{eq:njBv}.
\end{proof}

\section{Construction of $\rightpart(T)$ for odd $D$} \label{sec:constructC}
The tableau $\rightpart(T)$ is constructed in any way (for example in a greedy fashion) such that the following constraints are satisfied:
\begin{flalign}\label{eq:sameshape}
\text{$\rightpart(T)$ has the same shape as $S$,}
&&\end{flalign}
\begin{equation}
\label{eq:directreplacement}
\begin{minipage}{15.1cm}
a box in $S$ has entry $i$ iff there is some $\ell$ for which the corresponding box in $\rightpart(T)$ has entry $i_\ell$,
\end{minipage}
\end{equation}
\begin{flalign}\label{eq:symbolsappearDtimes}
\text{The symbol $i_\ell$ appears in $\rightpart(T)$ and $\leftpart(T)$ together exactly $D$ many times,}
&&\end{flalign}
\begin{flalign}\label{eq:allsymbolsappearleftiffright}
\text{the symbol $i_\ell$ appears in $\rightpart(T)$ iff $i_\ell$ appears in $\leftpart(T)$}.
&&\end{flalign}
Such a tableau might not be unique, but we only care about its existence.
The existence can be shown as follows.

Let $n(i_\ell)$ denote the number of times the symbol $i_\ell$ appears in $\leftpart(T)$.
If $n(i_\ell)>0$, then Claim~\ref{cla:symbolsleft} implies that there are $n(i_\ell)>0$ many vertices $v$ in $H^{(i)}$ with $\ell(v)=i$. Using Def.~\ref{def:dkhypergraph}\eqref{eq:def:nameedgessizelessthanD} we see that $n(i_\ell)<D$.
We construct $\rightpart(T)$ by arbitrarily replacing $D-n(i_\ell)$ many entries $i$ in $S$ by the symbol $i_\ell$ for each $i$, $\ell$ for which $n(i_\ell)>0$.
This replaces exactly all entries of $S$, as Claim~\ref{cla:divisibility} below shows (recall that $i$ appears in $S$ exactly $D\varrho_i$ many times).
It is clear that this construction satisfies \eqref{eq:sameshape}, \eqref{eq:directreplacement} and \eqref{eq:symbolsappearDtimes}.
Since $0 < n(i_\ell) < D$ iff $0 < D-n(i_\ell) < D$, we conclude \eqref{eq:allsymbolsappearleftiffright}.

\begin{claim}\label{cla:divisibility}
        \[
        \forall i\in I: \quad \sum_{\ell \textup{ with } n(i_\ell)>0} (D-n(i_\ell)) = D\varrho_i.
        \]
\end{claim}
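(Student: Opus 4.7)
The plan is to mirror the argument given for Claim~\ref{EVENcla:divisibility} in the even case, since a $(D,\varrho_i)$-paired-hypergraph is in particular a $(D,\varrho_i)$-hypergraph (the pairing condition~\eqref{eq:hypergraphoddpairs} adds structure but does not remove any of the properties in Definition~\ref{def:dkhypergraph}), and Claim~\ref{cla:symbolsleft} gives the same count of occurrences of $i_\ell$ in $\leftpart(T)$ as in the even case.

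First I will use Definition~\ref{def:dkhypergraph}\ref{eq:def:nameblockdifference} for $H^{(i)}$, which says $|E_{\text{Name}}^{(i)}| - |E_{\text{Block}}^{(i)}| = \varrho_i$, and multiply both sides by $D$ to obtain
\[
D|E_{\text{Name}}^{(i)}| - D|E_{\text{Block}}^{(i)}| = D\varrho_i.
\]
Next, since the block edges form a set partition of the vertex set $V^{(i)}$ of $H^{(i)}$ and each block edge has size $D$ (Definition~\ref{def:dkhypergraph}\ref{eq:def:blockedgessizeD}), we have $D|E_{\text{Block}}^{(i)}| = |V^{(i)}|$. But the name edges also form a set partition of $V^{(i)}$ (Definition~\ref{def:dkhypergraph}\ref{eq:def:nameedgessizelessthanD}), so $|V^{(i)}| = \sum_{e \in E_{\text{Name}}^{(i)}} \size(e)$. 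Substituting yields
\[
\sum_{e \in E_{\text{Name}}^{(i)}} \bigl(D - \size(e)\bigr) = D\varrho_i.
\]

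Then I will translate the left-hand sum into a sum over $\ell$-indices. Since $\ell(v)$ is constant on each name edge $e$, write $\ell(e)$ for this common value; the map $e \mapsto \ell(e)$ is a bijection between $E_{\text{Name}}^{(i)}$ and $\{\ell(v) \mid v \in V^{(i)}\}$ by construction of the linear order on name edges. By Claim~\ref{cla:symbolsleft}, for every $e \in E_{\text{Name}}^{(i)}$ the symbol $i_{\ell(e)}$ appears in $\leftpart(T)$ exactly $\size(e) = n(i_{\ell(e)})$ many times, and conversely every $\ell$ with $n(i_\ell)>0$ arises this way. Reindexing the sum along this bijection gives
\[
\sum_{\ell\text{ with }n(i_\ell)>0} \bigl(D - n(i_\ell)\bigr) = \sum_{e \in E_{\text{Name}}^{(i)}} \bigl(D - \size(e)\bigr) = D\varrho_i,
\]
as desired.

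I do not anticipate any real obstacle: unlike the subsequent construction steps, where the odd case requires the extra pairing/barring machinery, this divisibility identity is a purely numerical statement about the hypergraph $H^{(i)}$, and the pairing structure plays no role. The only care needed is to confirm that Claim~\ref{cla:symbolsleft} (the odd-$D$ analogue) still gives the clean equality $n(i_{\ell(e)}) = \size(e)$; this is exactly what it states, because $\leftpart(T)$ is constructed via \eqref{eq:leftpartmadefromblocks} from columns that place $i_{\ell(v)}$ in row $i$ of the column indexed by $v$ (see \eqref{eq:iBv}, and note that \eqref{eq:iBzetaioneh}--\eqref{eq:jBzetaheq} only modify entries in rows $j \neq i$ of link-vertex columns for $i \neq h$, so they do not affect the count of $i_\ell$).
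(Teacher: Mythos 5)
Your proposal is correct and follows essentially the same route as the paper's own proof: combine Definition~\ref{def:dkhypergraph}\ref{eq:def:nameblockdifference} with the two set-partition properties to get $\sum_{e \in E_{\text{Name}}^{(i)}}(D-\size(e)) = D\varrho_i$, then use Claim~\ref{cla:symbolsleft} to identify $\size(e)$ with $n(i_{\ell(e)})$ and reindex the sum over the distinct values $\ell(e)$. Your closing observation that the pairing structure of the odd case plays no role here matches the paper, whose argument is verbatim the even-case one.
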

\begin{proof}
Since $H^{(i)}$ satisfies Def.~\ref{def:dkhypergraph}\ref{eq:def:nameblockdifference} we have that
\[
|E^{(i)}_{\text{Name}}|-|E^{(i)}_{\text{Block}}| = \varrho_i
\]
and hence
\begin{equation} \label{proof_C_*}
D|E^{(i)}_{\text{Name}}|-D|E^{(i)}_{\text{Block}}| = D\varrho_i \tag{*}.
\end{equation}
Moreover Def.~\ref{def:dkhypergraph}\ref{eq:def:blockedgessizeD} states that block edges form a set partition of $V$ and each block edge has size $D$. Together with the fact that the name edges form a set partition of $V$ (Def.~\ref{def:dkhypergraph}\ref{eq:def:nameedgessizelessthanD}) we see that
$
D|E^{(i)}_{\text{Block}}| = \sum_{e \in E_{\text{Name}}^{(i)}} \size(e).
$
Together with \eqref{proof_C_*} we obtain
$
D|E^{(i)}_{\text{Name}}| - \sum_{e \in E_{\text{Name}}^{(i)}} \size(e) = D\varrho_i
$
and hence
\[
\sum_{e \in E_{\text{Name}}^{(i)}} (D-\size(e)) = D\varrho_i.
\]

Since for each vertex $v$ in a name edge $e$ the value $\ell(v)$ is the same, we write $\ell(e):=\ell(v)$.
From Claim~\ref{cla:symbolsleft} we know that for all $e\in E_{\text{Name}}^{(i)}$ we have $n(i_{\ell(e)})=\size(e)$.
Therefore
        \[
        \sum_{e \in E_{\text{Name}}^{(i)}} (D-n(i_{\ell(e)})) = D\varrho_i
        \]

All numbers $\ell(e)$, $e \in E_{\text{Name}}^{(i)}$, are distinct by definition.
Hence all symbols $i_{\ell(e)}$ are distinct.
All $i_{\ell(e)}$ satisfy $n(i_{\ell(e)})>0$ by Claim~\ref{cla:symbolsleft}.
Moreover, for each $\ell$ with $n(i_\ell)>0$ there exists some $e$ with $\ell(e)=\ell$ also by Claim~\ref{cla:symbolsleft}.
Therefore we can rewrite the sum as
        \[
        \sum_{\ell \textup{ with } n(i_\ell)>0} (D-n(i_\ell)) = D\varrho_i,
        \]
        which concludes the proof.
\end{proof}

An example of the whole construction can be seen in Figure~\ref{fig:fullexample}.

We draw some quick corollaries.
\begin{claim}\label{cla:symbols}
If $i \notin I$, then the symbol $i_\ell$ does not appear in $T$ for any $\ell$.
For a fixed $i \in I$, the symbol $i_\ell$ appears in $T$ iff $i_\ell$ appears in $\leftpart(T)$ iff $i_\ell$ appears in $\rightpart(T)$ iff there is a vertex $v$ in $H^{(i)}$ with $\ell(v)=\ell$.
\end{claim}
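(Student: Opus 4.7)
The plan is to derive this as an immediate corollary of the two main structural facts already established about how $T$ is assembled from its left and right parts. The key observation is that $T$ is the concatenation $\leftpart(T) + \rightpart(T)$, so the multiset of symbols appearing in $T$ is just the union of those appearing in $\leftpart(T)$ and in $\rightpart(T)$. Hence a symbol of the form $i_\ell$ appears in $T$ iff it appears in one of the two parts. The condition \eqref{eq:allsymbolsappearleftiffright} collapses ``appears in $\leftpart(T)$ or in $\rightpart(T)$'' to the single condition ``appears in $\leftpart(T)$'' (equivalently, ``appears in $\rightpart(T)$''). So the whole claim reduces to a statement about $\leftpart(T)$, which is precisely what Claim~\ref{cla:symbolsleft} already provides: for $i \notin I$ no symbol $i_\ell$ appears in $\leftpart(T)$ at all, and for $i \in I$ the symbol $i_\ell$ appears in $\leftpart(T)$ iff some vertex $v$ in $H^{(i)}$ has $\ell(v)=\ell$.

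Concretely, first I would unpack the first sentence: if $i \notin I$, apply Claim~\ref{cla:symbolsleft} to conclude $i_\ell \notin \leftpart(T)$ for every $\ell$, then apply \eqref{eq:allsymbolsappearleftiffright} to conclude $i_\ell \notin \rightpart(T)$, and combine the two to rule out $i_\ell$ from $T$. Second, for the chain of equivalences when $i \in I$, I would chain them in the order: appearing in $T$ $\Leftrightarrow$ appearing in $\leftpart(T)$ (from $T=\leftpart(T)+\rightpart(T)$ together with \eqref{eq:allsymbolsappearleftiffright}) $\Leftrightarrow$ appearing in $\rightpart(T)$ (again \eqref{eq:allsymbolsappearleftiffright}) $\Leftrightarrow$ existence of a vertex $v \in H^{(i)}$ with $\ell(v)=\ell$ (Claim~\ref{cla:symbolsleft}). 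There is no genuine obstacle here; this is the direct odd-$D$ analogue of Claim~\ref{EVENcla:symbols}, with the same two-line proof.
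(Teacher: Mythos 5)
Your proposal is correct and follows the paper's own argument: the paper proves this claim in one line by combining Claim~\ref{cla:symbolsleft} with \eqref{eq:allsymbolsappearleftiffright}, exactly as you do (you merely spell out the intermediate steps, including the observation that $T=\leftpart(T)+\rightpart(T)$). No gaps; this is the same two-line proof as in the even-$D$ case, Claim~\ref{EVENcla:symbols}.
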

\begin{proof}
We combine Claim~\ref{cla:symbolsleft} and \eqref{eq:allsymbolsappearleftiffright}.
\end{proof}

\begin{claim}\label{cla:jki}
If a symbol $j_k^i$ or $j_{\overline{k}}^i$ appears in $T$, then it appears exactly $D$ many times in $T$.
\end{claim}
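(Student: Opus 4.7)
The plan is to reduce the claim to counting symbols within $\leftpart(T)$, then to localize these symbols to a single block-edge pair in the appropriate hypergraph, and finally to track multiplicities through the three-stage construction $B \to \acute{B} \to \check{B}$.

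First, by \eqref{eq:directreplacement}, every entry of $\rightpart(T)$ is of the form $i_\ell$, so any symbol $j_k^i$ or $j_{\overline{k}}^i$ appearing in $T$ must appear entirely within $\leftpart(T) = \sum_{i'\in I}\sum_{e\in E_{\textup{Block}}^{(i')}}\check{B}_e$. Next, I would inspect the columnwise descriptions \eqref{eq:iBv}--\eqref{eq:jBzetaheq} of $\check B_v$ to see which columns can possibly contain a symbol with fixed superscript $i$ and subscript in $\{k,\overline k\}$. By \eqref{eq:iBv}--\eqref{eq:jBv} such a symbol can appear in $\check B_v$ only when $v$ lies in the $k$-th block pair of $H^{(i)}$; the only exception is the swap at link vertices \eqref{eq:iBzetaioneh}, \eqref{eq:jBzetahneq}, \eqref{eq:jBzetaheq}, which relocates a symbol of the form $i'^h_{\cdot}$ (originally in column $\zeta^{(h)}$ of $H^{(h)}$) into column $\zeta^{(i')}$ of $H^{(i')}$ for $i'\neq h$, without creating a new copy. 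In particular, if $j_k^i$ or $j_{\overline k}^i$ appears in $T$ at all, then $i\in I$ and $k$ indexes an actual block pair in $H^{(i)}$.

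Fix such an $(i,k)$, let $\{e,\overline e\}$ be the $k$-th block pair of $H^{(i)}$, and count. Observation \eqref{eq:Dmanysymbols} says that the concatenation $B_e+B_{\overline e}$ contains each of $j_k^i$ and $j_{\overline k}^i$ (for every $j\neq i$) exactly $D$ times. By \eqref{eq:shuffleacute}, the transition $B\to\acute B$ merely permutes the entries within rows of $B_e+B_{\overline e}$, so each symbol still occurs exactly $D$ times in $\acute B_e+\acute B_{\overline e}$. By Claim~\ref{cla:acutecheck}, the transition $\acute B \to \check B$ at most swaps a single entry between columns $\zeta^{(i')}$ and $\zeta^{(h)}$ (for each $i'\neq h$); such a swap can move an occurrence of $j_k^i$ or $j_{\overline k}^i$ from one column of $\leftpart(T)$ to another, but never creates or destroys an occurrence. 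Summing over all block edges of $H^{(i)}$ inside $\leftpart(T)$ therefore yields exactly $D$ occurrences of $j_k^i$ (respectively $j_{\overline k}^i$), proving the claim.

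The one step that requires genuine care is the second one above: ensuring that the swap described just before Claim~\ref{cla:acutecheck} moves a symbol of shape $j_k^i$ rather than deleting or duplicating it. Concretely, one checks that when $i'\neq h$, the $i'$-th entry of $\acute B_{\zeta^{(h)}}$ is a symbol of the form $i'^{\,h}_{\cdot}$ (by \eqref{eq:acutejBv}), while the $i'$-th entry of $\acute B_{\zeta^{(i')}}$ is $i'_1$, a single-index symbol (by \eqref{eq:acuteiBv}, \eqref{eq:ellzetaione}); the swap thus relocates one two-index symbol and one one-index symbol between two columns, both of which remain columns of $\leftpart(T)$, so global counts of every symbol type in $\leftpart(T)$ are preserved. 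Once this swap-invariance is pinned down, the rest is the straightforward multiplicity-tracking outlined above.
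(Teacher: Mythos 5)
Your proof is correct and takes essentially the same route as the paper: the paper also reduces via \eqref{eq:directreplacement} to $\leftpart(T)$, observes that $\leftpart(T)$ is just a permutation of the box entries of $\sum_{i\in I}\sum_{e \in E_{\text{Block}}^{(i)}} B_e$, and concludes from \eqref{eq:niBv}, \eqref{eq:njBv} and the size-$D$ block edges. Your explicit multiplicity-tracking through $B \to \acute B \to \check B$ (row permutations, then the count-preserving link-vertex swap) simply spells out that permutation in detail.
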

\begin{proof}
By \eqref{eq:directreplacement} the symbols  $j_k^i$ and $j_{\overline{k}}^i$ only appear in $\leftpart(T)$.
Consider \eqref{eq:leftpartmadefromblocks}
and observe that $\leftpart(T)$ is obtained by a permutation of the box entries of the tableau
\[
\sum_{i\in I} \sum_{e \in E_{\text{Block}}^{(i)}} B_e.
\]
Now use Def.~\ref{def:dkpairedhypergraph}\eqref{eq:def:blockedgessizeD}, \eqref{eq:niBv}, and \eqref{eq:njBv}.
\end{proof}

\section{Proof of the Tableau Lifting Theorem~\ref{thm:prolongation} for odd $D$}\label{sec:psipropertiesDodd}
In this section we prove the Tableau Lifting Theorem~\ref{thm:prolongation} for odd $D$.

First we observe that the shape of $T$ is indeed the required shape: This follows from Proposition~\ref{prop:hypergraphodd}, \eqref{eq:leftpartmadefromblocks}, and the fact that $B_e$ and $\check B_e$ have the same rectangular shape $m \times D$.

We remark that every symbol in $T$ appears exactly $D$ many times: For the symbols $j_k^i$ and $j_{\overline{k}}^i$ this follows from Claim~\ref{cla:jki}. For the symbols $i_\ell$ this follows from \eqref{eq:symbolsappearDtimes}.

It remains to prove the parts \eqref{enum:rightpartinSmS}, \eqref{enum:duplex}, and \eqref{enum:existspreimage} of Theorem~\ref{thm:prolongation}.
We start with part~\eqref{enum:existspreimage},
then build up insights that then eventually lead to the proof of part~\eqref{enum:rightpartinSmS} and~\eqref{enum:duplex}.

\subsection*{Proof of part~\eqref{enum:existspreimage} of Theorem~\ref{thm:prolongation}}
Part~\eqref{enum:existspreimage} of Theorem~\ref{thm:prolongation} is proved as follows.
We choose $\varphi(i_\ell) := i$ and $\varphi(j_k^i) := j$ and $\varphi(j_{\overline k}^i) := j$.
We observe that $\rightpart(\varphi(T))=S$, see~\eqref{eq:directreplacement}.
Since $S$ is regular, $\rightpart(\varphi(T))$ is regular.
It remains to show that $\leftpart(\varphi(T))$ is also regular.
From Claim~\ref{cla:leftpart} we see that every column of $\leftpart(\varphi(T))$ contains all entries $1,\ldots,m$, sorted from top to bottom. Thus $\leftpart(\varphi(T))$ is regular.
Since $\leftpart(\varphi(T))$ and $\rightpart(\varphi(T))$ are both regular, we conclude that $\varphi(T)$ is regular, which finishes the proof of part \eqref{enum:existspreimage} of Theorem~\ref{thm:prolongation}.

\subsection*{Parts~\eqref{enum:rightpartinSmS} and~\eqref{enum:duplex} of Theorem~\ref{thm:prolongation}: Preliminaries}

In order to prove parts \eqref{enum:rightpartinSmS} and \eqref{enum:duplex} of Theorem~\ref{thm:prolongation}, we start with some preliminary observations.

\begin{claim}\label{cla:barcoincide}
If $\varphi(T)$ is regular, then for each $i \in I$, $j \neq i$, $k$ we have:
$\varphi(j_k^i) = \varphi(j_{\overline k}^i)$.
\end{claim}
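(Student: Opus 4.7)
The plan is to fix $i\in I$ and a block-edge pair index $k$, let $\{e,\overline e\}$ be the $k$-th block-edge pair of $H^{(i)}$ and let $v_b,\overline{v_b}$ be its two bridge vertices (by \eqref{eq:bridgebar} one of them, say $v_b$, lies in every $\barred(i,j,k)$ and the other in none). Write $\phi_k(j):=\varphi(j_k^i)$ and $\phi_{\overline k}(j):=\varphi(j_{\overline k}^i)$ for $j\neq i$. First I would examine the two bridge columns $\check B_{v_b}$ and $\check B_{\overline{v_b}}$: since $v_b$ and $\overline{v_b}$ share the bridge name edge we have $\ell(v_b)=\ell(\overline{v_b})$, so both columns carry the same row-$i$ symbol $i_{\ell(v_b)}$ whose $\varphi$-image I call $a$. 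Their row-$j$ entries (for $j\neq i$) are $\phi_{\overline k}(j)$ and $\phi_k(j)$ respectively, so regularity of these two columns immediately forces both $\phi_k$ and $\phi_{\overline k}$ to be bijections $\{1,\ldots,m\}\setminus\{i\}\to\{1,\ldots,m\}\setminus\{a\}$. Setting $\sigma:=\phi_k^{-1}\circ\phi_{\overline k}$ gives a permutation of $\{1,\ldots,m\}\setminus\{i\}$, and the claim reduces to $\sigma=\mathrm{id}$.

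For the heart of the argument I would assign to each $v\in e\cup\overline e$ the set $S_v:=\{j\neq i:v\in\barred(i,j,k)\}$ and its complement $T_v:=(\{1,\ldots,m\}\setminus\{i\})\setminus S_v$. By \eqref{eq:jBv} and \eqref{eq:iBzetajoneh} the row-$j$ entry of $\check B_v$ is $\phi_k(j)$ for $j\in T_v$ and $\phi_{\overline k}(j)=\phi_k(\sigma(j))$ for $j\in S_v$, provided the column has this standard form — which holds for every $v\in e\cup\overline e$ except $\check B_{\zeta^{(h)}}$ in the case $i=h,\,k=1$, whose row-$j$ entries for $j\in I\setminus\{h\}$ are the stray symbols $j_1$ (by \eqref{eq:jBzetahneq}). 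For every such standard-form column the row-$i$ entry's $\varphi$-image must equal $a$: otherwise, since $a$ is the unique element of $\{1,\ldots,m\}$ outside the image of $\phi_k$, regularity would require $a$ to appear among the row-$j$ values, which is impossible. Regularity then becomes $\{\phi_k(j):j\in T_v\}\sqcup\{\phi_k(\sigma(j)):j\in S_v\}=\{\phi_k(j):j\in\{1,\ldots,m\}\setminus\{i\}\}$ as multisets, and applying $\phi_k^{-1}$ this is equivalent to $T_v\sqcup\sigma(S_v)=\{1,\ldots,m\}\setminus\{i\}$, i.e.\ $\sigma(S_v)=S_v$.

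I would conclude by a distinctness argument on the $\{0,1\}$-matrix $M=(M_{v,j})$ with rows indexed by the standard-form vertices and columns by $j\neq i$, where $M_{v,j}=1$ iff $v\in\barred(i,j,k)$. If $\sigma(j)=j'\neq j$, then $\sigma(S_v)=S_v$ for every available $v$ forces $j\in S_v\Leftrightarrow j'\in S_v$, making columns $j$ and $j'$ of $M$ identical. But the $m-1$ sets $\barred(i,j,k)$ are by construction pairwise distinct size-$D$ subsets of the $2D$-element set $e\cup\overline e$, and since $|A\triangle B|$ is even and at least $2$ for any two distinct equal-size subsets, deleting the single row corresponding to $\zeta^{(h)}$ (in the exceptional case) still leaves the columns pairwise distinct. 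This contradicts the supposed coincidence and forces $\sigma=\mathrm{id}$, i.e.\ $\varphi(j_k^i)=\varphi(j_{\overline k}^i)$ for every $j\neq i$. The main subtlety — and the reason the argument hinges on the parity observation — is the exceptional column $\check B_{\zeta^{(h)}}$, which does not fit the bar/no-bar dichotomy and must be removed; the proof goes through only because distinctness of the $\barred$-family is robust under a single-vertex deletion.
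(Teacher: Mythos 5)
Your proof is correct, but it takes a genuinely different route from the paper's. The paper deletes one column to form the $m\times(2D-1)$ subtableau $\acute B'$, classifies the symbols $j_k^i,j_{\overline k}^i$ as \emph{abound} or \emph{scarce} by their number of occurrences there, and runs a global occurrence count (every value occurs exactly $2D-1$ times in the regular image) together with the partner relation to force $\varphi(j_k^i)=\varphi(j_{\overline k}^i)$. You instead argue column by column: the two bridge columns make $\phi_k$ and $\phi_{\overline k}$ bijections onto $\{1,\ldots,m\}\setminus\{a\}$, each remaining standard-form column of the pair forces the invariance $\sigma(S_v)=S_v$ for $\sigma=\phi_k^{-1}\circ\phi_{\overline k}$, and the pairwise distinctness of the sets $\barred(i,j,k)$ --- robust under deleting the single exceptional vertex because the symmetric difference of two distinct equal-size sets is even and at least $2$ --- forces $\sigma=\mathrm{id}$. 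Both arguments ultimately rest on the same two structural facts, the bridge property \eqref{eq:bridgebar} and the distinctness of the $\barred$ family (which the paper uses only implicitly, in the parenthetical ``by construction'' identifying partners as same-row pairs), but your version makes the role of distinctness explicit and quantifies exactly how much slack the construction tolerates (one deleted vertex), whereas the paper's abound/scarce count is a more global bookkeeping argument that is indifferent to which non-bridge column is removed. One point you leave implicit: your bridge-column step needs the bridge vertices not to be link vertices (otherwise the check-step would have altered their row-$i$ entry, and for $\zeta^{(h)}$ other rows as well); this holds by the construction in Proposition~\ref{prop:hypergraphodd} and is exactly the fact the paper records with ``Note that $\zeta^{(i)}$ is not a bridge vertex,'' so it is a missing sentence rather than a gap.
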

\begin{proof}
Fix $i,k$, but do not fix $j$.
For notational convenience we define $\acute B := \acute B_{e} + \acute B_{\overline e}$.
Let
\[
S := \{j_k^i \mid j \neq i\} \cup \{j_{\overline k}^i \mid j \neq i\}
\]
denote the set of symbols in $\acute B$ in all rows $j \neq i$ (which is the same as the set of symbols in $B_e + B_{\overline e}$ in all rows $j \neq i$, see \eqref{eq:shuffleacute}).
Note that $|S| = 2(m-1) = 2m-2$.
Since ${B}_{e} + {B}_{\overline e}$ contains exactly the symbols $j_k^i$ and $j_{\overline k}^i$, both exactly $D$ many times (see \eqref{eq:Dmanysymbols}), the same is true for row $j$ of $\acute{B}$ (again, by \eqref{eq:shuffleacute}).

The tableau $\check B_{e} + \check B_{\overline e}$ differs from $\acute B$ iff $\zeta^{(i)} \in e \cup \overline e$. In this case all differences are in the column that corresponds to the link vertex $\zeta^{(i)}$ (see Claim~\ref{cla:acutecheck}).

If $\zeta^{(i)} \in e \cup \overline e$, then define $\acute B'$ as the $m \times (2D-1)$ tableau that is obtained from $\acute B$ by removing the column corresponding to $\zeta^{(i)}$. Note that $\zeta^{(i)}$ is not a bridge vertex.
If $\zeta^{(i)} \notin e \cup \overline e$, then define $\acute B'$ as the $m \times (2D-1)$ tableau that is obtained from $\acute B$ by removing a single arbitrary column that does not correspond to a bridge vertex.

A symbol $s \in S$ is called \emph{abound} if it appears $D$ many times in $\acute B'$.
If $s \in S$ appears $D-1$ many times in $\acute B'$, then $s$ is called \emph{scarce}.
Note that each $s\in S$ is either abound or scarce.

Note that $\acute B'$ is a subtableau of $T$ and hence $\varphi(\acute B')$ is regular.
Let $v,w$ denote the two bridge vertices. Since by definition they lie in the same name edge, we have $\ell(v)=\ell(w)$. Let $\ell := \ell(v)$.
Note that $\acute B_v$ and $\acute B_w$ both have the symbol $i_\ell$ in row $i$, see \eqref{eq:acuteiBv}.
Since $\varphi(\acute B_v + \acute B_w)$ is regular and since the rows $j\neq i$ of $\acute B_v + \acute B_w$ contains all symbols from $S$ (see \eqref{eq:bridge}), it follows that
\begin{equation}\label{eq:P}
\forall s \in S: \ \varphi(s) \in \underbrace{\{1,\ldots,m\}\setminus\{\varphi(i_\ell)\}}_{=: P}.
\end{equation}
Note that $|P| = m-1$.

Two symbols $s,t \in S$ are called \emph{partners} if $s$ appears exactly in those columns of $\acute B$ in which $t$ does not appear. In this case (by construction) we have
$s=j_k^i$ and $t=j_{\overline k}^i$, or $t=j_k^i$ and $s=j_{\overline k}^i$.
If two symbols $s,t$ are not partners, then there exists a column in $\acute B$ that contains both $s$ and $t$,
because both $s$ and $t$ appear in exactly $D$ many columns in the shape $m\times (2D)$ tableau $\acute B$.

Clearly, for each partnership one partner is abound and the other is scarce.

Since $\varphi(B')$ has shape $m \times (2D-1)$,
every symbol from $\{1,\ldots,m\}$ appears exactly $2D-1$ times in $\varphi(B')$.
Therefore for every symbol $p \in P$ there is at most one abound symbol $s \in S$
with $\varphi(s)=p$. Since $S$ contains $m-1$ many abound symbols,
from \eqref{eq:P} it follows that for each $p \in P$ there is \emph{exactly} one abound symbol $s \in S$ with $\varphi(s)=p$.
Having seen this, it follows that for each $p\in P$ there is at most one scarce symbol $t \in S$ with $\varphi(s)=p$. Again, since $S$ contains $m-1$ many scarce symbols and because of \eqref{eq:P},
there is \emph{exactly} one scarce symbol $t \in S$ with the property that $\varphi(t)=p$.

Given an abound symbol $s \in S$ and a symbol $t\in S$ that is not the partner of $s$, then in $\acute B'$ there is a column that contains both $s$ and $t$.
Therefore $\varphi(s)\neq \varphi(t)$ if $s$ is abound and $t$ is not the partner of $s$.
We conclude that for every $p \in P$ there is exactly one abound $s \in S$ and its scarce partner $t\in S$ that have $\varphi(s)=\varphi(t) = p$.
This implies the claim.
\end{proof}

\begin{claim}\label{cla:varphii}
If $\varphi(T)$ is regular, then for each $i \in I$ we have:
For every $i_\ell$ that appears in $T$,
$\varphi(i_\ell)$ only depends on $i$ and does not depend on $\ell$.
\end{claim}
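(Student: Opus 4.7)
The plan is to mirror the structure of the proof of the analogous even-$D$ result (Claim~\ref{EVENcla:varphii}) and use Claim~\ref{cla:barcoincide} to collapse the distinction between the symbols $j_k^i$ and $j_{\overline k}^i$. Concretely, I will prove that the function $v\mapsto \varphi(i_{\ell(v)})$ is constant on the vertex set of $H^{(i)}$; since, by Claim~\ref{cla:symbols}, every $i_\ell$ that appears in $T$ is of the form $i_{\ell(v)}$ for some vertex $v$ in $H^{(i)}$, this will yield the claim.

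First I would dispatch two kinds of ``local constancy''. For a name edge $e$ in $H^{(i)}$, all $v\in e$ share the same $\ell(v)$ by definition, so $v\mapsto \varphi(i_{\ell(v)})$ is trivially constant on $e$. The substantive step is to show the same constancy on every pair $\{e,\overline e\}$ of block edges. Fix such a pair indexed by $k$, and let $v\in e\cup\overline e$ be any non-link vertex. By \eqref{eq:jBv}, for each $j\neq i$ the $j$-th entry of $\check B_v$ is either $j_k^i$ or $j_{\overline k}^i$; by Claim~\ref{cla:barcoincide}, both of these symbols have the same $\varphi$-value, which I will call $\alpha_j$. The regularity of $\varphi(\check B_v)$ (which follows from regularity of $\varphi(T)$) forces the $m-1$ values $\alpha_j$, $j\neq i$, to be pairwise distinct elements of $\{1,\dots,m\}$, and hence forces $\varphi(i_{\ell(v)})$ to equal the unique remaining element of $\{1,\dots,m\}\setminus\{\alpha_j \mid j\neq i\}$. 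This value depends only on $i$ and $k$, so $\varphi(i_{\ell(v)})$ is the same for every non-link $v\in e\cup\overline e$.

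Next I would handle the link vertex. If $\zeta^{(i)}\in e\cup\overline e$, property~\eqref{prop:hypergraphshareblockname} of Definition~\ref{def:dkhypergraph} guarantees a second vertex $\xi^{(i)}$ lying in both the same name edge and the same block edge as $\zeta^{(i)}$; in particular $\xi^{(i)}$ is not a link vertex and $\ell(\zeta^{(i)})=\ell(\xi^{(i)})$. Then $\varphi(i_{\ell(\zeta^{(i)})})=\varphi(i_{\ell(\xi^{(i)})})$, and the right-hand side equals the common value established above for non-link vertices of $e\cup\overline e$. Thus $v\mapsto \varphi(i_{\ell(v)})$ is constant on all of $e\cup\overline e$.

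Finally I would glue these local constancies together using the connectedness of $H^{(i)}$ (Definition~\ref{def:dkhypergraph}\ref{eq:def:connected}): any two vertices of $H^{(i)}$ are joined by a sequence of edges, each of which is either a name edge (giving equal $\ell$-values) or a block edge (hence lying in some block-edge pair, for which the previous paragraphs give the constancy of $\varphi(i_{\ell(\cdot)})$). The main obstacle in the argument is the link vertex: its column $\check B_{\zeta^{(i)}}$ has an anomalous $i$-th entry $i_1^h$ (see \eqref{eq:iBzetaioneh}), so one cannot directly deduce the value of $\varphi(i_1)$ from its own column; instead, property~\eqref{prop:hypergraphshareblockname} is used precisely to route around this column via the companion vertex $\xi^{(i)}$, which is exactly why Definition~\ref{def:dkhypergraph}\ref{prop:hypergraphshareblockname} was imposed.
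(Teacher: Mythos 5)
Your proposal is correct and follows essentially the same route as the paper's proof: trivial constancy on name edges, forcing the value of $\varphi(i_{\ell(v)})$ from the regularity of the column $\varphi(\check B_v)$ together with Claim~\ref{cla:barcoincide} (so that $j_k^i$ and $j_{\overline k}^i$ collapse to one value), routing around the anomalous link-vertex column via the companion vertex $\xi^{(i)}$ provided by Definition~\ref{def:dkhypergraph}\ref{prop:hypergraphshareblockname}, and gluing by connectedness. The only cosmetic difference is that you establish constancy on a whole block-edge pair $e\cup\overline e$ at once, whereas the paper argues per single block edge; the underlying argument is identical.
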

\begin{proof}
By definition, for every name edge $e$ in $H^{(i)}$ the values $\ell(v)$ coincide for all $v \in e$. This trivially implies that
\begin{equation}\label{eq:nameedgevarphicoincideNEW}
\text{for every name edge $e$ in $H^{(i)}$:
the values $\varphi(i_{\ell(v)})$ coincide for all $v \in e$.}
\end{equation}

We claim that
\begin{equation}\label{eq:blockedgevarphicoincideNEW}
\text{for every block edge $e$ in $H^{(i)}$:
the values $\varphi(i_{\ell(v)})$ coincide for all $v \in e$.}
\end{equation}
Proof:
Let $k:=k(e)$.
According to Def.~\ref{def:dkhypergraph}\eqref{eq:def:nameblockdifference} there exists a vertex $\xi^{(i)}\neq\zeta^{(i)}$ that has the same name edge and block edge as $\zeta^{(i)}$, i.e., $\ell(\zeta^{(i)}) = \ell(\xi^{(i)})$ and $k=k(\zeta^{(i)}) = k(\xi^{(i)})$.
For each $v \in e$, $v \neq \zeta^{(i)}$ we have that from each of the $m-1$ sets
$\{j_k^i,j_{\overline k}^i\}$, $1\leq j \leq m$, $j \neq i$, there is one symbol in the column $\check B^{(i)}_v$, see \eqref{eq:jBv}.
Moreover, the symbol that appears as the $i$-th entry of $\check B^{(i)}_v$
is $i_{\ell(v)}$, see \eqref{eq:iBv}.
Since $\varphi(T)$ is regular, Claim~\ref{cla:barcoincide} implies that $\varphi(j_k^i)=\varphi(j_{\overline k}^i)$, which we will use implicitly in the upcoming argument.
For each $v \neq \zeta^{(i)}$ we have that $\check B^{(i)}_v$ is a column in $T$.
In this case, since by assumption $\varphi(T)$ is regular,
it follows that $\varphi(\check B^{(i)}_v)$ is regular and hence the $\varphi(j_k^i)$ are pairwise distinct. Thus $\varphi(i_{\ell(v)})$ equals the one element in $\{1,\ldots,m\} \setminus \{\varphi(j_k^i) \mid 1 \leq j \leq m, \ j\neq i\}$, which is independent of~$\ell$.
This implies that the values $\varphi(i_{\ell(v)})$ coincide for all $v \in e$, $v \neq \zeta^{(i)}$. This proves \eqref{eq:blockedgevarphicoincideNEW} for all $v\in e$, $v \neq \zeta^{(i)}$. Now, if $\zeta^{(i)} \in e$, then $\xi^{(i)} \in e$, for which we have $\ell(\zeta^{(i)}) = \ell(\xi^{(i)})$, and thus clearly $\varphi(i_{\ell(\zeta^{(i)})})=\varphi(i_{\ell(\xi^{(i)})})$.
This proves the claim \eqref{eq:blockedgevarphicoincideNEW}.

Since $H^{(i)}$ is connected (Def.~\ref{def:dkhypergraph}\ref{eq:def:connected}), we conclude with \eqref{eq:nameedgevarphicoincideNEW} and  \eqref{eq:blockedgevarphicoincideNEW}:
The values $\varphi(i_{\ell(v)})$ coincide for all $v$ in $H^{(i)}$.
Since the symbol $i_\ell$ appears in $T$ iff there is some vertex $v$ in $H^{(i)}$ with $\ell(v)=\ell$
(see Claim~\ref{cla:symbols}),
Claim~\ref{cla:varphii} follows.
\end{proof}
For $i\in I$ we define
\begin{equation}\label{eq:defvarphii}
\varphi^{\circ}(i):=\varphi(i_1).
\end{equation}
This definition is natural, because we saw in Claim~\ref{cla:varphii} that if $\varphi(T)$ is regular, then
\[
\varphi^{\circ}(i)=\varphi(i_1)=\varphi(i_2)=\ldots
\]

\begin{claim}\label{cla:differentphiNEW}
Let $\varphi(T)$ be regular. Let $i, j \in I$, $i\neq j$.
Then $\varphi^{\circ}(i)\neq \varphi^{\circ}(j)$.
\end{claim}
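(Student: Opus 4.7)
The plan is to mimic verbatim the proof of the even-case analogue (Claim~\ref{EVENcla:differentphiNEW}), since the relevant structural ingredient — the special column corresponding to the global link vertex $\zeta^{(h)}$ — has been constructed in exactly the same way in the odd-$D$ setting. First I would locate, for each $i \in I$, the entry $i_1$ inside a common column of $\leftpart(T)$. By \eqref{eq:hBzetah} the $h$-th entry of $\check B_{\zeta^{(h)}}$ is $h_1$, and by \eqref{eq:jBzetahneq} the $j$-th entry of $\check B_{\zeta^{(h)}}$ is $j_1$ for every $j \in I \setminus \{h\}$. Hence for every $i \in I$, the $i$-th entry of the single column $\check B_{\zeta^{(h)}}$ is exactly $i_1$.

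Next I would use regularity. Since $\check B_{\zeta^{(h)}}$ appears as one of the columns of $\leftpart(T)$ by construction \eqref{eq:leftpartmadefromblocks}, and $\varphi(T)$ is assumed regular, the column $\varphi(\check B_{\zeta^{(h)}})$ has pairwise distinct entries. In particular, for $i, j \in I$ with $i \neq j$, the $i$-th and $j$-th entries $\varphi(i_1)$ and $\varphi(j_1)$ of this column are distinct. Invoking definition \eqref{eq:defvarphii} of $\varphi^{\circ}$ then yields $\varphi^{\circ}(i) = \varphi(i_1) \neq \varphi(j_1) = \varphi^{\circ}(j)$, which is the claim.

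There is essentially no obstacle: the delicate work (the identification $\varphi(j_k^i)=\varphi(j_{\overline k}^i)$ of Claim~\ref{cla:barcoincide}, and the $\ell$-independence of $\varphi(i_\ell)$ in Claim~\ref{cla:varphii}) has already been carried out, so that the definition \eqref{eq:defvarphii} of $\varphi^{\circ}$ is meaningful. All that remains is to read off one fixed column of $\leftpart(T)$ and apply regularity, exactly as in the even case. Thus the argument is a two-line corollary of \eqref{eq:hBzetah}, \eqref{eq:jBzetahneq}, and the regularity hypothesis.
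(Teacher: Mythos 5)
Your proposal is correct and is essentially identical to the paper's own proof: both read off the entries $i_1$ and $j_1$ from the rows $i$ and $j$ of the single column $\check B_{\zeta^{(h)}}$ (via \eqref{eq:hBzetah} and \eqref{eq:jBzetahneq}), use regularity of $\varphi(T)$ to conclude $\varphi(i_1)\neq\varphi(j_1)$, and finish with the definition \eqref{eq:defvarphii}. No gaps.
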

\begin{proof}
The column $\check B_{\zeta^{(h)}}$ contains the symbol $i_1$ in row $i$ and the symbol $j_1$ in row $j$, see \eqref{eq:hBzetah} and \eqref{eq:jBzetahneq}.
The fact that $\varphi(T)$ is regular implies that $\varphi(i_1)\neq\varphi(j_1)$.
By definition \eqref{eq:defvarphii}, this concludes the proof.
\end{proof}

\begin{claim}\label{cla:droph}
Let $\varphi(T)$ be regular. Let $i \in I$, $i \neq h$.
Then $\varphi(i_{\overline 1}^h) = \varphi(i_1^h) = \varphi(i_1) = \varphi^{\circ}(i)$.
\end{claim}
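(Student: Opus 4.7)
The rightmost equality $\varphi(i_1) = \varphi^{\circ}(i)$ is simply the definition \eqref{eq:defvarphii}. The leftmost equality $\varphi(i_{\overline 1}^h) = \varphi(i_1^h)$ is an instance of Claim~\ref{cla:barcoincide}, applied with outer index $h$ in place of $i$, row index $i$ in place of $j$, and $k=1$; this is legitimate because $h \in I$ by definition and $i \neq h$ by assumption. All the work therefore lies in the middle equality $\varphi(i_1^h) = \varphi(i_1)$, which I would establish in the same spirit as the even-$D$ proof of Claim~\ref{EVENcla:droph}, with Claim~\ref{cla:barcoincide} absorbing the extra ambiguity introduced by the bar/unbar construction.

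Concretely, my first step is to invoke Def.~\ref{def:dkhypergraph}\eqref{prop:hypergraphshareblockname} to select a non-link vertex $\xi^{(i)} \neq \zeta^{(i)}$ of $H^{(i)}$ that lies in the same block edge and the same name edge as $\zeta^{(i)}$, so that $k(\xi^{(i)}) = 1 = k(\zeta^{(i)})$ and $\ell(\xi^{(i)}) = 1 = \ell(\zeta^{(i)})$. The two columns of $\leftpart(T)$ to compare are $\check B_{\xi^{(i)}}$ and $\check B_{\zeta^{(i)}}$. From \eqref{eq:iBv}--\eqref{eq:jBv}, the column $\check B_{\xi^{(i)}}$ has $i$-th entry $i_1$ and, for every $j \neq i$, a $j$-th entry of the form $j^i_{\kbar(i,j,\xi^{(i)})} \in \{j_1^i, j_{\overline 1}^i\}$. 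From \eqref{eq:iBzetaioneh}--\eqref{eq:iBzetajoneh}, the column $\check B_{\zeta^{(i)}}$ has $i$-th entry $i^h_{\kbar(h,i,\zeta^{(i)})} \in \{i_1^h, i_{\overline 1}^h\}$ and similarly shaped $j$-th entries $j^i_{\kbar(i,j,\zeta^{(i)})}$ for $j \neq i$.

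Next, I exploit the regularity of $\varphi(T)$. Both $\varphi(\check B_{\xi^{(i)}})$ and $\varphi(\check B_{\zeta^{(i)}})$ are regular length-$m$ columns, hence realize every value in $\{1,\ldots,m\}$ exactly once. By Claim~\ref{cla:barcoincide}, $\varphi(j^i_1) = \varphi(j^i_{\overline 1})$ for every $j \neq i$, so the multiset of values appearing in rows $j \neq i$ is $\{\varphi(j^i_1) : j \neq i\}$ in both columns; by regularity this set has cardinality $m-1$. The remaining entry in row $i$ of each column is therefore forced to be the unique missing element of $\{1,\ldots,m\}$, and in particular the two row-$i$ values agree: $\varphi(i_1) = \varphi(i^h_{\kbar(h,i,\zeta^{(i)})})$. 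A final application of Claim~\ref{cla:barcoincide} (with outer index $h$, $j=i$, $k=1$) rewrites the right-hand side as $\varphi(i_1^h)$, completing the chain of equalities.

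The main obstacle is essentially notational: because the $\kbar$ construction can independently flip bar/unbar subscripts in the two columns, it is not \emph{a priori} clear that $\check B_{\xi^{(i)}}$ and $\check B_{\zeta^{(i)}}$ carry comparable information on rows $j \neq i$. Claim~\ref{cla:barcoincide} is precisely the tool that collapses this ambiguity at the level of $\varphi$, and once it is invoked the odd-$D$ argument reduces to the same counting trick used in the even case.
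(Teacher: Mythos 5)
Your proposal is correct and follows essentially the paper's own argument: the outer equalities are handled exactly as in the paper (Claim~\ref{cla:barcoincide} for $\varphi(i_{\overline 1}^h)=\varphi(i_1^h)$ and \eqref{eq:defvarphii} for $\varphi(i_1)=\varphi^{\circ}(i)$), and the middle equality by the same counting argument, namely that a regular length-$m$ column whose rows $j\neq i$ carry the values $\varphi(j_1^i)$ (after collapsing bars via Claim~\ref{cla:barcoincide}) forces its row-$i$ value to be the unique missing element of $\{1,\ldots,m\}$. The only cosmetic difference is the choice of auxiliary column: you compare $\check B_{\zeta^{(i)}}$ with the genuine column $\check B_{\xi^{(i)}}$ supplied by Def.~\ref{def:dkhypergraph}\eqref{prop:hypergraphshareblockname}, whereas the paper compares it with $\acute B_{\zeta^{(i)}}$, which differs from it in a single box (Claim~\ref{cla:acutecheck}); both routes rest on the same lemmas and yield the same conclusion.
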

\begin{proof}
The first equality follows from Claim~\ref{cla:barcoincide}.
The last equality is \eqref{eq:defvarphii}.
We now prove the second equality.
Let $e$ be the block edge in $H^{(i)}$ that contains the link vertex $\zeta^{(i)}$.
Then $\check B_e^{(i)}$ is an $m \times D$ subtableau of $\leftpart(T)$, which differs from $\acute B_e^{(i)}$ only in a single entry in the length $m$ column corresponding to $\zeta^{(i)}$: The $i$-th entry of the column $\check B_{\zeta^{(i)}}$ is $i^{h}_{1}$ or $i^{h}_{\overline 1}$ instead of $i_{1}$, see Claim~\ref{cla:acutecheck}.
Hence $\varphi(\acute B_{\zeta^{(i)}})$ and $\varphi(\check B_{\zeta^{(i)}})$ are columns that coincide in all but at most this single box.
Since $\varphi(T)$ is regular and the $\varphi(T)$ only contains entries from $\{1,\ldots,m\}$ and the columns $\varphi(\acute B_{\zeta^{(i)}})$ and $\varphi(\check B_{\zeta^{(i)}})$ are of length $m$, we conclude with Claim~\ref{cla:barcoincide} (i.e., $\varphi(i^{h}_{1})=\varphi(i^{h}_{\overline 1})$)
that $\varphi(i^{h}_{1}) = \varphi(i_{1})$.
\end{proof}

\subsection*{Proof of part~\eqref{enum:rightpartinSmS} of Theorem~\ref{thm:prolongation}}

We now prove part \eqref{enum:rightpartinSmS} of Theorem~\ref{thm:prolongation}.
The tableau $\rightpart(T)$ only contains entries $i_\ell$ and no entries $j_k^i$ or $j_{\overline k}^i$, see \eqref{eq:directreplacement}.
As also seen in \eqref{eq:directreplacement},
if $\rightpart(T)$ contains an entry $i_\ell$, then the corresponding entry of $S$ is $i$.
Therefore $\varphi^{\circ}(S) = \varphi(\rightpart(T))$,
where we lifted the map $\varphi^{\circ} : I \to \{1,\ldots,m\}$ to a map with the same name that is defined on tableaux with entries from $I$.
Claim~\ref{cla:differentphiNEW} proves property \eqref{enum:rightpartinSmS} of Theorem~\ref{thm:prolongation}.

\subsection*{Proof of part~\eqref{enum:duplex} of Theorem~\ref{thm:prolongation}}

The rest of this section is devoted to proving part~\ref{enum:duplex} of Theorem~\ref{thm:prolongation}.
A rectangular tableau whose columns all coincide is called \emph{uniform}.
In the following proof we will crucially use that a uniform tableau with an even number of columns is duplex.
Indeed, we prove part~\ref{enum:duplex} of Theorem~\ref{thm:prolongation}
by showing that if $\varphi(T)$ is regular, then:
\begin{itemize}
 \item[(I)] $\varphi(\check B_{e})$ is uniform if $e$ does not contain any link vertex $\zeta^{(i)}$,
 \item[(II)] $\varphi(\check B_{e})$ is uniform if $\zeta^{(i)} \in e$ for $i \neq h$, and
 \item[(III)] $\varphi(\check B_{e})$ is uniform if $\zeta^{(h)} \in e$.
\end{itemize}
It is clear that these three properties cover all cases and hence $\varphi(T)$ is uniform by construction \eqref{eq:leftpartmadefromblocks}.
This implies part~\ref{enum:duplex} of Theorem~\ref{thm:prolongation}.

We start with proving (I).
\begin{claim}\label{cla:ithentry}
Let $\varphi(T)$ be regular.
Given a block edge $e$ in $H^{(i)}$.
For all $v \in e$, $v \neq \zeta^{(i)}$, we have that
the $i$-th entry of $\varphi(\check B_v)$ is $\varphi^{\circ}(i)$.
\end{claim}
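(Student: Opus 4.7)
The plan is to read off the $i$-th entry of $\check B_v$ from the construction and then invoke Claim~\ref{cla:varphii} to conclude that $\varphi$ produces a value that does not depend on $v$.

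Concretely, first I would note that since $v \neq \zeta^{(i)}$, the vertex $v$ is not a link vertex of $H^{(i)}$. If $v$ is not a link vertex of any $H^{(j)}$, then by construction (see the paragraph ``From $\acute B$ to $\check B$'') we have $\check B_v = \acute B_v$, and if $v$ happened to be a link vertex of some other $H^{(j)}$ with $j \neq i$ then $v$ would not even lie in a block edge of $H^{(i)}$, contradicting $v \in e$. Either way, the entry in row $i$ of $\check B_v$ is determined by \eqref{eq:iBv} (equivalently \eqref{eq:acuteiBv}) to be $i_{\ell(v)}$.

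Next I would apply $\varphi$ and invoke Claim~\ref{cla:varphii}: under the hypothesis that $\varphi(T)$ is regular, $\varphi(i_\ell)$ is independent of $\ell$ for every $\ell$ such that $i_\ell$ occurs in $T$. Since $v$ is a vertex of $H^{(i)}$, the symbol $i_{\ell(v)}$ does occur in $T$ (Claim~\ref{cla:symbols}), so $\varphi(i_{\ell(v)}) = \varphi(i_1)$. Finally, by the definition \eqref{eq:defvarphii} of $\varphi^{\circ}$, this common value equals $\varphi^{\circ}(i)$, which gives the claim.

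There is no real obstacle here: the statement is essentially a one-line corollary of Claim~\ref{cla:varphii} and the explicit description of $\check B_v$ for non-link vertices $v$. The only point that needs a moment of care is verifying that $\check B_v$ and $\acute B_v$ really do agree in row $i$ for the $v$ under consideration, which is immediate from Claim~\ref{cla:acutecheck} (the $\acute B$/$\check B$ discrepancy, when present, occurs only in the column corresponding to a link vertex).
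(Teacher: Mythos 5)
Your proposal is correct and follows essentially the same route as the paper's proof, which simply combines \eqref{eq:iBv} with Claim~\ref{cla:varphii}; your extra remarks (that $v$ cannot be a link vertex, via Claim~\ref{cla:acutecheck}, and that $i_{\ell(v)}$ occurs in $T$ by Claim~\ref{cla:symbols}) just make explicit what the paper leaves implicit. Nothing further is needed.
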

\begin{proof}
Combine \eqref{eq:iBv} and Claim~\ref{cla:varphii}.
\end{proof}
\begin{claim}\label{cla:jBvneqzeta}
Let $\varphi(T)$ be regular.
Given a block edge $e$ in $H^{(i)}$.
For all $j \neq i$ we have that the set
\[
\{
\text{$j$-th entry of $\varphi(\check B_v)$} \mid v \in e, v \neq \zeta^{(i)}
\}
\]
consists of the single element $\varphi(j_{k(e)}^{i})$.
\end{claim}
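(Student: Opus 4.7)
The plan is to read off the $j$-th entry of $\check B_v$ directly from the construction and then collapse the two possible values using Claim~\ref{cla:barcoincide}. Concretely, fix $v \in e$ with $v \neq \zeta^{(i)}$. Then $v$ is not a link vertex, so by \eqref{eq:jBv} the $j$-th entry of $\check B_v$ is $j_{\kbar(i,j,v)}^{i}$, and by the definition of $\kbar$ this entry is either $j_{k(v)}^{i}$ or $j_{\overline{k(v)}}^{i}$, depending on whether $v \in \barred(i,j,k(v))$.

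Since $v \in e$ and $e$ lies in the $k(e)$-th block edge pair, we have $k(v) = k(e)$, so the $j$-th entry of $\check B_v$ is either $j_{k(e)}^{i}$ or $j_{\overline{k(e)}}^{i}$. Applying $\varphi$ and then invoking Claim~\ref{cla:barcoincide} (which is available because $\varphi(T)$ is regular), both possibilities map to the same value $\varphi(j_{k(e)}^{i}) = \varphi(j_{\overline{k(e)}}^{i})$. Hence the $j$-th entry of $\varphi(\check B_v)$ equals $\varphi(j_{k(e)}^{i})$ for every such $v$, proving that the set in question is the singleton $\{\varphi(j_{k(e)}^{i})\}$.

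There is no real obstacle here: the content of the claim is already packaged in the earlier equation \eqref{eq:jBv} together with Claim~\ref{cla:barcoincide}, and the only thing to verify is the bookkeeping that $k(v) = k(e)$ for $v \in e$, which is immediate from the definition of $k$. This is exactly the odd-$D$ analogue of Claim~\ref{EVENcla:jBvneqzeta}, with the extra wrinkle that in the odd case the symbol need not be $j_{k(e)}^i$ literally but only up to the barred/unbarred ambiguity, which is precisely what Claim~\ref{cla:barcoincide} was set up to absorb.
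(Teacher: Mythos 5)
Your proof is correct and follows exactly the paper's argument: the paper's proof of this claim is precisely to combine \eqref{eq:jBv} with Claim~\ref{cla:barcoincide}, and your expansion (reading off the entry as $j_{\kbar(i,j,v)}^{i}$, noting $k(v)=k(e)$ for $v\in e$, and collapsing the barred/unbarred ambiguity via Claim~\ref{cla:barcoincide}) is just the same reasoning spelled out in detail.
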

\begin{proof}
Combine \eqref{eq:jBv} and Claim~\ref{cla:barcoincide}.
\end{proof}
Combining Claim~\ref{cla:ithentry} and Claim~\ref{cla:jBvneqzeta} we see that
(I) is true.

We now prove (II). Let $i \neq h$ and let $e$ be the block edge in $H^{(i)}$ that contains $\zeta^{(i)}$. Note that $k(e)=1$.
\begin{claim}\label{cla:almostexceptionalcolumn}
Let $\varphi(T)$ be regular.
Then  $\varphi(\check B_{\zeta^{(i)}})$ coincides with $\varphi(\check B_v)$, $v \in e$, $i \neq h$.
\end{claim}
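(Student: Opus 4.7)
The plan is to mirror the structure of the proof of Claim~\ref{EVENcla:almostexceptionalcolumn} in the even case, but with two extra pieces of bookkeeping to account for the barred/unbarred distinction that is present only for odd $D$. Concretely, I will compare the columns $\varphi(\check B_{\zeta^{(i)}})$ and $\varphi(\check B_v)$ entrywise, in two cases ($j \neq i$ and $j = i$), after first handling the trivial subcase $v = \zeta^{(i)}$.

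For $v \in e$ with $v \neq \zeta^{(i)}$, both columns of $\check B$ correspond to vertices lying in the $k=1$ block edge pair, so $k(v) = k(\zeta^{(i)}) = 1$. In Case~1 ($j \neq i$), I will apply \eqref{eq:jBv} to read off that the $j$-th entry of $\check B_v$ is $j_{\kbar(i,j,v)}^i \in \{j_1^i, j_{\overline 1}^i\}$, and \eqref{eq:iBzetajoneh} to read off that the $j$-th entry of $\check B_{\zeta^{(i)}}$ is $j_{\kbar(i,j,\zeta^{(i)})}^i \in \{j_1^i, j_{\overline 1}^i\}$. The symbols $j_1^i$ and $j_{\overline 1}^i$ may or may not agree, but by Claim~\ref{cla:barcoincide} their images under $\varphi$ coincide, so the $j$-th entries of $\varphi(\check B_v)$ and $\varphi(\check B_{\zeta^{(i)}})$ are equal. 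In Case~2 ($j = i$), the $i$-th entry of $\check B_v$ is $i_{\ell(v)}$ by \eqref{eq:iBv}, and Claim~\ref{cla:ithentry} gives that its image under $\varphi$ is $\varphi^{\circ}(i)$. The $i$-th entry of $\check B_{\zeta^{(i)}}$ is $i_{\kbar(h,i,\zeta^{(i)})}^h \in \{i_1^h, i_{\overline 1}^h\}$ by \eqref{eq:iBzetaioneh}, and Claim~\ref{cla:droph} tells us that $\varphi(i_1^h) = \varphi(i_{\overline 1}^h) = \varphi^{\circ}(i)$. So these $i$-th entries also match.

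Combining the two cases shows that $\varphi(\check B_v) = \varphi(\check B_{\zeta^{(i)}})$ for all $v \in e$, which is exactly the content of the claim. The only real obstacle compared to the even case is ensuring that neither $\kbar(i,j,\cdot)$ nor $\kbar(h,i,\cdot)$ can force a mismatch between barred and unbarred variants, which is precisely why Claims~\ref{cla:barcoincide} and~\ref{cla:droph} were established earlier; once these are invoked, the proof is essentially a line-by-line analogue of Claim~\ref{EVENcla:almostexceptionalcolumn}.
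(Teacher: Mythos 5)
Your proof is correct and follows essentially the same entrywise case analysis as the paper: for $j\neq i$ it uses \eqref{eq:jBv}, \eqref{eq:iBzetajoneh} and Claim~\ref{cla:barcoincide}, and for $j=i$ it uses \eqref{eq:iBv}/\eqref{eq:iBzetaioneh} together with Claims~\ref{cla:ithentry} (equivalently Claim~\ref{cla:varphii}) and~\ref{cla:droph}, just as in the paper's argument. The only difference is cosmetic: your citation of \eqref{eq:iBzetajoneh} in Case~1 is in fact the more accurate reference than the one printed in the paper, and your explicit appeal to Claim~\ref{cla:ithentry} in Case~2 is a slightly more careful phrasing of the same step.
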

\begin{proof}
We compare the columns entrywise.
Note that $k(v)=k(\zeta^{(i)})=1$.
We make a case distinction.

Case 1: Let $j \neq i$.
The $j$-th entry of $\check B_v$ is either $j_{1}^i$ or $j_{\overline 1}^i$, see \eqref{eq:jBv}.
The $j$-th entry of $\check B_{\zeta^{(i)}}$ is either $j_{1}^i$ or $j_{\overline 1}^i$, see \eqref{eq:jBzetaheq}.
Using Claim~\ref{cla:barcoincide} we see that the $j$-th entry of $\varphi(\check B_v)$ equals the $j$-th entry of $\varphi(\check B_{\zeta^{(i)}})$.

Case 2:
The $i$-th entry of $\check B_v$ is $i_1$, see \eqref{eq:iBv}.
The $i$-th entry of $\check B_{\zeta^{(i)}}$ is $i_1^h$, see \eqref{eq:iBzetaioneh}. Hence Claim~\ref{cla:droph} implies that
the $i$-th entry of $\varphi(\check B_v)$ equals the $i$-th entry of $\varphi(\check B_{\zeta^{(i)}})$.
\end{proof}
It follows from Claim~\ref{cla:almostexceptionalcolumn} that all columns in $\varphi(\check B_e)$ coincide, i.e., $\varphi(\check B_e)$ is uniform.
Thus (II) is proved.

It remains to show (III), i.e., that $\varphi(\check B_{e})$ is uniform if $\zeta^{(h)} \in e$.

\begin{claim}\label{cla:hthentry}
Let $\varphi(T)$ be regular and $\zeta^{(h)}$ the link vertex in the block edge $e$.
For all $v \in e$, $v \neq \zeta^{(h)}$, we have that
the $h$-th entry of $\varphi(\check B_v)$ is $\varphi^{\circ}(h)$.
\end{claim}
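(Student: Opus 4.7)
The plan is to obtain Claim~\ref{cla:hthentry} as an immediate specialization of the already-established Claim~\ref{cla:ithentry}. Concretely, Claim~\ref{cla:ithentry} asserts that for any $i \in I$ and any block edge $e$ in $H^{(i)}$, every vertex $v \in e$ with $v \neq \zeta^{(i)}$ has the property that the $i$-th entry of $\varphi(\check B_v)$ equals $\varphi^{\circ}(i)$. Since $h \in I$ by construction (it is the smallest element of $I$), instantiating this statement with $i := h$ and with $e$ being the block edge containing $\zeta^{(h)}$ yields exactly the desired conclusion: for every $v \in e$ with $v \neq \zeta^{(h)}$, the $h$-th entry of $\varphi(\check B_v)$ is $\varphi^{\circ}(h)$.

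So the proof is a one-line reduction, and there is no real obstacle to overcome. Structurally, this mirrors the even-$D$ case exactly, where Claim~\ref{EVENcla:hthentry} was likewise deduced as a direct consequence of Claim~\ref{EVENcla:ithentry}. The only thing to double-check is that the assumptions of Claim~\ref{cla:ithentry} are genuinely satisfied when $i = h$: we need $h \in I$ (true, by definition of $h$), and we need $\varphi(T)$ to be regular (which is assumed in the hypothesis of Claim~\ref{cla:hthentry}). Both conditions are automatic, so the claim follows immediately.

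I expect no main obstacle here; the interesting work has been packaged into Claim~\ref{cla:ithentry} and Claim~\ref{cla:varphii} (which in turn rested on Claim~\ref{cla:barcoincide} --- the genuinely nontrivial combinatorial input in the odd-$D$ case). Claim~\ref{cla:hthentry} is then a bookkeeping step whose sole purpose is to single out the distinguished row $h$ of the link-vertex column in the ``exceptional'' block edge containing $\zeta^{(h)}$, preparing the ground for the subsequent claim that will compare $\varphi(\check B_{\zeta^{(h)}})$ with $\varphi(\check B_v)$ for $v \in e \setminus \{\zeta^{(h)}\}$ (analogous to Claim~\ref{EVENcla:exceptionalcolumn}) and thereby complete case (III) in the proof of duplexity.
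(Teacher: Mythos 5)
Your proposal is correct and matches the paper's own argument, which likewise derives Claim~\ref{cla:hthentry} as a direct specialization of Claim~\ref{cla:ithentry} with $i = h$ (using that $h \in I$ and that $e$ is a block edge of $H^{(h)}$ containing $\zeta^{(h)}$). Nothing further is needed.
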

\begin{proof}
This is a direct implication of Claim~\ref{cla:ithentry}.
\end{proof}

\begin{claim}\label{cla:exceptionalcolumn}
Let $\varphi(T)$ be regular and $\zeta^{(h)} \in e$.
Then $\varphi(\check B_{\zeta^{(h)}})$ coincides with $\varphi(\check B_{v})$,
$v \in e$.
\end{claim}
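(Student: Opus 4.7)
The plan is to mimic the three-case structure used in the even-$D$ analogue (Claim~\ref{EVENcla:exceptionalcolumn}), comparing $\varphi(\check B_{\zeta^{(h)}})$ and $\varphi(\check B_v)$ entry by entry for each $v\in e$, $v\neq\zeta^{(h)}$, and then observing that the case $v=\zeta^{(h)}$ is trivial. Recall that $k(e)=1$ here since $\zeta^{(h)}\in e$.

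First I would handle row $h$. By \eqref{eq:hBzetah} the $h$-th entry of $\check B_{\zeta^{(h)}}$ is $h_1$, so applying $\varphi$ and using \eqref{eq:defvarphii} gives $\varphi^{\circ}(h)$; by Claim~\ref{cla:hthentry} the $h$-th entry of $\varphi(\check B_v)$ is also $\varphi^{\circ}(h)$, so the two agree in row $h$.

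Next I would split row $j\neq h$ into the two sub-cases $j\notin I$ and $j\in I$. When $j\notin I$, \eqref{eq:jBv} (applied with $i=h$) tells us the $j$-th entry of $\check B_v$ is $j^{h}_{\kbar(h,j,v)}$, which is either $j^h_1$ or $j^h_{\overline 1}$; meanwhile \eqref{eq:jBzetaheq} tells us the $j$-th entry of $\check B_{\zeta^{(h)}}$ is $j^h_{\kbar(h,j,\zeta^{(h)})}$, again one of $j^h_1$ or $j^h_{\overline 1}$. By Claim~\ref{cla:barcoincide} we have $\varphi(j^h_1)=\varphi(j^h_{\overline 1})$, so these two entries are equal after applying $\varphi$. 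When $j\in I$ (so $j\neq h$), the $j$-th entry of $\check B_v$ is still $j^h_{\kbar(h,j,v)}\in\{j^h_1,j^h_{\overline 1}\}$ by \eqref{eq:jBv}, but now \eqref{eq:jBzetahneq} gives that the $j$-th entry of $\check B_{\zeta^{(h)}}$ is $j_1$. Applying Claim~\ref{cla:droph} (whose chain $\varphi(j^h_{\overline 1})=\varphi(j^h_1)=\varphi(j_1)=\varphi^{\circ}(j)$ covers exactly this situation) shows the two entries coincide under $\varphi$ as well.

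Since the two columns $\varphi(\check B_{\zeta^{(h)}})$ and $\varphi(\check B_v)$ agree in every row, they are equal; this completes the claim. The work is essentially bookkeeping, so I do not anticipate a serious obstacle — the only subtlety worth being careful about is that the barred/unbarred ambiguity arising from $\kbar(h,j,\cdot)$ must be resolved by Claim~\ref{cla:barcoincide} (in Case 2) or Claim~\ref{cla:droph} (in Case 3), which is why those preparatory claims were needed in the odd case but not in the even case.
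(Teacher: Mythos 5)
Your proposal is correct and follows essentially the same three-case, entrywise comparison as the paper's proof, using Claim~\ref{cla:hthentry} with \eqref{eq:hBzetah} and \eqref{eq:defvarphii} for row $h$, Claim~\ref{cla:barcoincide} for rows $j\notin I$, and Claim~\ref{cla:droph} (whose first equality already invokes Claim~\ref{cla:barcoincide}) for rows $j\in I$. The only difference is cosmetic: you note explicitly that the $\zeta^{(h)}$-entry in the $j\notin I$ case may itself be barred or unbarred via $\kbar(h,j,\zeta^{(h)})$, a detail the paper glosses over but which your use of Claim~\ref{cla:barcoincide} handles identically.
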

\begin{proof}
We compare the columns entrywise, considering three cases.

Case 1: We compare the $h$-th entry:
According to Claim~\ref{cla:hthentry}, the $h$-th entry of $\varphi(\check B_{v})$ is $\varphi^{\circ}(h)$. According to \eqref{eq:hBzetah}
the $h$-th entry of $\check B_{\zeta^{(h)}}$ is $h_1$, so
the $h$-th entry of $\varphi(\check B_{\zeta^{(h)}})$ is $\varphi(h_1)=\varphi^{\circ}(h)$, see \eqref{eq:defvarphii}.

Case 2: We compare the $j$-th entry, $j \neq h$, in the case $j \notin I$:
According to \eqref{eq:jBv}, the $j$-th entry of $\check B_{v}$ is either $j_1^{h}$ or $j_{\overline 1}^{h}$.
The $j$-th entry of $\check B_{\zeta^{(h)}}$ is $j_1^{h}$, see \eqref{eq:jBzetaheq}. Therefore Claim~\ref{cla:barcoincide} shows that
the $j$-th entry of $\varphi(\check B_{v})$ equals the $j$-th entry of $\varphi(\check B_{\zeta^{(h)}})$.

Case 3: We compare the $j$-th entry, $j \neq h$, in the case $j\in I$:
According to \eqref{eq:jBv}, the $j$-th entry of $\check B_{v}$ is either $j_1^{h}$ or $j_{\overline 1}^{h}$.
The $j$-th entry of $\check B_{\zeta^{(h)}}$ is $j_1$, see \eqref{eq:jBzetahneq}.
Combining Claim~\ref{cla:barcoincide} and Claim~\ref{cla:droph} shows that
the $j$-th entry of $\varphi(\check B_{v})$ equals the $j$-th entry of $\varphi(\check B_{\zeta^{(h)}})$.
\end{proof}
It follows from Claim~\ref{cla:exceptionalcolumn} that all columns in $\varphi(\check B_e)$ coincide, i.e., $\varphi(\check B_e)$ is uniform.
Thus (III) is proved.
This finishes the proof of part~\ref{enum:duplex} of Theorem~\ref{thm:prolongation}.

Theorem~\ref{thm:prolongation} is now completely proved for odd $D$.

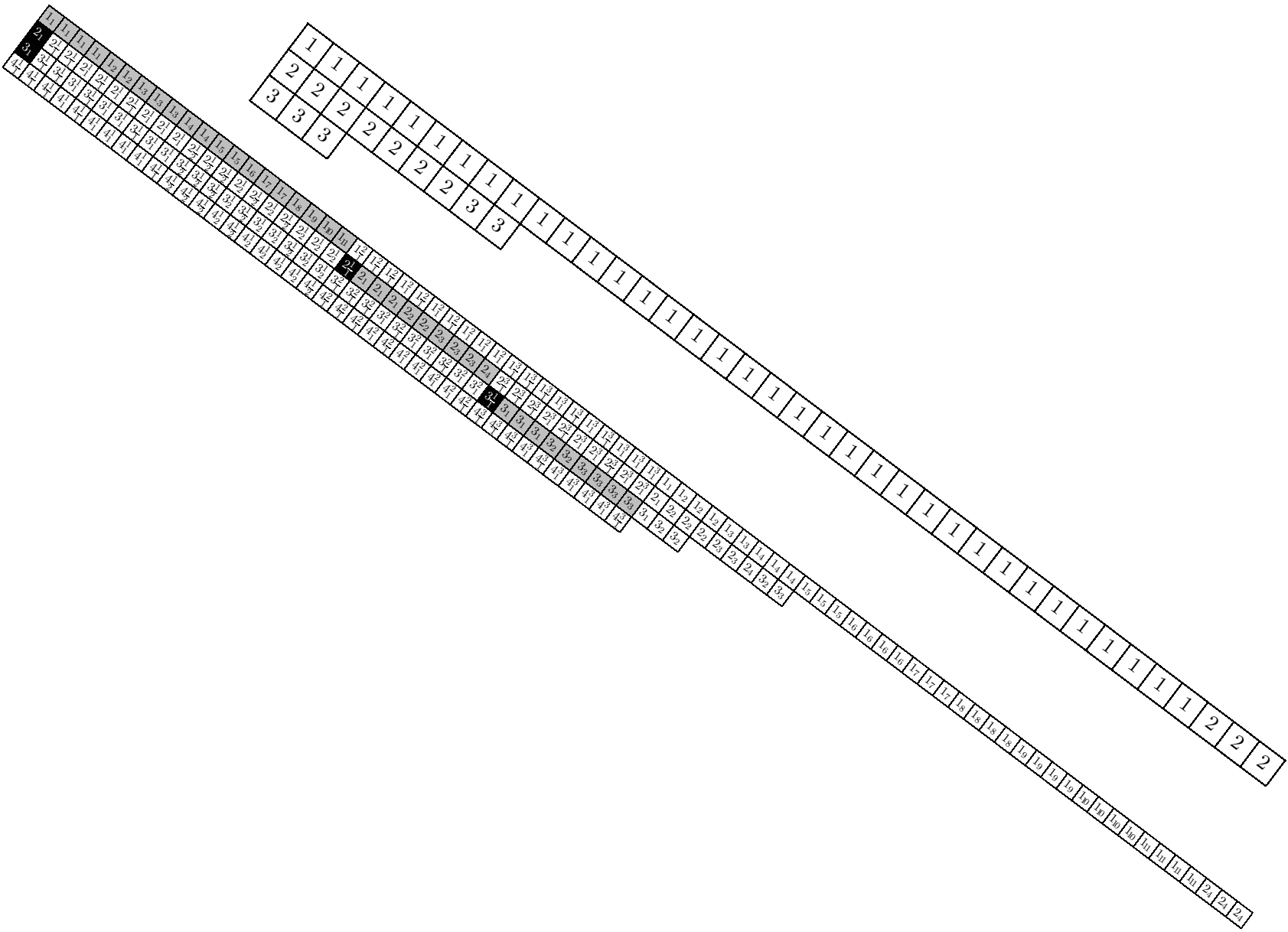
\begin{figure}
\scalebox{0.95}{
\begin{tikzpicture}
 \node at (0,4) {
\rotatebox{53}{
\ytableaushort{
11111111111111111111111111111111111222,
222222233,
333
}
}%rotatebox
};
 \node at (0,0) {
\scalebox{0.8}{\rotatebox{53}{
\begin{minipage}{30cm}
\ytableausetup{boxsize=1.5em}
\scalebox{0.75}{
\ytableaushort{
{*(lightgray)1_1}{*(lightgray)1_1}{*(lightgray)1_1}{*(lightgray)1_1}{*(lightgray)1_2}{*(lightgray)1_2}{*(lightgray)1_3}{*(lightgray)1_3}{*(lightgray)1_3}{*(lightgray)1_4}{*(lightgray)1_4}{*(lightgray)1_5}{*(lightgray)1_5}{*(lightgray)1_6}{*(lightgray)1_7}{*(lightgray)1_7}{*(lightgray)1_8}{*(lightgray)1_9}{*(lightgray)\text{$1_{\!1\!0}$}}{*(lightgray)\text{$1_{\!1\!1}$}}
{1^2_{\A}}{1^2_{\A}}{1^2_{\A}}{1^2_1}{1^2_{\A}}{1^2_1}{1^2_{\A}}{1^2_1}{1^2_1}{1^2_1}
{1^3_{\A}}{1^3_{\A}}{1^3_{\A}}{1^3_1}{1^3_{\A}}{1^3_1}{1^3_{\A}}{1^3_1}{1^3_1}{1^3_1}
{1_1}{1_2}{1_2}{1_2}{1_3}{1_3}{1_4}{1_4}{1_4}{1_5}{1_5}{1_5}{1_6}{1_6}{1_6}{1_6}{1_7}{1_7}{1_7}{1_8}{1_8}{1_8}{1_8}{1_9}{1_9}{1_9}{1_9}{\text{$1_{\!1\!0}$}}{\text{$1_{\!1\!0}$}}{\text{$1_{\!1\!0}$}}{\text{$1_{\!1\!0}$}}{\text{$1_{\!1\!1}$}}{\text{$1_{\!1\!1}$}}{\text{$1_{\!1\!1}$}}{\text{$1_{\!1\!1}$}}
{2_4}{2_4}{2_4}
,
{*(black)\textcolor{white}{2_1}}{2^1_{\A}}{2^1_{\A}}{2^1_1}{2^1_{\A}}{2^1_1}{2^1_{\A}}{2^1_1}{2^1_1}{2^1_1}
{2^1_{\B}}{2^1_{\B}}{2^1_{\B}}{2^1_2}{2^1_{\B}}{2^1_2}{2^1_{\B}}{2^1_2}{2^1_2}{2^1_2}
{*(black)\textcolor{white}{2^1_{\A}}}{*(lightgray)2_1}{*(lightgray)2_1}{*(lightgray)2_1}{*(lightgray)2_2}{*(lightgray)2_2}{*(lightgray)2_3}{*(lightgray)2_3}{*(lightgray)2_3}{*(lightgray)2_4}
{2^3_{\A}}{2^3_{\A}}{2^3_{\A}}{2^3_1}{2^3_{\A}}{2^3_1}{2^3_1}{2^3_{\A}}{2^3_1}{2^3_1}
{2_1}{2_2}{2_2}{2_2}{2_3}{2_3}{2_4}{3_2}{3_3}
,
{*(black)\textcolor{white}{3_1}}{3^1_{\A}}{3^1_{\A}}{3^1_1}{3^1_{\A}}{3^1_1}{3^1_1}{3^1_{\A}}{3^1_1}{3^1_1}
{3^1_{\B}}{3^1_{\B}}{3^1_{\B}}{3^1_2}{3^1_{\B}}{3^1_2}{3^1_2}{3^1_{\B}}{3^1_2}{3^1_2}
{3^2_{\A}}{3^2_{\A}}{3^2_{\A}}{3^2_1}{3^2_{\A}}
{3^2_1}{3^2_1}{3^2_{\A}}{3^2_1}{3^2_1}
{*(black)\textcolor{white}{3^1_{\A}}}{*(lightgray)3_1}{*(lightgray)3_1}{*(lightgray)3_1}{*(lightgray)3_2}{*(lightgray)3_2}{*(lightgray)3_3}{*(lightgray)3_3}{*(lightgray)3_3}{*(lightgray)3_3}
{3_1}{3_2}{3_2}
,
{4^1_{\A}}{4^1_{\A}}{4^1_{\A}}{4^1_1}{4^1_{\A}}{4^1_1}{4^1_1}{4^1_1}{4^1_1}{4^1_{\A}}
{4^1_{\B}}{4^1_{\B}}{4^1_{\B}}{4^1_2}{4^1_{\B}}{4^1_2}{4^1_2}{4^1_2}{4^1_2}{4^1_{\B}}
{4^2_{\A}}{4^2_{\A}}{4^2_{\A}}{4^2_1}{4^2_{\A}}{4^2_1}{4^2_1}{4^2_1}{4^2_1}{4^2_{\A}}
{4^3_{\A}}{4^3_{\A}}{4^3_{\A}}{4^3_1}{4^3_{\A}}{4^3_1}{4^3_1}{4^3_1}{4^3_1}{4^3_{\A}}
}
}%inner scalebox
\end{minipage}
}%rotatebox
}%outer scalebox
};%node
\end{tikzpicture}
}%scalebox
\caption{A full example of a tableau $S$ (on top) and the corresponding tableau $T$ (below). Here $D = 5$, $m=4$.}
\label{fig:fullexample}
\end{figure}

% \bibliographystyle{alpha}
% \bibliography{lit}

\end{document}